\newtheorem{lemma}{Lemma}
\newtheorem{thm}{Theorem}
\newtheorem{cor}{Corollary}
\theoremstyle{definition}
\newtheorem{rem}{Remark}
\numberwithin{equation}{section}
\begin{document}

\title[ additional food ] 
    {T(w)o patch or not t(w)o patch: A novel biocontrol model}

 \keywords{infinite time blow up, patch model, biological control, additional food, chaotic dynamics}



\maketitle

\centerline{\scshape  Urvashi Verma$^{1}$, Aniket Banerjee$^{1}$ and  Rana D. Parshad$^{1}$}
\medskip
{\footnotesize

   \medskip
   
    \centerline{ 1) Department of Mathematics,}
 \centerline{Iowa State University,}
   \centerline{Ames, IA 50011, USA.}

 }

\begin{abstract}


A number of top down bio-control models have been proposed where the introduced predators' efficacy is enhanced via the provision of additional food (AF). 
However, if the predator has a pest dependent monotone functional response, pest extinction is unattainable. In the current manuscript, we propose a model where a predator with pest dependent monotone functional response is introduced into a ``patch" such as a prairie strip with AF, and then disperses or drifts into a neighboring ``patch" such as a crop field, to target a pest. We show the pest extinction state is attainable in the crop field and can be globally attracting. 
  The AF model with patch structure can eliminate predator explosion present therein and can keep pest densities lower than the classical top-down bio-control model. We provide the first proof of the global stability of the interior equilibrium for the classical AF model.
  We also observe ``patch-specific chaos" - the pest occupying the crop field can oscillate chaotically, while the pest in the prairie strip oscillates periodically. We discuss these results in light of bio-control strategies that utilize state-of-the-art farming practices such as prairie strips and drift and dispersal pressures driven by climate change.

\end{abstract}

\section{Introduction}

\noindent Invasive/pest species continue to be a significant global problem \cite{PZM05, PSCEWT16, S17}, of immense scale. Pests such as the European corn borer, Western corn rootworm, and soybean aphid cause damages to crop yield in the United States in excess of $\$$ 3 billion annually \cite{TS14, CO18, BOP22}. Populations of the invasive cane toad in Western Australia are reaching explosive levels \cite{U08}, despite several intervention strategies, while the Burmese python invasion in the Everglades region of south Florida has caused an excessive drop in local prey populations \cite{D12}.
Thus, the control of invasive species is crucial but also challenging \cite{S17}. Typical control methods use chemical pesticides despite negative environmental impacts, \cite{PB14}. Note the Kunming-Montreal Global Biodiversity Framework (GBF), which has been adopted by the UN Convention on Biological Diversity, mandates that by 2030, we must aim to reduce the risk from pesticides by at least 50 $\%$ \cite{UNC}.
To this end, a self-sustaining and environmentally friendly alternative is classical or top down bio-control. This is the introduction of natural enemies of the pest species \cite{V96, S99}, which will, in turn, depredate the pest. Bio-control comes with associated risks \cite{SV06}. These include non-target effects, \cite{PQB16}, which could lead to uncontrolled growth of the predator population. This has been seen with the cane toad in Western Australia and the Burmese python in South Florida \cite{U08, D12}. In order to help support bio-control programs, several innovative invasion control strategies using niche theory and competition theory have been proposed \cite{kettenring2011lessons, britton2018trophic}. For example, lowering nitrogen levels in soil via microbe use has been used as a management strategy to enhance intraspecies competition among native plants and preempt invasions \cite{vasquez2008creating}. Among these is also the method of Additional/Alternative food (AF). Herein, the efficacy of the predator can be ``boosted" by supplementing it with an AF source \cite{SV06, T15}. A number of mathematical models that describe predator pest dynamics with AF have been developed, \cite{SP07, SP10, S02, SP11, SPV18, SPD17, VA22}. These works posit that if high-quality AF is provided to the introduced predator in sufficient quantity, then pest extinction is achieved. 

The classical single species population model takes the form \cite{M03},

\begin{equation}
u_{t} = d u_{xx} + c u_{x} + f(u)
\end{equation}

here a population density of a species $u$ disperses via ``$u_{xx}$" over a one-dimensional landscape, is drifted via ``$u_{x}$," with $f(u)$ modeling demographic factors such as birth, death, and intraspecific competition. A most classical form in the literature is $f(u)=b u(1-\frac{u}{K})$, where $b$ is a birth rate and $K$ is the carrying capacity of species $u$. Here $b, K$ are pure constants. In the continuous time or ODE analog, one breaks up the landscape into two patches, say $\Omega_{1}, \Omega_{2}$ and approximates dispersal, $d u_{xx} \approx d(u_{1}-u_{2})$. Similarly one approximates drift, $c u_{x} \approx c(u_{1}-u_{2})$, where $u_{i}$ is the sub populations of $u$ in patch $\Omega_{i}$. Drift and dispersal are key ingredients governing species movements. Although there is a large body of work on AF mediated bio-control  \cite{SP07, SP10, SP11, SPV18, SPD17}, the effect of dispersal and drift on these models has not been well investigated. 

Dispersal is a key ecological mechanism that shapes communities \cite{M03, D99}. Species disperse in search of food, mates, resources, and even refuge \cite{CL03}. As human action continues to fragment natural habitat, more and more habitat ``patches" appear, and species subsequently will often disperse between these small interactive spatial components or ``patches." To this end, patch models have been intensely investigated \cite{W95, T11}. Thus, dispersal between these patches and its consequences on shaping ecosystem dynamics is of strong current relevance. The predator-prey model, with the dispersal of predator and prey between patches, has been intensely investigated. These include predator-only vs. prey-only dispersal, dispersal of predators (and prey) dependent on the strength of depredation, and dispersal mediated via Allee effects, which have all been thoroughly studied \cite{arditi2018asymmetric,messan2015two}.
Interestingly, the dynamics that one sees in the classic mean field sense, in predator-prey or competitive scenarios (where there is essentially one large patch) may not be true in the two (or multi) patch setting, with dispersal between these patches \cite{GK05, BSC10}. Another important mechanism of species movement is drift. This is mostly seen in aquatic systems, such as river or stream flows, where the flow of water will determine the direction of movement of species in the water body, typically from an upstream location to a downstream location. Thus, recent work has also focused on the drift in river/stream communities with a network structure and dispersal in eco-epidemic systems \cite{chen2024evolution, chen2023evolution, salako2024degenerate}. There is also interest in drift due to other vectors such as wind \cite{moser2009interacting}. Drift between patches of crop fields could increase in the foreseeable future due to climatic changes. Climate change is expected to increase flooding in the North Central United States over the next several years \cite{lee2023ipcc} - leading to stronger drift between patches.

Climate change is potentially as large a threat to biodiversity as habitat degradation \cite{malhi2020climate}. Current research focuses on range shifts following local colonization and/or extinction of species. If the environment changes, species will have to adapt, disperse, or otherwise will inadvertently be driven to extinction \cite{early2014climatic}. This becomes even more intriguing from an invasive species standpoint when climatic conditions turn favorable for an invader and unfavorable for the resident \cite{hellmann2008five}. A strong motivation for our current work is particularly driven by climate change effects on the population dynamics of invasive pests such as the soybean aphid \textit{Aphis glycines} (Hemiptera: Aphididae). This pest was first detected in the US in 2000 and since has become the chief insect pest of soybean in the major production areas of the north-central US \cite{tilmon2011biology}. It has a heteroecious holocyclic life cycle that utilizes a primary host plant for overwintering and a secondary host plant during the summer. As aforementioned, climate change is expected to increase flooding in the North Central United States \cite{lee2023ipcc}. Current research shows that increased rainfall is unlikely to cause the aphid pest to fall off the soybean plant, and survivability is greatly reduced if they do \cite{L1,example_conference}. However, ground predators (and biocontrol predators of the aphid, moving between soybean plants) would be subject to drift - particularly in flood scenarios and particularly if the tilt of land was from an AF patch (at higher ground) towards a crop field (at lower ground).

There is also a growing body of evidence that natural enemies of pests are many times more abundant in smaller patches surrounding crop fields and can depend on the connectivity of crop fields to non-crop habitats \cite{HL20}. Furthermore, multi-trophic diversity can be facilitated by increasing crop heterogeneity as well \cite{FG15, SF19}. From a pragmatic viewpoint, the utilization of these theoretical results to achieve target control scenarios of species requires the management of such landscapes. This is crucial for improved ecosystem structure and function \cite{KM18}, and current research strongly points towards the benefits of combining bio-control and landscape management \cite{klinnert2024landscape}. This has been coined ``landscape features supporting natural pest control" (LF-NPC). Among these landscape management practices are several innovative strategies, including the planting of prairie strips. These are small portions (about 10-20 $\%$) of the crop field that are removed from production for the primary crop and instead planted with several other plants, yielding conservation benefits for the entire crop field. This is essentially a conservation practice to protect soil and water while providing suitable habitat for wildlife. These enable increased productivity, which leads to higher crop yields, reduces sediment movement, and maintains high phosphorus and nitrogen levels. 

All in all, this is a high-value practice in agriculture \cite{LS17}. Recent innovation herein has led to programs such as STRIPS (Science-based Trails of Row crops Integrated with Prairie Strips) in the mid-western US. This has been pioneered and led in the state of Iowa, in particular, \cite{S16}. The row crops as earlier mentioned, provide a suitable habitat for several species and enhance biodiversity and ecosystem function. In particular, these include better habitats for bio-control agents such as predators that would disperse into the adjacent crop fields to target pests. However, to the best of our knowledge, an AF driven bio-control model, where a predator is housed or introduced into a prairie strip (or a STRIP), where it is provided with AF, and then subsequently disperses or is drifted into a neighboring crop field to target a pest, has not been investigated. The current manuscript is a first attempt in this direction.

The following \emph{general} model for an introduced predator population $y(t)$ depredating on a target pest population $x(t)$, while also provided with an additional food source of quality $\frac{1}{\alpha}$  and quantity $\xi$, has been proposed in the literature, 
\begin{equation}
\label{Eqn:1g}
\frac{dx}{dt} = x\left(1-\frac{x}{\gamma}\right) - f(x,\xi,\alpha) y, \    \frac{dy}{dt} =  g(x, \xi, \alpha) y  - \delta y.
\end{equation}
Here, $f(x,\xi,\alpha)$ is the functional response of the predator, which is pest dependent but also dependent on the additional food (hence the explicit dependence on $\xi, \alpha$). Likewise,
$g(x,\xi,\alpha)$ is the numerical response of the predator. 
If $\xi = 0$, that is, there is no additional food, and the model reduces to a classical predator-prey model of Gause type, that is $f(x,0,0)= g(x,0,0)$, where $f$ has the standard properties of a functional response. For these models, we know pest eradication \emph{is not possible}, as the only pest free state is $(0,0)$, which is typically unstable \cite{K01}. Thus modeling the dynamics of an introduced predator and its prey, a targeted pest via this approach, where the constructed 
$f(x,\xi,\alpha), g(x,\xi,\alpha)$ are used as a means to achieve a pest free state, has both practical and theoretical value, and thus has been well studied. Table \ref{Table:1},  summarizes some of the key literature in terms of the functional forms used in these models.

		The effect of increasing the quantity of AF $\xi$ in models of type \eqref{Eqn:1g} is to push the vertical predator nullcline to the left, thereby decreasing the equilibrium pest level. The extinction mechanism works by inputting   $\xi > \xi_{critical}$ (Here $\xi_{critical}$ depends on the model parameters and changes from model to model) s.t. the predator nullcline is pushed left past the y-axis (predator axis), into the $2^{nd}$ quadrant, annihilating any positive interior equilibrium. The earlier literature on AF purported that via this construct, due to the positivity of solutions, trajectories will converge onto the predator axis, yielding pest extinction. Furthermore, this can be done in minimal time \cite{SP10, SP11, VA22}. The pest extinction was shown to only occur in infinite time \cite{PWB20}, and furthermore pest extinction was shown to result in blow-up of the predator population as well \cite{PWB23}.
We outline further details on past literature on AF models; see Appendix \eqref{prior results}.

Our contributions in the current manuscript are as follows,

\begin{itemize}
\item Two new AF bio-control models with patch structure are introduced. Herein, a predator is introduced into a patch with AF (such as a prairie strip) and then drifts via \eqref{eq:patch_model_uni-i} or disperses via \eqref{eq:patch_model2} into a neighboring patch (such as a crop field), to target a pest. The mathematical analysis for the drift model is detailed in Section \ref{drift af model}, while the analysis for the dispersal model is provided in Section \ref{dispeersal af model}.

\item Unbounded growth of the predator population (without drift/dispersal) can be mitigated with sufficient drift and linear dispersal in the patch model. See Theorems \ref{thm:t1u1}, \ref{thm:t1}, and, \ref{thm:t1sd}.

\item Pest extinction in the crop field is globally asymptotically stable with drift via Theorem \ref{thm:gs1} and Lemma \ref{lem:pest_extinction_crop_field}.


\item The patch model enables pest extinction in the crop field with dispersal via Lemma \ref{lem:E1_stability}.

\item The patch model enables pest extinction in the prairie strip via Lemma \ref{lem:E2_stability}. To the best of our knowledge, this is the first result that shows AF mediated pest extinction is possible for pest-dependent functional responses.

\item The patch model enables chaotic dynamics; see Figure \eqref{fig:chaos_time_series}.

\item The patch model has other rich dynamics, such as the formation of limit cycles, via Theorem \ref{thm:hopf bifurcation}.

\item A special case of our results enables proving global stability of the interior equilibrium of the classical AF patch model via Corollary \ref{cor:g1}.

\item  The AF patch model, depending on parameter regimes, can do better or worse than a purely top-down bio-control model or an AF bio-control model without patch structure in terms of total pest density. To this end see Lemmas \ref{lem:rm_more_than_patch}-\ref{lem:af_less_than_patch}. Also see  Figure \eqref{fig:int_rm_vs_patch_fig} and Figure \eqref{fig:int_saf_vs_patch}.

\item We discuss the consequences of these results to AF-driven bio-control that would utilize prairie strips and so could be integrated with landscape management strategies and programs such as the STRIPs program in Iowa and the entire north-central US, as well as worldwide, given the recent 
global agricultural trends \cite{UNC}.
\end{itemize}
\begin{table}[H]
\caption{Dynamics of predator-pest models supplemented with AF}\label{Table:1}
	\scalebox{0.82}{
{\begin{tabular}{|c|c|c|c|c|}
		\hline
		& \mbox{Functional form} & \mbox{Relevant literature} & \mbox{AF requirement}& \mbox{ Effect on pest control}\\
		& &  & &  \\ \hline \hline
		(i) & $f(x,\xi,\alpha) = \frac{x}{1+\alpha \xi + x}$ & \cite{SP07, SP10, SP11,sen2015global} & $\xi > \frac{\delta}{\beta - \delta \alpha}$ & \mbox{Pest is eradicated, switching AF}\\
		& $g(x,\xi,\alpha) = \frac{\beta (x+\xi)}{1+\alpha \xi + x}$ &  &  &  \mbox{maintains/eliminates predator }\\          \hline
		(ii) & $f(x,\xi,\alpha) = \frac{x^{2}}{1+\alpha \xi^{2} + x^{2}}$  & \cite{SPV18}  & $\xi >\sqrt{ \frac{\delta}{\beta - \delta \alpha}}$  &  \mbox{Pest is eradicated, switching AF}\\
		& $g(x,\xi,\alpha) = \frac{\beta (x^{2}+\xi^{2})}{1+\alpha \xi^{2} + x^{2}}$ &  & & \mbox{ maintains/eliminates predator } \\ \hline
		(iii) & $f(x,\xi,\alpha) = \frac{x}{(1+\alpha \xi)( \omega x^{2}+1) + x}$   & \cite{SPD17} & $\xi > \frac{\delta}{\beta - \delta \alpha}$ &  \mbox{Pest is eradicated, switching AF} \\
		& $g(x,\xi,\alpha) = \frac{\beta (x+\xi(\omega x^{2}+1)}{(1+\alpha \xi)(\omega x^{2} + 1) + x}$ &  &  & \mbox{ maintains/eliminates predator}  \\ \hline
		(iv) & $f(x,\xi,\alpha) = \frac{x}{1+\alpha \xi + x + \epsilon y}$ & \cite{S02} &  $\beta \xi = \delta(1+\alpha)$  & \mbox{Pest extinction state stabilises,}  \\
		& $g(x,\xi,\alpha) = \frac{\beta (x+\xi)}{1+\alpha \xi + x + \epsilon y}$ &  & $\beta(\gamma + \xi) =\delta(1+\alpha \xi + \gamma)$  &\mbox{via TC}  \\ \hline
  
		\end{tabular}}
		}
\vspace{.1cm}		 
     \flushleft
\textbf{Note:}	Herein $\gamma$ is the carrying capacity of the pest, $\beta$ is the conversion efficiency of the predator, $\delta$ is the death rate of the predator, $\frac{1}{\alpha}$ is the quality of the additional food provided to the predator and $\xi$ is the quantity of additional food provided to the predator. 
		\end{table}

\section{Additional Food Patch Model Formulation - Drift}
\label{drift af model}

We assume that our landscape is divided into two sub-units, a crop field, which is the larger unit, and a prairie strip, which is the smaller unit. These are the two patches that make up our landscape. We will refer to the prairie strip as $\Omega_1$ and the crop field as $\Omega_2$. We assume that the prairie strip, by design, possesses row crops that will provide alternative/additional food to an introduced predator, which would enhance its efficacy in targeting a pest species that resides primarily in the crop field. The crop field has no AF. 

Thus, for the patch $\Omega_1$, we assume the quantity of AF as $\xi$ and the quality of AF as $\frac{1}{\alpha}$. Thus, essentially, in $\Omega_{1}$, we have an AF-driven predator-pest system for a pest density $x_{1}$ and a predator density $y_{1}$. We choose the classical Holling type II functional response for the functional response of the pest. We assume the predators will drift between $\Omega_{1}$ and $\Omega_{2}$ via a drift term, where the rate of drift out of $\Omega_{1}$ is $q_{1}$, and the rate at which predators ``arrive" or are drifted into $\Omega_{2}$ at the rate of $q_{2}$ see Figure \eqref{fig:patch_assumption_drift}. Here, the drift is driven by flooding or wind. This is dependent on the way the landscape slopes, and loss of individuals during drift require $q_{1} > q_{2}$. This is modeled as,

\begin{figure}[H]
\includegraphics[width = 7.2cm]{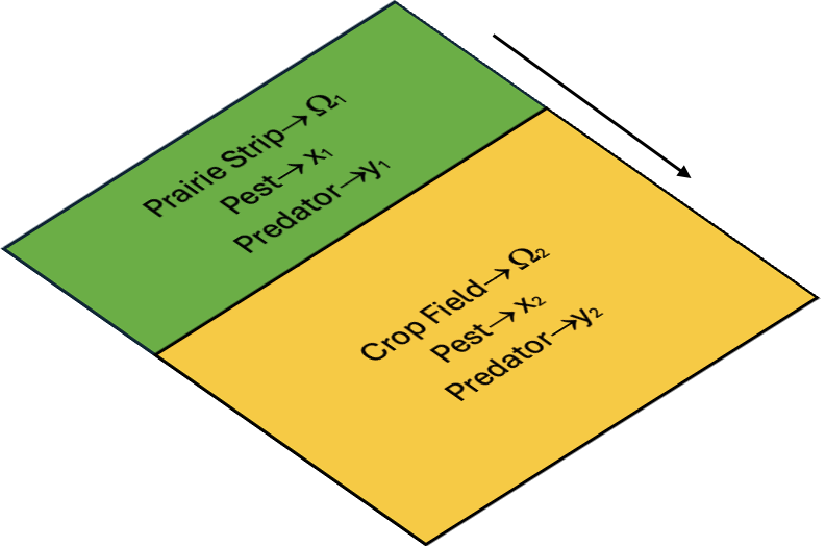}
\caption{Visual depiction of the underlying assumptions regarding patches $\Omega_1 \ \& \ \Omega_2$. }
\label{fig:patch_assumption_drift}
\end{figure}

\begin{equation} \label{eq:patch_model_uni-i}
\begin{aligned}
 \Dot{x_1}& = x_1 \left (1-\frac{x_1}{k_p}\right )-\frac{x_1 y_1}{1+x_1+\alpha \xi} \\
 \Dot{y_1}& = \epsilon_1 \left ( \frac{x_1+\xi}{1+x_1+\alpha \xi} \right)\ y_1 - \delta_1 y_1 - q_1 y_1   \\
 \Dot{x_2}& = x_2 \left (1- \frac{x_2}{k_c} \right)-\frac{x_2  y_2}{1+x_2} \\
\Dot{y_2}& = \epsilon_2 \left ( \frac{x_2}{1+x_2} \right ) \ y_2 - \delta_2 y_2 + q_2 y_1 
\end{aligned}
\end{equation}


 \subsection{Predator Blowup Prevention}

\begin{thm}
\label{thm:t1u1}
Consider the additional food patch model \eqref{eq:patch_model_uni-i} with $\epsilon_{1} > \delta_{1}, \epsilon_{2} > \delta_{2}$. If $q_{1}=q_{2}=0$, that is in the absence of drift, and $\xi > \xi_{critical} =  \dfrac{\delta_1}{\epsilon_1 - \alpha \delta_1}$, the predator population blows up in infinite time. However, with drift,  $q_{1}, q_{2} > 0$ s.t., $\max{\left(1,\dfrac{\epsilon_1 \xi}{1+\alpha \xi}\right)} < (\delta_1 + q_1 )$, and $\forall \ q_{2} > 0$, the predator population remains bounded for all time.
\end{thm}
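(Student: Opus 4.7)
The theorem splits into two independent claims which I treat separately. For the blow-up part with $q_1=q_2=0$, system \eqref{eq:patch_model_uni-i} decouples into two planar subsystems, and the $\Omega_1$-subsystem is exactly the classical Srinivasu--Prasad AF model whose dynamics are understood in the literature cited around \cite{PWB23}. My plan is to invoke this: a direct computation of the predator nullcline $\epsilon_1(x_1+\xi)=\delta_1(1+x_1+\alpha\xi)$ shows that $\xi>\xi_{critical}=\delta_1/(\epsilon_1-\alpha\delta_1)$ is precisely the condition for it to be pushed into the second quadrant, so no positive interior equilibrium exists. A standard phase-plane argument then forces $x_1(t)\to 0$ for every orbit starting in the open first quadrant; linearizing the $y_1$-equation at $x_1=0$ yields per-capita growth $\epsilon_1\xi/(1+\alpha\xi)-\delta_1$, which is strictly positive exactly when $\xi>\xi_{critical}$, so a comparison with the limiting linear ODE gives $y_1(t)\to\infty$ in infinite time.

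For the boundedness claim under drift I exploit the one-way coupling $\Omega_1\to\Omega_2$ and proceed sequentially. From $\dot x_i\le x_i(1-x_i/k_\cdot)$ one immediately gets $\limsup_{t\to\infty} x_1\le k_p$ and $\limsup_{t\to\infty} x_2\le k_c$, so both prey are a priori bounded. Next, write $h(x_1):=\epsilon_1(x_1+\xi)/(1+x_1+\alpha\xi)$; a direct calculation shows $h'(x_1)$ has sign $1-(1-\alpha)\xi$, so $h$ is monotone on $[0,\infty)$ and thus $\sup_{x_1\ge 0} h(x_1)=\max(h(0),h(\infty))=\max(\epsilon_1\xi/(1+\alpha\xi),\epsilon_1)$. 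The stated drift hypothesis forces this supremum to be strictly less than $\delta_1+q_1$, hence $\dot y_1\le -\mu y_1$ for some $\mu>0$, so $y_1$ decays exponentially; in particular $\sup_{t\ge 0} y_1(t)=:B<\infty$.

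The main and hardest step is to bound $y_2$. Here the per-capita rate $\epsilon_2 x_2/(1+x_2)-\delta_2$ is \emph{not} uniformly negative since $\epsilon_2>\delta_2$ is assumed, and there is a nonnegative forcing $q_2 y_1$, so a direct Gronwall estimate on the $y_2$-equation fails. The plan is to use the Rosenzweig--MacArthur-type Lyapunov combination
\begin{equation*}
W(t):=x_2(t)+\frac{y_2(t)}{\epsilon_2},
\end{equation*}
for which the nonlinear predation terms cancel and leave
\begin{equation*}
\dot W=x_2\Bigl(1-\frac{x_2}{k_c}\Bigr)-\frac{\delta_2}{\epsilon_2}y_2+\frac{q_2}{\epsilon_2}y_1.
\end{equation*}
Adding $\delta_2 W$ to both sides and using $x_2\le k_c$ from Step 1 together with $y_1\le B$ from Step 2 yields $\dot W+\delta_2 W\le(1+\delta_2)k_c+(q_2/\epsilon_2)B$, and Gronwall's inequality then gives $\limsup_{t\to\infty} W<\infty$, hence $y_2$ is bounded. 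The crux of the whole argument is this final step: the $\Omega_2$-patch lacks any AF, so dissipativity of $y_2$ must come entirely from the prey-predator coupling via the cancellation trick encoded in $W=x_2+y_2/\epsilon_2$, and the sequential one-way logic (bound $y_1$ before attempting $y_2$) is essential, since any reverse coupling would break the chain.
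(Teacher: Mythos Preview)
Your treatment of the blow-up assertion and of the $y_1$ bound is essentially the paper's argument: both proceed by a differential inequality $\dot y_1\le(\epsilon_1 h(x_1)-\delta_1-q_1)y_1$ and use monotonicity of $h(x_1)=(x_1+\xi)/(1+x_1+\alpha\xi)$ to control the bracket. Your handling of that monotonicity is in fact more careful than the paper's: the paper simply asserts $h(x_1)\le h(0)$ after a ``WLOG'' case assumption, whereas you observe that $h$ is monotone on $[0,\infty)$ in one direction or the other and hence $\sup_{x_1\ge 0}\epsilon_1 h(x_1)=\max\bigl(\epsilon_1\xi/(1+\alpha\xi),\,\epsilon_1\bigr)$. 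One small point: the theorem's hypothesis is literally $\max(1,\epsilon_1\xi/(1+\alpha\xi))<\delta_1+q_1$, whereas your argument needs $\max(\epsilon_1,\epsilon_1\xi/(1+\alpha\xi))<\delta_1+q_1$; these agree only when $\epsilon_1\le 1$, which is consistent with all of the paper's parameter choices but is not stated explicitly. This is a discrepancy in the theorem statement rather than in your reasoning.

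The genuine addition in your proposal is the bound on $y_2$. The paper's proof stops after showing $y_1$ cannot blow up and says nothing about $y_2$, even though the theorem claims boundedness of the full predator population. Your use of the dissipativity combination $W=x_2+y_2/\epsilon_2$, which cancels the Holling interaction terms and reduces to a linear Gronwall estimate with bounded forcing $q_2y_1/\epsilon_2$, is the standard and correct way to close this gap; it is needed precisely because $\epsilon_2>\delta_2$ makes a direct estimate on $\dot y_2$ alone impossible. (Minor point: $x_2\le k_c$ holds only asymptotically, but $x_2\le\max(x_2(0),k_c)$ holds for all $t$, which suffices.) So your proof follows the paper's line for $y_1$ but supplies a missing step for $y_2$ that the paper omits.
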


\begin{proof}
In the absence of drift, $q_{1}=q_{2}=0$, blow-up in $y_{1}$ follows via \cite{PWB23}, and subsequently in $y_{2}$, via the form of the equations in the crop field. To show the damping effect of drift, we proceed via contradiction. Assume the predator population $y_{1}$ blows up in infinite time. 
WLOG we work under the assumption $\dfrac{\epsilon_1\xi}{1+\alpha \xi} > 1$. Now using the monotonocity property of $\dfrac{x_1+\xi}{1+x_1+\alpha \xi} $, we have, $\dfrac{x_1+\xi}{1+x_1+\alpha \xi}  \leq \dfrac{\xi}{1+\alpha \xi} $. Thus, via comparison, we have,

\begin{eqnarray}
   && \Dot{y_1}  \nonumber \\
   &=& \epsilon_1 \left ( \dfrac{x_1+\xi}{1+x_1+\alpha \xi} \right)\ y_1 - \delta_1 y_1 - q_1 y_{1} \nonumber \\
   &\leq&  \dfrac{\epsilon_1\xi}{1+\alpha \xi}y_{1}  - \delta_1 y_{1} - q_1y_{1}  \nonumber \\
   &=&  \left ( \dfrac{\epsilon_1\xi}{1+\alpha \xi}  - \delta_1  - q_1  \right)y_{1} \nonumber \\
   &\leq& 0. \nonumber \\
\end{eqnarray}

Thus via our assumption $\Dot{y_1} \leq  0$, and integration in $[0,t]$ yields $y_{1}(t) \leq y_{1}(0)$, so $y_{1}$ cannot blow-up, and we have a contradiction.
\end{proof}
\begin{rem}
Figure \eqref{fig:blowup_uni_patch} shows the blow-up dynamics in predator population when $\xi> \xi_{critical}$ and no drift in \eqref{eq:patch_model_uni-i} i.e., $q_1=q_2=0$. Figure \eqref{fig:blowup_prevention_uni_patch} shows the blow-up prevention in patch $\Omega_1$ when there is drift between the patches, i.e., $q_1,q_2 \neq 0$. In both simulations, the values of all other parameters remain constant, showing that the movement of predators can prevent blow-up in predator populations. The values of $q_1 \ \& \ q_2$ are selected based on the parametric constraints outlined in Theorem \ref{thm:t1u1}.
\end{rem}
\begin{figure}
  \begin{subfigure}{.44\textwidth}
\centering
  \includegraphics[width= 7cm, height=6cm]{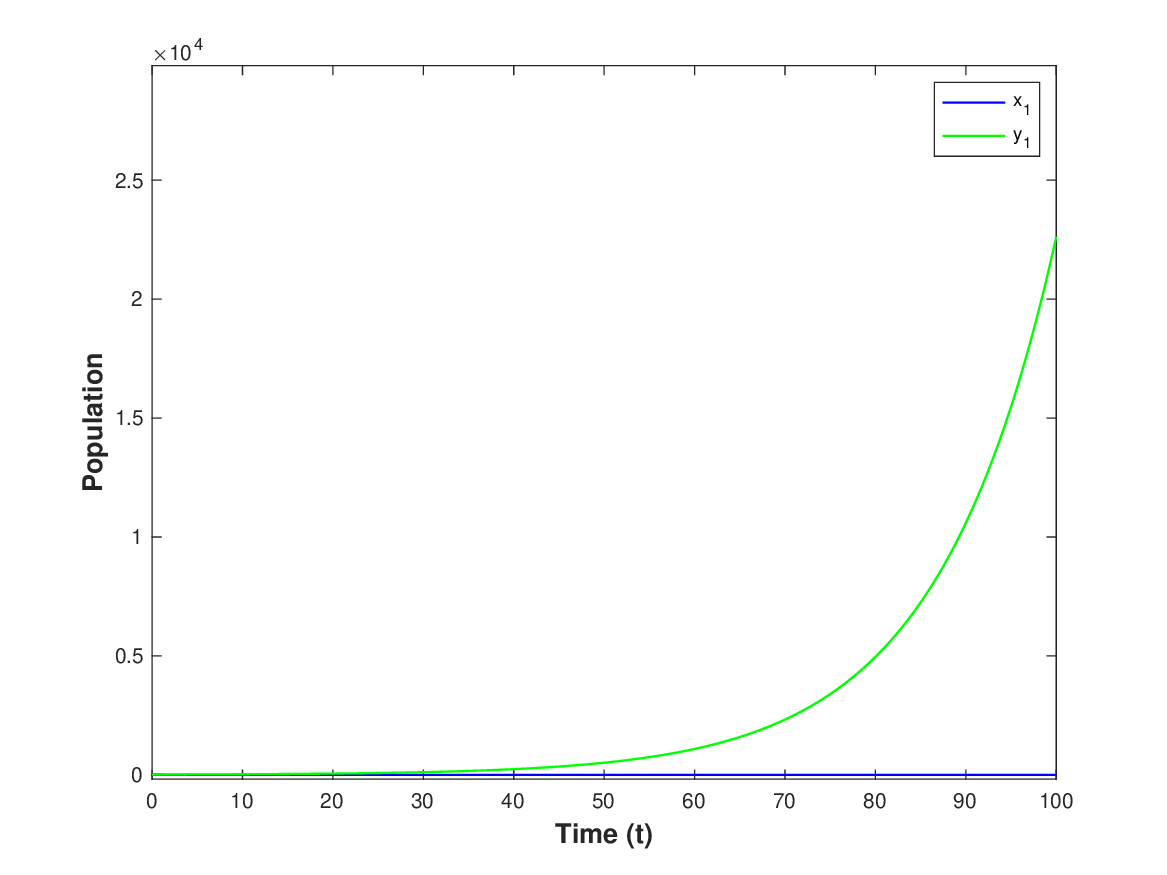}
 \subcaption{ $q_1=0,\ q_2=0$}
 \label{fig:blowup_uni_patch}
  \end{subfigure}
  \begin{subfigure}{.44\textwidth}
  \centering
  \includegraphics[width= 7cm, height=6cm]{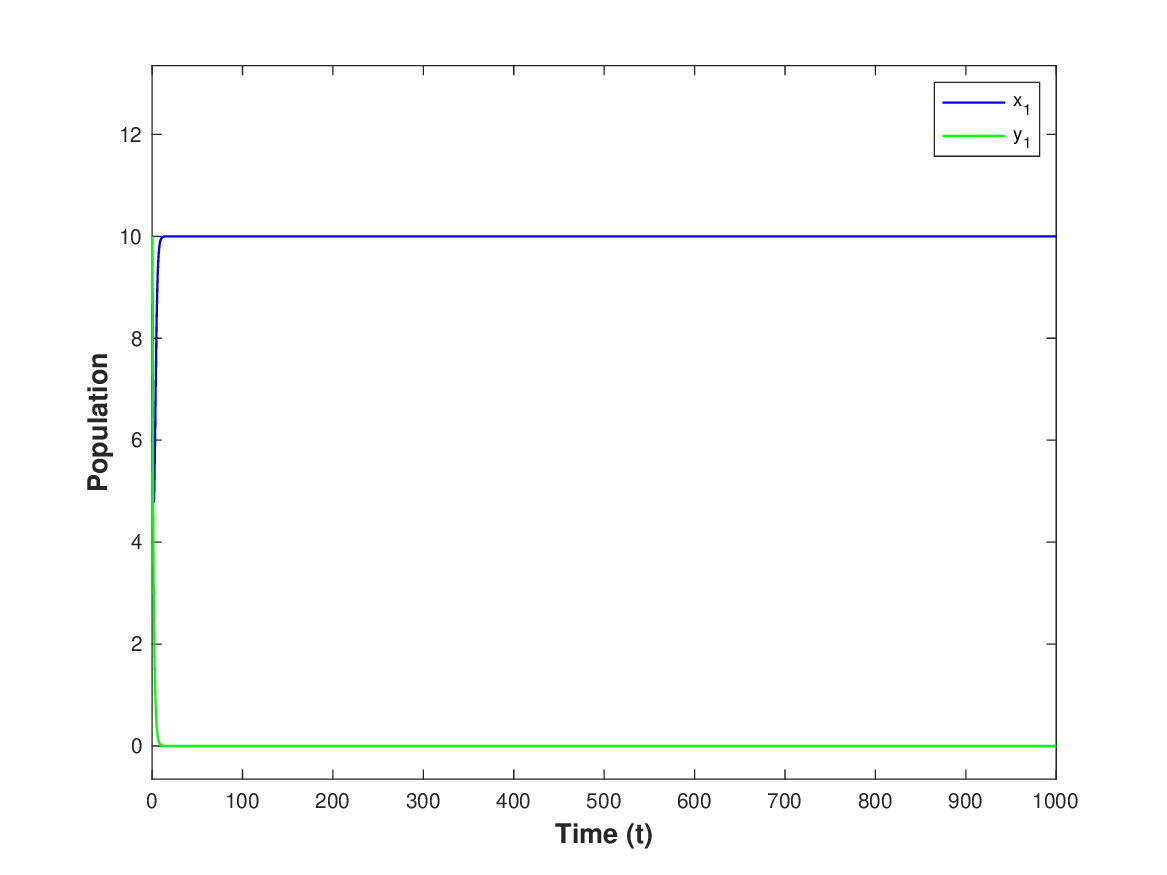}
 \subcaption{ $q_1=0.82, \ q_2=0.6$}
\label{fig:blowup_prevention_uni_patch}
 \end{subfigure}
 \caption{ The parameter set used is $k_p=10,k_c=10,\alpha=0.2, \xi=0.8, \epsilon_1 =\epsilon_2 =0.4, \delta_1= \delta_2=0.2 \ \text{with I.C.} = [10,10,10,10]$. }
 \label{blowup_drift_comp}
 \end{figure}
\subsection{Equilibrium Analysis}  

We now consider the existence and local stability analysis of the biologically relevant equilibrium points for the system \eqref{eq:patch_model_uni-i}.
The Jacobian matrix $(\hat{J})$ for \eqref{eq:patch_model_uni-i} is given by: 
 
\begin{equation} 
 \hat{J} = \begin{bmatrix}
1 - \dfrac{2 x_1}{k_p} -  \dfrac{y_1  \left(1+ \alpha \xi \right) } {\left(1+x_1+\alpha \xi\right)^2}  & \dfrac{- x_1}{1+x_1+\alpha \xi} & 0 & 0 \vspace{0.25cm}
  \\ 
\dfrac{\epsilon_1  \left(1 + \left(\alpha - 1\right) \xi\right) \ y_1}{(1+x_1+\alpha \xi)^2} &  \dfrac{\epsilon_1 \left(x_1 + \xi\right)}{1+x_1+\alpha \xi} - \delta_1 - q_1 & 0 & 0
\\
0 & 0 & 1 - \dfrac{2 x_2}{k_c} - \dfrac{y_2}{(1+x_2)^2} & 
  \dfrac{- x_2}{1+x_2} \vspace{0.25cm}
  \\
  0 & q_2 & \dfrac{\epsilon_2 y_2}{(1+x_2)^2} & \dfrac{\epsilon_2 x_2}{1+x_2} - \delta_2 
  \end{bmatrix}
\label{general_jacobian_k1_k3_0_uni}
 \end{equation}

\subsubsection{Pest free state in patch \texorpdfstring {$\Omega_2$}{Lg}\mbox{}}
\begin{lemma}
    The equilibrium point $ \hat{E_1} = (x_1^*,y_1^*,0,y_2^*)$ exists if  $\xi < \dfrac{\delta_1 + q_1 }{\epsilon_1 - \alpha (\delta_1 + q_1 )}$ and $ \epsilon_1 > \delta_1 + q_1 $.
\label{lem:E1_existence_unkidirectional}
  \end{lemma}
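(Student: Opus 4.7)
The plan is to construct $\hat{E}_1$ coordinate by coordinate by setting the right-hand side of \eqref{eq:patch_model_uni-i} to zero and solving the resulting algebraic system, then reading off the parametric conditions that keep every nontrivial component strictly positive.

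First I would exploit the decoupled block structure of the equilibrium equations at $x_2 = 0$. The third equation is automatically satisfied, and the second equation, after factoring out $y_1 > 0$, reduces to a single linear equation in $x_1$:
\begin{equation*}
\epsilon_1 (x_1 + \xi) = (\delta_1 + q_1)(1 + x_1 + \alpha \xi).
\end{equation*}
Solving gives
\begin{equation*}
x_1^* \;=\; \frac{(\delta_1 + q_1)(1+\alpha \xi) - \epsilon_1 \xi}{\epsilon_1 - (\delta_1 + q_1)}.
\end{equation*}
Positivity of $x_1^*$ requires numerator and denominator to share sign. I would take the denominator positive, which gives the first hypothesis $\epsilon_1 > \delta_1 + q_1$; then requiring the numerator positive rearranges exactly to $\xi < \dfrac{\delta_1 + q_1}{\epsilon_1 - \alpha(\delta_1 + q_1)}$ (with $\epsilon_1 > \alpha(\delta_1+q_1)$ guaranteeing the denominator on the right is positive, which itself follows from $\epsilon_1 > \delta_1 + q_1 \ge \alpha(\delta_1+q_1)$ under the natural assumption $\alpha \leq 1$ implicit in the AF literature; otherwise it is a mild tacit hypothesis).

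Next I would use the first equilibrium equation. Dividing by $x_1^* > 0$ yields
\begin{equation*}
y_1^* \;=\; \left(1 - \frac{x_1^*}{k_p}\right)(1 + x_1^* + \alpha \xi),
\end{equation*}
which is positive provided $x_1^* < k_p$. I would note that this is the usual ``carrying capacity large enough'' tacit assumption carried over from the classical AF literature (see Table~\ref{Table:1}); in our setting it is consistent with the existence regime and does not impose further constraints on $\xi$ or $q_1$ beyond those stated. Finally, the fourth equation at $x_2 = 0$ collapses to $-\delta_2 y_2 + q_2 y_1 = 0$, so
\begin{equation*}
y_2^* \;=\; \frac{q_2\, y_1^*}{\delta_2} \;>\; 0,
\end{equation*}
automatically positive whenever $y_1^* > 0$ and $q_2 > 0$.

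There is no real obstacle here: the argument is a straightforward algebraic inversion, and the only subtlety is bookkeeping the sign conditions on $x_1^*$ so that the two displayed inequalities in the lemma drop out cleanly. The drift rate $q_1$ enters additively with $\delta_1$ and shifts the critical AF quantity upward compared with the no-drift case, reflecting the fact that drift acts as an effective mortality for the prairie-strip predator; this is the conceptually important consequence of the calculation, and I would flag it in a closing sentence to motivate the stability analysis that follows.
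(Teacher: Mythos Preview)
Your proposal is correct and follows essentially the same route as the paper: set the right-hand sides to zero, solve $\dot y_1=0$ for $x_1^*$, read off the two sign conditions giving $\epsilon_1>\delta_1+q_1$ and $\xi<\dfrac{\delta_1+q_1}{\epsilon_1-\alpha(\delta_1+q_1)}$, then back-substitute to obtain $y_1^*$ from $\dot x_1=0$ and $y_2^*=\dfrac{q_2 y_1^*}{\delta_2}$ from $\dot y_2=0$. Your additional remarks on the tacit constraint $x_1^*<k_p$ and the role of $q_1$ as effective mortality go slightly beyond the paper's appendix proof but are welcome clarifications.
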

  
  \begin{proof}

See Appendix \eqref{crop_extinctio_uni_existence}.
\end{proof}

 \begin{lemma}
  \label{lem:E_stability_uni}
The equilibrium point $ \hat{E_1} = (x_1^*,y_1^*,0,y_2^*)$ is conditionally locally asymptotically stable. 

\end{lemma}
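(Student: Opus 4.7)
The plan is to exploit the block lower-triangular structure of the Jacobian $\hat{J}$ evaluated at $\hat{E_1}$. The prairie-strip equations for $(x_1,y_1)$ do not depend on the crop-field variables $(x_2,y_2)$, and the only coupling in the reverse direction comes from the $q_2 y_1$ entry in $\dot y_2$. Consequently, at $\hat{E_1} = (x_1^*,y_1^*,0,y_2^*)$ the Jacobian has the form
\begin{equation*}
\hat{J}(\hat{E_1}) = \begin{bmatrix} J_{\Omega_1} & 0 \\ C & J_{\Omega_2} \end{bmatrix},
\end{equation*}
so the spectrum of $\hat{J}(\hat{E_1})$ is the union of the spectra of the $2\times 2$ diagonal blocks $J_{\Omega_1}$ and $J_{\Omega_2}$, and stability reduces to verifying that each block is Hurwitz.

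First I would handle the crop-field block $J_{\Omega_2}$. Plugging $x_2=0$ into the lower-right $2\times 2$ submatrix of $\hat{J}$ makes it lower triangular, so its eigenvalues are read off the diagonal: $\lambda = 1 - y_2^*$ and $\lambda = -\delta_2$. The second is automatically negative, so the block is Hurwitz exactly when $y_2^* > 1$. Using the equilibrium relation $-\delta_2 y_2^* + q_2 y_1^* = 0$ from $\dot y_2 = 0$ at $x_2=0$, this rewrites as the explicit parameter condition $q_2 y_1^* > \delta_2$, which I would state as the first stability requirement.

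Next I would analyse the prairie-strip block $J_{\Omega_1}$ at $(x_1^*, y_1^*)$. The key simplification is that at equilibrium $\epsilon_1(x_1^*+\xi)/(1+x_1^*+\alpha\xi) = \delta_1 + q_1$, so the $(2,2)$ entry of $J_{\Omega_1}$ vanishes, which collapses the Routh--Hurwitz analysis to two clean conditions. Using also $y_1^* = (1-x_1^*/k_p)(1+x_1^*+\alpha\xi)$ from $\dot x_1 = 0$, the determinant becomes
\begin{equation*}
\det(J_{\Omega_1}) = \frac{\epsilon_1 \, x_1^* \, y_1^* \, (1 + (\alpha - 1)\xi)}{(1 + x_1^* + \alpha\xi)^3},
\end{equation*}
which is positive provided $1 + (\alpha - 1)\xi > 0$ (automatic when $\alpha \geq 1$; otherwise $\xi < 1/(1-\alpha)$). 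The trace reduces to
\begin{equation*}
\operatorname{tr}(J_{\Omega_1}) = 1 - \frac{2 x_1^*}{k_p} - \frac{(1 - x_1^*/k_p)(1 + \alpha\xi)}{1 + x_1^* + \alpha\xi},
\end{equation*}
and I would derive the explicit inequality on $x_1^*$ (hence on the parameters $k_p, \alpha, \xi$) that forces this to be negative.

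Combining the three requirements, namely $q_2 y_1^* > \delta_2$, $1 + (\alpha-1)\xi > 0$, and $\operatorname{tr}(J_{\Omega_1}) < 0$, all four eigenvalues have negative real part and local asymptotic stability at $\hat{E_1}$ follows from the Hartman--Grobman theorem. The main obstacle I anticipate is packaging the trace condition into an interpretable inequality, since $x_1^*$ is only implicitly defined through the equilibrium relations in Lemma \ref{lem:E1_existence_unkidirectional}; I expect to keep it either as a condition on $x_1^*$ directly or to exploit the bound $x_1^* < k_p$ from the existence lemma to produce a sharper sufficient condition. This is precisely what makes the stability ``conditional'' rather than unconditional.
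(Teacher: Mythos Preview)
Your proposal is correct and in fact tidier than the paper's own argument. Both proofs linearise at $\hat{E_1}$ and study the spectrum of $\hat J(\hat{E_1})$, but the paper only factors out the single linear term $(1-y_2^*-\lambda)$ and then imposes Routh--Hurwitz conditions on the residual \emph{cubic} $\lambda^3+(\delta_2-A)\lambda^2+(-A\delta_2-BC)\lambda-BC\delta_2$, obtaining a list of inequalities in $A,B,C,\delta_2$. You instead exploit the full block lower-triangular structure $\hat J(\hat{E_1})=\begin{bmatrix}J_{\Omega_1}&0\\ C & J_{\Omega_2}\end{bmatrix}$, which reduces the problem to two $2\times 2$ blocks. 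The paper's cubic actually factors as $(\lambda+\delta_2)(\lambda^2-A\lambda-BC)$, so its Routh--Hurwitz list is equivalent to your trace/determinant conditions on $J_{\Omega_1}$ together with $\lambda=-\delta_2<0$; your route just gets there without the detour. One incidental benefit of your version: it makes transparent that the correct crop-field condition is $y_2^*>1$ (equivalently $q_2 y_1^*>\delta_2$), whereas the paper's stated inequality \eqref{R-H for x2_0_unidirectional} has a sign slip and reads $1-y_2^*>0$; your derivation matches the condition the paper itself uses later in Lemma~\ref{lem:pest_extinction_crop_field}.
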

\begin{proof}
    See Appendix \eqref{lem proof:E_stability_uni}.
\end{proof}
\subsubsection{Pest free state in patch \texorpdfstring {$\Omega_1$}{Lg}}\mbox{}
\begin{lemma}
    The equilibrium point $ \hat{E_2} = (0,y_1^*,x_2^*,y_2^*)$ exists if  $\xi = \dfrac{\delta_1 + q_1 }{\epsilon_1- \alpha(\delta_1 + q_1 )}$ and $ \epsilon_1 > \delta_1 + q_1 $ and, $y_1^* > \dfrac{\delta_2}{q_2}$.
\label{lem:E3_existence_unkidirectional}
  \end{lemma}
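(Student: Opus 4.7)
The plan is to substitute $x_1^* = 0$ into the four nullcline equations obtained from \eqref{eq:patch_model_uni-i}, exploit the one-way coupling (via the immigration term $q_2 y_1$) to decouple the analysis into a prairie-strip subsystem and a crop-field subsystem, and then solve the latter parametrically in $y_1^*$.

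First, with $x_1^*=0$, the equation $\dot x_1 = 0$ is satisfied trivially, while $\dot y_1 = 0$ with $y_1^* > 0$ forces $\epsilon_1 \xi/(1+\alpha\xi) = \delta_1 + q_1$; solving for $\xi$ recovers the stated critical value $\xi = (\delta_1+q_1)/(\epsilon_1-\alpha(\delta_1+q_1))$. Positivity of $\xi$ requires $\epsilon_1 > \alpha(\delta_1+q_1)$, which the stronger hypothesis $\epsilon_1 > \delta_1+q_1$ delivers in the natural high-quality-AF regime $\alpha \le 1$. A key structural observation is that, at this critical $\xi$, the coefficient of $y_1$ in $\dot y_1$ vanishes identically whenever $x_1=0$, so $y_1^*$ is not pinned down by the $\Omega_1$ equations at all: it parametrises a vertical line of equilibria in $(x_1,y_1)$-space, which is exactly the degeneracy that a transcritical bifurcation at $\hat E_1$ produces.

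Treating $y_1^*$ now as a prescribed positive parameter, the last two equations decouple into a scalar problem for $(x_2^*, y_2^*)$. From $\dot x_2 = 0$ with $x_2^* > 0$ one reads off $y_2^* = (1 - x_2^*/k_c)(1+x_2^*)$, automatically positive for $x_2^* \in (0, k_c)$. Substituting this into $\dot y_2 = 0$ and clearing the factor $(1+x_2^*)$ yields the single implicit equation
\[
G(x_2^*) := \left(1 - \tfrac{x_2^*}{k_c}\right)\bigl(\delta_2 + (\delta_2-\epsilon_2)x_2^*\bigr) = q_2\, y_1^*.
\]
Since $G(0) = \delta_2$ and $G(k_c) = 0$, I would invoke the intermediate value theorem to extract a root $x_2^* \in (0,k_c)$. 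The lemma's bound $y_1^* > \delta_2/q_2$ is the relevant sufficient condition precisely in the regime where $G$ overshoots its boundary value at the origin before decaying, that is, $G'(0) > 0$, or equivalently $\epsilon_2 < \delta_2(1 - 1/k_c)$: in this case $G$ attains values strictly greater than $\delta_2$ in $(0,k_c)$, and any horizontal level $q_2 y_1^* \in (\delta_2,\max G)$ is crossed.

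The main obstacle will be the sign bookkeeping on the crop-field branch. Specifically, the candidate root must lie in the subinterval on which $\delta_2 - \epsilon_2 x_2^*/(1+x_2^*) > 0$, so that the two expressions for $y_2^*$ arising respectively from the $\dot x_2$ and $\dot y_2$ nullclines are genuinely consistent and produce a \emph{positive} $y_2^*$. Once this consistency is verified — it amounts to checking $x_2^* < \delta_2/(\epsilon_2-\delta_2)$ when $\epsilon_2 > \delta_2$, and is automatic otherwise — the equilibrium $\hat E_2 = (0, y_1^*, x_2^*, y_2^*)$ is established with all non-pest components strictly positive, completing the existence claim.
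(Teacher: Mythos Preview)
Your overall architecture matches the paper exactly: set $x_1^*=0$, read off the critical $\xi$ from $\dot y_1=0$, note that $y_1^*$ is then a free parameter, and solve the $(x_2,y_2)$ subsystem by eliminating $y_2^*$ between the two crop-field nullclines. The paper proceeds at the last step by writing the resulting equation as a quadratic in $x_2^*$ (the same quadratic as in the coexistence proof), and the bound $y_1^*>\delta_2/q_2$ is precisely the condition that the constant term have the sign needed for a unique positive root; your implicit function $G$ is just this quadratic rearranged.

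Where your argument departs from the paper is in the existence step for $x_2^*\in(0,k_c)$, and there the intermediate-value reasoning has a genuine gap. You correctly observe that $G(0)=\delta_2$ and $G(k_c)=0$, but the lemma's hypothesis $y_1^*>\delta_2/q_2$ places the target level $q_2 y_1^*$ \emph{above} both endpoint values. The IVT therefore gives nothing unless $G$ has an interior maximum exceeding $q_2 y_1^*$, which you then supply as the extra condition $G'(0)>0$, i.e.\ $\epsilon_2<\delta_2(1-1/k_c)$, together with an implicit upper cap $q_2 y_1^*<\max G$. Neither of these is part of the lemma's hypotheses, and the first is incompatible with the standing assumption $\epsilon_2>\delta_2$ used throughout the paper. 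So as written your IVT step does not establish existence under the stated conditions alone; it establishes it only in a strictly smaller parameter regime. The paper avoids this by reading the sign of the constant term of the quadratic directly, which is where the inequality $y_1^*>\delta_2/q_2$ actually comes from.
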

  
  \begin{proof}

  See Appendix \eqref{proof_E3_existence_drift}.
  \end{proof}
  
  \begin{lemma}
\label{stability_pest_ext_af_drift}
The equilibrium point $ \hat{E_2} = (0,y_1^*,x_2^*,y_2^*)$ is a non-hyperbolic point. 
\end{lemma}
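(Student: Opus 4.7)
The plan is to show that the Jacobian $\hat{J}$ evaluated at $\hat{E}_2$ always possesses a zero eigenvalue, regardless of the other parameter values; this immediately forces $\hat{E}_2$ to be non-hyperbolic. First I would substitute $x_1 = 0$ into the Jacobian matrix given in \eqref{general_jacobian_k1_k3_0_uni}. Two simplifications happen at once: the $(1,2)$ entry $-x_1/(1+x_1+\alpha\xi)$ vanishes because $x_1=0$, and the $(2,2)$ entry $\epsilon_1(x_1+\xi)/(1+x_1+\alpha\xi) - \delta_1 - q_1$ reduces to $\epsilon_1\xi/(1+\alpha\xi) - \delta_1 - q_1$. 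By the existence condition $\xi = (\delta_1+q_1)/(\epsilon_1-\alpha(\delta_1+q_1))$ from Lemma \ref{lem:E3_existence_unkidirectional}, a short algebraic rearrangement gives $\epsilon_1\xi = (\delta_1+q_1)(1+\alpha\xi)$, so the $(2,2)$ entry is exactly $0$.

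Next I would exploit the resulting block structure. The Jacobian at $\hat{E}_2$ takes the form
\begin{equation*}
\hat{J}(\hat{E}_2) = \begin{bmatrix} a_{11} & 0 & 0 & 0 \\ a_{21} & 0 & 0 & 0 \\ 0 & 0 & b_{11} & b_{12} \\ 0 & q_2 & b_{21} & b_{22} \end{bmatrix},
\end{equation*}
where $a_{11} = 1 - y_1^*/(1+\alpha\xi)$, $a_{21}$ is the $(2,1)$ entry of \eqref{general_jacobian_k1_k3_0_uni} at $\hat{E}_2$, and the $2\times 2$ block $(b_{ij})$ is the $(x_2,y_2)$ block evaluated at $(x_2^*, y_2^*)$. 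Expanding $\det(\hat{J}(\hat{E}_2) - \lambda I)$ along the first row, and then the resulting $3\times 3$ minor along its first row, I would obtain the factorization
\begin{equation*}
\det(\hat{J}(\hat{E}_2) - \lambda I) = (a_{11} - \lambda)(-\lambda)\bigl[(b_{11}-\lambda)(b_{22}-\lambda) - b_{12}b_{21}\bigr].
\end{equation*}
This explicitly shows that $\lambda = 0$ is an eigenvalue; the remaining eigenvalues are $a_{11}$ together with the two roots of the quadratic coming from the $(x_2,y_2)$-block. Since at least one eigenvalue has zero real part, $\hat{E}_2$ is non-hyperbolic by definition, completing the proof.

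There is no real obstacle here beyond careful bookkeeping; the whole argument hinges on identifying that the existence condition for $\hat{E}_2$ is precisely the condition that makes the $(2,2)$ Jacobian entry vanish, together with the fact that the $(1,2)$ entry vanishes automatically on the $x_1=0$ invariant face. I would finish with a brief remark noting that, because one eigenvalue is zero, the Hartman--Grobman theorem does not apply, and a centre-manifold or higher-order analysis would be required to resolve the true stability of $\hat{E}_2$ (e.g.\ to confirm the infinite-time pest extinction behaviour in $\Omega_1$ that is known for the underlying classical AF model).
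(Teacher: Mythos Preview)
Your proof is correct and follows essentially the same route as the paper: you substitute $x_1=0$, use the existence condition $\xi=(\delta_1+q_1)/(\epsilon_1-\alpha(\delta_1+q_1))$ to kill the $(2,2)$ entry, and then expand the characteristic polynomial to exhibit the explicit factor $\lambda$, exactly as the paper does. Your additional closing remark about the Hartman--Grobman theorem and centre-manifold analysis is a nice touch that the paper does not include.
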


\begin{proof}
    See Appendix \eqref{stability_pest_ext_af_drift_proof}.
\end{proof}

\begin{rem}
The pest extinction state in the prairie strip exists under strict equality for $\xi$ since the predators are drifting from $\Omega_1$ to $\Omega_2$, leading to a decline in the predator population within the prairie strip over time. 
\end{rem}

\subsubsection{Pest free state in both \texorpdfstring{$\Omega_1 \ \& \ \Omega_2$}{Lg}\mbox{}}
\begin{lemma}
    The equilibrium point $ \hat{E_3} = (0,y_1^*,0,y_2^*)$ exists if  $\xi = \dfrac{\delta_1 + q_1 }{\epsilon_1 - \alpha (\delta_1 + q_1 )}$ and $ \epsilon_1 > \delta_1 + q_1 $.
\label{lem:E2_existence_unkidirectional}
  \end{lemma}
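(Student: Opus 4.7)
The plan is to verify existence directly by substituting $x_1 = 0$ and $x_2 = 0$ into the equilibrium system obtained from \eqref{eq:patch_model_uni-i} and checking when a positive pair $(y_1^*, y_2^*)$ can solve the remaining two scalar equations. Because the first and third components of the vector field each carry an overall factor of $x_1$ or $x_2$ respectively, setting $x_1 = x_2 = 0$ makes $\dot{x_1} = 0$ and $\dot{x_2} = 0$ hold automatically, so the content of the lemma is entirely contained in the $\dot{y_1} = 0$ and $\dot{y_2} = 0$ equations.

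First I would evaluate the $\dot{y_1} = 0$ equation at $x_1 = 0$, obtaining
\begin{equation*}
\left( \frac{\epsilon_1 \xi}{1+\alpha \xi} - \delta_1 - q_1 \right) y_1^* = 0.
\end{equation*}
Since we seek $y_1^* > 0$, the parenthetical factor must vanish. Clearing denominators gives $\epsilon_1 \xi = (\delta_1 + q_1)(1 + \alpha \xi)$, and rearranging yields precisely the stated relation $\xi = \dfrac{\delta_1 + q_1}{\epsilon_1 - \alpha(\delta_1 + q_1)}$. For $\xi$ to be positive and biologically meaningful, the denominator must be positive; this is ensured by the hypothesis $\epsilon_1 > \delta_1 + q_1$ (together with $\alpha \leq 1$ in the usual AF quality regime). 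Next I would evaluate the $\dot{y_2} = 0$ equation at $x_2 = 0$, which collapses to $-\delta_2 y_2^* + q_2 y_1^* = 0$, so $y_2^* = \dfrac{q_2}{\delta_2} y_1^*$; hence positivity of $y_2^*$ is automatic once $y_1^* > 0$ and $q_2 > 0$.

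A point I would flag up front, analogous to the behavior of $\hat{E}_2$ in Lemma \ref{lem:E3_existence_unkidirectional}, is that once the equality constraint on $\xi$ is in force the coefficient of $y_1$ in $\dot{y_1}$ vanishes identically, so the second equation holds for every $y_1^* \geq 0$. Thus $\hat{E}_3$ is not a single isolated equilibrium but a one-parameter family $\{(0, y_1^*, 0, (q_2/\delta_2) y_1^*) : y_1^* > 0\}$; this is what makes the existence statement require strict equality (not inequality) on $\xi$. The main subtlety, therefore, is not a hard computation but rather recognizing and interpreting this degeneracy correctly: it explains why $\hat{E}_3$ will later turn out to be non-hyperbolic, paralleling Lemma \ref{stability_pest_ext_af_drift}, and why the biological scenario corresponds to a knife-edge balance between AF intake and the combined loss rate $\delta_1 + q_1$ in the prairie strip.
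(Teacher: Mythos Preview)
Your proposal is correct and follows essentially the same approach as the paper: substitute $x_1 = x_2 = 0$, read off $y_2^* = (q_2/\delta_2)\,y_1^*$ from $\dot{y_2}=0$, and solve $\dot{y_1}=0$ for the equality constraint on $\xi$, with $\epsilon_1 > \delta_1 + q_1$ guaranteeing positivity of the resulting $\xi$. Your additional observation that $y_1^*$ is left undetermined (so $\hat{E}_3$ is a one-parameter family, foreshadowing the non-hyperbolicity in Lemma~\ref{stability_pest_ext_drift}) is a helpful elaboration not spelled out in the paper's own proof.
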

  \begin{proof}

See Appendix \eqref{proof_E2_existence_unkidirectional}.
\end{proof}
\begin{lemma}
\label{stability_pest_ext_drift}
The equilibrium point $ \hat{E_3} = (0,y_1^*,0,y_2^*)$ is a non-hyperbolic point. 
\end{lemma}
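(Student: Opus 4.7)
The plan is to evaluate the Jacobian $\hat{J}$ given in \eqref{general_jacobian_k1_k3_0_uni} at $\hat{E_3} = (0, y_1^*, 0, y_2^*)$ and exploit its block structure. First, I would substitute $x_1 = 0$ and $x_2 = 0$ into $\hat{J}$. The upper-right $2\times 2$ block stays identically zero (it is already zero in the general expression), while the lower-left block retains the single non-zero entry $q_2$ in position $(4,2)$. Hence at $\hat{E_3}$ the Jacobian has the block lower-triangular form
\begin{equation*}
\hat{J}(\hat{E_3}) = \begin{bmatrix} A & 0 \\ B & C \end{bmatrix},
\end{equation*}
so its spectrum is $\sigma(A) \cup \sigma(C)$.

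Next, I would write out the prairie-strip block explicitly. At $x_1 = 0$ the $(1,2)$ entry of $A$ vanishes (because of the factor $x_1$ in $-x_1/(1+x_1+\alpha\xi)$), so $A$ is itself lower triangular with diagonal entries
\begin{equation*}
A_{11} = 1 - \frac{y_1^*}{1+\alpha\xi}, \qquad A_{22} = \frac{\epsilon_1 \xi}{1+\alpha\xi} - \delta_1 - q_1.
\end{equation*}
The key step is to invoke the existence condition from Lemma \ref{lem:E2_existence_unkidirectional}: the equality $\xi = (\delta_1+q_1)/(\epsilon_1 - \alpha(\delta_1+q_1))$ is algebraically equivalent to $\epsilon_1 \xi/(1+\alpha\xi) = \delta_1 + q_1$, which forces $A_{22} = 0$. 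Therefore $0 \in \sigma(A) \subseteq \sigma(\hat{J}(\hat{E_3}))$, which immediately establishes non-hyperbolicity.

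For completeness I would also record the other three eigenvalues. The remaining eigenvalue of $A$ is $1 - y_1^*/(1+\alpha\xi)$, and the crop-field block $C$ is again lower triangular (because $x_2 = 0$ kills the $(3,4)$ entry), so its eigenvalues are $1 - y_2^*$ and $-\delta_2$. Having these on hand clarifies that the non-hyperbolicity is not a coincidence of a single borderline parameter but a structural consequence of the strict equality in the existence condition, mirroring Lemma \ref{stability_pest_ext_af_drift}.

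The only real obstacle is bookkeeping: I have to be careful that the partial derivative $\partial \dot{x_1}/\partial y_1 = -x_1/(1+x_1+\alpha\xi)$ indeed vanishes at $x_1 = 0$ so that $A$ is lower triangular (otherwise one would need to compute a $2\times 2$ determinant and argue that the product of eigenvalues is zero). Once this is checked, the result is a direct consequence of the equilibrium identity and the block-triangular spectrum formula; no Lyapunov or center manifold machinery is needed to conclude non-hyperbolicity itself.
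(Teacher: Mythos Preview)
Your proposal is correct and follows essentially the same approach as the paper: evaluate the Jacobian at $\hat{E_3}$, observe that the matrix is (block) lower triangular because the entries $-x_1/(1+x_1+\alpha\xi)$ and $-x_2/(1+x_2)$ vanish, read off the four diagonal eigenvalues, and use the existence equality $\epsilon_1\xi/(1+\alpha\xi)=\delta_1+q_1$ to conclude that the $(2,2)$ entry, hence one eigenvalue, is zero. Your explicit explanation of why $A_{22}=0$ via the existence condition is a welcome clarification of what the paper leaves implicit.
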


\begin{proof}
    See Appendix \eqref{proof:stability_pest_ext_drift}.
\end{proof}

\subsubsection{Coexistence state in both \texorpdfstring {$\Omega_1 \ \& \ \Omega_2$}{Lg}}
\begin{lemma}
The equilibrium point $ \hat{E_4} =(x_1^\star,y_1^\star,x_2^\star,y_2^\star)$ exists if  $\xi < \dfrac{\delta_1+q_1}{\epsilon_1 - \alpha (\delta_1+q_1)}$, {$ \epsilon_1 > \delta_1+q_1 $} and, $y_1^\star > \dfrac{\delta_2}{q_2}$. 
 \label{lem:co_existence}
\end{lemma}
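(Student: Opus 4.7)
The plan exploits the one-way coupling in system \eqref{eq:patch_model_uni-i}: the prairie-strip variables $(x_1,y_1)$ decouple from $(x_2,y_2)$ entirely, while the crop-field variables only see the prairie strip through the inflow term $q_2 y_1$. I will first solve for $(x_1^\star, y_1^\star)$ on its own, and then treat $q_2 y_1^\star$ as a fixed external input when hunting for $(x_2^\star, y_2^\star)$.

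For the prairie strip, setting $\dot y_1 = 0$ with $y_1 \neq 0$ yields a linear equation in $x_1$, namely $\epsilon_1 (x_1 + \xi) = (\delta_1 + q_1)(1 + x_1 + \alpha \xi)$, whose explicit solution is
\[
x_1^\star \;=\; \frac{(\delta_1 + q_1)(1+\alpha\xi) - \epsilon_1 \xi}{\epsilon_1 - (\delta_1 + q_1)}.
\]
Positivity of the numerator and denominator of this expression is exactly equivalent to the stated hypotheses $\epsilon_1 > \delta_1 + q_1$ and $\xi < (\delta_1+q_1)/(\epsilon_1 - \alpha(\delta_1+q_1))$. Then $\dot x_1 = 0$ with $x_1 > 0$ gives $y_1^\star = (1 - x_1^\star/k_p)(1 + x_1^\star + \alpha \xi)$, which is positive under the tacit constraint $x_1^\star < k_p$ (inherited from the same parameter regime, analogous to Lemma \ref{lem:E1_existence_unkidirectional}).

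For the crop field, $\dot x_2 = 0$ with $x_2 > 0$ forces the prey-nullcline relation $y_2 = (1 + x_2)(1 - x_2/k_c)$. Substituting this into $\dot y_2 = 0$ and clearing the denominator collapses the four-dimensional algebraic system to the single scalar equation
\[
G(x_2) \;\coloneqq\; \bigl(1 - x_2/k_c\bigr)\bigl[\delta_2 + (\delta_2 - \epsilon_2)\,x_2\bigr] \;=\; q_2\, y_1^\star,
\]
which is a quadratic in $x_2$. Any root $x_2^\star \in (0, k_c)$ of this equation automatically produces a positive $y_2^\star = (1 + x_2^\star)(1 - x_2^\star/k_c)$, so existence of $\hat{E}_4$ reduces to finding a suitable root of $G(x_2) = q_2 y_1^\star$ inside $(0, k_c)$.

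The main step is producing this root. I would tabulate $G(0) = \delta_2$, $G(k_c) = 0$, and the second root $x = \delta_2/(\epsilon_2 - \delta_2)$ of $G$, which gives a complete sign pattern for $G$ on $[0,k_c]$. The hypothesis relating $y_1^\star$ and $\delta_2/q_2$ then pins down the sign of $G(x) - q_2 y_1^\star$ at the critical endpoint, so an application of the intermediate value theorem on the appropriate subinterval yields the desired $x_2^\star$. I expect the only nontrivial piece to be the short case split on whether $\delta_2/(\epsilon_2-\delta_2)$ lies below or above $k_c$, together with checking that the resulting $x_2^\star$ indeed satisfies the two sign constraints ($x_2^\star < k_c$ so $y_2^\star > 0$, and $x_2^\star < \delta_2/(\epsilon_2-\delta_2)$ so the predator-nullcline expression stays positive); apart from this bookkeeping, the argument is routine algebra.
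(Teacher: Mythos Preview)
Your approach is the same as the paper's: solve the decoupled $(x_1,y_1)$ subsystem exactly as you wrote (the paper obtains the identical formula for $x_1^\star$), then reduce to the same quadratic in $x_2$ by equating the two nullcline expressions for $y_2$. The only cosmetic difference is that the paper reads off positivity of $x_2^\star$ from the sign of the constant term of the quadratic, whereas you phrase it as an IVT argument on $G$.

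There is, however, a genuine gap in the IVT step, and it is one the paper shares. With $\epsilon_2>\delta_2$ (assumed throughout), your $G$ is an upward-opening parabola with zeros at $r:=\delta_2/(\epsilon_2-\delta_2)$ and $k_c$; on the feasibility window $[0,\min(r,k_c)]$ it is strictly decreasing from $G(0)=\delta_2$ down to $0$. Consequently $G(x_2)=q_2 y_1^\star$ has a root there precisely when $q_2 y_1^\star<\delta_2$, i.e.\ $y_1^\star<\delta_2/q_2$, the \emph{reverse} of the stated hypothesis. Under $y_1^\star>\delta_2/q_2$ the quadratic has product of roots $<0$ and sum $k_c+r$, so its unique positive root exceeds $k_c+r>k_c$, forcing $y_2^\star=(1+x_2^\star)(1-x_2^\star/k_c)<0$. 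The paper's proof stops at ``unique positive root'' without checking $x_2^\star<k_c$, and its own numerical coexistence example in Figure~\ref{fig:interior_equilibrium_drift} in fact satisfies $y_1^\star\approx 0.92<1.2=\delta_2/q_2$. So your plan is sound, but when you execute the tabulation you should expect it to force the inequality the other way; the feasible-root condition is $y_1^\star<\delta_2/q_2$, consistent with Lemma~\ref{lem:pest_extinction_crop_field}, where $y_1^\star>\delta_2/q_2$ is exactly the threshold for pest extinction (no interior $x_2^\star$) in the crop field.
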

\begin{proof}
 See Appendix \eqref{proof_co_existence_uni}.
 \end{proof}

\begin{lemma}
    \label{lem:loc_uni_coexistence} 
    The equilibrium point $  \hat{E_4} =(x_1^\star,y_1^\star,x_2^\star,y_2^\star)$ is conditionally locally asymptotically stable.
 \end{lemma}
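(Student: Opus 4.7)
The plan is to exploit the block triangular structure of the Jacobian $\hat{J}$ evaluated at $\hat{E_4}$. Inspecting \eqref{general_jacobian_k1_k3_0_uni}, the upper right $2\times 2$ block is identically zero, so $\hat{J}$ is block lower triangular of the form
\begin{equation*}
\hat{J}(\hat{E_4}) = \begin{bmatrix} J_1 & 0 \\ C & J_2 \end{bmatrix},
\end{equation*}
where $J_1$ is the $2\times 2$ Jacobian of the prairie strip subsystem in $(x_1, y_1)$, $J_2$ is the $2\times 2$ Jacobian of the crop field subsystem in $(x_2, y_2)$, and the coupling block $C$ contains the drift term $q_2$. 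Consequently the characteristic polynomial factors as $\det(\hat{J}-\lambda I) = \det(J_1 - \lambda I)\det(J_2 - \lambda I)$, so $\hat{E_4}$ is locally asymptotically stable if and only if each $2\times 2$ block satisfies the Routh–Hurwitz conditions $\operatorname{tr}(J_i)<0$ and $\det(J_i)>0$ for $i=1,2$.

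The next step is to simplify $J_1$ using the equilibrium identities at $\hat{E_4}$. From $\dot y_1=0$ one obtains $\dfrac{\epsilon_1(x_1^\star + \xi)}{1+x_1^\star+\alpha\xi} = \delta_1 + q_1$, which forces the $(2,2)$-entry of $J_1$ to vanish. Using $\dot x_1=0$ to substitute $y_1^\star = (1-x_1^\star/k_p)(1+x_1^\star+\alpha\xi)$, the trace and determinant reduce to
\begin{equation*}
\operatorname{tr}(J_1) = 1 - \frac{2x_1^\star}{k_p} - \frac{(1-x_1^\star/k_p)(1+\alpha\xi)}{1+x_1^\star+\alpha\xi}, \quad \det(J_1) = \frac{\epsilon_1 x_1^\star y_1^\star\, \bigl(1 + (\alpha-1)\xi\bigr)}{(1+x_1^\star+\alpha\xi)^{3}}.
\end{equation*}
The determinant condition therefore reduces to the explicit parametric inequality $1+(\alpha-1)\xi>0$, and the trace condition yields an explicit bound relating $x_1^\star$, $k_p$ and $\alpha\xi$ (automatically satisfied, for instance, when $x_1^\star > k_p/2$, analogous to the classical Rosenzweig–MacArthur hump condition).

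For $J_2$, the situation differs because $\dot y_2=0$ now reads $\epsilon_2 \dfrac{x_2^\star}{1+x_2^\star} - \delta_2 = -\,\dfrac{q_2 y_1^\star}{y_2^\star}$, so the $(2,2)$-entry of $J_2$ is not zero but equals $-q_2 y_1^\star/y_2^\star <0$, which actually aids stability. Using $\dot x_2=0$ to write $y_2^\star=(1-x_2^\star/k_c)(1+x_2^\star)$, I would compute
\begin{equation*}
\operatorname{tr}(J_2) = 1 - \frac{2x_2^\star}{k_c} - \frac{1-x_2^\star/k_c}{1+x_2^\star} - \frac{q_2 y_1^\star}{y_2^\star}, \quad \det(J_2) = -\frac{q_2 y_1^\star}{y_2^\star}\!\left(1-\frac{2x_2^\star}{k_c}-\frac{y_2^\star}{(1+x_2^\star)^2}\right) + \frac{\epsilon_2 x_2^\star y_2^\star}{(1+x_2^\star)^3},
\end{equation*}
and extract sufficient conditions: the trace is negative whenever $q_2 y_1^\star / y_2^\star$ dominates the (possibly positive) Rosenzweig–MacArthur contribution, and the determinant is positive whenever either the $(1,1)$-entry of $J_2$ is non-positive or the prey-contribution term outweighs it. Collecting the four inequalities yields the ``conditional'' parametric region in the statement.

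The main obstacle is not the block-triangularity argument (which is immediate) but rather exhibiting a clean, interpretable parametric region guaranteeing $\operatorname{tr}(J_2)<0$ and $\det(J_2)>0$ simultaneously: the cross terms between the drift input $q_2 y_1^\star$ and the pest–predator coupling in $\Omega_2$ make the determinant expression unsigned in general, and the cleanest statement will likely assume the crop-field isocline is on its descending branch (equivalently $2x_2^\star \geq k_c - 1$), which annihilates the problematic term and leaves the remaining inequalities automatic. I would therefore record the stability conditions as a system of four explicit inequalities in the original parameters and flag this branch condition as the key simplifying assumption.
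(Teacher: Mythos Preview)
Your approach is correct and coincides with the paper's: both exploit the block lower-triangular structure of $\hat{J}(\hat{E_4})$ to factor the characteristic polynomial as a product of two quadratics, $\bigl(\lambda^2 - A\lambda - BC\bigr)\bigl((D-\lambda)(G-\lambda)-EF\bigr)=0$ in the paper's notation. The only difference is cosmetic: the paper re-expands this product into a full quartic and records the four-dimensional Routh--Hurwitz inequalities \eqref{R-H for coexistence_uni}, whereas you stop at the factored form and impose the two-dimensional trace/determinant conditions on each $2\times 2$ block. The resulting stability regions are equivalent, and your packaging is arguably cleaner since it yields four interpretable inequalities directly tied to the subsystems in $\Omega_1$ and $\Omega_2$.
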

\begin{proof}
    See Appendix \eqref{loc_uni_coexistence_proof}.
\end{proof}
\begin{figure}
\centering
\includegraphics[width = 10cm, height=7cm]{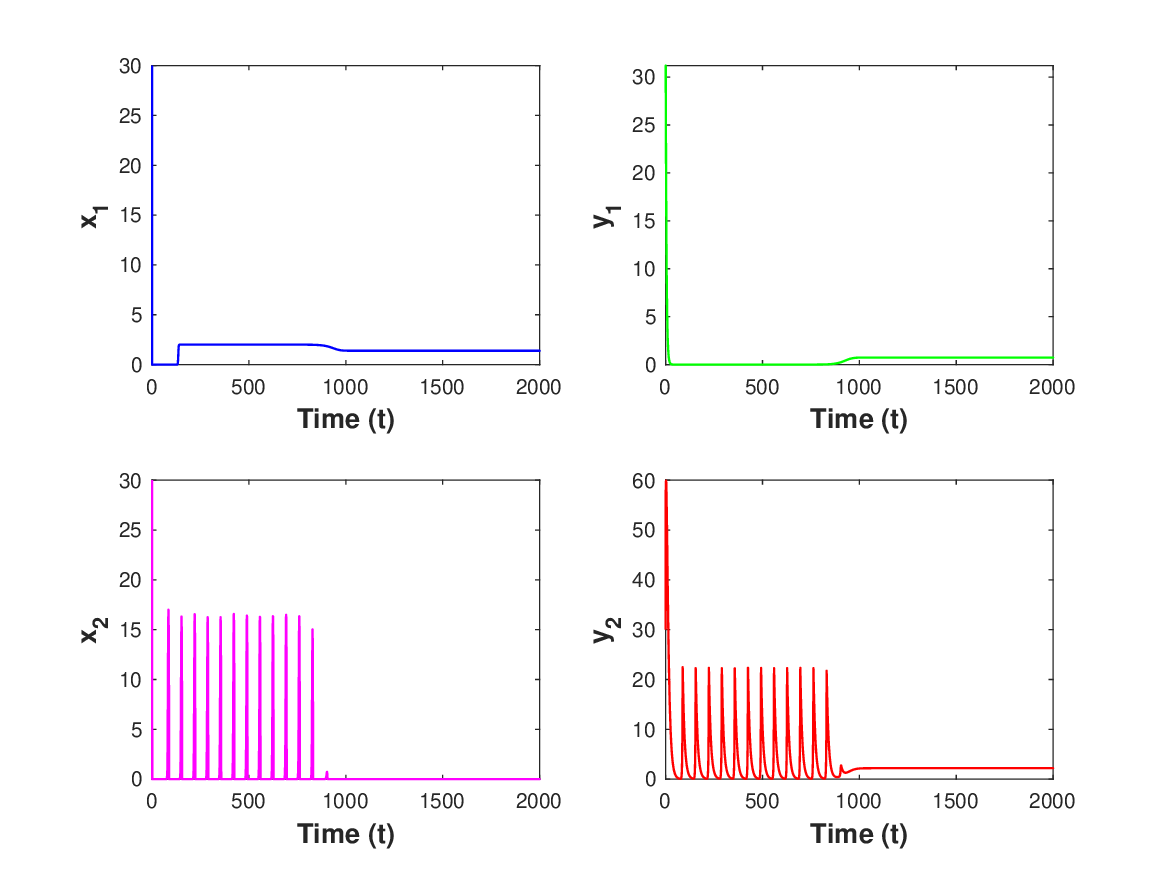}
\caption{The time series figure shows the pest extinction in the patch 
$\Omega_2$. The parameters used are $k_p=2,k_c=20,\alpha=0.02, \xi=0.1, \epsilon_1 = 0.45, \epsilon_2 =0.8, \delta_1= 0.08, \delta_2=0.1, q_1=0.2 , q_2=0.3 \ \text{with I.C.} = [30,30,30,30]$.}
 \label{fig:gs_x2_extinction}
 \end{figure}
 \begin{figure}
\centering
\includegraphics[width = 10cm, height=7cm]{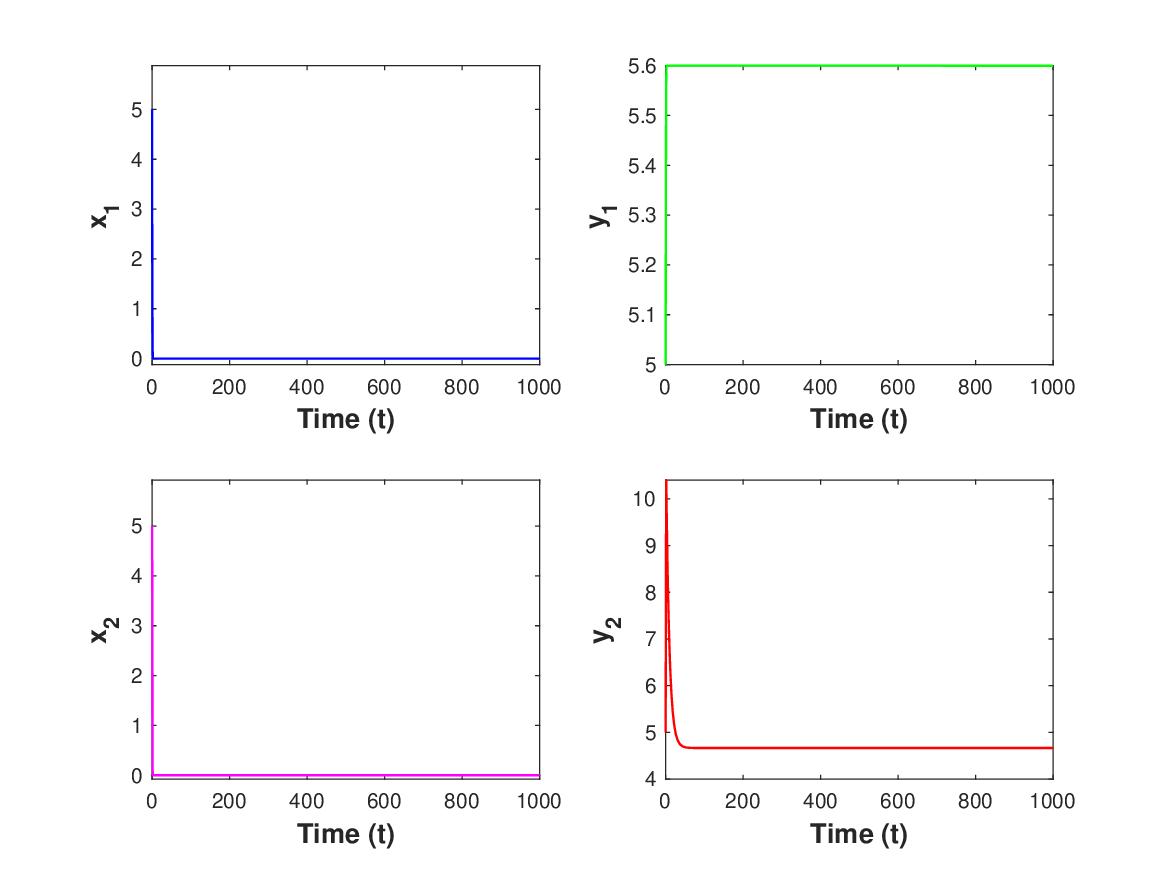}
\caption{The time series figure shows pest extinction in both patches
$\Omega_1$ \& $\Omega_2$ for an exact quantity of additional food. The parameters used are $k_p=2, k_c=10,
 \alpha=0.2,  \xi=0.798969, \epsilon_1=0.45, \epsilon_2 =0.8, \delta_1= 0.15, \delta_2=0.12, q_1=0.16, q_2=0.1 \ \text{with I.C.} = [5,5,5,5]$.
 }
 \label{fig:pest_extinction_drift}
 \end{figure}
 \begin{figure}
\centering
\includegraphics[width = 10cm, height=7cm]{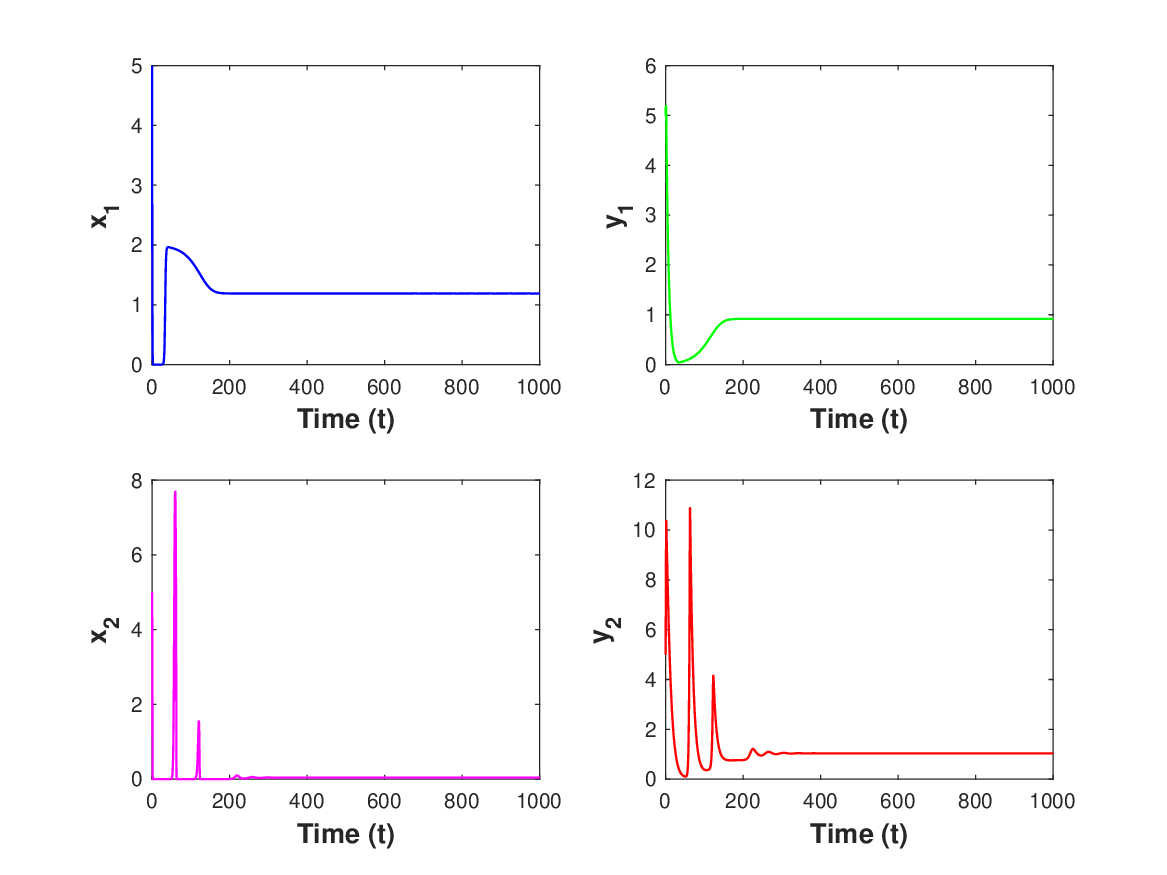}
\caption{The time series figure shows  the coexistence state in both patches $\Omega_1$  \& 
$\Omega_2$. The parameters used are $k_p=2, k_c=10,
 \alpha=0.2,  \xi=0.37, \epsilon_1=0.45, \epsilon_2 =0.8, \delta_1= 0.15, \delta_2=0.12, q_1=0.16, q_2=0.1 \ \text{with I.C.} = [5,5,5,5]$.
 }
 \label{fig:interior_equilibrium_drift}
 \end{figure}

\begin{rem}
Figure \eqref{fig:gs_x2_extinction} demonstrates that pest extinction in the crop field  $(\Omega_2)$ is possible, validating Lemma \ref{lem:E_stability_uni}.
Figure \eqref{fig:pest_extinction_drift} suggests that the pest-free state in both $\Omega_1 \ \& \ \Omega_2$ simultaneously is possible but requires strict parametric equality on the quantity of additional food as shown in Lemma \ref{stability_pest_ext_drift}. Additionally, Figure \eqref{fig:interior_equilibrium_drift} shows that the coexistence state is possible in both patches, supporting Lemma \ref{lem:loc_uni_coexistence}.  
\end{rem}
 
\subsection{Global stability of pest extinction state in  \texorpdfstring {$\Omega_2$}{Lg} }

\ 

Proving global stability of pest extinction in the crop field for system \eqref{eq:patch_model_uni-i} is problematic directly. 

\begin{rem}
The forms of the functional and numerical response of the introduced predator in \eqref{eq:patch_model_uni-i} are not symmetric. Thus, the standard Lyapunov function \cite{hsu2005survey} for such systems will not work. 
\end{rem}

\begin{rem}
Our strategy is to first derive a sub and super solution to the system of equations in $\Omega_{1}$, \eqref{eq:patch_model_uni-i}, where on each sub and super solution, we can use standard Lyapunov theory. We then employ a squeezing argument to yield the coexistence of the original system in $\Omega_{1}$. 
\end{rem}
\subsubsection{Coexistence state in patch \texorpdfstring {$\Omega_1$}{Lg}} 
\ 

The equations representing dynamics in patch $\Omega_1$ for \eqref{eq:patch_model_uni-i} are given by:
 \begin{equation} \label{eq:patch_model_uni_patch1}
\begin{aligned}
 \Dot{x_1}& = x_1 \left (1-\frac{x_1}{k_p}\right )-\frac{x_1 y_1}{1+x_1+\alpha \xi} \\
 \Dot{y_1}& = \epsilon_1 \left ( \frac{x_1+\xi}{1+x_1+\alpha \xi} \right)\ y_1 - \delta_1 y_1 - q_1 y_1  \end{aligned}
 \end{equation}
As, $ (1+\alpha\xi < 1+x_1+\alpha\xi) \implies \dfrac{\xi}{1+\alpha\xi}> \dfrac{\xi}{1+x_1+\alpha\xi}.   $
The model \eqref{eq:patch_model_uni_patch1} has a   super solution (in the predator component) given by \eqref{eq:patch_model_uni_patch_sup}:
 \begin{equation} 
\begin{aligned}
 \Dot{\tilde {x_1}}& = \tilde{x_1} \left (1-\frac{\tilde{x_1}}{k_p}\right )-\frac{\tilde{x_1} \tilde{y_1}}{1+\tilde{x_1}+\alpha \xi} \\
 \Dot{\tilde{y_1}}& = \epsilon_1 \left ( \frac{\tilde{x_1}}{1+\tilde{x_1}+\alpha \xi} \right)\ \tilde{y_1} - \left(\delta_1+ q_1 - \dfrac{\epsilon_1\xi  }{1+\alpha\xi} \right) \tilde{y_1}
 \end{aligned}
 \label{eq:patch_model_uni_patch_sup}
\end{equation}

Let $\tilde \delta = \delta_1+ q_1 - \dfrac{\epsilon_1\xi  }{1+\alpha\xi} $ then \eqref{eq:patch_model_uni_patch_sup} becomes,
\begin{equation}
\begin{aligned}
\Dot{\tilde{x_1}}& = \tilde{x_1} \left (1-\frac{\tilde{x_1}}{k_p}\right )-\frac{\tilde{x_1} \tilde{y_1}}{1+\tilde{x_1}+\alpha \xi} \\
 \Dot{\tilde{y_1}}& = \epsilon_1 \left ( \frac{\tilde{x_1}}{1+\tilde{x_1}+\alpha \xi} \right)\ \tilde{y_1} - \tilde\delta \tilde{y_1}  
\end{aligned}
\label{eq:patch_model_uni_patch_sup_delta_tilde}
\end{equation}

We can state the following lemma considering system \eqref{eq:patch_model_uni_patch_sup_delta_tilde},
\begin{lemma}
  The equilibrium point $\tilde{E} = (\tilde {x^*_1}, \tilde {y^*_1}) \  \text{exists if}  \ \epsilon_1 > \tilde{\delta} $. 
  \label{coexistence_existence_super}
\end{lemma}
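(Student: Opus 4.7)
The plan is to find the interior equilibrium of \eqref{eq:patch_model_uni_patch_sup_delta_tilde} by setting both right-hand sides to zero and solving the resulting algebraic system, then reading off the sign condition for positivity. Since we want $\tilde{x_1}^*, \tilde{y_1}^* > 0$, we may divide the first equation by $\tilde{x_1}$ and the second by $\tilde{y_1}$, reducing to
\begin{equation*}
1 - \frac{\tilde{x_1}}{k_p} = \frac{\tilde{y_1}}{1+\tilde{x_1}+\alpha\xi}, \qquad
\frac{\epsilon_1 \tilde{x_1}}{1+\tilde{x_1}+\alpha\xi} = \tilde\delta.
\end{equation*}

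First I would solve the predator nullcline (the second equation) for $\tilde{x_1}^*$. Cross-multiplying gives $(\epsilon_1 - \tilde\delta)\tilde{x_1} = \tilde\delta(1+\alpha\xi)$, so
\begin{equation*}
\tilde{x_1}^* = \frac{\tilde\delta(1+\alpha\xi)}{\epsilon_1 - \tilde\delta}.
\end{equation*}
Since $1+\alpha\xi > 0$ and $\tilde\delta > 0$, positivity of $\tilde{x_1}^*$ is equivalent to the stated hypothesis $\epsilon_1 > \tilde\delta$. Next I would substitute this $\tilde{x_1}^*$ back into the prey nullcline to obtain
\begin{equation*}
\tilde{y_1}^* = \left(1 + \tilde{x_1}^* + \alpha\xi\right)\left(1 - \frac{\tilde{x_1}^*}{k_p}\right).
\end{equation*}
This closes the existence argument by producing an explicit candidate $(\tilde{x_1}^*, \tilde{y_1}^*)$.

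The main technical point is the positivity of $\tilde{y_1}^*$, which requires $\tilde{x_1}^* < k_p$, i.e.\ $\tilde\delta(1+\alpha\xi) < k_p(\epsilon_1 - \tilde\delta)$. I would handle this by noting that under the hypothesis $\epsilon_1 > \tilde\delta$ of the lemma, this becomes a transparent parametric condition on $k_p$ (or equivalently on $\xi$ through $\tilde\delta$), which can either be folded into the standing assumption that the AF-modified predator nullcline lies inside the prey carrying capacity interval, or stated as the auxiliary inequality. Recalling the definition $\tilde\delta = \delta_1 + q_1 - \tfrac{\epsilon_1 \xi}{1+\alpha\xi}$, the requirement $\epsilon_1 > \tilde\delta$ reads $\epsilon_1(1+\alpha\xi) + \epsilon_1 \xi > (\delta_1+q_1)(1+\alpha\xi)$, which one can interpret as a biologically natural constraint linking conversion efficiency, drift, and AF quantity, so the hypothesis of the lemma is coherent with the AF-patch framework already established.

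I do not foresee a serious obstacle here: the computation is a two-variable algebraic system with decoupled structure (the predator nullcline is $x$-only, independent of $y$). The only subtle point is distinguishing the strict positivity condition $\epsilon_1 > \tilde\delta$ needed for $\tilde{x_1}^* > 0$ from the additional carrying-capacity inequality needed for $\tilde{y_1}^* > 0$; I would state the latter explicitly in the proof to keep the biological interpretation clean, and then conclude that $\tilde E = (\tilde{x_1}^*, \tilde{y_1}^*)$ is a valid positive equilibrium of \eqref{eq:patch_model_uni_patch_sup_delta_tilde}.
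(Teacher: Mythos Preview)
Your proposal is correct and follows essentially the same route as the paper: set the nullclines to zero, solve the predator nullcline for $\tilde{x}_1^* = \tilde\delta(1+\alpha\xi)/(\epsilon_1-\tilde\delta)$, then read $\tilde{y}_1^*$ off the prey nullcline. The paper's proof is in fact more terse than yours---it records only the condition $\epsilon_1 > \tilde\delta$ for $\tilde{x}_1^*>0$ and does not separately discuss the carrying-capacity inequality $\tilde{x}_1^* < k_p$ needed for $\tilde{y}_1^*>0$; that condition surfaces later in the local-stability lemma rather than here.
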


\begin{proof}
   See Appendix \eqref{proof_coexistence_existence_super}. 
\end{proof}
\begin{lemma}
\label{lem:p1s}
    Consider system \eqref{eq:patch_model_uni_patch_sup_delta_tilde}. Then under the parametric restrictions, $k_p<\dfrac{(1+\alpha\xi)(\epsilon_1+\tilde{\delta})}{\epsilon_1-\tilde{\delta}}$, we have that \ $\tilde{E} = (\tilde {x^*_1}, \tilde {y^*_1})$, is globally asymptotically stable.
\end{lemma}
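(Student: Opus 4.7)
System \eqref{eq:patch_model_uni_patch_sup_delta_tilde} is a planar Rosenzweig--MacArthur predator--prey model with shifted Holling type II functional response $\phi(x) := x/(1+x+\alpha\xi)$, logistic prey growth, and linear predator mortality $\tilde\delta$. My plan is to establish global asymptotic stability of $\tilde E$ by combining a Hsu-type Lyapunov construction with ultimate boundedness, local stability, and the Poincar\'e--Bendixson theorem. First I would verify positive invariance of $\mathbb R^2_+$ (immediate from the factored form of the vector field) and uniform ultimate boundedness: the prey satisfies $\limsup \tilde{x_1}(t)\le k_p$ by logistic control, and forming $W := \epsilon_1\tilde{x_1}+\tilde{y_1}$ yields $\dot W = \epsilon_1\tilde{x_1}(1-\tilde{x_1}/k_p)-\tilde\delta\tilde{y_1}$, hence $\dot W + \tilde\delta W \le \epsilon_1(1+\tilde\delta)^2 k_p/4$, giving a compact absorbing rectangle $\mathcal R\subset \mathbb R^2_+$. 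The boundary equilibria are $(0,0)$, a saddle with eigenvalues $1$ and $-\tilde\delta$, and $(k_p,0)$, a saddle since the existence of $\tilde E$ in Lemma \ref{coexistence_existence_super} forces $\epsilon_1\phi(k_p)>\tilde\delta$; thus neither attracts an interior trajectory.

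The core of the proof is the Hsu-type Lyapunov function
\[
V(x,y) = \int_{\tilde{x_1^*}}^{x} \frac{\phi(s) - \phi(\tilde{x_1^*})}{\phi(s)}\, ds + \frac{1}{\epsilon_1}\left(y - \tilde{y_1^*} - \tilde{y_1^*}\,\ln\frac{y}{\tilde{y_1^*}}\right),
\]
positive definite about $\tilde E$ on the open first quadrant. Differentiating along trajectories and substituting the equilibrium identities $\epsilon_1\phi(\tilde{x_1^*}) = \tilde\delta$ and $\tilde{y_1^*} = (1-\tilde{x_1^*}/k_p)(1+\tilde{x_1^*}+\alpha\xi)$, the cross terms telescope to yield
\[
\dot V = \bigl(\phi(x) - \phi(\tilde{x_1^*})\bigr)\bigl(\psi(x) - \psi(\tilde{x_1^*})\bigr), \qquad \psi(x) := (1-x/k_p)(1+x+\alpha\xi),
\]
a concave parabola in $x$ with apex at $x_m := (k_p-1-\alpha\xi)/2$. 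A short manipulation using $\tilde{x_1^*} = \tilde\delta(1+\alpha\xi)/(\epsilon_1-\tilde\delta)$ shows that the stated restriction $k_p<(1+\alpha\xi)(\epsilon_1+\tilde\delta)/(\epsilon_1-\tilde\delta)$ is \emph{exactly} the condition $\tilde{x_1^*}>x_m$, and it is also, by the Jacobian trace computation at $\tilde E$, exactly the local asymptotic stability condition for $\tilde E$. Consequently $\psi$ is strictly decreasing on $[x_m,\infty)$, $\phi$ is strictly increasing, and $\dot V\le 0$ on $\{x\ge x_m\}$ with equality only at $\tilde E$.

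\emph{Main obstacle.} The difficulty is that $\dot V$ can turn positive on the sliver $\{0\le x<2x_m-\tilde{x_1^*}\}$ whenever $\tilde{x_1^*}<2x_m$, so $V$ is not a global Lyapunov function and naive LaSalle does not suffice. To close the argument I plan to pair $V$ with the Dulac multiplier $B(x,y):=(1+x+\alpha\xi)/(xy)$, for which a direct calculation gives
\[
\partial_x(BP) + \partial_y(BQ) = \frac{2(x_m - x)}{k_p\, y},
\]
nonpositive on $\{x\ge x_m\}$; hence by Green's theorem no periodic orbit of the dissipative planar flow lies entirely in $\{x\ge x_m\}$. Periodic orbits straddling the line $x=x_m$ form the technical heart of the argument, which I would rule out by invoking the Cheng--Kuang uniqueness theorem for limit cycles in Gause-type predator--prey systems: a single limit cycle enclosing the locally attracting $\tilde E$ inside a bounded absorbing rectangle produces an incompatible basin structure, since the cycle must attract on at least one side while $\tilde E$ attracts all nearby interior trajectories. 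With periodic orbits excluded, the boundary equilibria non-attracting, and $\tilde E$ locally asymptotically stable, Poincar\'e--Bendixson applied in $\mathcal R$ forces every interior trajectory to converge to $\tilde E$, establishing global asymptotic stability.
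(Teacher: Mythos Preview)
Your proposal and the paper's proof begin identically: both invoke the Hsu--type Lyapunov function and arrive at $\dot V = (\phi(x)-\phi(\tilde{x_1^*}))(\psi(x)-\psi(\tilde{x_1^*}))$ with $\psi(x)=(1-x/k_p)(1+x+\alpha\xi)$. The paper then simply asserts that the second factor has sign opposite to the first on all of $(0,k_p)$ and concludes $\dot V\le 0$ globally; it does not address the sliver you identify. Your observation is correct: the prey--isocline parabola $\psi$ peaks at $x_m=(k_p-1-\alpha\xi)/2$, and when $x_m<\tilde{x_1^*}<2x_m$---equivalently $(1+\alpha\xi)\epsilon_1/(\epsilon_1-\tilde\delta)<k_p<(1+\alpha\xi)(\epsilon_1+\tilde\delta)/(\epsilon_1-\tilde\delta)$---one has $\dot V>0$ on $(0,\,2x_m-\tilde{x_1^*})$, so the paper's sign claim is unjustified in precisely that parameter range. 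Your Dulac-plus-uniqueness patch (ruling out cycles lying entirely in $\{x\ge x_m\}$, then using Cheng--Kuang to force any remaining cycle to be hyperbolic and attracting, which is incompatible with the locally attracting $\tilde E$ it must enclose) does close the gap. A shorter alternative is to observe that \eqref{eq:patch_model_uni_patch_sup_delta_tilde} is literally a Rosenzweig--MacArthur system with half-saturation constant $1+\alpha\xi$, so one may cite directly the classical theorem (Hsu 1978; Cheng 1981) that the coexistence state is globally asymptotically stable whenever the predator nullcline meets the prey isocline at or to the right of the hump, i.e.\ $\tilde{x_1^*}\ge x_m$, which is exactly the stated hypothesis.
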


\begin{proof}
The coexistence equilibrium $(\tilde{x^*_1},  \tilde{y^*_1})$ is locally asymptotically stable  if,  \cite{Hal08}\\
\begin{equation*}
\tilde{x^*_1} > \dfrac{k_p-(1+\alpha\xi)}{2} \implies  \dfrac{2\tilde{\delta}(1+\alpha\xi)}{\epsilon_1- \tilde{\delta}} > k_p-(1+\alpha\xi) \implies k_p< \dfrac{(1+\alpha\xi)(\epsilon_1+\tilde{\delta})}{\epsilon_1-\tilde{\delta}}
\end{equation*}

Let $g(\tilde{x_1})= \left (1-\dfrac{\tilde{x_1}}{k_p}\right ), \ p(\tilde{x_1}) = \dfrac{\tilde{x_1}}{1+\tilde{x_1}+\alpha \xi} $ then, 
\begin{equation}
\begin{aligned}
\Dot{\tilde{x_1}}& = \tilde{x_1} g(\tilde{x_1})- p(\tilde{x_1})\tilde{y_1} \\
\Dot{\tilde{y_1}}& = (\epsilon_1 p(\tilde{x_1}) - \tilde\delta ) \ \tilde{y_1}    \end{aligned}
\label{p_g_general_sup}
\end{equation}
We consider the following Lyapunov function, \cite{hsu2005survey}
\begin{equation}
 V =  \int_{\tilde{x^*_1}}^{\tilde{x_1}} \dfrac{p(\phi)- \dfrac{\tilde\delta }{\epsilon_1}}{p(\phi)} \,d\phi  + \dfrac{1}{\epsilon_1}\int_{\tilde{y^*_1}}^{\tilde{y_1}}\dfrac{\eta-\tilde{y^*_1}}{\eta}\,d\eta
 \label{lyapunov_construction_sup}
\end{equation}
Differentiating $V$ with respect to time $t$ along the solutions of model \eqref{eq:patch_model_uni_patch_sup}, we get
\begin{equation*}
    \dot{V} = \dfrac{p(\tilde{x_1})- \dfrac{\tilde\delta }{\epsilon_1}}{p(\tilde{x_1})}\  \dot{\tilde{x_1}} + \dfrac{1}{\epsilon_1} \dfrac{\tilde{y_1}-\tilde{y^*_1}}{\tilde{y_1}} \ \dot{\tilde{y_1}}
\label{sup_lyapunov_differentiated}
\end{equation*}
Using system of equations \eqref{p_g_general_sup}, after some algebraic manipulations we have,
\begin{equation*}
\setlength{\jot}{10pt}
\begin{aligned}
&\dot{V} = \dfrac{p(\tilde{x_1})- \dfrac{\tilde\delta }{\epsilon_1}}{p(\tilde{x_1})}\  \{\tilde{x_1} g(\tilde{x_1})- p(\tilde{x_1})\tilde{y_1}\} + \dfrac{1}{\epsilon_1} \dfrac{\tilde{y_1}-\tilde{y^*_1}}{\tilde{y_1}} \ \{\epsilon_1 p(\tilde{x_1}) - \tilde\delta \} \ \tilde{y_1} \\
& = \dfrac{p(\tilde{x_1})- \dfrac{\tilde\delta }{\epsilon_1}}{p(\tilde{x_1})}\  \{\tilde{x_1} g(\tilde{x_1})- p(\tilde{x_1})\tilde{y_1}\} +(\tilde{y_1}-\tilde{y^*_1}) \ 
\left \{p(\tilde{x_1})-\dfrac{\tilde\delta }{\epsilon_1} \right \}
\\
 & = \left(p(\tilde{x_1})- \dfrac{\tilde\delta }{\epsilon_1}\right) \left( \dfrac{\tilde{x_1} g(\tilde{x_1})}{p(\tilde{x_1})} - \tilde{y^*_1}
 \right)\\
 \end{aligned}
\end{equation*}
Plugging in the value of $\dfrac{\tilde\delta }{\epsilon_1}$ then,
 \begin{equation}
\dot{V}\rvert_{({\tilde{x_1},{\tilde{y_1})} = (\tilde{x^*_1},\tilde{y^*_1})}} = \left(p(\tilde{x_1})- p(\tilde{x^*_1})\right) \left( \dfrac{\tilde{x_1} g(\tilde{x_1})}{p(\tilde{x_1})} - \tilde{y^*_1}
 \right) \leq 0
 \label{eq: lyapunov_super_system}
 \end{equation}

Since, $p(\tilde{x_1})$ is an increasing function in $\tilde{x_1}$, therefore $ \left( p(\tilde{x_1})-p(\tilde{x^*_1})\right)$ is negative if $\tilde{x_1} < \tilde{x^*_1} < k_p $ and it is positive for $ \tilde{x^*_1} <\tilde{x_1} < k_p $. Also, $\left( \frac{\tilde{x_1} g(\tilde{x_1})}{p(\tilde{x_1})} - \tilde{y^*_1} \right)$ is positive for $\tilde{x_1} < \tilde{x^*_1} < k_p $ and negative for $ \tilde{x^*_1} <\tilde{x_1} < k_p $. Therefore, we have that 

\begin{equation*}
     \dot{V}\rvert_{({\tilde{x_1} ,{\tilde{y_1})} = (\tilde{x^*_1},\tilde{y^*_1})}}  \leq 0 \ \text{in}  \ \mathbb{R}^4_{+}
\end{equation*}

Thus, the lemma is proved.
\end{proof}

Now as, $(1+x_1+\alpha\xi) >0 \implies  \dfrac{\xi}{1+x_1+\alpha\xi} > 0.$ The model \eqref{eq:patch_model_uni_patch1} has the following sub solution (in the predator component), \eqref{eq:patch_model_uni_patch_sub}:
 \begin{equation} 
\begin{aligned}
 \Dot{\Bar{x_1}}& = \Bar{x_1}\left (1-\frac{\Bar{x_1}}{k_p}\right )-\frac{\Bar{x_1} \Bar{y_1}}{1+\Bar{x_1}+\alpha \xi} \\
 \Dot{\Bar{y_1}}& = \epsilon_1 \left ( \frac{\Bar{x_1}}{1+\Bar{x_1}+\alpha \xi} \right)\ \Bar{y_1} - (\delta_1+ q_1 ) \Bar{y_1} 
 \end{aligned}
 \label{eq:patch_model_uni_patch_sub}
\end{equation}

Let $\Bar{\delta} = \delta_1+ q_1  $ then \eqref{eq:patch_model_uni_patch_sub} becomes,
\begin{equation}
\begin{aligned}
\Dot{\Bar{x_1}}& = \Bar{x_1}\left (1-\frac{\Bar{x_1}}{k_p}\right )-\frac{\Bar{x_1} \Bar{y_1}}{1+\Bar{x_1}+\alpha \xi} \\
 \Dot{\bar{y_1}}& = \epsilon_1 \left ( \frac{\Bar{x_1}}{1+\Bar{x_1}+\alpha \xi} \right)\ \Bar{y_1} - \Bar{\delta} \Bar{y_1}
 \end{aligned}
\label{eq:patch_model_uni_patch_sub_delta_bar}
\end{equation}
We can state the following lemma considering system \eqref{eq:patch_model_uni_patch_sub_delta_bar},
 \begin{lemma}
  The equilibrium point  $\bar{E} = (\bar {x^*_1}, \bar {y^*_1}) \  \text{exists if}  \ \epsilon_1 > \bar{\delta} $. 
  \label{coexistence_existence_sub}
\end{lemma}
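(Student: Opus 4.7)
My plan is to locate the positive interior equilibrium of \eqref{eq:patch_model_uni_patch_sub_delta_bar} by intersecting the predator and prey nullclines, and to read off from this intersection the single parametric inequality $\epsilon_1 > \bar{\delta}$ that secures existence. The argument is entirely parallel to the proof of Lemma \ref{coexistence_existence_super}, with $\tilde{\delta}$ replaced by $\bar{\delta} = \delta_1 + q_1$.

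First I would set $\dot{\bar{y_1}} = 0$ and factor out the trivial solution $\bar{y_1} = 0$. This reduces the predator nullcline to the algebraic relation $\epsilon_1 \bar{x_1} = \bar{\delta}(1 + \bar{x_1} + \alpha\xi)$, which is linear in $\bar{x_1}$ and rearranges to
\begin{equation*}
\bar{x^*_1} = \frac{\bar{\delta}(1+\alpha\xi)}{\epsilon_1 - \bar{\delta}}.
\end{equation*}
For this abscissa to be positive, the denominator must be positive, which is exactly the hypothesis $\epsilon_1 > \bar{\delta}$.

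Next I would set $\dot{\bar{x_1}} = 0$ and again discard the trivial branch $\bar{x_1} = 0$. The prey nullcline then reads $\bar{y_1} = (1 + \bar{x_1} + \alpha\xi)(1 - \bar{x_1}/k_p)$, and substituting the $\bar{x^*_1}$ just computed yields the explicit value of $\bar{y^*_1}$. Its positivity requires $\bar{x^*_1} < k_p$, i.e. $\bar{\delta}(1+\alpha\xi) < k_p(\epsilon_1 - \bar{\delta})$, a mild carrying-capacity condition consistent with the biologically relevant regime.

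There is no genuine obstacle here: the computation is purely algebraic. The only point worth flagging is the ordering relative to the super-solution equilibrium. Since $\bar{\delta} = \delta_1 + q_1 > \delta_1 + q_1 - \epsilon_1\xi/(1+\alpha\xi) = \tilde{\delta}$, the sub-solution equilibrium $\bar{E}$ sits at a strictly larger prey value $\bar{x^*_1} > \tilde{x^*_1}$ and a correspondingly different predator value than $\tilde{E}$. I would emphasize this comparison in the write-up because the subsequent squeezing argument that recovers global stability of the interior equilibrium of \eqref{eq:patch_model_uni_patch1} trades precisely on trapping trajectories of the original system between the two monotone comparison systems whose coexistence states $\tilde{E}$ and $\bar{E}$ have now been constructed.
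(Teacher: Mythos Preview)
Your proposal is correct and follows essentially the same approach as the paper: set the predator nullcline $\dot{\bar{y_1}}=0$ to solve $\bar{x^*_1}=\bar{\delta}(1+\alpha\xi)/(\epsilon_1-\bar{\delta})$, then read off $\bar{y^*_1}$ from the prey nullcline, with positivity of $\bar{x^*_1}$ giving exactly the condition $\epsilon_1>\bar{\delta}$. The additional remarks you make about $\bar{x^*_1}<k_p$ and the ordering $\bar{x^*_1}>\tilde{x^*_1}$ are correct and useful context, though the paper itself omits them in this lemma.
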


\begin{proof}
   See Appendix \eqref{proof_coexistence_existence_sub}. 
\end{proof}

\begin{lemma}
\label{lem:p2s}
     Consider system \eqref{eq:patch_model_uni_patch_sub_delta_bar}. Then under the parametric restrictions, $k_p< \dfrac{(1+\alpha\xi)(\epsilon_1+\bar{\delta})}{\epsilon_1-\bar{\delta}} $, we have that $\bar{E} = (\Bar {x^*_1}, \Bar {y^*_1})$, is globally asymptotically stable.
\end{lemma}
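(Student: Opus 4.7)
The plan is to mirror the proof of Lemma \ref{lem:p1s} essentially verbatim, since the sub-solution system \eqref{eq:patch_model_uni_patch_sub_delta_bar} is structurally identical to the super-solution system \eqref{eq:patch_model_uni_patch_sup_delta_tilde}: both are Rosenzweig--MacArthur predator-prey systems with Holling type II functional response $p(x) = x/(1+x+\alpha\xi)$ and logistic prey growth, differing only in the predator death rate ($\bar{\delta} = \delta_1 + q_1$ versus $\tilde{\delta} = \delta_1 + q_1 - \epsilon_1\xi/(1+\alpha\xi)$). Consequently, every step of the earlier proof transports with the symbolic substitution $\tilde{\delta} \mapsto \bar{\delta}$, $\tilde{x_1}^* \mapsto \bar{x_1}^*$, $\tilde{y_1}^* \mapsto \bar{y_1}^*$.

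First I would establish local asymptotic stability of $\bar{E} = (\bar{x_1}^*, \bar{y_1}^*)$ via the standard Rosenzweig--MacArthur criterion that the interior equilibrium lies on the decreasing branch of the prey nullcline, i.e., $\bar{x_1}^* > (k_p - (1+\alpha\xi))/2$. Substituting the expression $\bar{x_1}^* = \bar{\delta}(1+\alpha\xi)/(\epsilon_1 - \bar{\delta})$ from Lemma \ref{coexistence_existence_sub} into this inequality and rearranging gives exactly the hypothesis $k_p < (1+\alpha\xi)(\epsilon_1 + \bar{\delta})/(\epsilon_1 - \bar{\delta})$.

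Next I would introduce $g(\bar{x_1}) = 1 - \bar{x_1}/k_p$ and $p(\bar{x_1}) = \bar{x_1}/(1+\bar{x_1}+\alpha\xi)$, and propose the Lyapunov candidate
\begin{equation*}
V = \int_{\bar{x_1}^*}^{\bar{x_1}} \frac{p(\phi) - \bar{\delta}/\epsilon_1}{p(\phi)}\, d\phi + \frac{1}{\epsilon_1}\int_{\bar{y_1}^*}^{\bar{y_1}} \frac{\eta - \bar{y_1}^*}{\eta}\, d\eta,
\end{equation*}
which is positive definite with minimum at $\bar{E}$ (using $p(\bar{x_1}^*) = \bar{\delta}/\epsilon_1$ from the equilibrium equation). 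Differentiating along trajectories of \eqref{eq:patch_model_uni_patch_sub_delta_bar} and running the same algebraic reduction as in Lemma \ref{lem:p1s} produces
\begin{equation*}
\dot{V} = \bigl(p(\bar{x_1}) - p(\bar{x_1}^*)\bigr)\left(\frac{\bar{x_1}\, g(\bar{x_1})}{p(\bar{x_1})} - \bar{y_1}^*\right).
\end{equation*}

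Finally I would carry out the sign analysis: since $p$ is strictly increasing, the first factor has the same sign as $\bar{x_1} - \bar{x_1}^*$, while the parametric restriction on $k_p$ ensures that the prey isocline $\bar{x_1} g(\bar{x_1})/p(\bar{x_1})$ is decreasing at $\bar{x_1}^*$, so the second factor has the opposite sign. Hence $\dot{V} \leq 0$ on $\mathbb{R}_+^2$, with equality only on $\{\bar{x_1} = \bar{x_1}^*\}$; invoking LaSalle's invariance principle confines the largest invariant subset to $\bar{E}$, giving global asymptotic stability. I do not anticipate a serious obstacle here: the entire difficulty of the squeezing argument has been front-loaded into constructing the correct sub/super system, and once that is done the proof is a direct replay of Lemma \ref{lem:p1s} with $\bar{\delta}$ in place of $\tilde{\delta}$. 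The only point that requires care is confirming that the $k_p$-restriction is precisely the Rosenzweig--MacArthur non-paradox-of-enrichment condition that both secures local stability and delivers the monotonicity needed for the sign of $\dot{V}$.
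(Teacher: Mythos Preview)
Your proposal is correct and matches the paper's proof essentially line-for-line: the paper too derives the local stability condition $\bar{x_1}^* > (k_p-(1+\alpha\xi))/2$, rewrites the system via $g,p$, uses the identical Hsu-type Lyapunov function, computes $\dot V = (p(\bar{x_1})-p(\bar{x_1}^*))\bigl(\bar{x_1} g(\bar{x_1})/p(\bar{x_1}) - \bar{y_1}^*\bigr)$, and performs the same sign analysis. Your explicit invocation of LaSalle is slightly more complete than the paper's, which simply stops at $\dot V \leq 0$.
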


\begin{proof}

The coexistence equilibrium $(\bar{x^*_1},  \bar{y^*_1})$ is locally asymptotically stable if, \cite{Hal08} \\ 
\begin{equation*}
\bar{x^*_1} > \dfrac{k_p-(1+\alpha\xi)}{2} \implies  \dfrac{2\bar{\delta}(1+\alpha\xi)}{\epsilon_1- \bar{\delta}} > k_p-(1+\alpha\xi) \implies k_p< \dfrac{(1+\alpha\xi)(\epsilon_1+\bar{\delta})}{\epsilon_1-\bar{\delta}}
\end{equation*}

Let $g(\Bar{x_1})= \left (1-\dfrac{\Bar{x_1}}{k_p}\right ), \ p(\Bar{x_1}) = \dfrac{\Bar{x_1}}{1+\Bar{x_1}+\alpha \xi} $ then,
\begin{equation}
\begin{aligned}
\Dot{\Bar{x_1}}& = \Bar{x_1} g(\Bar{x_1})- p(\Bar{x_1})\Bar{y_1} \\
 \Dot{\bar{y_1}}& = (\epsilon_1 p(\Bar{x_1}) - \bar\delta ) \ \Bar{y_1}   
 \end{aligned}
 \label{p_g_general_sub}
\end{equation}
We consider the following Lyapunov function  \cite{hsu2005survey},
\begin{equation}
 V =  \int_{\Bar{x^*_1}}^{\Bar{x_1}} \dfrac{p(\phi)- \dfrac{\bar\delta }{\epsilon_1}}{p(\phi)} \,d\phi  + \dfrac{1}{\epsilon_1}\int_{\Bar{y^*_1}}^{\Bar{y_1}}\dfrac{\eta-\Bar{y^*_1}}{\eta}\,d\eta
\label{lyapunov_construction_sub}
\end{equation}
Differentiating $V$ with respect to time $t$ along the solutions of model \eqref{p_g_general_sub}, we get
\begin{equation*}
\dot{V} = \dfrac{p(\Bar{x_1})- \dfrac{\bar\delta }{\epsilon_1}}{p(\Bar{x_1})}\  \dot{\Bar{x_1}} + \dfrac{1}{\epsilon_1} \dfrac{\Bar{y_1}-\Bar{y^*_1}}{\Bar{y_1}} \ \dot{\Bar{y_1}}
\label{sub_v_differentiated}
\end{equation*}
Using system of equations \eqref{p_g_general_sub}, after some algebraic manipulations we have,
\begin{equation*}
\setlength{\jot}{10pt}
\begin{aligned}
&\dot{V} = \dfrac{p(\bar{x_1})- \dfrac{\bar\delta }{\epsilon_1}}{p(\bar{x_1})}\  \{\bar{x_1} g(\bar{x_1})- p(\bar{x_1})\bar{y_1}\} + \dfrac{1}{\epsilon_1} \dfrac{\bar{y_1}-\bar{y^*_1}}{\bar{y_1}} \ \{\epsilon_1 p(\bar{x_1}) - \bar\delta \} \ \bar{y_1} \\
& = \dfrac{p(\Bar{x_1})- \dfrac{\bar\delta }{\epsilon_1}}{p(\Bar{x_1})}\  \{\Bar{x_1} g(\Bar{x_1})- p(\Bar{x_1})\Bar{y_1}\} +(\Bar{y_1}-\Bar{y^*_1}) \
\left \{p(\Bar{x_1})-\dfrac{\bar\delta }{\epsilon_1}\right \}
\\
 & = \left(p(\Bar{x_1})- \dfrac{\bar\delta }{\epsilon_1}\right) \left( \dfrac{\Bar{x_1} g(\Bar{x_1})}{p(\Bar{x_1})} - \Bar{y^*_1}
 \right)
 \end{aligned}
\end{equation*}
 Plugging in the value of $\dfrac{\Bar\delta }{\epsilon_1}$ then,
 \begin{equation}
 \dot{V}\rvert_{({\Bar{x_1} ,{\Bar{y_1})} = (\Bar{x^*_1},\Bar{y^*_1})}} = \left(p(\Bar{x_1})- p(\Bar{x^*_1})\right) \left( \dfrac{\Bar{x_1} g(\Bar{x_1})}{p(\Bar{x_1})} - \Bar{y^*_1}\right) 
 \label{eq:lyapunov_sub_system}
\end{equation}

Since, $p(\Bar{x_1})$ is an increasing function in $\Bar{x_1}$, therefore $\left(p(\Bar{x_1})-p(\Bar{x^*_1})\right) $ is negative if $\Bar{x_1} < \Bar{x^*_1} < k_p $ and it is positive for $ \Bar{x^*_1} <\Bar{x_1} < k_p $. Also, $\left( \frac{\Bar{x_1} g(\Bar{x_1})}{p(\Bar{x_1})} - \Bar{y^*_1} \right)$ is positive for $\Bar{x_1} < \Bar{x^*_1} < k_p $ and negative for $ \Bar{x^*_1} <\Bar{x_1} < k_p $. Therefore, we have that 
\begin{equation*}
     \dot{V}\rvert_{({\Bar{x_1} ,{\Bar{y_1})} = (\Bar{x^*_1},\Bar{y^*_1})}}  \leq 0 \ \text{in}  \ \mathbb{R}^4_{+}
\end{equation*}
\end{proof}

We next state the following auxilliary theorem from \cite{perko2013differential},

\begin{thm}
\label{thm:dp1} (Dependence on parameters)

Let $E$ be an open subset of $R^{n+m}$ containing the point $(\bf{x_0}, \bf{\mu_0})$ where $(\bf{x_0}) \in R^m$ and assume that $\bf{f} \in C^1(E)$. It then follows that there exists $a>0$, $\delta> 0 $ such that for all $\bf{y} \in N_{\delta}(x_0)$ and $\bf{\mu} \in N_{\delta}(\mu_0)$, the initial value problem,

\begin{equation*}
    \dot{\bf{x}} = \bf{f}(\bf{x}, \bf{\mu})
\end{equation*}
\begin{equation*}
    \bf{x}(0) = \bf{y}
\end{equation*}

has a unique solution $\bf{u}(t,\bf{y},\bf{\mu})$ with $\bf{u}$ $\in C^1(G)$ where $G = [-a,a] \times N_{\delta}(x_0) \times N_{\delta}(\mu_0)$.
\end{thm}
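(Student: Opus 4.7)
The plan is to reduce the parametric IVP to a parameter-free autonomous system via the classical augmentation trick, and then invoke the standard smooth-dependence-on-initial-conditions theorem for ODEs. Concretely, I would introduce the extended state $\mathbf{z} = (\mathbf{x}, \boldsymbol{\mu}) \in \mathbb{R}^{n+m}$ and the augmented vector field
\[
\mathbf{F}(\mathbf{x}, \boldsymbol{\mu}) = \bigl(\mathbf{f}(\mathbf{x}, \boldsymbol{\mu}), \, \mathbf{0}\bigr).
\]
Since $\mathbf{f} \in C^1(E)$, also $\mathbf{F} \in C^1(E)$, and the parametric problem becomes the autonomous IVP $\dot{\mathbf{z}} = \mathbf{F}(\mathbf{z})$ with $\mathbf{z}(0) = (\mathbf{y}, \boldsymbol{\mu})$.

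Next I would apply the Picard--Lindel\"of theorem at the point $(\mathbf{x_0}, \boldsymbol{\mu_0})$ to obtain constants $a, \delta > 0$ and a unique continuous flow $\boldsymbol{\phi}(t, \mathbf{z_0})$ defined on $[-a,a] \times N_\delta(\mathbf{x_0}, \boldsymbol{\mu_0})$. Because the last $m$ components of $\mathbf{F}$ vanish identically, the parameter block of $\boldsymbol{\phi}$ stays constant along trajectories, and writing $\boldsymbol{\phi} = (\mathbf{u}, \boldsymbol{\mu})$, the first block $\mathbf{u}(t, \mathbf{y}, \boldsymbol{\mu})$ automatically satisfies the original parametric IVP.

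The main obstacle, and the technical heart of the argument, is upgrading continuity of $\boldsymbol{\phi}$ to $C^1$ regularity in $\mathbf{z_0}$. The natural candidate for $D_{\mathbf{z_0}} \boldsymbol{\phi}$ is the fundamental matrix solving the variational equation
\[
\dot{\Phi}(t) = D\mathbf{F}\bigl(\boldsymbol{\phi}(t, \mathbf{z_0})\bigr)\, \Phi(t), \qquad \Phi(0) = I.
\]
The hard part is to show that the difference quotients
\[
\frac{\boldsymbol{\phi}(t, \mathbf{z_0} + \mathbf{h}) - \boldsymbol{\phi}(t, \mathbf{z_0}) - \Phi(t)\mathbf{h}}{|\mathbf{h}|}
\]
tend to $\mathbf{0}$ uniformly on $[-a,a]$ as $\mathbf{h} \to \mathbf{0}$. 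I would handle this by subtracting the integrated forms of the two solutions, inserting $\pm D\mathbf{F}$ terms to split off a nonlinear remainder, bounding that remainder via the local Lipschitz property of $D\mathbf{F}$ on a compact neighborhood (possibly shrinking $a$ and $\delta$ so that the trajectories remain inside it), and then applying Gr\"onwall's inequality to close the estimate. Continuity of the derivative in $\mathbf{z_0}$ then follows from continuous dependence of the linear (continuously parametrized) variational equation on its coefficients and initial data.

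Finally, projecting onto the first $n$ coordinates preserves $C^1$ regularity, so $\mathbf{u} \in C^1(G)$ with $G = [-a,a] \times N_\delta(\mathbf{x_0}) \times N_\delta(\boldsymbol{\mu_0})$; the $t$-derivative of $\mathbf{u}$ equals $\mathbf{f}(\mathbf{u}, \boldsymbol{\mu})$, which is continuous in all its arguments. This yields the stated conclusion.
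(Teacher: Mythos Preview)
Your proof sketch is correct and follows the standard textbook route (augment the state with the parameter, reduce to smooth dependence on initial data, then establish $C^1$ regularity via the variational equation and a Gr\"onwall argument). However, the paper does not prove this theorem at all: it is stated as an auxiliary result and simply cited from Perko's textbook \cite{perko2013differential}. So there is nothing to compare against; your outline is essentially the proof one finds in that reference.
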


Now we state the following theorem,

\begin{thm}
\label{thm:gs1} 
Consider system \eqref{eq:patch_model_uni_patch1}. Then, under the parametric restrictions, $k_p<  \dfrac{(1+\alpha\xi)(\epsilon_1+\tilde{\delta})}{\epsilon_1-\tilde{\delta}},
$ \ we have the solution $ (x_{1},y_{1})$ to \eqref{eq:patch_model_uni_patch1} is globally asymptotically stable.
\end{thm}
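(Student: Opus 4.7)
The plan is to push the super- and sub-solution machinery of the preceding lemmas through to the full system via a squeezing argument, and then to collapse the resulting bracket to a single limit point using the continuous-dependence Theorem \ref{thm:dp1}.

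The first step is to establish the pointwise comparison $\bar y_1(t) \le y_1(t) \le \tilde y_1(t)$ for all $t \ge 0$, with matched initial data. This follows from the inequalities $0 \le \tfrac{\xi}{1+x_1+\alpha\xi} \le \tfrac{\xi}{1+\alpha\xi}$ already used to derive \eqref{eq:patch_model_uni_patch_sub_delta_bar} and \eqref{eq:patch_model_uni_patch_sup_delta_tilde}: the right-hand side of the $\dot y_1$ equation in \eqref{eq:patch_model_uni_patch1} is trapped between those of the sub and super systems, and the standard differential-inequality/comparison argument (treating $x_1$ as a time-dependent coefficient and exploiting monotonicity of the functional response in its prey argument) transports the inequality to the solutions. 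Corresponding bounds for $x_1$ then follow by inserting the $y_1$ bracket into its scalar ODE.

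Second, I would apply Lemmas \ref{lem:p1s} and \ref{lem:p2s} to obtain $(\tilde x_1,\tilde y_1) \to \tilde E$ and $(\bar x_1,\bar y_1) \to \bar E$ globally. The stated hypothesis $k_p < \tfrac{(1+\alpha\xi)(\epsilon_1+\tilde\delta)}{\epsilon_1-\tilde\delta}$ automatically implies the counterpart with $\bar\delta$ in place of $\tilde\delta$, since $\bar\delta > \tilde\delta$ and $\delta \mapsto \tfrac{\epsilon_1+\delta}{\epsilon_1-\delta}$ is strictly increasing on $(0,\epsilon_1)$, so Lemma \ref{lem:p2s} is indeed available. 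Squeezing then traps the $\omega$-limit set of any trajectory of \eqref{eq:patch_model_uni_patch1} in the rectangle with corners $\bar E$ and $\tilde E$.

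The final step is to collapse this rectangle to a single equilibrium. Viewing the super, original, and sub dynamics as a one-parameter family of planar systems indexed by the effective predator death rate $\delta \in [\tilde\delta,\bar\delta]$, Theorem \ref{thm:dp1} forces the interior equilibrium to vary continuously with $\delta$; in particular, the interior equilibrium $E^*$ of \eqref{eq:patch_model_uni_patch1} lies inside the trapping rectangle, and invariance of the $\omega$-limit set under the flow then identifies it with $E^*$, yielding global asymptotic stability. The principal obstacle is exactly this last step: $\tilde E$ and $\bar E$ are genuinely distinct for $\xi > 0$, so naive squeezing does not suffice on its own. Closing the gap requires either the continuous-dependence route just sketched or, alternatively, a planar Poincar\'e--Bendixson argument ruling out closed orbits inside the trapping rectangle via a Dulac function---either route needs some care, given the coupled, non-cooperative structure of the underlying predator-prey dynamics.
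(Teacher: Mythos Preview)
Your overall strategy---sub/super comparison via Lemmas \ref{lem:p1s} and \ref{lem:p2s}, then a squeezing argument, then an appeal to Theorem \ref{thm:dp1} to collapse the bracket---matches the paper's framework exactly, and your observation that the hypothesis on $k_p$ with $\tilde\delta$ automatically gives the one with $\bar\delta$ (by monotonicity of $\delta\mapsto(\epsilon_1+\delta)/(\epsilon_1-\delta)$) is correct and stated more cleanly than in the paper.

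The gap you flag in your final step is precisely the crux, and the paper closes it by a device you have not found. Your ``one-parameter family indexed by $\delta\in[\tilde\delta,\bar\delta]$'' route does not work as written: continuity of the equilibrium in $\delta$ only tells you $E^*$ sits in the box $[\tilde E,\bar E]$, and ``invariance of the $\omega$-limit set'' does not by itself pin the limit to $E^*$---a priori there could be closed orbits or other invariant structure inside the box. The paper instead manufactures a \emph{sequence} of sub/super systems with parameters $(\xi_n,\alpha_n)=(\xi/10^{n-1},10^{n-1}\alpha)$, chosen so that the product $\alpha_n\xi_n=\alpha\xi$ is held fixed (hence the functional-response denominator $1+x_1+\alpha\xi$ is unchanged) while $\xi_n\to 0$. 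Because the only difference between $\tilde\delta$ and $\bar\delta$ is the term $\epsilon_1\xi/(1+\alpha\xi)$, sending $\xi_n\to 0$ forces $\tilde\delta_n\to\bar\delta$, so $\tilde E_n$ and $\bar E_n$ coalesce. The paper then claims the refined ordering $\bar y_1\le\bar y_1^{\,n}\le y_1\le\tilde y_1^{\,n}\le\tilde y_1$ and passes to the limit. This specific parameter-sequence trick---freezing $\alpha\xi$ while killing $\xi$---is the missing idea in your proposal; neither your continuous-dependence sketch nor your proposed Dulac alternative supplies it.
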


\begin{proof}

We proceed by contradiction. WLOG assume $(x_{1}, y_{1})$ is not globally asymptotically stable, and large data does not converge to the $(x^{*}_{1}, y^{*}_{1})$ state.
Under the parametric restrictions of Lemma \ref{lem:p1s} and \ref{lem:p2s}, $k_p< \frac{(1+\alpha\xi)(\epsilon_1+\tilde{\delta})}{\epsilon_1-\tilde{\delta}} $ and $k_p<\frac{(1+\alpha\xi)(\epsilon_1+\bar{\delta})}{\epsilon_1-\bar{\delta}}$, we have that $\Bar {{y_1}} \leq y_{1} \leq \tilde{y_1}$, where both $(\Bar {{x_1}},\Bar {{y_1}}),$ and $(\tilde{x_1},\tilde{y_1})$, are globally asymptotically stable, and by direct comparison, $\Bar {{y_1}} \leq y_{1} \leq \tilde{y_{1}}$. We next choose a sequence $\{\xi_{n}, \alpha_{n} \}$ s.t., for a given parametric set $\{\xi, \alpha \}$ see Table \ref{Table:2}, where we have global stability of $(\tilde{x_1},\tilde{y_1})$ and $(\Bar{x_1},\Bar{y_1})$, and furthermore,

\begin{equation}
    \lim_{n \rightarrow \infty} \xi_{n} \alpha_{n} \rightarrow \alpha \xi, \ \lim_{n \rightarrow \infty} \xi_{n}  \rightarrow 0.
\end{equation}

For example $\{\xi_{n}=\frac{\xi}{n} \}$, $\{\alpha_{n}=n \alpha  \}$, would suffice.
Now, since the dependence of $(\tilde{x_1},\tilde{y_1})$ and $(\Bar{x_1},\Bar{y_1})$, on the parameters $\xi, \alpha$, is smooth, via standard $C^{1}$ theory of dynamical systems, with respect to parameters via Theorem \ref{thm:dp1}, we can consider a sequence of solutions, $(\tilde{x^{n}_1},\tilde{y^{n}_1})$ and $(\Bar{x^{n}_1},\Bar{y^{n}_1})$, that have the parameters $\{\xi, \alpha \}$ replaced by $\{\xi_{n}, \alpha_{n} \}$. Via comparison we have,

\begin{equation}
   \Bar {{y_1}} \leq \Bar {{y^{n}_1}} \leq y_{1} \leq \tilde{y^{n}_1} \leq \tilde{y_1}.
\end{equation}

Clearly, by construction we have that in the limit,

\begin{equation}
   \Bar {{y_1}} = \lim_{n \rightarrow \infty} \Bar {{y^{n}_1}} = \lim_{n \rightarrow \infty} \tilde{y^{n}_1} 
\end{equation}

Thus we take limits to yield,

\begin{equation}
   \Bar {{y_1}} = \lim_{n \rightarrow \infty} \Bar {{y^{n}_1}} \leq y_{1} \leq \lim_{n \rightarrow \infty} \tilde{y^{n}_1} = \lim_{n \rightarrow \infty} \Bar {{y^{n}_1}} = \Bar {{y_1}}.
\end{equation}

It also follows that,

\begin{equation}
   \Bar {{x_1}} \geq \Bar {{x^{n}_1}} \geq x_{1} \geq \tilde{x^{n}_1} \geq \tilde{x_1}.
\end{equation}

Thus, taking limits via the same argument as earlier,

\begin{equation}
   \Bar {{x_1}} = \lim_{n \rightarrow \infty} \Bar {{x^{n}_1}} \geq x_{1} \geq \lim_{n \rightarrow \infty} \tilde{x^{n}_1} = \lim_{n \rightarrow \infty} \Bar {{x^{n}_1}} = \Bar {{x_1}}.
\end{equation}


Thus $(x_{1},y_{1})$ will be driven to the globally stable $(\Bar {{x^{*}_1}}, \Bar {{y^{*}_1}})$ state, which can be made arbitrarily close to the $(x^{*}_{1}, y^{*}_{1})$ state; see Figure \eqref{nullcline_plot_comp}- this yields a contradiction, and the theorem is proved.

\end{proof}

\begin{table}[H]
\caption{Prey $\&$ Predator levels 
with $``\xi_{n}" \& ``\alpha_{n}"$}\label{Table:2}
	\scalebox{0.92}{
{\begin{tabular}{|c|c|c|c|}

		\hline
		Value of $(\xi_{n}, \alpha_{n})$  &  $(\tilde x_1,\tilde y_1):\eqref{eq:patch_model_uni_patch_sup}$  & $(x_1,y_1):\eqref{eq:patch_model_uni-i}$ & $(\bar{x_1},\bar{y_1}):\eqref{eq:patch_model_uni_patch_sub}$  \\
		& &  &   \\ \hline \hline
	$(\xi_{1}, \alpha_{1})$=(0.1,0.02) & (1.09609,0.948245)    & (1.38565,0.733429)  & (1.65035,0.463694)  \\
		&  &  &  \\ \hline
		$(\xi_{2}, \alpha_{2})$=(0.01,0.2) & (1.58209,0.539962) & (1.62388,0.49382) &(1.65035,0.463694)   \\
		&  &     &  \\ \hline
  $(\xi_{3}, \alpha_{3})$=(0.001,2) & (1.64336,0.471715)   & (1.64771,0.466738)   &  (1.65035,0.463694) \\
		&  &     &  \\ \hline
  $(\xi_{4}, \alpha_{4})$=(0.0001,20) & (1.64965,0.4645)   & (1.65009,0.463998)   &  (1.65035,0.463694) \\
		&  &     &  \\ \hline
		\end{tabular}}
		}

\vspace{0.25cm}
The parameter set used in Table \ref{Table:2} is $k_p=2,\epsilon_1 = 0.45, \epsilon_2 =0.8, \delta_1= 0.08, \delta_2=0.1, q_1=0.2 , q_2=0.3$.
Here $\left(\xi_{n}, \alpha_{n}\right)$ is chosen as $(\xi_{n} = \frac{\xi}{10^{n-1}}, \alpha_{n} = 10^{n-1} \alpha)$. We see very quick convergence to the globally stable sub-solution - within 3 approximations, error $\approx 10^{-4}$.
  \end{table}
  \begin{figure}
  \begin{subfigure}{.32\textwidth}
\centering
  \includegraphics[width=5.2cm, height=5.7cm]{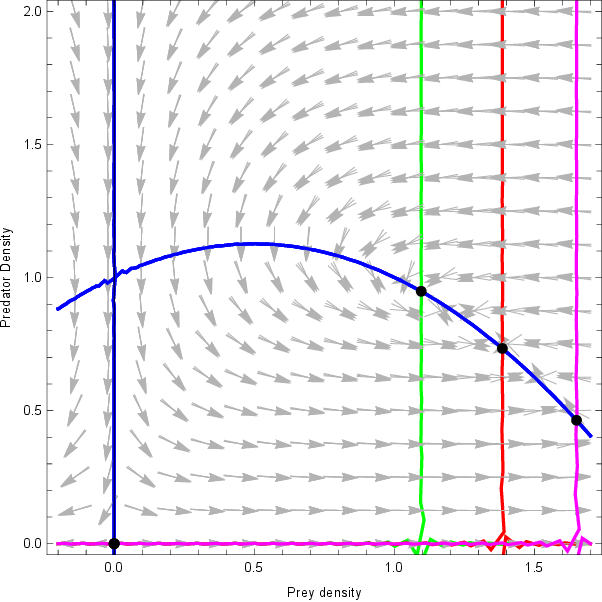}
 \subcaption{ $\xi_{1} = 0.1, \alpha_{1}=0.02
$ }
 \label{fig:table_first_entry}
  \end{subfigure}
  \begin{subfigure}{.32\textwidth}
  \centering
  \includegraphics[width= 5.2cm, height=5.7cm]{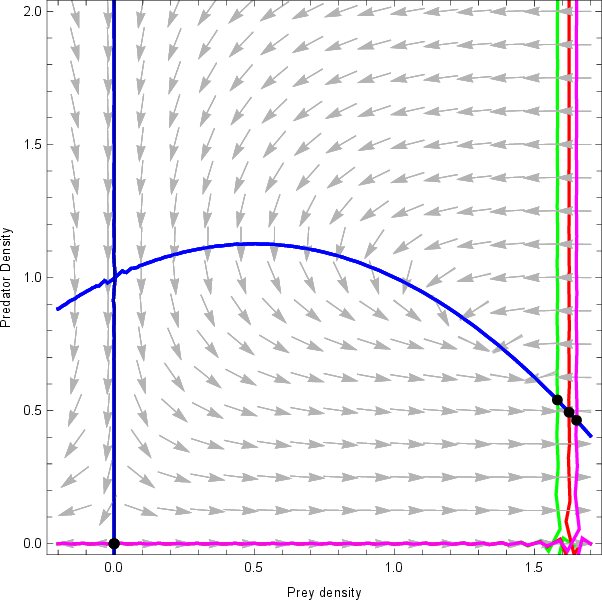}
 \subcaption{$\xi_{2} = 0.01, \alpha_{2}=0.2$}
  \label{fig:table_second_entry}
 \end{subfigure}
  \begin{subfigure}{.32\textwidth}
  \centering
  \includegraphics[width= 5.2cm, height=5.7cm]{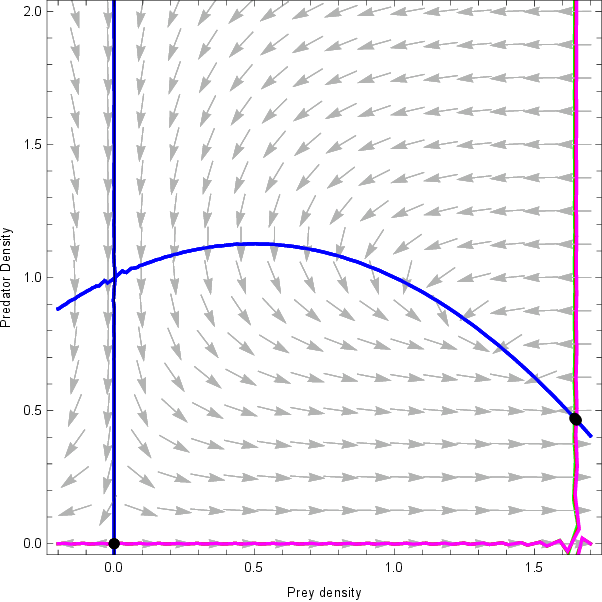}
 \subcaption{$\xi_{3} = 0.001, \alpha_{3}= 2$}
  \label{fig:table_third_entry}
 \end{subfigure}
 \caption{ 
The phase plots corresponding to the first, second, and third row of Table \ref{Table:2} are depicted in figure \eqref{fig:table_first_entry}, \eqref{fig:table_second_entry}, and \eqref{fig:table_third_entry}, respectively. In these plots, the blue curve represents the prey nullcline, while the green, red, and magenta curves represent the nullclines of $\tilde{y_1}$, $y_1$, and $\bar{y_1}$, respectively. The parameter set used is $k_p=2, \epsilon_1 = 0.45, \epsilon_2 =0.8, \delta_1= 0.08, \delta_2=0.1, q_1=0.2 , q_2=0.3 $. }
  \label{nullcline_plot_comp}
 \end{figure}
 \subsubsection{Pest extinction state in patch \texorpdfstring {$\Omega_2$}{Lg}}

\ 

The equations representing dynamics in patch $\Omega_2$ are given by: 

\begin{equation} \label{eq:patch_model_patch2_uni}
\begin{aligned}
 \Dot{x_2}& = x_2 \left (1- \frac{x_2}{k_c} \right)-\frac{x_2  y_2}{1+x_2} \\
\Dot{y_2}& = \epsilon_2 \left ( \frac{x_2}{1+x_2} \right ) \ y_2 - \delta_2 y_2 + q_2 y_1  
\end{aligned}
\end{equation}

\

Now plugging the value of $y_1^*$ using \eqref{x2_0_eq_1_uni} and \eqref{x2_0_eq_4_uni} we have, 

\begin{equation} \label{eq:patch_model_patch2_reduced_uni}
\begin{aligned}
 \Dot{x_2}& = x_2 \left (1- \frac{x_2}{k_c} \right)-\frac{x_2  y_2}{1+x_2} = x_2 \tilde f(x_2,y_2) \\
\Dot{y_2}& = \epsilon_2 \left ( \frac{x_2}{1+x_2} \right ) \ y_2 - \delta_2 y_2 - G = y_2 \tilde g(x_2,y_2)- G
\end{aligned}
\end{equation}

\ 

Here, $G$ is analogous to constant predator stocking 
  \cite{brauer1981constant} and is given by,
\begin{equation*}
    G = -q_2 y_1^* =  - \epsilon_1 q_2 \left ( \dfrac{1+\alpha\xi-\xi}{\epsilon_1- (\delta_1+q_1)}\right)  \left ( 1-\dfrac{(\delta_1+q_1)(1+\alpha\xi)-\epsilon_1\xi}{k_p(\epsilon_1-(\delta_1+q_1))}    \right) 
\end{equation*}

We state the following lemma,

\begin{lemma}
\label{lem:pest_extinction_crop_field}
Consider the system \eqref{eq:patch_model_patch2_reduced_uni}. Then under the parametric restriction $\epsilon_2 > \delta_2\left(1-\dfrac{1}{k_c}\right) \ and, \\  \delta_2 < \epsilon_1 q_2 \left ( \dfrac{1+\alpha\xi-\xi}{\epsilon_1- \bar{Q}}\right)  \left ( 1-\dfrac{\bar{Q}(1+\alpha\xi)-\epsilon_1\xi}{k_p(\epsilon_1-\bar{Q})}    \right)  $ where, $\bar{Q} = \delta_1+q_1$, we have that the pest free state $(0,y^{*}_{2})$ in the crop field to 
\eqref{eq:patch_model_patch2_reduced_uni}, is globally attracting for any positive $(x_{2}(0),y_{2}(0))$. 
\end{lemma}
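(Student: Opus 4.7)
The plan is to reduce the analysis to a two-dimensional planar system. By Theorem \ref{thm:gs1} the patch-$\Omega_1$ state converges globally to $(x_1^\star, y_1^\star)$, so the full system is asymptotically autonomous with limit \eqref{eq:patch_model_patch2_reduced_uni}, in which the input $-G = q_2 y_1^\star = c$ is a fixed positive constant (a predator ``stocking'' term in the sense of \cite{brauer1981constant}). I will analyze the limiting planar $(x_2,y_2)$ system and then invoke Markus/Thieme-type results on asymptotically autonomous semiflows to pull the conclusion back to the non-autonomous $y_2$ equation.

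The candidate pest-free equilibrium is $E_c = (0, y_2^*)$ with $y_2^* = c/\delta_2$. The hypothesis $\delta_2 < c$ gives $y_2^* > 1$, and linearization at $E_c$ produces eigenvalues $1 - y_2^* < 0$ and $-\delta_2 < 0$, so $E_c$ is locally asymptotically stable. Next I would rule out interior equilibria: for $x_2 > 0$, combining $\dot{x_2}=0$ (giving $y_2 = (1-x_2/k_c)(1+x_2)$) with $\dot{y_2}=0$ reduces to the scalar equation $f(x_2) = c$, where
\begin{equation*}
f(x_2) = (1 - x_2/k_c)\bigl[\delta_2(1+x_2) - \epsilon_2 x_2\bigr].
\end{equation*}
A direct manipulation gives $f(x_2) - \delta_2 = x_2\bigl[(\delta_2 - \epsilon_2)(1 - x_2/k_c) - \delta_2/k_c\bigr]$, and the bracket is non-positive on $[0,k_c]$ precisely under the first hypothesis $\epsilon_2 > \delta_2(1 - 1/k_c)$ (the case $\epsilon_2 \geq \delta_2$ being trivial, and the case $\epsilon_2 < \delta_2$ reducing to $\delta_2 - \epsilon_2 \leq \delta_2/k_c$). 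Together with the second hypothesis $c > \delta_2$, this yields $f(x_2) \leq \delta_2 < c$ for all $x_2 \in (0, k_c]$, so no interior equilibrium exists.

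For boundedness, the logistic structure gives $\limsup_{t\to\infty} x_2(t) \leq k_c$, and the functional $W = \epsilon_2 x_2 + y_2$ satisfies
\begin{equation*}
\dot W = \epsilon_2 x_2(1 - x_2/k_c) - \delta_2 y_2 + c \leq \tfrac{\epsilon_2 k_c}{4} + c - \delta_2 y_2,
\end{equation*}
which is ultimately negative whenever $y_2$ is large, producing a uniform upper bound on $y_2$; meanwhile $\dot y_2 \geq -\delta_2 y_2 + c$ yields $\liminf_{t\to\infty} y_2(t) \geq y_2^*$. The axis $\{x_2 = 0\}$ is forward invariant since $\dot{x_2}|_{x_2=0} = 0$, so any closed orbit in the open first quadrant would have to enclose an equilibrium in that quadrant, contradicting the previous paragraph. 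By the Poincar\'e--Bendixson trichotomy, the $\omega$-limit set of every bounded positive trajectory is an equilibrium; the only candidate is $E_c$, so $(x_2(t), y_2(t)) \to (0, y_2^*)$.

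The main obstacle is making the planar reduction rigorous. The $y_2$ equation in the original four-dimensional model carries a time-dependent forcing $q_2 y_1(t) \to c$, so strictly speaking \eqref{eq:patch_model_patch2_reduced_uni} is only the asymptotic limit; one must invoke an asymptotically autonomous convergence theorem (Markus, Thieme) to conclude that the $\omega$-limit set of the non-autonomous trajectory lies inside that of the autonomous limit. A secondary, cleaner point is that ruling out heteroclinic cycles alongside limit cycles is automatic here, because the unique equilibrium sits on the invariant axis and no second equilibrium is available to complete a heteroclinic connection.
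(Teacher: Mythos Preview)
Your argument is correct and in fact more self-contained than the paper's. The paper's proof simply identifies the reduced system \eqref{eq:patch_model_patch2_reduced_uni} as a Rosenzweig--MacArthur model with constant predator stocking $-G = q_2 y_1^\star$, recalls from \cite{brauer1981constant} that the critical stocking level is $G_{\mathrm{critical}} = -\delta_2$, and then reads off the two hypotheses $\epsilon_2 > \delta_2(1 - 1/k_c)$ and $q_2 y_1^\star > \delta_2$ (equivalently $y_2^\star > 1$) as exactly the Brauer--Soudack conditions for global attraction to the pest-free state. No planar analysis is carried out; the work is outsourced to the citation, and the remainder of the proof just unwinds the inequality $q_2 y_1^\star > \delta_2$ into a quadratic condition on $\xi$.

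By contrast, you essentially reprove the relevant slice of \cite{brauer1981constant}: your computation $f(x_2) - \delta_2 = x_2\bigl[(\delta_2 - \epsilon_2)(1 - x_2/k_c) - \delta_2/k_c\bigr]$ is exactly what makes the first hypothesis transparent, and the Poincar\'e--Bendixson step is clean once you have no interior equilibrium and dissipativity. This buys you independence from the cited result and a clearer picture of why the two hypotheses are needed. One minor point: the lemma as stated is already about the autonomous limit system with $G$ constant, so your discussion of the Markus/Thieme reduction, while correct and relevant for the full four-dimensional model, is strictly speaking an addendum rather than a required step for this particular statement.
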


\begin{proof}
From $\dot x_2 = 0 $,

\begin{equation}
    y_2^*  =  \left (1- \frac{x_2^*}{k_c} \right) \left(1+x_2^*\right) 
    \label{y2_x2}
\end{equation}

 and from $\dot y_2 = 0$ we have, 
\begin{equation*}
  y_2^* \tilde g(x_2^*,y_2^*) =  G \implies y_2^* \left(   \epsilon_2 \left ( \frac{x_2^*}{1+x_2^*} \right ) - \delta_2 \right) =  G = -q_2 y_1^*
  \end{equation*}

\

The critical stocking rate for which equilibrium reaches the pest extinction  is given by \cite{brauer1981constant},

\begin{equation*}
  -G_{critical} = -\tilde g(0,1) \implies G_{critical} = -\delta_2 
\end{equation*}

\

Now, the pest extinction state in the crop field patch $\Omega_2$ is possible for all initial conditions if,

\begin{equation*}
\epsilon_2 > \delta_2\left(1-\dfrac{1}{k_c}\right) \  \text{and,} \ G < G_{critical} \implies
\boxed{
    y_2^* > 1}
\end{equation*}

\begin{equation*}
\setlength{\jot}{10pt}
\begin{aligned}
& \implies  - \epsilon_1 k_4 \left ( \dfrac{1+\alpha\xi-\xi}{\epsilon_1- (\delta_1+q_1)}\right)  \left ( 1-\dfrac{(\delta_1+q_1)(1+\alpha\xi)-\epsilon_1\xi}{k_p(\epsilon_1-(\delta_1+q_1))}    \right)  < -\delta_2   \\
&  \implies  \epsilon_1 q_2 \left ( \dfrac{1+\alpha\xi-\xi}{\epsilon_1- \bar{Q}}\right)  \left ( 1-\dfrac{\bar{Q}(1+\alpha\xi)-\epsilon_1\xi}{k_p(\epsilon_1-\bar{Q})}    \right)  > \delta_2  \ \text{Where,}\  \bar{Q} = \delta_1+q_1\\
& \implies \xi^2 (\epsilon_1-\alpha \bar{Q})(\alpha-1) + \xi \{(\alpha-1)(k_p(\epsilon_1-\bar{Q})-\bar{Q}) + (\epsilon_1-\alpha \bar{Q})\}- \frac{k_p \delta_2(\epsilon_1-\bar{Q})^2}{\epsilon_1 q_2} +k_p (\epsilon_1-\bar{Q}) - \bar{Q} > 0 \\
&  \implies \xi^2 (\epsilon_1-\alpha \bar{Q})(1-\alpha) - \xi \{(\alpha-1)(k_p(\epsilon_1-\bar{Q})-\bar{Q}) + (\epsilon_1-\alpha \bar{Q})\} + \frac{k_p \delta_2(\epsilon_1-\bar{Q})^2}{\epsilon_1 q_2} - k_p (\epsilon_1-\bar{Q}) + \bar{Q} < 0 \\
\end{aligned}
\end{equation*}

\

The above inequality is quadratic in $\xi$; on comparing with a standard quadratic equation, we have, 
\begin{equation*}
\begin{aligned}
 & A = (\epsilon_1-\alpha \bar{Q})(1-\alpha),  \  B = - \{ ( \alpha-1)(k_p(\epsilon_1-\bar{Q})-\bar{Q}) +(\epsilon_1-\alpha \bar{Q})\}, \ C = \frac{k_p \delta_2(\epsilon_1-\bar{Q})^2}{\epsilon_1 q_2} - k_p (\epsilon_1-\bar{Q}) + \bar{Q} \\
& \text{If}\   A > 0 , \ B^{2}- 4 AC >0 \ \implies \xi_1 < \xi < \xi_2 \ \  \text{where,} \ \ \xi_1 = \dfrac{-B - \sqrt{B^2-4AC}}{2A} \ \text{and} \ \ \xi_2 = \dfrac{-B + \sqrt{B^2-4AC}}{2A}
\end{aligned}
\end{equation*}
\end{proof}

\begin{rem}
Lemma \ref{lem:pest_extinction_crop_field} can be rephrased, and the parametric conditions can be expressed in terms of the carrying capacities of both the prairie strip and the crop field.
    Consider the system \eqref{eq:patch_model_uni-i}. Then, under the parametric restrictions, $k_{c} > 0 $ 
    and, 
   \begin{equation*}
    k_{p} >\dfrac{\epsilon_1 q_2 \left(\epsilon_1\xi-\bar{Q}(1+\alpha\xi)\right)\left( 1+\alpha\xi-\xi\right)}{(\epsilon_1-\bar{Q})\left( \delta_2(\epsilon_1-\bar{Q})- \epsilon_1 q_2 (1+\alpha\xi-\xi)\right) }
    \end{equation*}
     where, $\bar{Q} = \delta_1+q_1$, we have that the pest free state $(0,y^{*}_{2})$ in the crop field to 
\eqref{eq:patch_model_patch2_reduced_uni}, is globally attracting for any positive $(x_{2}(0),y_{2}(0))$. 
\end{rem}
\begin{rem}
  Figure \eqref{region_plots} represents the relation between the carrying capacity of the prairie strip and the quantity of AF, along with the drift rates, while keeping other parameters fixed so that the pest-free state in the crop field is globally attracting. Note that the plot of $k_p \ \text{vs.} \ q_2$ is not shown here because region (I) is independent of $q_2$. Therefore, region (III), being the intersection of the region (I) and (II), can not be plotted. 
  
  Region I: $k_p<  \frac{(1+\alpha\xi)(\epsilon_1+\tilde{\delta})}{\epsilon_1-\tilde{\delta}} $, 
  
  Region II: $k_{p} >\frac{\epsilon_1 q_2 \left(\epsilon_1\xi-\bar{Q}(1+\alpha\xi)\right)\left( 1+\alpha\xi-\xi\right)}{(\epsilon_1-\bar{Q})\left( \delta_2(\epsilon_1-\bar{Q})- \epsilon_1 q_2 (1+\alpha\xi-\xi)\right) }$, 
  
  Region III: $ \frac{\epsilon_1 q_2 \left(\epsilon_1\xi-\bar{Q}(1+\alpha\xi)\right)\left( 1+\alpha\xi-\xi\right)}{(\epsilon_1-\bar{Q})\left( \delta_2(\epsilon_1-\bar{Q})- \epsilon_1 q_2 (1+\alpha\xi-\xi)\right) }< k_p < \frac{(1+\alpha\xi)(\epsilon_1+\tilde{\delta})}{\epsilon_1-\tilde{\delta}}$.
  \end{rem}

 \begin{figure}
  \begin{subfigure}{.44\textwidth}
  \centering
  \includegraphics[width= 7cm, height=6cm]{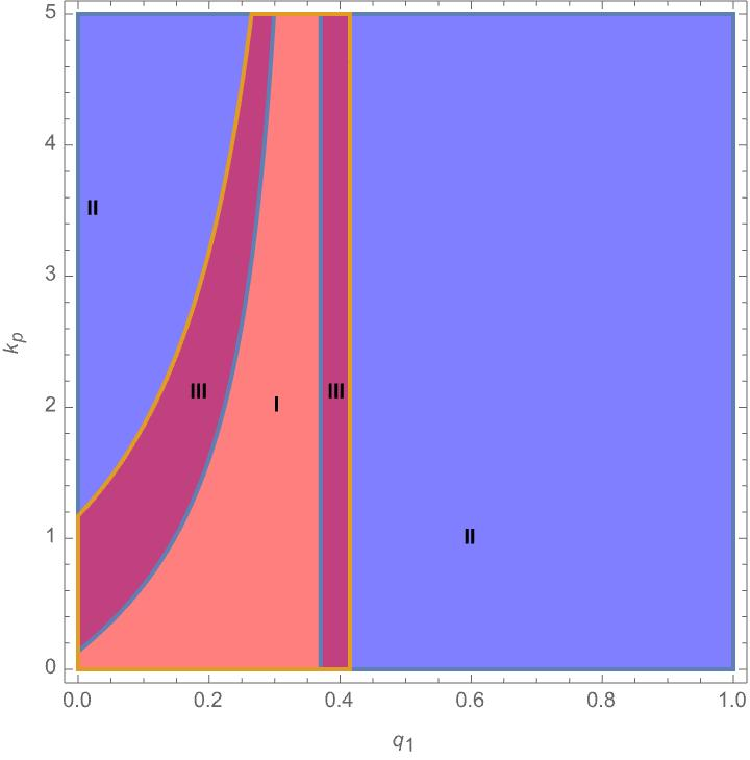}
 \subcaption{$k_p$\text{vs.}  $q_1$}
  \label{fig:kp vs k2_none}
 \end{subfigure}
  \begin{subfigure}{.44\textwidth}
  \centering
\includegraphics[width= 7cm, height=6cm]{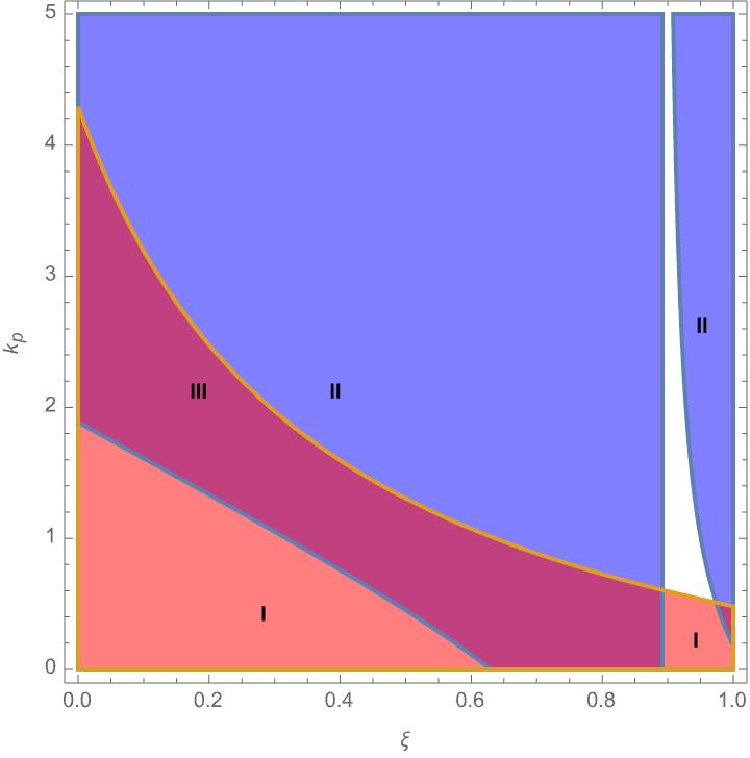}
 \subcaption{$k_p \text{vs.}\  \xi$}
  \label{fig:kp vs xi_none}
 \end{subfigure}
 \caption{  The red color represents region (I), the blue color represents region (II), and the magenta color represents region (III).
 In \eqref{fig:kp vs k2_none}, the relationship between $k_p \ \& \  q_1$ is depicted for $\xi = 0.1, q_2=0.3$. In \eqref{fig:kp vs xi_none}, the relationship between $k_p \ \& \  \xi$ is shown with $q_1=0.2, q_2=0.3$.  The fixed parameter set used is $k_c=20, \alpha = 0.02, \epsilon_1 = 0.45, \epsilon_2 =0.8, \delta_1= 0.08, \delta_2=0.1, q_2 = 0.3$.
}

 \label{region_plots}
 \end{figure}

 \subsection{Global Stability in Classical AF Model}
A special case of the earlier results, Lemmas \ref{lem:p1s}, \ref{lem:p2s} and Theorem \ref{thm:gs1} in particular, enable us to improve the criterion for global stability of the classical AF model, \eqref{eq:patch_model_uni_patch1}. Note if there is no dispersal assumed, that is $q_{1}=q_{2}=0$, we have that \eqref{eq:patch_model_uni_patch1} reduces to the classical AF model in \cite{SP07}.

\begin{rem}
In \cite{SP07}, it is stated that the coexistence equilibrium to the classical AF model is globally stable if $\gamma < 1 + \alpha \xi$. The condition $\gamma < 1 + \alpha \xi$ is highly restrictive and limits pest carrying capacity to essentially the amount of AF available. geometrically, this condition shifts the ``hump" or maximum in the prey nullcline over to the second quadrant. Thus, one is not modeling a one-hump prey nullcline any longer.
\end{rem}

We next state a corollary that improves this restriction, allowing for larger $\gamma$,

\begin{cor}
\label{cor:g1}
     Consider the classical AF model in \cite{SP07}. Then, under the parametric restrictions 
   $\gamma < \dfrac{(1+\alpha\xi)(\beta + \hat{\delta})}{\beta - \hat{\delta}}$, \ where, $\hat{\delta} = \delta -\dfrac{\beta \xi}{1+\alpha\xi}$, 
      we have the solution $ (x,y)$ to the classical AF model is globally asymptotically stable.
\end{cor}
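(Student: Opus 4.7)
The plan is to obtain Corollary \ref{cor:g1} as a direct specialization of Theorem \ref{thm:gs1}. First, I would observe that the classical AF model of \cite{SP07} coincides exactly with the prairie-strip subsystem \eqref{eq:patch_model_uni_patch1} under the substitutions $q_1=0$, $\epsilon_1\mapsto\beta$, $\delta_1\mapsto\delta$, and $k_p\mapsto\gamma$. Under this identification the super-solution ``effective death rate'' $\tilde\delta=\delta_1+q_1-\frac{\epsilon_1\xi}{1+\alpha\xi}$ appearing in \eqref{eq:patch_model_uni_patch_sup_delta_tilde} reduces precisely to $\hat\delta=\delta-\frac{\beta\xi}{1+\alpha\xi}$, while the sub-solution rate $\bar\delta=\delta_1+q_1$ reduces to $\delta$. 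Importantly, the elementary comparison inequalities $\frac{\xi}{1+\alpha\xi}\ge\frac{\xi}{1+x_1+\alpha\xi}\ge 0$ that were used to derive \eqref{eq:patch_model_uni_patch_sup} and \eqref{eq:patch_model_uni_patch_sub} do not depend on $q_1$, so the entire sub/super-solution construction transfers to the classical AF setting verbatim.

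Second, I would apply Lemmas \ref{lem:p1s} and \ref{lem:p2s} to these translated systems. Since the map $t\mapsto\frac{\beta+t}{\beta-t}$ is strictly increasing on $(0,\beta)$ and $\hat\delta<\delta$, the hypothesis $\gamma<\frac{(1+\alpha\xi)(\beta+\hat\delta)}{\beta-\hat\delta}$ is the more restrictive of the two carrying-capacity bounds and automatically implies $\gamma<\frac{(1+\alpha\xi)(\beta+\delta)}{\beta-\delta}$. Thus both the super-equilibrium $(\tilde x^*_1,\tilde y^*_1)$ and the sub-equilibrium $(\bar x^*_1,\bar y^*_1)$ of the translated systems are globally asymptotically stable, and by direct comparison every solution $(x,y)$ of the classical AF model satisfies $\bar y_1\le y\le\tilde y_1$ together with the mirror inequality in the prey coordinate.

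Finally, I would replay the squeezing argument from the proof of Theorem \ref{thm:gs1}. Choose a parameter sequence $\xi_n=\xi/n$, $\alpha_n=n\alpha$, so that $\alpha_n\xi_n\equiv\alpha\xi$ while $\xi_n\to 0$; then $\tilde\delta_n\to\delta=\bar\delta$ and, by the smooth parameter dependence of Theorem \ref{thm:dp1}, the associated perturbed equilibria $(\tilde x^{*,n}_1,\tilde y^{*,n}_1)$ and $(\bar x^{*,n}_1,\bar y^{*,n}_1)$ collapse to the same limit. Proceeding by contradiction, assuming a solution $(x,y)$ does not converge to $(x^*,y^*)$, the comparison chain $\bar y_1\le\bar y^{n}_1\le y\le\tilde y^{n}_1\le\tilde y_1$ (and its $x$-analog) together with the collapse of the perturbed equilibria forces $(x,y)$ onto $(x^*,y^*)$, yielding a contradiction.

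The main obstacle I anticipate is verifying that the limit of the perturbed equilibria coincides with the unperturbed coexistence point $(x^*,y^*)$ of the classical AF model rather than drifting to some nearby attractor, but this is ensured because the equilibrium conditions in each perturbed system depend only on the combination $\alpha_n\xi_n=\alpha\xi$ (held fixed) together with terms of order $\xi_n$ alone (which vanish). Thus the $(\xi/n,\,n\alpha)$ construction used for Theorem \ref{thm:gs1} is effective here as well, and no additional machinery beyond Theorem \ref{thm:dp1} is required.
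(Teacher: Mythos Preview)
Your proposal is correct and follows essentially the same route as the paper: specialize the prairie-strip subsystem to $q_1=0$ (with the parameter relabelings $\epsilon_1\mapsto\beta$, $\delta_1\mapsto\delta$, $k_p\mapsto\gamma$), then invoke Lemmas \ref{lem:p1s}, \ref{lem:p2s} and the squeezing argument of Theorem \ref{thm:gs1}. You have in fact spelled out the details---the identification $\tilde\delta\mapsto\hat\delta$, the monotonicity argument showing the single hypothesis on $\gamma$ covers both lemma conditions, and the $(\xi/n,\,n\alpha)$ limit---more carefully than the paper's one-line proof does.
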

\begin{proof}
    The proof is a direct application of Lemmas \ref{lem:p1s}, \ref{lem:p2s} and Theorem \ref{thm:gs1}, applied to the special case when $q_{1}=q_{2}=0$.
\end{proof}

\begin{rem}
Under our derived condition for global stability, the carrying capacity $\gamma$, can be as large as one wants, as long as $\left(\frac{\beta + \hat{\delta}}{\beta - \hat{\delta}}\right)$, is scaled to be as large as one requires. All in all, the derived condition is a clear improvement over the condition derived in \cite{SP07}, as long as $\beta > \hat{\delta}$, because $\left(\frac{\beta + \hat{\delta}}{\beta - \hat{\delta}}\right) > 1$, this follows geometrically as long as there exists a positive interior equilibrium, then $\hat{\delta} > 0$.
    \end{rem}

\section{Additional Food Patch Model Formulation - Dispersal}
\label{dispeersal af model}

\begin{figure}
\includegraphics[width = 5cm]{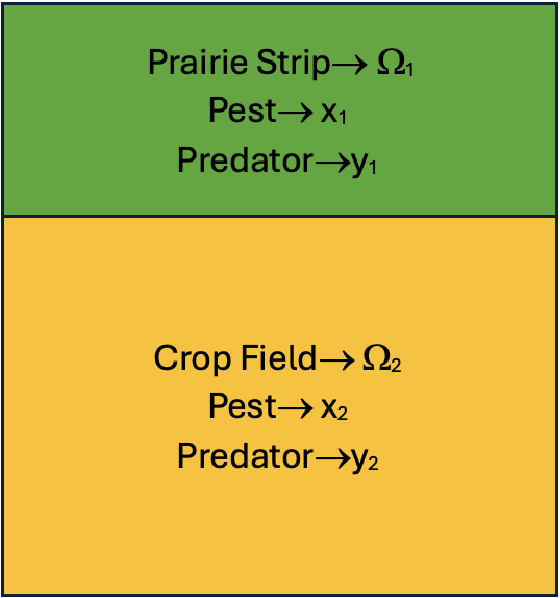}
\caption{Visual depiction of the underlying assumptions regarding patches $\Omega_1 \ \& \ \Omega_2$. }
\label{fig:patch_assumption}
\end{figure}
We assume that our landscape is divided into two sub-units, a crop field, which is the larger unit, and a prairie strip, which is the smaller unit. These are the two patches that make up our landscape. We will refer to the prairie strip as $\Omega_1$ and the crop field as $\Omega_2$. We assume that the prairie strip, by design, possesses row crops that will provide alternative/additional food to an introduced predator, which would enhance its efficacy in targeting a pest species that resides primarily in the crop field. The crop field has no AF. 
We further assume that the pest does not move between the patches. That is, the pest prefers its primary host, the crop in the crop field; however, there could be some presence of the pest in the prairie strip due to long-range dispersal effects such as wind (not modeled herein) or the presence of certain row crops in the strip, as alternative hosts. 

Thus, for the patch $\Omega_1$, we assume the quantity of AF as $\xi$ and the quality of AF as $\frac{1}{\alpha}$. Thus, essentially, in $\Omega_{1}$, we have an AF-driven predator-pest system for a pest density $x_{1}$ and a predator density $y_{1}$. We choose the classical Holling type II functional response for the functional response of the pest. In $\Omega_2$, since there is no AF, we have a classical predator-prey (pest) dynamics for a pest density $x_{2}$ and a predator density $y_{2}$.  We assume the predators will disperse between $\Omega_{1}$ and $\Omega_{2}$, via linear dispersal see Figure \eqref{fig:patch_assumption}. We assume that the net predator dispersal from $\Omega_{1}$ into $\Omega_{2}$ takes place at rate $k_{2}$ while the net predator dispersal from $\Omega_{2}$ into $\Omega_{1}$ takes place at rate $k_{4}$. Dispersal (and linear dispersal in particular) has been modeled in the literature in two main directions. In one approach, even if dispersal is asymmetric, the net (dispersing) population is conserved \cite{arditi2018asymmetric}. On the other, this may not be true, where a relative dispersal rate for each patch is considered \cite{messan2015two}. 
In
our current modeling framework, we can reproduce either of the above by choosing dispersal rates
accordingly. 

\begin{equation} \label{eq:patch_model2}
\begin{aligned}
\Dot{x_1}& = x_1 \left (1-\frac{x_1}{k}\right )-\frac{x_1 y_1}{1+x_1+\alpha \xi} \\
 \Dot{y_1}& = \epsilon_1 \left ( \frac{x_1+\xi}{1+x_1+\alpha \xi} \right)\ y_1 - \delta_1 y_1 + k_2 \left (y_2-y_1 \right)  \\
 \Dot{x_2}& = x_2 \left (1- \frac{x_2}{k} \right)-\frac{x_2  y_2}{1+x_2} \\
\Dot{y_2}& = \epsilon_2 \left ( \frac{x_2}{1+x_2} \right ) \ y_2 - \delta_2 y_2 + k_4 \left (y_1-y_2  \right)  
\end{aligned}
\end{equation}

\par There is a strong biological motivation for the asymetric dispersal case, $k_{2}>k_{4}$. We assume $k_{2} = k_{4} +  \tilde{k_{2}} \xi$. That is, we assume that the AF is a driver of additional dispersal of the predator out of $\Omega_{1}$ into $\Omega_{2}$, also the AF can also act as an attractant for predators wanting to disperse out of $\Omega_{2}$ into $\Omega_{1}$. The above assumptions yield,

\begin{equation} 
\label{eq:patch_model2pp}
\begin{aligned}
 \Dot{x_1}& = x_1 \left (1-\frac{x_1}{k}\right )-\frac{x_1 y_1}{1+x_1+\alpha \xi} \\
 \Dot{y_1}& = \epsilon_1 \left ( \frac{x_1+\xi}{1+x_1+\alpha \xi} \right)\ y_1 - \delta_1 y_1 + k_4 \left (y_2-y_1 \right) + \tilde{k_{2}} \xi \left (y_2-y_1 \right) \\
 \Dot{x_2}& = x_2 \left (1- \frac{x_2}{k} \right)-\frac{x_2  y_2}{1+x_2} \\
\Dot{y_2}& = \epsilon_2 \left ( \frac{x_2}{1+x_2} \right ) \ y_2 - \delta_2 y_2 + k_4 \left (y_1-y_2  \right)  
\end{aligned}
\end{equation}

Here the $\tilde{k_{2}} \xi$, is the ``extra" or increased dispersal driven by the predator energized by the AF - this motivates the $\boxed{-\tilde{k_{2}} \xi y_{1}}$ term. Furthermore, AF can also act as an attractant for predators wanting to disperse out of $\Omega_{2}$ into $\Omega_{1}$ to replenish on the AF and recharge before targeting the pest in the crop field again - this motivates the $\boxed{\tilde{k_{2}} \xi y_{2}}$ term. Since prairie strips with the AF also serve as refuge for the predators, there is strong biological motivation for the predators to return to the AF patch \cite{snyder2019give, S99}.
If there was no AF, $\xi = 0$, then we are in the symmetric dispersal case.

\subsection{Predator Blowup prevention} \label{blowup-prevention}
As seen in the earlier section, the Blow-up phenomenon is seen in many AF models in the literature. That is, they enable infinite time \emph{blow-up} in the predator population, so must be applied to bio-control scenarios with caution. A basic AF model with a monotone pest-dependent response is unable to yield $(0,y^{*})$, the pest extinction state for $y^{*} < \infty$, and what is seen (see Table \ref{Table:1}) that if $\xi > \xi_{critical}$, $(x,y) \rightarrow (0,\infty)$. Thus, blow-up prevention and the attainability of the (finite) $(0,y^{*})$ pest extinction state is much sought after. To this end, the introduced patch model is useful. 
\subsubsection{Asymmetric dispersal}
\ 

We are considering the case of asymmetric dispersal. 
We state the following theorem,
\begin{thm}
\label{thm:t1}
Consider the additional food patch model \eqref{eq:patch_model2pp} with $\epsilon_{1} > \delta_{1}, \epsilon_{2} > \delta_{2}$. If $\tilde{k_{2}}=k_{4}=0$, that is there is no dispersal between patches and $\xi > \xi_{critical} =  \dfrac{\delta_1}{\epsilon_1 - \alpha \delta_1}$, then the predator population blows up in infinite time. However with any dispersal, $k_{4}>0$, and $\tilde{k_{2}}$ s.t. $ \tilde{k_{2}} \xi > \frac{\xi}{1+\alpha \xi}  - \delta_1 $
and $\delta_{1}+\delta_{2} > \frac{\epsilon_{1} \xi}{1+ \alpha \xi} > \delta_{1}$, the predator population remains bounded for all time.
\end{thm}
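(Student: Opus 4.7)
The plan is to handle the two claims separately. For the first claim, when $\tilde{k_{2}} = k_{4} = 0$ the equations for $(x_{1},y_{1})$ completely decouple from $(x_{2},y_{2})$ and the $\Omega_{1}$ subsystem reduces to the classical AF model for which infinite-time blow-up of the predator under $\xi > \xi_{critical} = \dfrac{\delta_{1}}{\epsilon_{1}-\alpha \delta_{1}}$ is already established in \cite{PWB23}. This is identical in spirit to the first half of Theorem \ref{thm:t1u1}, so I would just cite the prior result.

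For the dispersal claim I would proceed by contradiction, as in Theorem \ref{thm:t1u1}: assume the total predator population $y_{1}+y_{2}$ becomes unbounded. The key new difficulty relative to the drift setting is that dispersal now flows both ways, so simply damping $y_{1}$ through outflow is not enough -- the inflow into $\Omega_{2}$ must also be controlled, and in $\Omega_{2}$ one has $\epsilon_{2} > \delta_{2}$, so the bare predator equation there has a positive intrinsic growth rate when $x_{2}$ is near its carrying capacity. To absorb that unfavorable predation term I would introduce the auxiliary function
\begin{equation*}
W(t) \;=\; y_{1}(t) + y_{2}(t) + \epsilon_{2}\, x_{2}(t),
\end{equation*}
where the $\epsilon_{2}x_{2}$ correction is a McGehee--Armstrong-style trick designed so that the cross term $\epsilon_{2}\dfrac{x_{2}y_{2}}{1+x_{2}}$ appearing in $\dot{y}_{2}$ is exactly cancelled by its negative occurrence inside $\epsilon_{2}\dot{x}_{2}$.

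Differentiating $W$ along \eqref{eq:patch_model2pp} I would then use (a) cancellation of the symmetric $k_{4}$ dispersal fluxes between $\dot{y}_{1}$ and $\dot{y}_{2}$, (b) the cross-term cancellation just described, (c) the logistic bound $\epsilon_{2}x_{2}(1-x_{2}/k) \le \epsilon_{2} k /4$, and (d) the monotonicity bound $\dfrac{x_{1}+\xi}{1+x_{1}+\alpha\xi} \le \dfrac{\xi}{1+\alpha\xi}$ employed in the proof of Theorem \ref{thm:t1u1} (under the natural WLOG case; the complementary case just replaces $\frac{\epsilon_{1}\xi}{1+\alpha\xi}$ by $\epsilon_{1}$). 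These together would give
\begin{equation*}
\dot{W} \;\le\; \tfrac{\epsilon_{2}k}{4} + y_{1}\!\left[\tfrac{\epsilon_{1}\xi}{1+\alpha\xi} - \delta_{1} - \tilde{k}_{2}\xi\right] + y_{2}\!\left[\tilde{k}_{2}\xi - \delta_{2}\right].
\end{equation*}
The lower bound on $\tilde{k}_{2}\xi$ in hypothesis (i) forces the first bracket to be strictly negative (interpreting $\frac{\xi}{1+\alpha\xi}$ as the worst-case monotone bound for the $\Omega_{1}$ functional response), while combining the lower bound on $\tilde{k}_{2}\xi$ with hypothesis (ii) $\delta_{1}+\delta_{2} > \frac{\epsilon_{1}\xi}{1+\alpha\xi}$ forces $\tilde{k}_{2}\xi < \delta_{2}$, so the second bracket is also strictly negative. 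Writing $\mu > 0$ for the minimum of the two moduli and using $y_{1}+y_{2} \ge W - \epsilon_{2}k$ (since $x_{2} \le k$ from the logistic invariant region) yields $\dot{W} \le C - \mu W$, so Grönwall gives a uniform bound on $W$ and hence on $y_{1}+y_{2}$, contradicting blow-up.

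The hard part will be selecting the Lyapunov. The naive choice $V = y_{1}+y_{2}$ fails on two counts: the asymmetric dispersal term $\tilde{k}_{2}\xi(y_{2}-y_{1})$ does not telescope in the sum, and more seriously the $\epsilon_{2}\frac{x_{2}y_{2}}{1+x_{2}}$ term can dominate whenever $x_{2}$ is near $k$ since $\epsilon_{2}>\delta_{2}$. Adding the $\epsilon_{2} x_{2}$ correction removes this obstruction at the cost of a bounded logistic forcing $\epsilon_{2} k/4$. The second subtlety, as noted above, is the case split on the monotonicity of $\frac{x_{1}+\xi}{1+x_{1}+\alpha\xi}$; but in either regime the worst-case constant entering the $y_{1}$ bracket is precisely what appears in hypotheses (i)--(ii), so the same dissipation inequality goes through.
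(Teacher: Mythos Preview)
There is a genuine gap in your argument. You claim that ``combining the lower bound on $\tilde{k}_{2}\xi$ with hypothesis (ii) $\delta_{1}+\delta_{2} > \frac{\epsilon_{1}\xi}{1+\alpha\xi}$ forces $\tilde{k}_{2}\xi < \delta_{2}$,'' but this implication is false. Hypothesis (i) gives only a \emph{lower} bound $\tilde{k}_{2}\xi > A$ with $A := \frac{\epsilon_{1}\xi}{1+\alpha\xi} - \delta_{1}$, and hypothesis (ii) gives $\delta_{2} > A$. Two quantities both exceeding $A$ need not be ordered; nothing in the stated hypotheses prevents $\tilde{k}_{2}\xi$ from being arbitrarily large, and in particular larger than $\delta_{2}$. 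In that regime your second bracket $[\tilde{k}_{2}\xi - \delta_{2}]$ is strictly positive, so the inequality $\dot W \le C - \mu W$ fails and the Gr\"onwall step collapses.

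The paper's proof does \emph{not} try to make both brackets negative. It adds $\dot y_{1}+\dot y_{2}$, passes (heuristically) to $x_{1},x_{2}\to 0$ under the blow-up assumption, and arrives at the same two coefficients $-C_{1}$ and $C_{2}$ in front of $y_{1}$ and $y_{2}$, with $C_{1} = \tilde{k}_{2}\xi - A$ and $C_{2} = \tilde{k}_{2}\xi - \delta_{2}$. The key observation there is not that $C_{2}<0$, but that $C_{1}>C_{2}$ (which \emph{does} follow from $\delta_{2}>A$), and a contradiction is then extracted from this ordering by a limiting argument on $\dot y_{1}+\dot y_{2}+C_{1}y_{1} < C_{2}y_{2}$. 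Your $\epsilon_{2}x_{2}$ correction to $W$ is a genuine improvement over the paper's heuristic ``prey goes to zero, so drop the $\epsilon_{2}\frac{x_{2}y_{2}}{1+x_{2}}$ term'' step, and could likely be retained; but to close the argument for all admissible $\tilde{k}_{2}$ you will need to replace the false sign claim on the $y_{2}$ bracket with something like the paper's $C_{1}>C_{2}$ comparison, or else analyze the asymptotic linear $2\times 2$ system for $(y_{1},y_{2})$ directly and check that hypotheses (i)--(ii) force negative trace and positive determinant.
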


\begin{proof}
    See Appendix \eqref{thm proof:t1}.
\end{proof}

\begin{figure}
  \begin{subfigure}{.44\textwidth}
\centering
  \includegraphics[width= 7cm, height=6cm]{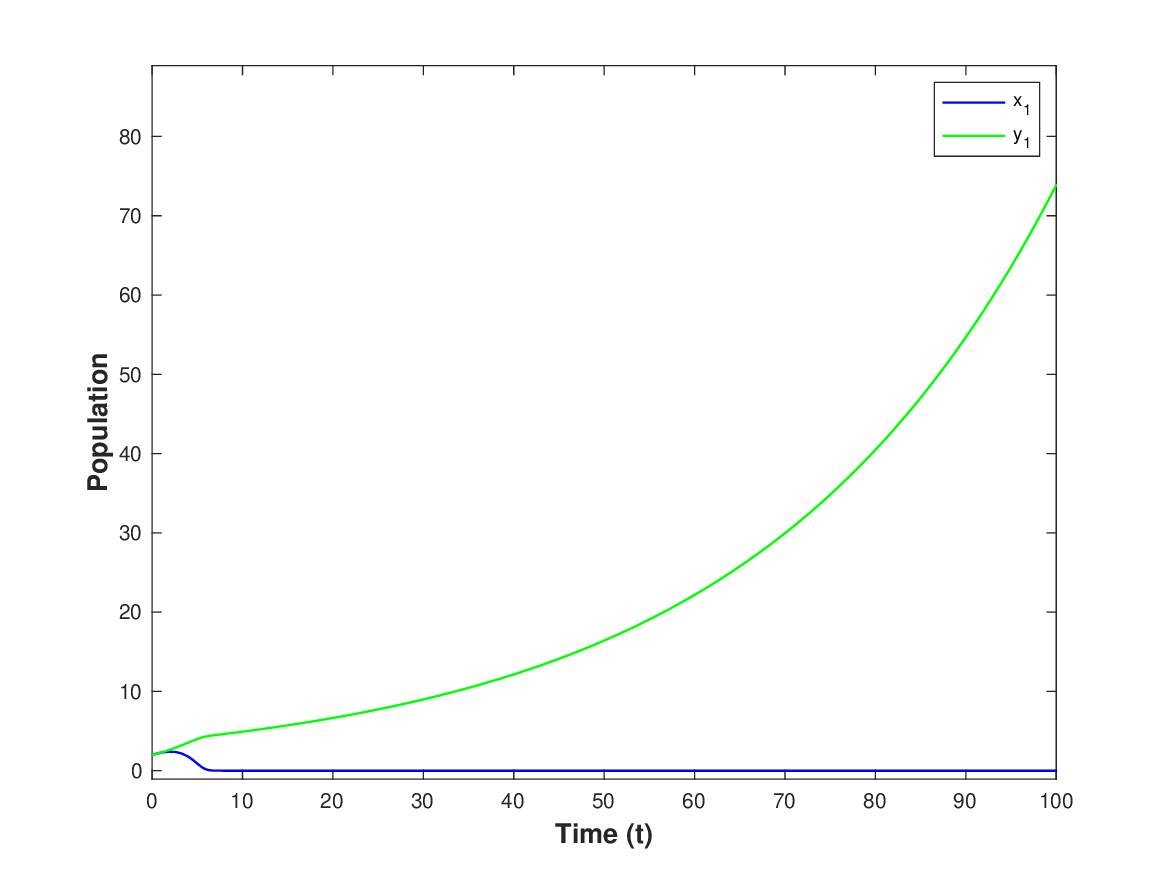}
 \subcaption{ $\tilde{k_2}= k_4=0$}
 \label{fig:blowup_no_patch}
  \end{subfigure}
  \begin{subfigure}{.44\textwidth}
  \centering
  \includegraphics[width= 7cm, height=6cm]{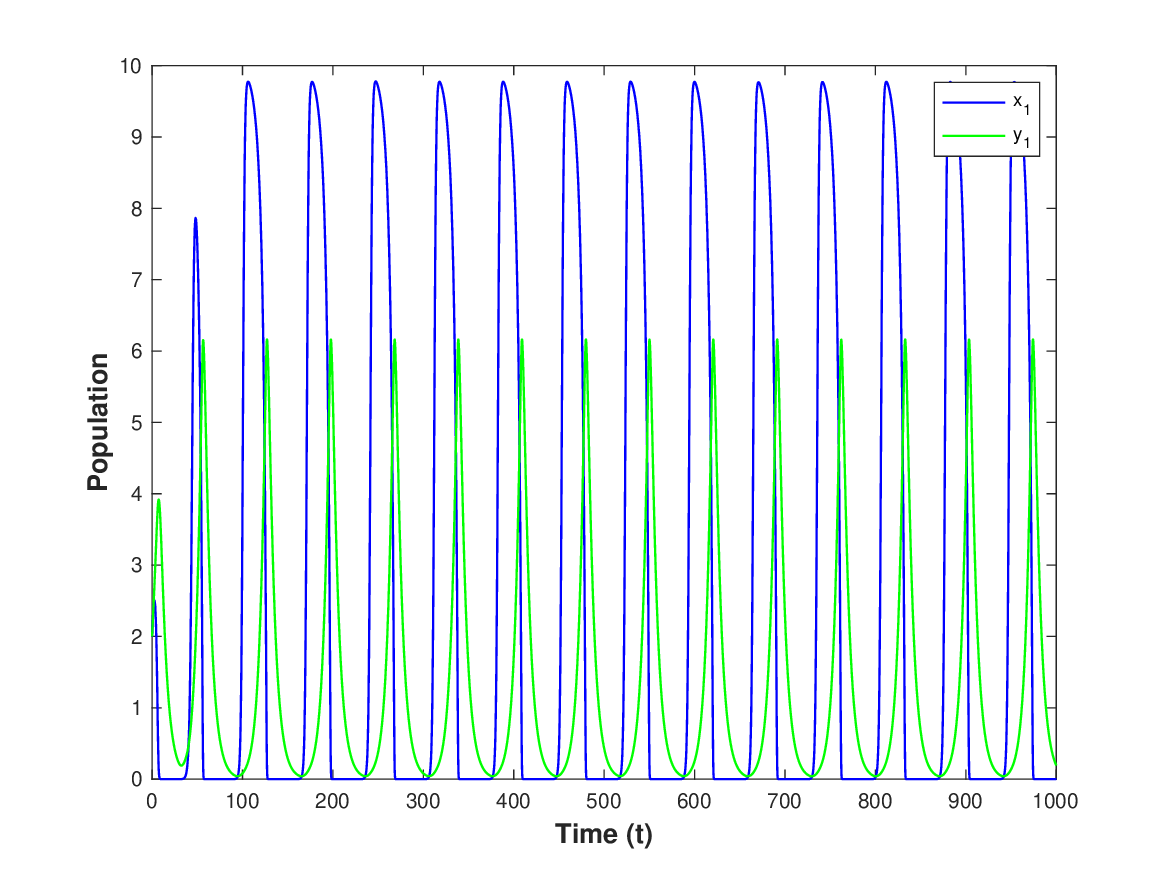}
 \subcaption{ $\tilde{k_2}=0.6, \ k_4=0.1$}
  \label{fig:blowup_prevention_patch_1}
 \end{subfigure}
 \caption{ The parameter set used is $k=10,\alpha=0.2, \xi=0.65, \epsilon_1 =\epsilon_2 =0.4, \delta_1= \delta_2=0.2 \ \text{with I.C.} = [2,2,2,2]$. }
 \end{figure}

\begin{figure}
  \begin{subfigure}{.44\textwidth}
  \centering
\includegraphics[width= 7cm, height=6cm]{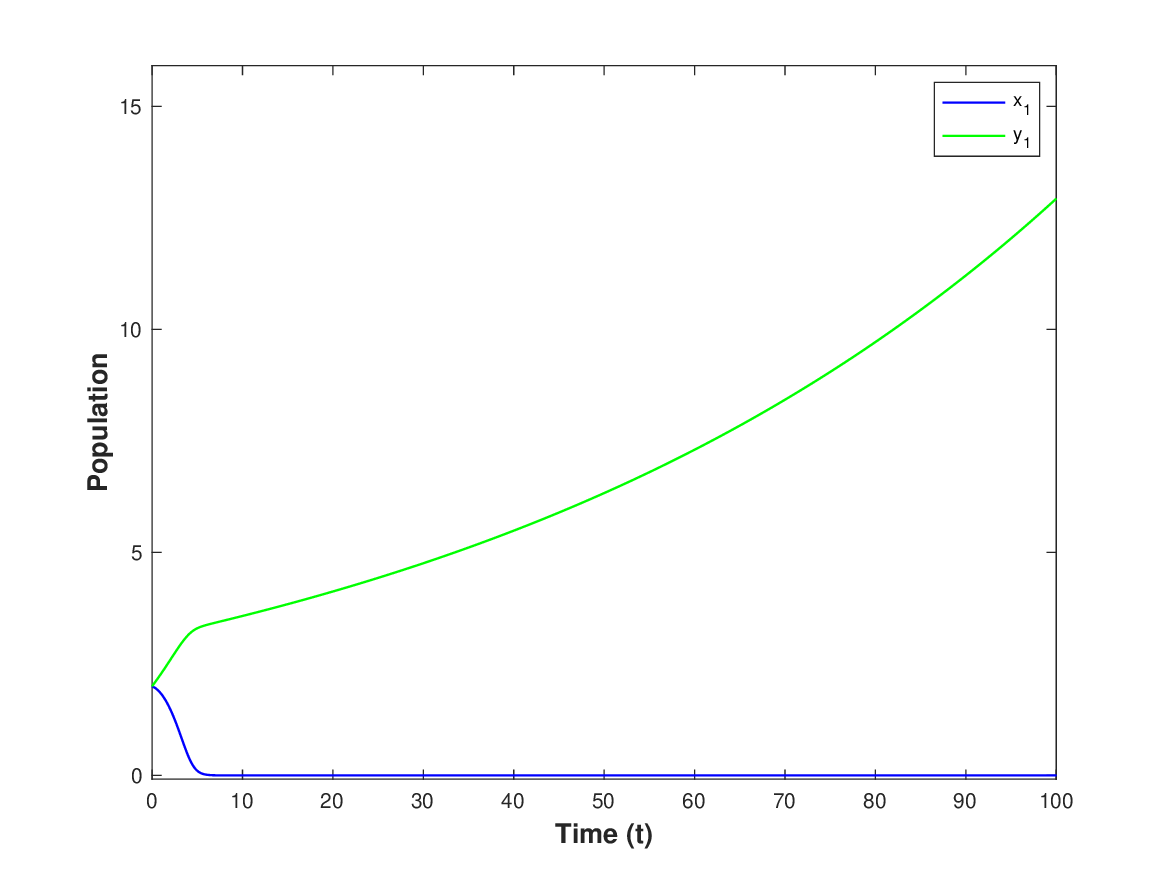}
 \caption{ $ k_4=0$}
 \label{fig:blowup}
  \end{subfigure}
  \begin{subfigure}{.44\textwidth}
  \centering
  \includegraphics[width= 7cm, height=6cm]{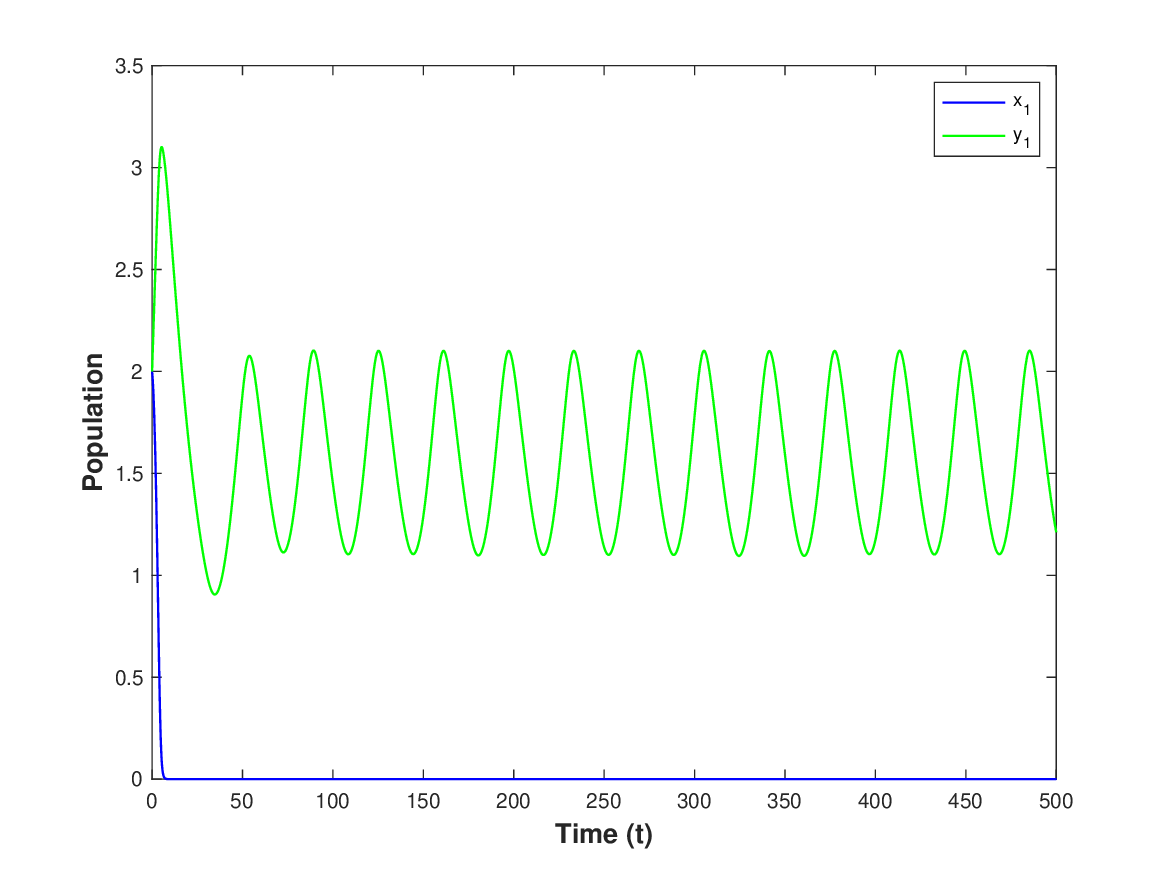}
 \caption{ $ k_4=0.12$}
 \label{fig:blowup_2}
 \end{subfigure}
 \caption{The parameter set used is $k=5,\alpha=0.2, \xi=0.6, \epsilon_1 =\epsilon_2 =0.4, \delta_1= \delta_2=0.2 \ \text{with I.C.} = [2,2,2,2]$.}
\end{figure}



\subsubsection{Symmetric dispersal}
\

Herein, we derive conditions under which blow-up prevention is possible in the case of symmetric dispersal. Again, the prey populations $x_{1},x_{2}$ are always bounded via comparison to the logistic equation. In the symmetric dispersal case, if $\xi > \xi_{critical} =  \frac{\delta_1}{\epsilon_1 -\alpha \delta_1}$, then blow-up in infinite time follows from \cite{PWB23}. This is equivalent to saying, $\frac{\xi}{1+\alpha \xi} > \delta_{1}$.

We state the following theorem,

\begin{thm}
\label{thm:t1sd}
Consider the additional food patch model \eqref{eq:patch_model2pp} with $\epsilon_{1} > \delta_{1}, \epsilon_{2} > \delta_{2}$. If $\tilde{k_{2}}= k_{4}=0$, that is there is no dispersal between patches and $\xi > \xi_{critical} =  \dfrac{\delta_1}{\epsilon_1 - \alpha \delta_1}$, then the predator population blows up in infinite time. However with symmetric dispersal, that is $k_{4}>0$ s.t., $\delta_1 + \delta_2 > \frac{\xi}{1+\alpha \xi} > \delta_{1}$, the predator population remains bounded for all time.
\end{thm}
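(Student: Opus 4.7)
The plan is to mirror the contradiction strategy already used for Theorem \ref{thm:t1u1}. First, when $\tilde{k_{2}} = k_{4} = 0$, the first two equations of \eqref{eq:patch_model2pp} decouple from the last two and reduce to the classical AF model of \cite{PWB23}. For $\xi > \xi_{\text{critical}} = \delta_{1}/(\epsilon_{1} - \alpha \delta_{1})$, that work gives infinite-time blow-up of $y_{1}$ directly; the fourth equation then propagates the blow-up into $y_{2}$, noting that $x_{2}$ is controlled by comparison with the logistic equation. So the ``no dispersal'' half of the theorem is inherited rather than reproved.

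The substantive content is the ``with dispersal'' half, for which I would argue by contradiction. The key structural observation that distinguishes symmetric from asymmetric dispersal is that the two exchange terms $k_{4}(y_{2} - y_{1})$ and $k_{4}(y_{1} - y_{2})$ cancel identically when the two predator equations are added. Setting $Y(t) := y_{1}(t) + y_{2}(t)$ and using that $x_{1}, x_{2}$ are a priori bounded via comparison with the logistic equation, this cancellation yields the dispersal-free scalar equation
\[
\dot{Y} \;=\; \epsilon_{1}\,\frac{x_{1} + \xi}{1 + x_{1} + \alpha \xi}\, y_{1} \;-\; \delta_{1} y_{1} \;+\; \epsilon_{2}\,\frac{x_{2}}{1 + x_{2}}\, y_{2} \;-\; \delta_{2} y_{2}.
\]
Under the hypothesis $\frac{\xi}{1+\alpha\xi} > \delta_{1}$, the same monotonicity bound used in Theorem \ref{thm:t1u1} gives $\frac{x_{1}+\xi}{1+x_{1}+\alpha\xi} \le \frac{\xi}{1+\alpha\xi}$, and trivially $\frac{x_{2}}{1+x_{2}} < 1$, so the right-hand side is bounded above by a linear combination of $y_{1}$ and $y_{2}$ whose coefficients are controlled by the sum hypothesis $\delta_{1} + \delta_{2} > \frac{\xi}{1+\alpha\xi}$. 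Assuming $Y$ were to diverge, integrating the resulting differential inequality on $[0, t]$ and sending $t \to \infty$ should produce a contradiction, giving the desired boundedness of $Y$ and hence of both $y_{1}$ and $y_{2}$.

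The main obstacle I anticipate is that the pointwise estimate $\dot{Y} \le -\mu Y$ with $\mu > 0$ does \emph{not} follow from a naive term-by-term comparison, because the coefficient of $y_{1}$ can remain positive even when the coefficient of $y_{2}$ is strictly negative, and the trivial bounds $y_{j} \le Y$ are too crude for the combined decay $-(\delta_{1} + \delta_{2})$ to dominate a single growth term $\frac{\xi}{1+\alpha\xi}$. Although dispersal drops out of the equation for $Y$, the assumption $k_{4} > 0$ still enforces equilibration between $y_{1}$ and $y_{2}$: whenever $y_{1}$ becomes much larger than $y_{2}$, the term $k_{4}(y_{2} - y_{1})$ in $\dot{y_{1}}$ is a strong negative feedback, and symmetrically for $y_{2}$. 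I would exploit this by splitting into cases according to which predator density dominates, or, more cleanly, by supplementing $Y$ with a penalty on $(y_{1} - y_{2})^{2}$ to build a Lyapunov-type functional that makes the sum condition $\delta_{1} + \delta_{2} > \frac{\xi}{1+\alpha\xi}$ sharp. The routine Gr\"onwall closure at the end — from a differential inequality of the form $\dot{Y} \le -\mu Y + C$ to a uniform-in-time bound — would then complete the contradiction.
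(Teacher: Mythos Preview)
Your overall plan is the paper's: add the two predator equations so that the symmetric dispersal terms $k_{4}(y_{2}-y_{1})$ and $k_{4}(y_{1}-y_{2})$ cancel, work with $U=y_{1}+y_{2}$, and close by a Gr\"onwall argument under the hypothesis $\delta_{1}+\delta_{2}>\frac{\xi}{1+\alpha\xi}$. Two points of divergence are worth noting.

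First, the monotonicity bound you invoke is not valid in the stated generality. The map $x\mapsto\dfrac{x+\xi}{1+x+\alpha\xi}$ has derivative with the sign of $1+(\alpha-1)\xi$, so it is \emph{increasing} whenever $\xi(1-\alpha)<1$; in that regime $\dfrac{x_{1}+\xi}{1+x_{1}+\alpha\xi}\ge\dfrac{\xi}{1+\alpha\xi}$, the opposite of your claim. The paper does not appeal to monotonicity here: it argues first that blow-up of $y_{1}$ forces $x_{1}\to 0$ via the $x_{1}$-equation, and only then replaces the functional response by its value at $x_{1}=0$. You should make that substitution the same way.

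Second, the paper handles the obstacle you flag by a different route. It does not introduce a $(y_{1}-y_{2})^{2}$ penalty or invoke equilibration through $k_{4}>0$; instead it asserts that $x_{2},y_{2}$ are bounded ``by standard predator-prey theory,'' sets $G(t)=\epsilon_{2}\dfrac{x_{2}}{1+x_{2}}\,y_{2}$, writes $\dot U + \bigl(\delta_{1}+\delta_{2}-\tfrac{\xi}{1+\alpha\xi}\bigr)U\le G(t)$ with $G$ treated as a bounded source, and applies Gr\"onwall directly. Your observation that this passage does not follow from a naive pointwise comparison of the $y_{1}$- and $y_{2}$-coefficients is well taken; the paper's maneuver is to declare $G$ bounded up front rather than to balance $y_{1}$ against $y_{2}$ through the dispersal coupling as you propose.
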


\begin{proof}
    See Appendix \eqref{thm proof:t1sd}.
\end{proof}

\begin{rem}
Figure \eqref{fig:blowup_no_patch} shows the blow-up dynamics in predator population when $\xi> \xi_{critical}$ and no dispersal in \eqref{eq:patch_model2pp} i.e., $\tilde{k_2}=k_4=0$. Figure \eqref{fig:blowup_prevention_patch_1} shows the blow-up prevention in patch $\Omega_1$, when there is asymmetric dispersal between the patches, i.e., $\tilde{k_2},k_4 \neq 0$. 
Figure \eqref{fig:blowup} shows the blow-up dynamics in predator population when $\xi> \xi_{critical}$ and no dispersal in \eqref{eq:patch_model2pp}. Figure \eqref{fig:blowup_2} shows the blow-up prevention in patch $\Omega_1$ when there is symmetric dispersal between the patches, i.e., $\tilde{k_2} = 0, k_4 > 0$. 
In both the figures, the values of all other parameters remain constant, showing that the movement of predators with asymmetric as well as symmetric dispersal can prevent blow-up in predator populations.  The values of parameters are selected based on the parametric constraints outlined in Theorem \ref{thm:t1} and \ref{thm:t1sd}.

\end{rem}
 \subsection{Equilibrium Analysis }
We now consider the existence and local stability analysis of the biologically relevant equilibrium points for the system \eqref{eq:patch_model2}. The Jacobian matrix $(J)$ for the  additional food patch model \eqref{eq:patch_model2} is given by: 
 
\begin{equation} 
J = \begin{bmatrix}
1 - \dfrac{2 x_1}{k} -  \dfrac{y_1  \left(1+ \alpha \xi \right) } {\left(1+x_1+\alpha \xi\right)^2}  & \dfrac{- x_1}{1+x_1+\alpha \xi} & 0 & 0 \vspace{0.25cm}
  \\ 
\dfrac{\epsilon_1  \left(1 + \left(\alpha - 1\right) \xi\right) \ y_1}{(1+x_1+\alpha \xi)^2} &  \dfrac{\epsilon_1 \left(x_1 + \xi\right)}{1+x_1+\alpha \xi} - \delta_1 - k_2 & 0 & k_2 
\\
0 & 0 & 1 - \dfrac{2 x_2}{k} - \dfrac{y_2}{(1+x_2)^2} & 
  \dfrac{- x_2}{1+x_2} \vspace{0.25cm}
  \\
  0 & k_4 & \dfrac{\epsilon_2 y_2}{(1+x_2)^2} & \dfrac{\epsilon_2 x_2}{1+x_2} - \delta_2 - k_4
  \end{bmatrix}
 \label{general_jacobian_k1_k3_0}
 \end{equation}
 
By evaluating this Jacobian matrix at each equilibrium point, we obtain the local stability conditions of  $ E_1, E_2, E_3$ and $E_4$.


\subsubsection{Pest free state in patch \texorpdfstring {$\Omega_2$}{Lg}}\mbox{}

\begin{lemma}
    The equilibrium point $E_1 = (x_1^*,y_1^*,0,y_2^*)$ exists if  $\xi < \dfrac{Q}{\epsilon_1 - \alpha Q}$ and $ \epsilon_1 > Q $ where, $ Q = \dfrac{\delta_1 \delta_2  + \delta_1 k_4  + k_2 \delta_2  }{\delta_2 + k_4 } $.
 \label{lem:E1_existence}
  \end{lemma}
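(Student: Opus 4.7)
The plan is to solve the equilibrium system \eqref{eq:patch_model2} under the ansatz $x_2^\ast = 0$ and show that, up to an algebraic substitution, the resulting system for $(x_1^\ast, y_1^\ast, y_2^\ast)$ reduces to a scalar equation of the same form as in the classical AF model, but with $\delta_1$ replaced by the effective mortality constant $Q$ stated in the lemma. The existence conditions then follow from positivity of the resulting closed-form expressions.

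First, substituting $x_2^\ast = 0$ into $\dot{x_2}=0$ trivially satisfies that equation. The equation $\dot{y_2}=0$ then reduces to
\begin{equation*}
-\delta_2 y_2^\ast + k_4(y_1^\ast - y_2^\ast) = 0,
\end{equation*}
which solves to $y_2^\ast = \dfrac{k_4}{\delta_2 + k_4}\, y_1^\ast$. Next, plugging this expression into $\dot{y_1}=0$ and cancelling $y_1^\ast$ (assumed positive), I obtain after collecting the constants
\begin{equation*}
\epsilon_1 \cdot \dfrac{x_1^\ast + \xi}{1 + x_1^\ast + \alpha \xi} \;=\; \delta_1 + k_2 - \dfrac{k_2 k_4}{\delta_2 + k_4} \;=\; \dfrac{\delta_1\delta_2 + \delta_1 k_4 + k_2\delta_2}{\delta_2+k_4} \;=\; Q.
\end{equation*}
This is the key algebraic reduction: the predator equation at the equilibrium behaves exactly like the classical AF predator equation with effective death rate $Q$.

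Solving the linear equation in $x_1^\ast$ gives $x_1^\ast = \dfrac{Q - (\epsilon_1 - \alpha Q)\xi}{\epsilon_1 - Q}$. For this to yield a positive value, I need both the numerator and denominator to have the same (positive) sign; the cleanest sufficient condition is $\epsilon_1 > Q$ together with $\xi < \dfrac{Q}{\epsilon_1 - \alpha Q}$, which are exactly the hypotheses of the lemma. Finally, the value of $y_1^\ast$ is recovered from $\dot{x_1}=0$ as $y_1^\ast = \bigl(1 - x_1^\ast/k\bigr)(1 + x_1^\ast + \alpha \xi)$, which is positive provided $x_1^\ast < k$, and then $y_2^\ast = k_4 y_1^\ast/(\delta_2 + k_4) > 0$ automatically.

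The argument is essentially bookkeeping rather than hard analysis; the only step that needs care is the algebraic simplification of $\delta_1 + k_2 - k_2k_4/(\delta_2+k_4)$ into the single constant $Q$, because this is the observation that makes the structure of the classical AF equilibrium visible here. I would also remark that the requirement $x_1^\ast < k$ should either be added implicitly or verified from the hypotheses (it holds automatically whenever the parameters are in a biologically reasonable range, since pushing $\xi$ close to $\xi_{\mathrm{critical}} = Q/(\epsilon_1 - \alpha Q)$ drives $x_1^\ast \to 0$).
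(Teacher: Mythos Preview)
Your proof is correct and follows essentially the same route as the paper: solve $\dot{y_2}=0$ for $y_2^\ast$ in terms of $y_1^\ast$, substitute into $\dot{y_1}=0$ to obtain the effective predator equation with death rate $Q$, solve for $x_1^\ast$, and read off the positivity conditions. Your extra remark that $x_1^\ast<k$ is needed for $y_1^\ast>0$ is a fair observation that the paper also leaves implicit.
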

  \begin{proof}
  See Appendix \eqref{proof_E1_existence}.
    \end{proof}
\begin{rem}
    A critical condition for the existence of a pest free state in the crop field ($\Omega_{2}$) is that $\epsilon_1 > Q$. Note $ Q = \dfrac{\delta_1 \delta_2  + \delta_1 k_4  + k_2 \delta_2  }{\delta_2 + k_4 } = \delta_1+ \dfrac{k_2 \delta_2}{\delta_2 + k_4}$.
    Thus $\delta_{1} \searrow \implies Q \searrow$. In this sense, even if $\epsilon_1 < Q$, and we do not have a pest free state, it could be created if one is able to bring down $\delta_{1}$ sufficiently enough so that now $\epsilon_1 > Q$.
    A similar argument with $\delta_{2}$ holds in the case of pest extinction in prairie strip ($\Omega_{1}$).
\end{rem}
\begin{lemma}
The equilibrium point $E_1 = (x_1^*,y_1^*,0,y_2^*)$ is conditionally locally asymptotically stable. 
\label{lem:E1_stability}
\end{lemma}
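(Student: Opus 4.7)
The plan is to specialize the Jacobian \eqref{general_jacobian_k1_k3_0} at $E_1=(x_1^*,y_1^*,0,y_2^*)$ and show conditional negativity of the real parts of its eigenvalues. Setting $x_2^*=0$ kills the $(3,4)$ entry $-x_2/(1+x_2)$ and reduces the $(3,3)$ entry to $1-y_2^*$, so row $3$ of $J(E_1)$ has a single nonzero element, namely $1-y_2^*$ at position $(3,3)$. Cofactor expansion of $\det(J(E_1)-\lambda I)$ along this row immediately extracts the explicit eigenvalue
$$
\lambda_1 = 1 - y_2^*,
$$
which is negative precisely when $y_2^*>1$. Geometrically this is the condition that the pest axis in $\Omega_2$ is locally repelling, i.e.\ the predator density sustained by the dispersal flux $k_4(y_1-y_2)$ is large enough to push the prey back to zero.

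The remaining three eigenvalues are roots of the characteristic polynomial of the reduced $3\times 3$ block
$$
M = \begin{bmatrix}
A_{11} & A_{12} & 0 \\
A_{21} & A_{22} & k_2 \\
0 & k_4 & -\delta_2 - k_4
\end{bmatrix},
$$
whose entries $A_{11}, A_{12}, A_{21}, A_{22}$ are read off the upper-left block of \eqref{general_jacobian_k1_k3_0} specialized at $(x_1^*,y_1^*)$. Expanding $\det(M-\lambda I)$ gives a cubic $\lambda^3+a_1\lambda^2+a_2\lambda+a_3=0$ whose coefficients depend only on $A_{11},A_{22},A_{12}A_{21},k_2,k_4,\delta_2$. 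Local asymptotic stability then follows from the Routh--Hurwitz criterion applied to this cubic, i.e.\ from the three inequalities $a_1>0$, $a_3>0$, and $a_1 a_2 > a_3$. Together with $y_2^*>1$ these conditions carve out the parametric regime in which $E_1$ is locally asymptotically stable, which is exactly the sense in which the statement asserts \emph{conditional} stability.

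The main obstacle is not the eigenvalue algorithm but producing \emph{interpretable} Routh--Hurwitz conditions. Before invoking the criterion I would exploit the equilibrium relations derived in Appendix \eqref{proof_E1_existence}: from $\dot y_1=0$ one may substitute $\epsilon_1(x_1^*+\xi)/(1+x_1^*+\alpha\xi) = \delta_1 + k_2(1 - y_2^*/y_1^*)$ to collapse $A_{22}$, and from $\dot x_1=0$ one can rewrite $A_{11}$ as $x_1^*\bigl(-1/k + y_1^*/(1+x_1^*+\alpha\xi)^2\bigr)$, which reveals that the sign of $A_{11}$ is controlled by whether $x_1^*$ lies to the right of the hump of the prey nullcline (the standard criterion for Holling II systems). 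After these substitutions $A_{12}A_{21}$ acquires a definite sign (since $A_{12}<0$ and $A_{21}$ has sign $\mathrm{sgn}(1+(\alpha-1)\xi)$), $a_1$ and $a_3$ are readily shown positive in the expected regime, and the inequality $a_1 a_2 > a_3$ emerges as the sharpest constraint on the parameters. This last inequality, rather than the sign conditions $a_1,a_3>0$, is the point where stability genuinely becomes conditional, and it is what I expect to dominate the technical effort in the appendix.
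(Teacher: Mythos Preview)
Your approach is correct and essentially matches the paper's: both factor out the eigenvalue $1-y_2^*$ via the third row/column of the Jacobian, use the equilibrium relations to collapse $A_{22}$ to $-k_2 y_2^*/y_1^*$, and then impose Routh--Hurwitz conditions on the remaining block. The only cosmetic difference is that you apply Routh--Hurwitz directly to the cubic of the $3\times 3$ minor, whereas the paper re-expands $(D-\lambda)\cdot(\text{cubic})$ into a full quartic $\lambda^4+A_3\lambda^3+A_2\lambda^2+A_1\lambda+A_0$ and applies the quartic Routh--Hurwitz criterion; your route is slightly more economical but yields equivalent conditions.
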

\begin{proof}
    See Appendix \eqref{lem proof:E1_stability}.
\end{proof}

\begin{figure}
\centering
\includegraphics[width = 10cm, height=7cm]{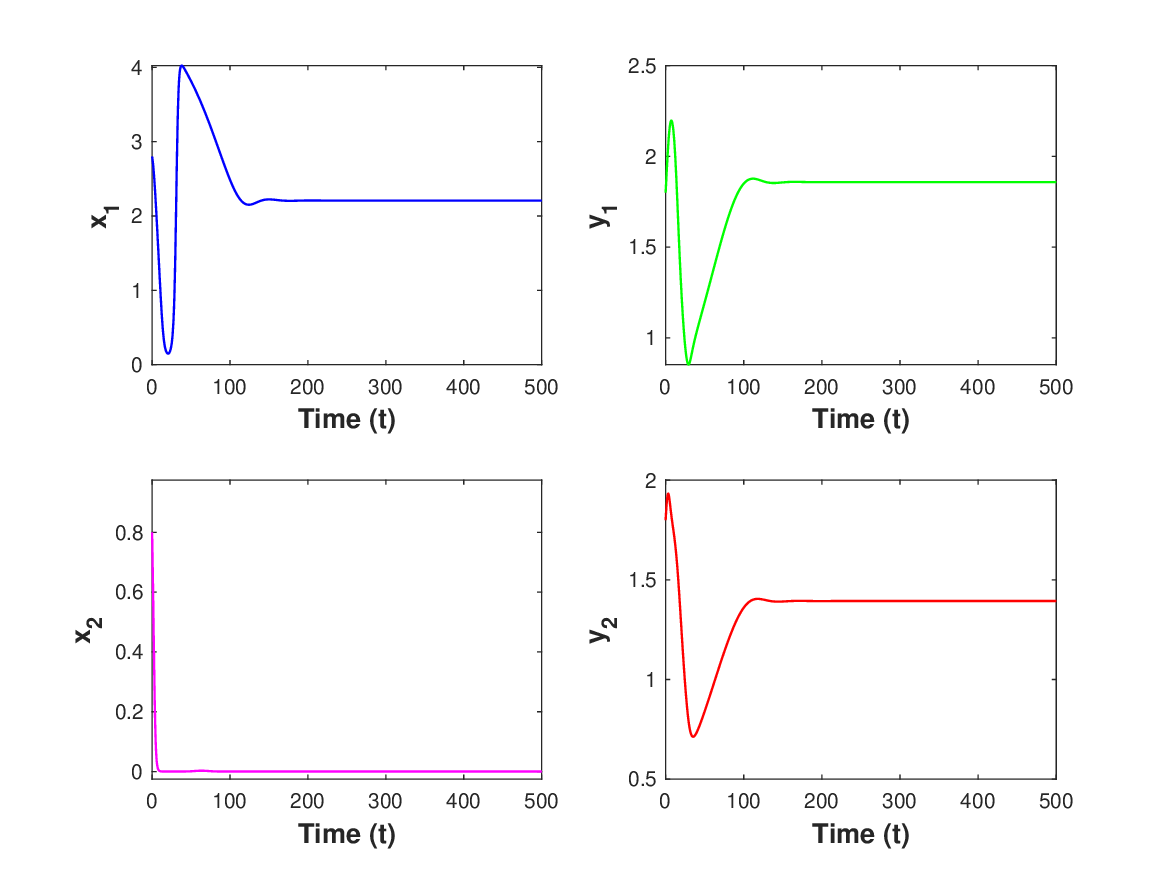}
\caption{The time series figure shows pest extinction in patch $\Omega_2$ with symmetric dispersal. The parameters used are $k=5, \alpha=0.2, \xi=0.6, \epsilon_1 =0.4, \epsilon_2 =0.2, \delta_1= 0.3, \delta_2=0.05,k_2=k_4=0.15$ with I.C.= $[2.8,1.8,0.8,1.8]$.}
\label{fig:x2_extinction_equal_dispersal}
\end{figure}

\begin{figure}
\begin{subfigure}{.48\textwidth}
   \centering
\includegraphics[width = 9cm]{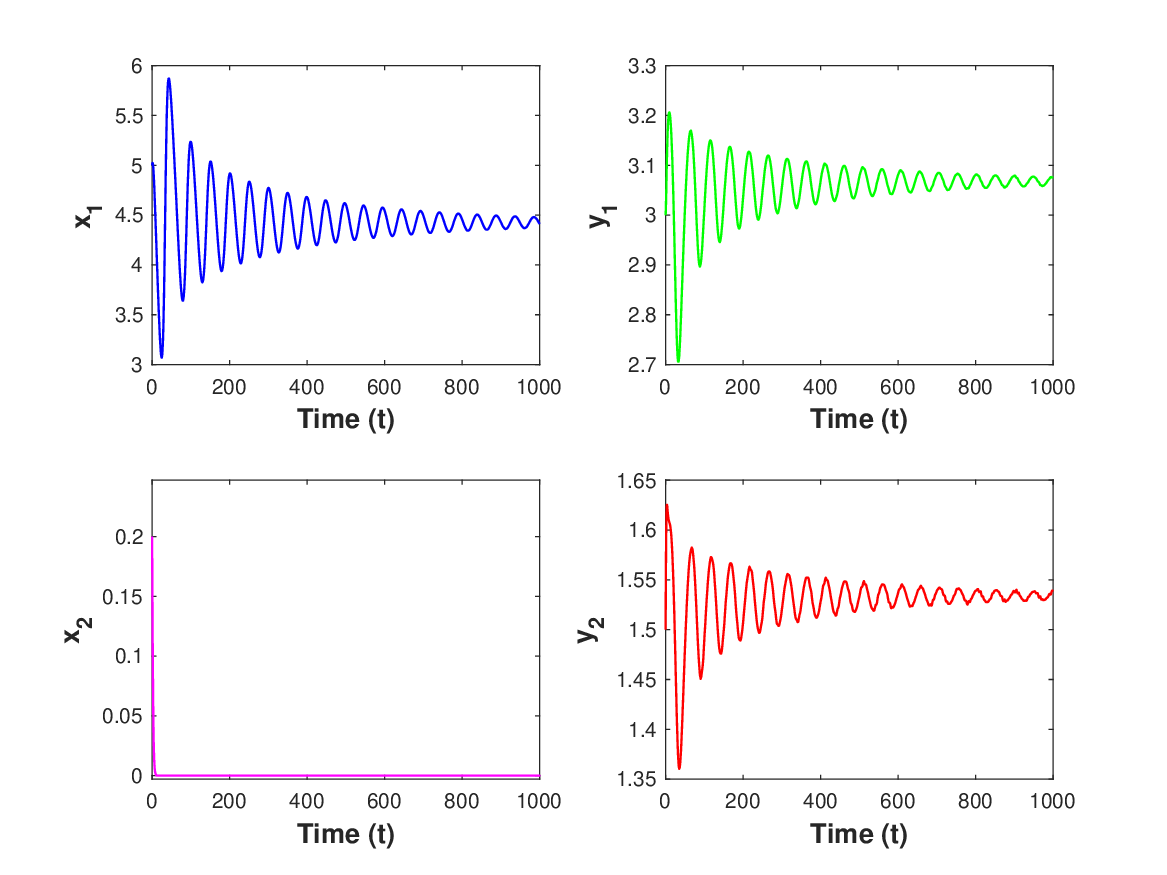}
\subcaption{  I.C. $= [ 5,3,0.2,1.5]$}
 \label{fig:x2_extinction}
\end{subfigure}
\begin{subfigure}{.48\textwidth}
\centering
\includegraphics[width = 9cm]{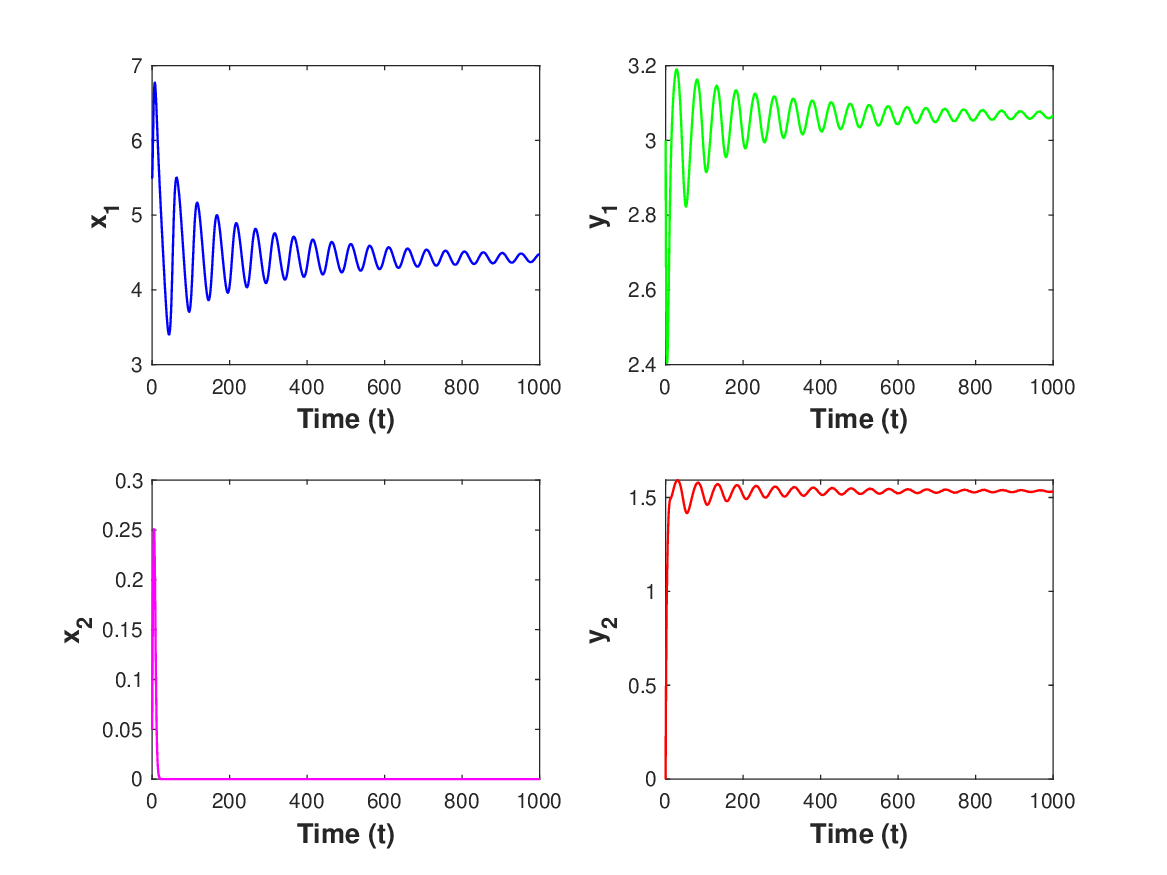}
\subcaption{I.C. $= [ 5,3,0.05,0]$}
 \label{fig:x2_extinct_y2_0}
\end{subfigure}
\caption{The time series figures \eqref{fig:x2_extinction} and, \eqref{fig:x2_extinct_y2_0} shows pest extinction in patch $\Omega_2$. 
 In Figure \eqref{fig:x2_extinct_y2_0}, the extinction of pests in patch $\Omega_2$ occurs without any initial predator population in  $\Omega_2$. Initially, the predators are introduced solely in $\Omega_1$, and due to their dispersion, predators enter the crop field and subsequently eliminate pests. The parameters used are $k=10, \alpha=0.2, \xi=0.39, \epsilon_1 = \epsilon_2 =0.4, \delta_1= \delta_2=0.2,k_2=0.3,k_4=0.2$. }
 \label{x2_extinction_comp}
\end{figure}
\begin{figure}
\begin{subfigure}{.48\textwidth}
   \centering
\includegraphics[width = 9cm]{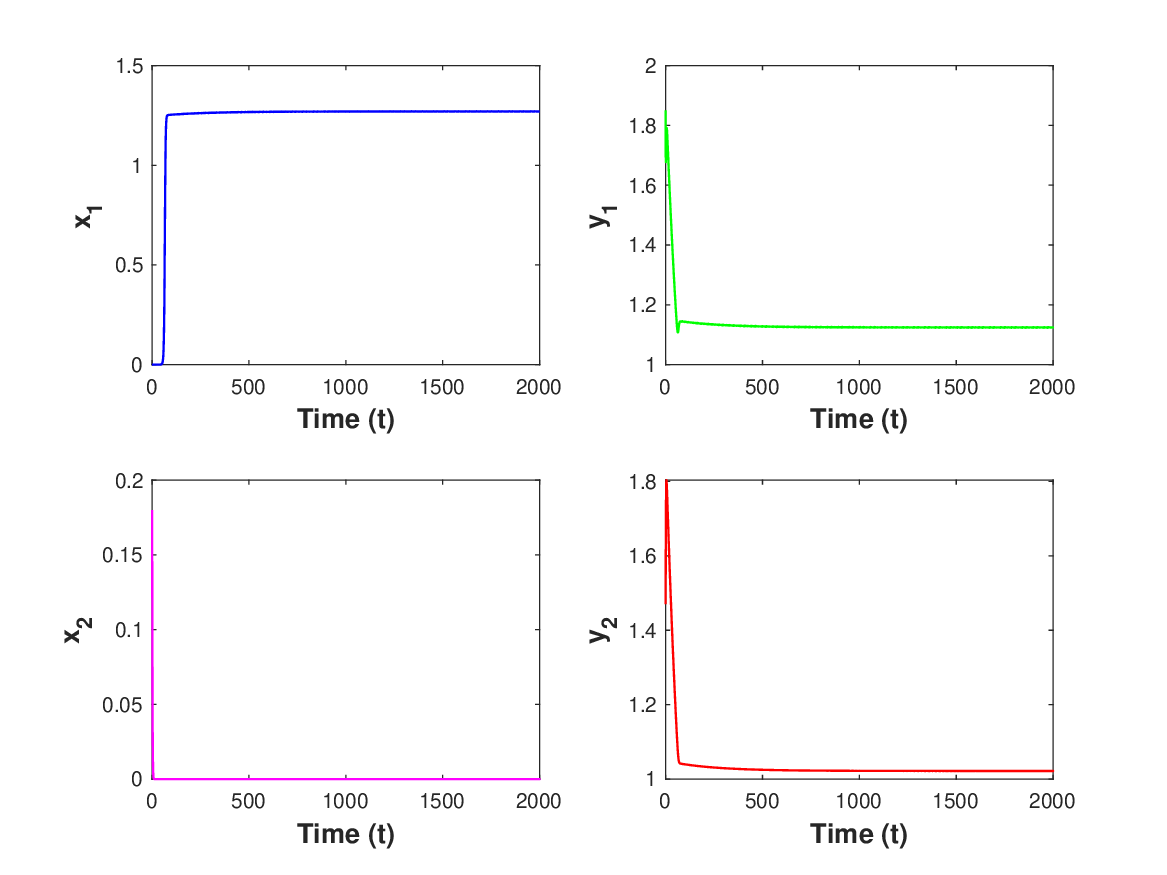}
\subcaption{ I.C. $= [ 0.00001,1.85,0.18,1.47]$}
 \label{low_ic_x1_x2_extinction}
\end{subfigure}
\begin{subfigure}{.48\textwidth}
\centering
\includegraphics[width = 9cm]{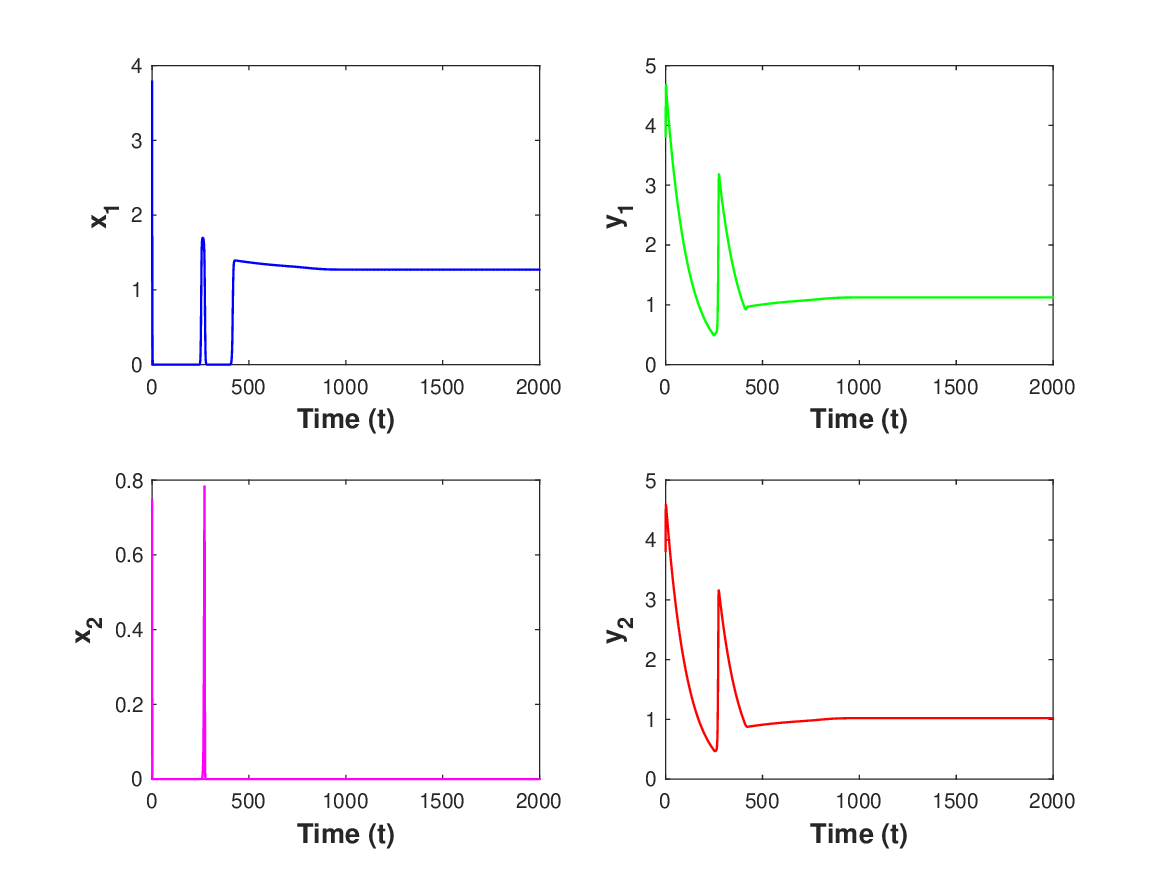}
\subcaption{I.C. $= [3.8,3.8,0.75,3.8]$}
\label{x2_extinction_xp_xc_updated}
\end{subfigure}
\caption{The time series figures \eqref{low_ic_x1_x2_extinction} and \eqref{x2_extinction_xp_xc_updated}, shows pest extinction in patch $\Omega_2$ under two different carrying capacities. Here, $k_p$ and $k_c$ denote the carrying capacity in the prairie strip and the crop field, respectively, with $k_p<k_c$. For Figure \eqref{low_ic_x1_x2_extinction}, pest extinction in patch $\Omega_2$ occurs for a very low initial population of pest in $\Omega_1$. The parameters used are $k_p=2,k_c=20,\alpha=0.9, \xi=0.9, \epsilon_1 = 0.4, \epsilon_2 =0.8, \delta_1= 0.2, \delta_2=0.01, k_2=0.9 ,k_4=0.1$.}
\label{x2_extinction_comp_sec}
\end{figure}
\subsubsection{Pest free state in patch \texorpdfstring {$\Omega_1 $}{Lg}}\mbox{}

\begin{lemma}
The equilibrium point $E_2 = (0,y_1^*,x_2^*,y_2^*)$ exists if, $  \tilde{Q} > 0$  and  $\epsilon_2 > \tilde{Q}$ where, $\tilde{Q} =    \delta_2 + k_4 + \dfrac{k_2 \ k_4}{  \dfrac{\epsilon_1 \xi}{1+\alpha \xi} \  - \delta_1 - k_2 }$.
\label{lem:E2_existence}
\end{lemma}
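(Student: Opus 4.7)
The plan is to mimic the strategy used for Lemma~\ref{lem:E1_existence}, now holding $x_1^{\ast}=0$ and reducing everything to a scalar equation in $x_2$. First, substituting $x_1=0$ into the system \eqref{eq:patch_model2}, the $\dot x_1$ equation is automatically satisfied, while the $\dot y_1$ equation collapses to the linear relation
\[
\left(\dfrac{\epsilon_1 \xi}{1+\alpha\xi}-\delta_1-k_2\right) y_1 + k_2\, y_2 \;=\; 0,
\]
which I would solve for $y_2$ in terms of $y_1$. This step implicitly uses $\tfrac{\epsilon_1\xi}{1+\alpha\xi}-\delta_1-k_2\neq 0$, which is a standing assumption since otherwise $\tilde Q$ is not even defined.

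Next, from $\dot x_2=0$ with $x_2\neq 0$ I would extract the standard prey-nullcline relation $y_2=(1-x_2/k)(1+x_2)$, while from $\dot y_2=0$ (using $y_2\neq 0$) I would solve for $y_1$ to obtain
\[
y_1 \;=\; \dfrac{y_2}{k_4}\left(\delta_2+k_4-\dfrac{\epsilon_2 x_2}{1+x_2}\right).
\]
Substituting this expression back into the linear relation above, the factor $y_2$ cancels, producing an identity involving only $x_2$ and the model parameters.

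Rearranging, this identity reduces precisely to
\[
\dfrac{\epsilon_2 x_2}{1+x_2} \;=\; \delta_2+k_4+\dfrac{k_2 k_4}{\frac{\epsilon_1\xi}{1+\alpha\xi}-\delta_1-k_2} \;=\; \tilde Q,
\]
so that $x_2^{\ast}=\tilde Q/(\epsilon_2-\tilde Q)$, and then $y_2^{\ast},y_1^{\ast}$ are recovered from the two nullcline relations above. Positivity of $x_2^{\ast}$ is equivalent to the two stated conditions $\tilde Q>0$ and $\epsilon_2>\tilde Q$, which yields the lemma. The main obstacle is bookkeeping: one must divide by $\tfrac{\epsilon_1\xi}{1+\alpha\xi}-\delta_1-k_2$, and its sign controls whether $y_1^{\ast}$ and $y_2^{\ast}$ come out with the same sign. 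Carefully tracking this sign so that all four components of $E_2$ are simultaneously positive, rather than merely $x_2^{\ast}>0$, is the only subtle part; the algebra itself is otherwise routine.
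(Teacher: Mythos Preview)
Your proposal is correct and follows essentially the same approach as the paper: set $x_1^\ast=0$, use the $\dot y_1=0$ and $\dot y_2=0$ equations to express $y_1^\ast$ and $y_2^\ast$ in terms of each other, substitute to cancel $y_2^\ast$, and solve the resulting scalar equation $\dfrac{\epsilon_2 x_2}{1+x_2}=\tilde Q$ to get $x_2^\ast=\tilde Q/(\epsilon_2-\tilde Q)$, with positivity forcing $\tilde Q>0$ and $\epsilon_2>\tilde Q$. Your remark about tracking the sign of $\tfrac{\epsilon_1\xi}{1+\alpha\xi}-\delta_1-k_2$ to ensure $y_1^\ast,y_2^\ast>0$ is a valid observation that the paper leaves implicit.
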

\begin{proof}
See Appendix \eqref{proof_E2_existence}.
\end{proof}
\begin{lemma}
      The equilibrium point $E_2 = (0,y_1^*,x_2^*,y_2^*)$ is conditionally locally asymptotically stable.
    \label{lem:E2_stability}
 \end{lemma}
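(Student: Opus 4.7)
The plan is to exploit the fact that at $E_2=(0,y_1^*,x_2^*,y_2^*)$ the Jacobian in \eqref{general_jacobian_k1_k3_0} simplifies dramatically. Substituting $x_1=0$ kills the $(1,2)$ entry (already zero) and the $(2,1)$-type cross terms simplify, leaving the first column with only the $(1,1)$ entry nonzero. Consequently the characteristic polynomial factors, and one eigenvalue is read off directly as
\begin{equation*}
\lambda_1 \;=\; 1 \;-\; \frac{y_1^*}{1+\alpha\xi}.
\end{equation*}
This already yields the first stability condition, namely $y_1^* > 1+\alpha\xi$, which is biologically meaningful: enough predator biomass must have accumulated in the prairie strip (sustained by AF) for the pest in $\Omega_1$ to be outcompeted.

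Next I would reduce the stability problem to the $3\times 3$ block obtained by deleting the first row and column of $J\rvert_{E_2}$, whose entries I denote
\begin{align*}
A_{11} &= \tfrac{\epsilon_1 \xi}{1+\alpha\xi}-\delta_1-k_2, \quad A_{13}=k_2, \\
A_{22} &= 1-\tfrac{2x_2^*}{k}-\tfrac{y_2^*}{(1+x_2^*)^2}, \quad A_{23}=-\tfrac{x_2^*}{1+x_2^*}, \\
A_{31} &= k_4,\quad A_{32}=\tfrac{\epsilon_2 y_2^*}{(1+x_2^*)^2},\quad A_{33}=\tfrac{\epsilon_2 x_2^*}{1+x_2^*}-\delta_2-k_4.
\end{align*}
I would also use the equilibrium identities for $(x_2^*,y_2^*)$ arising from $\dot x_2=0$ (giving $y_2^*=(1-x_2^*/k)(1+x_2^*)$, hence a simplification of $A_{22}$ to the standard prey-nullcline slope form $A_{22}=x_2^*\bigl(\tfrac{1}{(1+x_2^*)k}-\tfrac{2}{k}+\tfrac{1}{(1+x_2^*)^2}\bigr)$, etc.) to keep the expressions compact.

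Expanding along the first row, the characteristic polynomial of this $3\times 3$ block takes the cubic form
\begin{equation*}
\lambda^3 + a_1\lambda^2 + a_2\lambda + a_3 \;=\; 0,
\end{equation*}
where $a_1 = -(A_{11}+A_{22}+A_{33})$, $a_2 = A_{11}A_{22}+A_{11}A_{33}+A_{22}A_{33}-A_{23}A_{32}-A_{13}A_{31}$, and $a_3 = -A_{11}(A_{22}A_{33}-A_{23}A_{32}) + A_{13}A_{31}A_{22}$. I would then invoke the Routh--Hurwitz criterion and state the conditional stability of $E_2$ as the triple
\begin{equation*}
\lambda_1 < 0,\qquad a_1>0,\qquad a_3>0,\qquad a_1 a_2 > a_3.
\end{equation*}
A reference to a companion appendix (matching the pattern of Lemmas \ref{lem:E1_stability} and \ref{lem:E_stability_uni}) collecting the explicit parametric sufficient conditions would finish the proof.

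The main obstacle is purely algebraic: the Routh--Hurwitz inequalities $a_1a_2>a_3$ produces a multinomial in the dispersal rates $k_2,k_4$, the AF parameters $\xi,\alpha$, and the equilibrium values $y_1^*,x_2^*,y_2^*$ which themselves depend implicitly on every parameter through the equilibrium system. Rather than trying to produce a clean closed-form region in the original parameters, I would adopt the strategy used throughout the paper: keep the stability certificate written in the Routh--Hurwitz form above, and verify the conditions numerically (as is done in the subsequent figures), stating only that they hold on a nontrivial parameter set, which is precisely what \emph{conditionally locally asymptotically stable} means here.
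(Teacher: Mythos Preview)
Your approach is essentially the same as the paper's: linearise at $E_2$, exploit the block-triangular structure coming from $x_1=0$ to split off the eigenvalue $\lambda_1 = 1 - \tfrac{y_1^*}{1+\alpha\xi}$, and then impose Routh--Hurwitz on the remaining piece. Two small points are worth noting.

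First, a slip in your justification: it is the first \emph{row} of $J\rvert_{E_2}$ that has only the $(1,1)$ entry nonzero, not the first column. The $(2,1)$ entry $\tfrac{\epsilon_1(1+(\alpha-1)\xi)y_1^*}{(1+\alpha\xi)^2}$ does not vanish in general. This does not affect the factorisation, since expanding the characteristic determinant along the first row still yields $(J_{11}-\lambda)$ times the $3\times 3$ minor you wrote down, but the stated reason should be corrected.

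Second, the paper goes a little further in simplifying before invoking Routh--Hurwitz: using the equilibrium relations \eqref{x1_0_eq_1} and \eqref{x1_0_eq_3} it rewrites your diagonal entries as $A_{11}=-\tfrac{k_2 y_2^*}{y_1^*}$ and $A_{33}=-\tfrac{k_4 y_1^*}{y_2^*}$, which makes their signs transparent and tidies the final inequalities. The paper also chooses to keep the full quartic and apply the degree-four Routh--Hurwitz test rather than factoring first and applying the cubic test as you do; the two are of course equivalent, and your route is arguably cleaner.
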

\begin{proof}
    See Appendix \eqref{lem proof:E2_stability}.
\end{proof}
 
\begin{figure}
\begin{subfigure}{.48\textwidth}
   \centering
\includegraphics[width = 9cm]{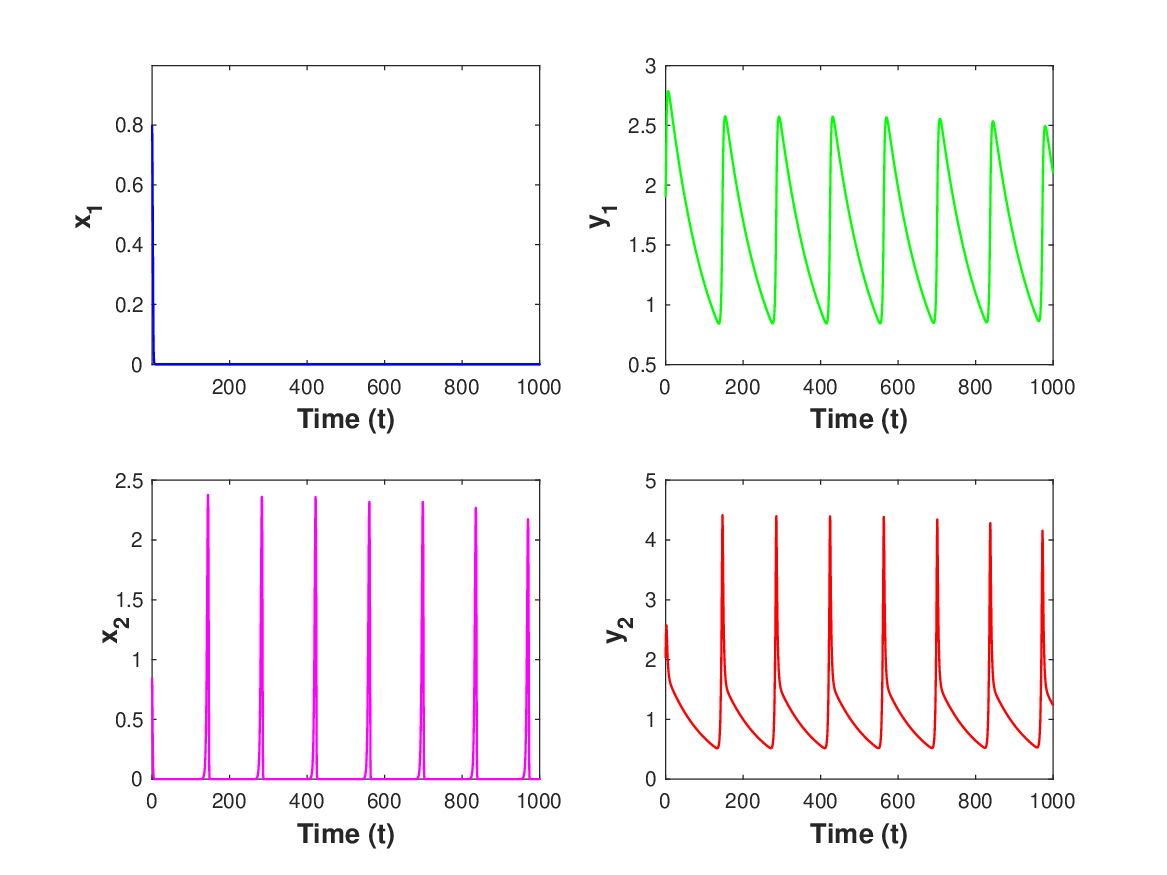}
\subcaption{  $k=10$}
 \label{x1_extinction} 
\end{subfigure}
\begin{subfigure}{.48\textwidth}
\centering
\includegraphics[width = 9cm]{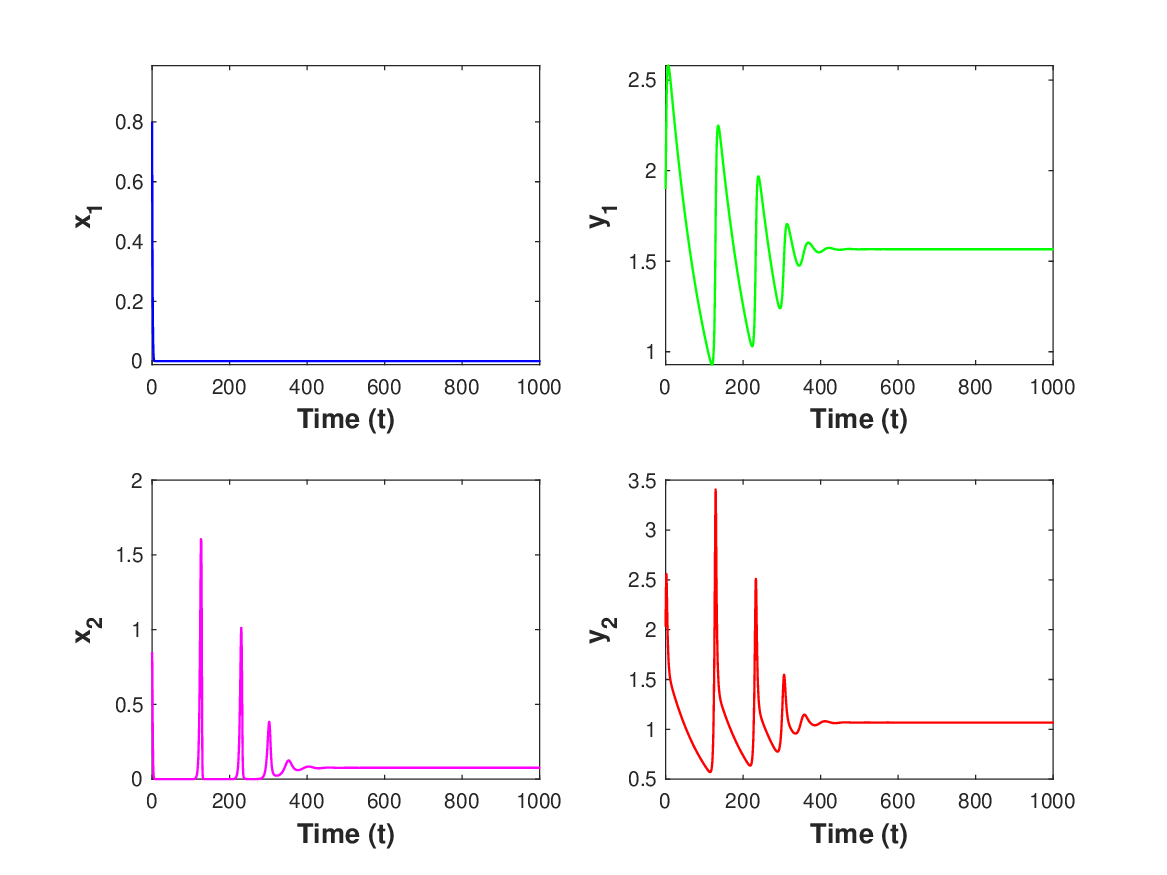}
\subcaption{$k_p=1, k_c=10$}
 \label{kp_kc_x1_extinction}
\end{subfigure}
\caption{The time series figures show pest extinction in patch $\Omega_1$. 
In reference to \eqref{x1_extinction}, we posit that the carrying capacities are equivalent in both patches. However, in \eqref{kp_kc_x1_extinction}, the carrying capacity of the prairie strip is lower than that of the crop field. Here, $k_p$ and $k_c$ denote the carrying capacity in the prairie strip and the crop field, respectively. The parameters used are $ \alpha=0.2, \xi=0.5, \epsilon_1 = 0.4, \epsilon_2 =0.8, \delta_1= \delta_2=0.15, k_2=0.1,k_4=0.2 \ \text{with I.C.} = [0.8,1.9,0.85,2.03]$.}
\label{x1_extinction_comp}
\end{figure}

\subsubsection{Pest free state in both \texorpdfstring{$\Omega_1 \ \& \ \Omega_2$}{Lg}}\mbox{}
\begin{lemma}
    The equilibrium point $ E_3 =(0,y_1^*,0,y_2^*)$ exists if $\xi = \dfrac{{Q}}{\epsilon_1 - \alpha {Q}}$ and $ \epsilon_1 > Q $ where, $    Q = \dfrac{\delta_1 \delta_2  + \delta_1 k_4  + k_2 \delta_2  }{\delta_2 + k_4 } $.
    \label{lem:E3_existence}
 \end{lemma}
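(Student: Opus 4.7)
The plan is to substitute $x_1=x_2=0$ directly into the equilibrium form of \eqref{eq:patch_model2} and show that the remaining two scalar conditions on $(y_1^*, y_2^*)$ admit a non-trivial positive solution precisely when the parametric identity $\xi = Q/(\epsilon_1-\alpha Q)$ holds. With $x_1=x_2=0$, the first and third equations of \eqref{eq:patch_model2} vanish identically (each has $x_i$ as a common factor), so the entire existence problem collapses to a $2\times 2$ linear system in the predator components.

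First I would exploit the fourth equilibrium equation. At $x_2=0$ it reads $-\delta_2 y_2^* + k_4(y_1^* - y_2^*) = 0$, giving the explicit relation $y_2^* = \tfrac{k_4}{\delta_2+k_4}\,y_1^*$, which is automatically positive once $y_1^*>0$. Substituting this into the second equilibrium equation at $x_1=0$ and collecting terms produces
\begin{equation*}
y_1^*\!\left[\epsilon_1 \tfrac{\xi}{1+\alpha\xi} - \delta_1 - k_2 + \tfrac{k_2 k_4}{\delta_2 + k_4}\right] = 0.
\end{equation*}
For a genuine pest-free state with surviving predators we need $y_1^*>0$, so the bracketed coefficient must vanish. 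Recognising that $\delta_1 + k_2 - \tfrac{k_2 k_4}{\delta_2+k_4} = \tfrac{\delta_1\delta_2 + \delta_1 k_4 + k_2\delta_2}{\delta_2+k_4} = Q$, the condition collapses to $\epsilon_1\xi = Q(1+\alpha\xi)$, which rearranges to the stated $\xi = Q/(\epsilon_1-\alpha Q)$.

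It then remains to interpret the hypotheses. Positivity of $\xi$ requires $\epsilon_1 - \alpha Q > 0$; since $\alpha$ is typically in $(0,1]$ (quality parameter of AF), the stated assumption $\epsilon_1 > Q$ comfortably suffices and, importantly, is the same threshold condition appearing in Lemmas \ref{lem:E1_existence} and \ref{lem:E2_existence}, keeping the equilibrium-existence story internally consistent across $E_1, E_2, E_3$. The main conceptual point --- and the only subtle feature --- is that this equilibrium is \emph{degenerate}: once the knife-edge relation on $\xi$ is enforced, the bracket vanishes identically and $y_1^*$ is left as a free positive parameter, giving in fact a one-parameter family of pest-free equilibria. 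I expect articulating this degeneracy to be the only non-mechanical step; it is precisely what one should anticipate from the analogous drift-model results (Lemmas \ref{stability_pest_ext_af_drift}, \ref{stability_pest_ext_drift}) where the same strict equality on $\xi$ produced a non-hyperbolic equilibrium, and I would flag this explicitly rather than attempt to pin down a unique $y_1^*$, since no additional equation is available to do so.
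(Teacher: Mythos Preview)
Your proof is correct and follows essentially the same route as the paper: eliminate $y_2^*$ via the $\dot y_2=0$ nullcline to get $y_2^*=\tfrac{k_4}{\delta_2+k_4}y_1^*$, substitute into $\dot y_1=0$, and read off the strict equality on $\xi$ from the vanishing bracket. Your additional remark on the degeneracy (that $y_1^*$ is a free parameter, yielding a one-parameter family of equilibria) is a useful observation that the paper does not make explicit in the existence proof but which is exactly what underlies the non-hyperbolicity established in Lemma~\ref{lem:E3_stability}.
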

 \begin{proof}
 See Appendix \eqref{proof_E3_existence}.
\end{proof}

\begin{lemma}
   The equilibrium point $E_3 = (0,y_1^*,0,y_2^*)$ is a non-hyperbolic  point.
   \label{lem:E3_stability}
\end{lemma}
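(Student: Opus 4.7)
The plan is to evaluate the Jacobian at $E_3 = (0, y_1^*, 0, y_2^*)$ and exhibit a zero eigenvalue, which immediately gives non-hyperbolicity.

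First, I would substitute $x_1 = 0$ and $x_2 = 0$ into the Jacobian matrix \eqref{general_jacobian_k1_k3_0}. The off-diagonal terms $\dfrac{-x_1}{1+x_1+\alpha\xi}$ and $\dfrac{-x_2}{1+x_2}$ in rows 1 and 3 vanish, as do the cross-coupling terms $\dfrac{\epsilon_1(1+(\alpha-1)\xi)y_1}{(1+x_1+\alpha\xi)^2}$ being multiplied against zero in later computations. The resulting matrix has a block structure: rows 1 and 3 each contribute a single nonzero diagonal entry,
\begin{equation*}
\lambda_1 = 1 - \dfrac{y_1^*}{1+\alpha\xi}, \qquad \lambda_3 = 1 - y_2^*,
\end{equation*}
while the predator-predator interactions sit in the $2\times 2$ block on rows/columns $\{2,4\}$, namely
\begin{equation*}
M = \begin{bmatrix} \dfrac{\epsilon_1 \xi}{1+\alpha\xi} - \delta_1 - k_2 & k_2 \\ k_4 & -\delta_2 - k_4 \end{bmatrix}.
\end{equation*}

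Second, I would invoke the existence condition $\xi = \dfrac{Q}{\epsilon_1 - \alpha Q}$ from Lemma \ref{lem:E3_existence}, which rearranges to $\dfrac{\epsilon_1\xi}{1+\alpha\xi} = Q = \dfrac{\delta_1\delta_2 + \delta_1 k_4 + k_2 \delta_2}{\delta_2 + k_4}$. Substituting this into the $(1,1)$ entry of $M$ and simplifying over the common denominator $\delta_2 + k_4$ produces the cancellation
\begin{equation*}
\dfrac{\epsilon_1 \xi}{1+\alpha\xi} - \delta_1 - k_2 = \dfrac{-k_2 k_4}{\delta_2 + k_4}.
\end{equation*}

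Third, with this simplification the determinant of $M$ is
\begin{equation*}
\det(M) = \left(-\dfrac{k_2 k_4}{\delta_2+k_4}\right)(-\delta_2 - k_4) - k_2 k_4 = k_2 k_4 - k_2 k_4 = 0,
\end{equation*}
so $M$ has $0$ as an eigenvalue. Hence the full Jacobian at $E_3$ has a zero eigenvalue, and $E_3$ is non-hyperbolic.

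There is no real obstacle here; the only delicate step is using the existence condition on $\xi$ to force the cancellation in the $(1,1)$ entry of $M$, and I would be careful to note that this cancellation is not coincidental but is precisely the degeneracy produced by demanding strict equality (rather than inequality) in the existence condition, mirroring the same non-hyperbolicity phenomenon observed in Lemma \ref{stability_pest_ext_drift} for the drift model.
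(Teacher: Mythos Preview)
Your proof is correct and follows essentially the same approach as the paper: both evaluate the Jacobian at $E_3$, use the existence condition $\dfrac{\epsilon_1\xi}{1+\alpha\xi}=Q$ to simplify the $(2,2)$ entry to $-\dfrac{k_2k_4}{\delta_2+k_4}$, and then exhibit a zero eigenvalue. The paper writes out the full factored characteristic polynomial $\lambda\bigl(1-\tfrac{y_1^*}{1+\alpha\xi}-\lambda\bigr)(1-y_2^*-\lambda)\bigl(\lambda+\tfrac{k_2k_4}{\delta_2+k_4}+\delta_2+k_4\bigr)=0$, whereas you isolate the $2\times 2$ predator block $M$ and show $\det(M)=0$; these are the same computation.
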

\begin{proof}
    See Appendix \eqref{lem proof:E3_stability}.
\end{proof}
\begin{figure}
\centering
\includegraphics[width = 10cm, height=7cm]{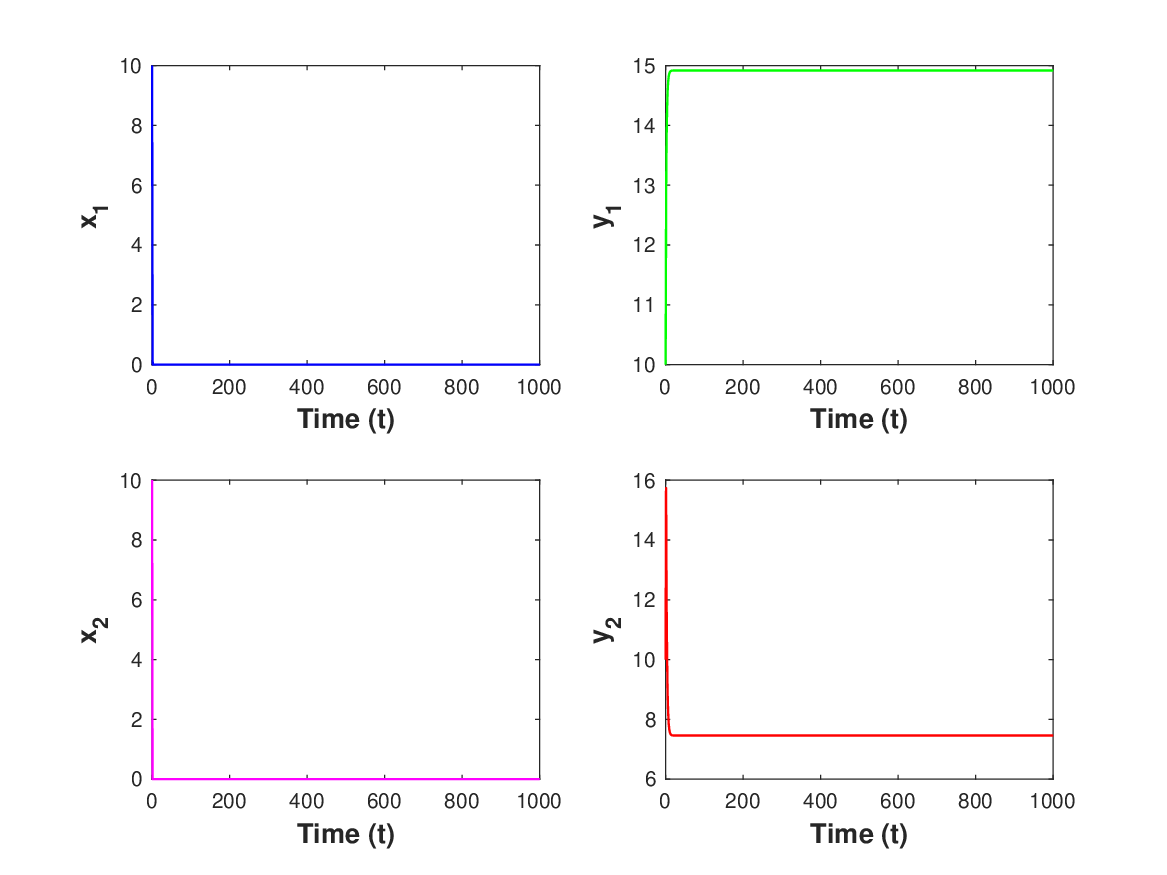}
\caption{The time series figure shows pest extinction in both patches
$\Omega_1$ \& $\Omega_2$ for an exact quantity of additional food. The parameters used are
 $k=10,\alpha=0.2,  \xi= 0.714286, \epsilon_1 = 0.4, \epsilon_2 =0.8, \delta_1= \delta_2=0.2, k_2=0.1, k_4=0.2 \ \text{with I.C.}= [10,10,10,10]$.
 }
 \label{fig:x1_x2_extinction}
\end{figure}

\subsubsection{Coexistence state in both \texorpdfstring{$\Omega_1 \ \& \ \Omega_2$}{Lg}}\mbox{}
\begin{lemma}
The equilibrium point $ E_4 =(x_1^\star,y_1^\star,x_2^\star,y_2^\star)$ exists if $ 0< x_1^\star < \dfrac{\delta_1 + k_2}{\epsilon_1 - (\delta_1 + k_2)} - \alpha \xi $, $ 0<  x_2^\star < \dfrac{\delta_2 + k_4}{\epsilon_2 - (\delta_2 + k_4)} $, where $x_2^\star = f(x_1^\star) $ and $x_1^\star$ can be solved from nullcline equations. 
 \label{lem:E4_existence}
\end{lemma}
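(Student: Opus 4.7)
The plan is to derive the coexistence equilibrium $E_4$ by setting all four components of the right-hand side of \eqref{eq:patch_model2} to zero and reducing the resulting algebraic system to a single scalar equation in $x_1^\star$. First, dividing the prey equations by $x_i^\star$ (using $x_i^\star>0$) yields the prey-nullcline expressions
\[
y_1^\star = \left(1-\tfrac{x_1^\star}{k}\right)\!\left(1+x_1^\star+\alpha\xi\right), \qquad
y_2^\star = \left(1-\tfrac{x_2^\star}{k}\right)\!\left(1+x_2^\star\right).
\]
Positivity of $y_1^\star,y_2^\star$ already enforces the baseline constraint $0<x_1^\star,x_2^\star<k$.

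Next, I would use $\dot y_1=0$ and $\dot y_2=0$ to extract the two ratio identities
\[
y_2^\star = \frac{1}{k_2}\left[\delta_1+k_2-\frac{\epsilon_1(x_1^\star+\xi)}{1+x_1^\star+\alpha\xi}\right]y_1^\star =: A(x_1^\star)\,y_1^\star,
\]
\[
y_1^\star = \frac{1}{k_4}\left[\delta_2+k_4-\frac{\epsilon_2 x_2^\star}{1+x_2^\star}\right]y_2^\star =: B(x_2^\star)\,y_2^\star.
\]
Because $y_1^\star,y_2^\star>0$, the factors $A(x_1^\star)$ and $B(x_2^\star)$ must both be strictly positive. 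After rearranging these positivity requirements (and assuming $\epsilon_1>\delta_1+k_2$, $\epsilon_2>\delta_2+k_4$), they reduce to the stated bounds $x_1^\star<\tfrac{\delta_1+k_2}{\epsilon_1-(\delta_1+k_2)}-\alpha\xi$ and $x_2^\star<\tfrac{\delta_2+k_4}{\epsilon_2-(\delta_2+k_4)}$.

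Multiplying the two ratio identities gives the compatibility condition $A(x_1^\star)B(x_2^\star)=1$. Since $B$ is strictly decreasing on its admissible interval and maps it bijectively onto $(0,(\delta_2+k_4)/k_4)$, it is invertible there, so I can solve $x_2^\star=B^{-1}\!\left(1/A(x_1^\star)\right)=:f(x_1^\star)$. Substituting this together with the prey-nullcline expressions into $y_2^\star=A(x_1^\star)y_1^\star$ then collapses the system to the single scalar equation
\[
\left(1-\tfrac{x_1^\star}{k}\right)\!(1+x_1^\star+\alpha\xi)\,A(x_1^\star) \;=\; \left(1-\tfrac{f(x_1^\star)}{k}\right)\!\bigl(1+f(x_1^\star)\bigr),
\]
to which I would apply the intermediate value theorem on the admissible interval for $x_1^\star$ to obtain a positive root.

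The main obstacle will be verifying the required sign change for the IVT step: both sides are rational functions of $x_1^\star$ and the implicit function $f$ is defined via an inversion, so endpoint behaviour must be extracted through bounding arguments on the composition rather than by direct evaluation, and monotonicity of the combined map is not automatic. Since the lemma claims only existence and not uniqueness, I would stop at the IVT conclusion and describe the root implicitly, which is exactly what the phrase ``$x_1^\star$ can be solved from nullcline equations'' in the statement anticipates.
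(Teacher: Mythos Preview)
Your approach is essentially the same as the paper's: both set the four nullclines to zero, obtain the prey-nullcline expressions for $y_1^\star,y_2^\star$, rewrite $\dot y_1=0,\dot y_2=0$ as the ratio identities $y_2^\star=A(x_1^\star)y_1^\star$ and $y_1^\star=B(x_2^\star)y_2^\star$, and read off the upper bounds on $x_1^\star,x_2^\star$ from the positivity of $A,B$. The only cosmetic difference is that the paper solves the compatibility relation $A(x_1^\star)B(x_2^\star)=1$ algebraically to display $f$ as an explicit rational function of $x_1^\star$, whereas you define $f=B^{-1}(1/A)$ abstractly; and the paper does not carry out your IVT step at all --- it stops exactly where you anticipate, leaving $x_1^\star$ as an implicit root of the combined nullcline equation.
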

\begin{proof}
 See Appendix \eqref{proof_E4_existence}.
 \end{proof}
\begin{lemma}
    The equilibrium point $E_4 = (x_1^\star,y_1^\star,x_2^\star, y_2^\star)$ is conditionally locally asymptotically stable. 
    \label{lem:E4_stability}
\end{lemma}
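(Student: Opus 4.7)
The plan is to show that all eigenvalues of the Jacobian $J$ from \eqref{general_jacobian_k1_k3_0}, evaluated at $E_4$, have negative real part, by writing out the characteristic polynomial and applying the Routh-Hurwitz criterion.

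First, I would substitute $E_4 = (x_1^\star, y_1^\star, x_2^\star, y_2^\star)$ into $J$ and use the equilibrium relations at $E_4$ to simplify. The identity $1 - x_1^\star/k = y_1^\star/(1 + x_1^\star + \alpha\xi)$, together with its $\Omega_2$ counterpart, lets the diagonal entries $J_{11}$ and $J_{33}$ be rewritten in a more transparent form whose sign is controlled by whether $x_1^\star$ and $x_2^\star$ lie to the right or left of the vertex of their respective prey nullcline humps. The equilibrium conditions for $y_1$ and $y_2$ similarly let me express $J_{22}$ and $J_{44}$ in the clean forms $-k_2 y_2^\star/y_1^\star$ and $-k_4 y_1^\star/y_2^\star$, both of which are manifestly negative.

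Next, I would exploit the near block-diagonal structure of $J$. Writing $A_1$ and $A_2$ for the $2 \times 2$ diagonal blocks associated with $\Omega_1$ and $\Omega_2$, the only off-block coupling appears at the entries $J_{24} = k_2$ and $J_{42} = k_4$. A direct cofactor expansion gives
\begin{equation*}
p(\lambda) \;=\; \det(\lambda I - A_1)\,\det(\lambda I - A_2) \;-\; k_2 k_4 \,(\lambda - J_{11})(\lambda - J_{33}),
\end{equation*}
which is a quartic $p(\lambda) = \lambda^4 + a_1 \lambda^3 + a_2 \lambda^2 + a_3 \lambda + a_4$ whose coefficients can be read off in closed form from the simplified entries.

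Finally, I would impose the Routh-Hurwitz conditions $a_1, a_3, a_4 > 0$ and $a_1 a_2 a_3 > a_3^2 + a_1^2 a_4$. The conditions $a_1 = -\operatorname{tr}(J) > 0$ and $a_4 = \det(J) > 0$ reduce to tractable trace- and determinant-type inequalities on the simplified entries, and $a_3 > 0$ becomes a condition on sums of principal $3 \times 3$ minors. The main obstacle is the full quartic Routh-Hurwitz inequality $a_1 a_2 a_3 > a_3^2 + a_1^2 a_4$, which does not collapse into a single clean parametric condition once the $k_2 k_4$ cross-coupling terms are expanded; I would therefore package the resulting collection of inequalities as the ``conditions'' in the lemma's statement, which justifies the qualifier ``conditionally locally asymptotically stable''.
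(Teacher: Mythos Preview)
Your proposal is correct and follows essentially the same route as the paper: linearize at $E_4$, compute the quartic characteristic polynomial, and impose the Routh--Hurwitz conditions as the ``conditions'' that make $E_4$ locally asymptotically stable. Your block-structure factorization $p(\lambda)=\det(\lambda I-A_1)\det(\lambda I-A_2)-k_2k_4(\lambda-J_{11})(\lambda-J_{33})$ and the simplifications $J_{22}=-k_2 y_2^\star/y_1^\star$, $J_{44}=-k_4 y_1^\star/y_2^\star$ are correct and give a cleaner organization than the paper's direct expansion, but the underlying argument is the same.
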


\begin{proof}
    See Appendix \eqref{lem proof:E4_stability}.
\end{proof}

\begin{figure}[H]
\centering
\includegraphics[width = 10cm, height=7cm]{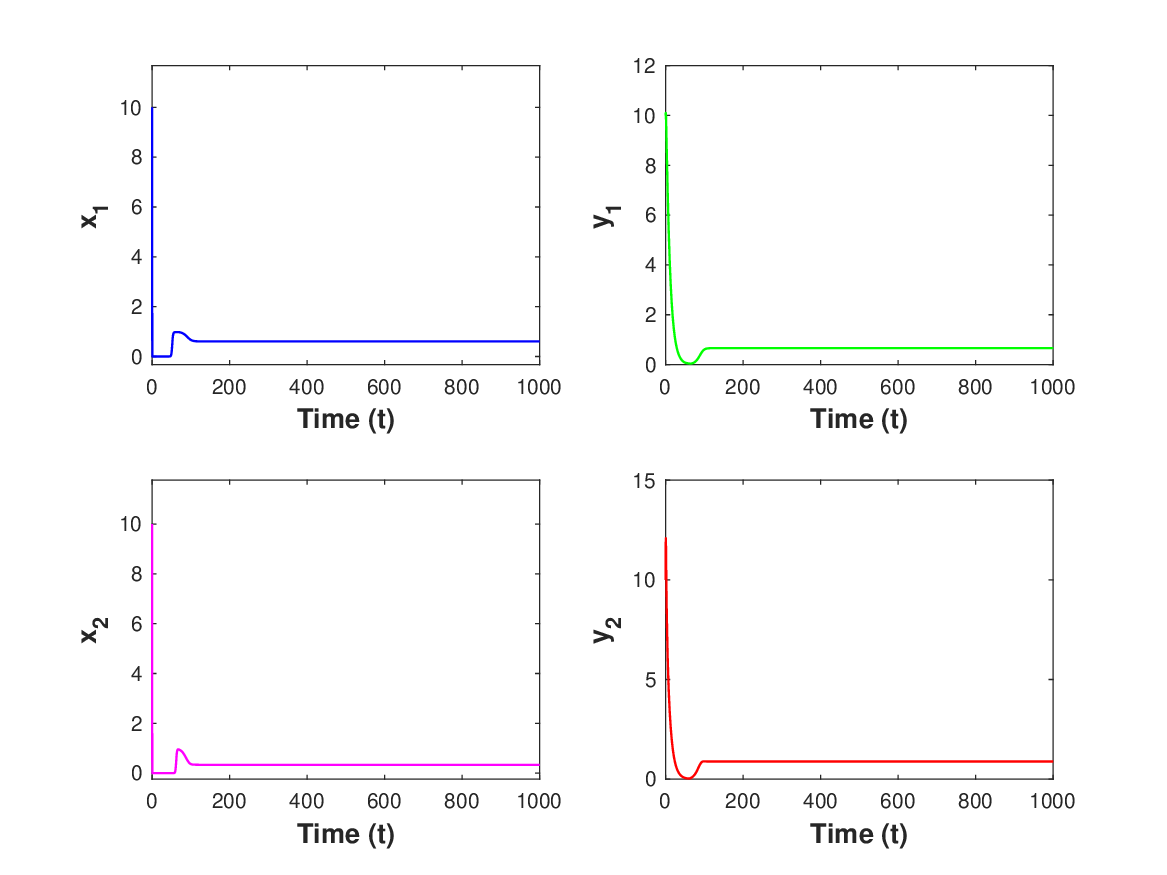}
\caption{The time series figure shows  the coexistence state in both patches $\Omega_1$  \& 
$\Omega_2$. The parameters used are $k=1, 
 \alpha=0.2,  \xi=0.37, \epsilon_1=0.2, \epsilon_2 =0.8, \delta_1= \delta_2=0.15, k_2=0.1, k_4=0.2 \ \text{with I.C.} = [10,10,10,10]$.
 }
 \label{fig:interior_equilibrium}
 \end{figure}

 \begin{rem}
Numerical simulations suggest that it is possible to eradicate the pests within the crop field $(\Omega_2)$. Figure \eqref{fig:x2_extinction_equal_dispersal} shows the pest extinction in the patch $\Omega_2$ with symmetric dispersal. Additionally, Figures \eqref{x2_extinction_comp} and \eqref{x2_extinction_comp_sec} illustrate pest extinction in $\Omega_2$ under different scenarios.  Figure \eqref{x1_extinction_comp} shows that pest extinction in the prairie strip $\Omega_1$ is possible. Moreover, Figure \eqref{fig:x1_x2_extinction} shows pest extinction in both $\Omega_1 \ \& \ \Omega_2$ simultaneously can happen under strict parametric equality on the quantity of additional food, validating Lemma \ref{lem:E3_existence}. Furthermore, Figure \eqref{fig:interior_equilibrium} demonstrates that coexistence is also possible under certain parametric restrictions as mentioned in Lemma \ref{lem:E4_stability}.
 \end{rem}

\

\

\subsection{Hopf Bifurcation}
A bifurcation occurs in a dynamical system when a parameter change leads to a sudden qualitative change in the system's behavior.
The bifurcation value is the critical parameter value at which the qualitative dynamics of a model change. 
We use the method developed in  \cite{Liu94} for Hopf Bifurcation. We extended the conditions for the 4-dimensional patch model \eqref{eq:patch_model2} according to the criteria outlined in \cite{Liu94}. The bifurcation parameter in our mathematical model  \eqref{eq:patch_model2} is $\xi$, the quantity of additional food. The following conditions need to hold for the bifurcation parameter $\xi = \xi^*$.

\begin{thm}
 The necessary and sufficient conditions for the system \eqref{eq:patch_model2} undergoes Hopf bifurcation with respect
to the parameter $\xi = \xi^*$ around the equilibrium point $E_1 = (x_1^*,y_1^*,0,y_2^*)$ are stated as follows:

\begin{enumerate} [(i)]
 \item $A_0(\xi^*) > 0 $, $A_3(\xi^*) > 0 $, $(A_3A_2-A_1)(\xi^*)>0$
    \\
\item $f(\xi^*) = (A_1A_2A_3- A_0A_3^2-A_1^2)(\xi^*) = 0 $
    \\
\item $ \dv{(f(\xi))}{\xi} \rvert_{\xi = \xi^*} \neq 0$
    \label{}
\end{enumerate}
where the eigenvalues of the system at equilibrium point $E_1$ satisfy $\lambda^4+A_3\lambda^3+A_2\lambda^2+A_1\lambda+A_0=0$
\label{thm:hopf bifurcation}
\end{thm}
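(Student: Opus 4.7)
The plan is to invoke the general Liu criterion for Hopf bifurcation \cite{Liu94} in an $n$-dimensional system, specialized here to $n=4$, and verify its hypotheses directly on the characteristic polynomial of the Jacobian $J$ evaluated at $E_1$. First I would compute the characteristic polynomial of $J\rvert_{E_1}$, which because $E_1$ has $x_2=0$ has a partially block structure inherited from \eqref{general_jacobian_k1_k3_0}; this yields a quartic $\lambda^4 + A_3(\xi)\lambda^3 + A_2(\xi)\lambda^2 + A_1(\xi)\lambda + A_0(\xi) = 0$ whose coefficients depend smoothly on $\xi$ through $x_1^*(\xi), y_1^*(\xi), y_2^*(\xi)$.

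The next step is the Routh–Hurwitz analysis for a quartic: all roots lie in the open left half-plane if and only if $A_3>0$, $A_0>0$, $A_3A_2-A_1>0$, and the Hurwitz determinant $H(\xi) := A_1(A_3A_2-A_1) - A_0 A_3^2 > 0$. A Hopf bifurcation occurs precisely when a pair of complex-conjugate eigenvalues crosses the imaginary axis while the other two remain strictly in the left half-plane. The vanishing locus of $H$ — which is exactly $f(\xi) = A_1A_2A_3 - A_1^2 - A_0 A_3^2$ — is the candidate bifurcation set, giving condition (ii). I would then verify that at $\xi=\xi^*$, substituting $\lambda = i\omega$ into the characteristic polynomial and separating real and imaginary parts yields $\omega^2 = A_1/A_3 > 0$ under condition (i). Factoring $\lambda^2 + \omega^2$ out of the quartic leaves a quadratic $\lambda^2 + A_3 \lambda + (A_2 - A_1/A_3)$, whose roots have negative real part iff $A_3 > 0$ and $A_3 A_2 - A_1 > 0$, giving the remaining parts of condition (i). Thus conditions (i)–(ii) together are necessary and sufficient for the spectrum of $J\rvert_{E_1}$ at $\xi=\xi^*$ to consist of one pair of purely imaginary eigenvalues and two eigenvalues with negative real parts.

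For the transversality condition, I would use smooth dependence of the simple eigenvalue pair $\lambda(\xi) = \mu(\xi) \pm i\,\omega(\xi)$ near $\xi^*$, guaranteed by the implicit function theorem since $\lambda = \pm i\omega$ is a simple root. Differentiating the characteristic equation implicitly and separating real and imaginary parts gives a linear relation showing that $\mathrm{Re}\!\left(\dv{\lambda}{\xi}\right)\big|_{\xi=\xi^*}$ is a nonzero constant multiple of $\dv{f}{\xi}\big|_{\xi=\xi^*}$, so condition (iii) is exactly the required transversality $\mu'(\xi^*)\neq 0$. This completes the equivalence with the Hopf theorem and certifies the birth of a limit cycle.

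The main obstacle is the algebraic messiness of the coefficients $A_i(\xi)$: because $E_1$ is itself $\xi$-dependent through the implicit expressions for $x_1^*, y_1^*, y_2^*$, the coefficients are rational functions of $\xi$ with high-degree numerators, so writing $f(\xi)$ in closed form is impractical. I would therefore present the conditions in the implicit Liu form and verify them numerically at a chosen parameter regime (as is standard practice for such four-dimensional Hopf analyses), paying particular attention to confirming $A_1/A_3 > 0$ and $\dv{f}{\xi}\rvert_{\xi^*}\neq 0$ so that the bifurcation is genuine and not degenerate.
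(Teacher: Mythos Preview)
Your proposal is correct and follows essentially the same route as the paper: both invoke Liu's criterion \cite{Liu94} for a quartic characteristic polynomial, identify $f(\xi)=A_1A_2A_3-A_0A_3^2-A_1^2$ as the relevant Hurwitz determinant whose vanishing signals a purely imaginary pair, and require $f'(\xi^*)\neq 0$ for transversality. The one substantive difference is that the paper, after expressing $A_0,\ldots,A_3$ in terms of the Jacobian entries $A,B,C,D,E,P$ from Lemma~\ref{lem:E1_stability}, carries out the algebra to factor $f(\xi)$ explicitly as $\phi(\xi)\psi(\xi)$ with $\phi = D^3 + D^2(A-P) - D(AP-BC) - BCE$ and $\psi = (A-P)(AP-BC) - BCE$; this makes condition~(ii) and the product-rule check for~(iii) concrete rather than implicit. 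Your version compensates by supplying the conceptual justification (the $\lambda=i\omega$ substitution, the residual quadratic, the implicit-function argument for transversality) that the paper omits, so the two presentations are complementary: yours explains \emph{why} Liu's conditions work, the paper shows \emph{what} they reduce to for this particular Jacobian.
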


\begin{proof}
\label{thm proof:hopf bifurcation}
The characteristic equation for $J_1 = J(E_1)$ is given by:
\begin{equation*}
\setlength{\jot}{10pt}
\begin{aligned}
&  \hspace{6cm} \lambda^4 +A_3 \lambda^3 + A_2 \lambda^2 + A_1 \lambda + A_0  = 0  \\
&\text{where, } A_3 = - \left( A - P + D \right), \ A_2 =  D(A - P) - (AP-BC),\  A_1 =  D(AP - BC) +BCE ,  \ A_0 = -BCDE\\
&\text{ and the expressions for } A,B,C,D,E \  \text{and}, P \ \text{can be found in Lemma} \  \ref{lem:E1_stability}.\\
& \text { for } A_0(\xi^*) >0  \implies BCDE(\xi^*)< 0 \text{   ( which is always true from the feasibility condition )}   \\
& \text { for } A_3(\xi^*) > 0  \implies (A-P+D)(\xi^*) < 0 \\
& \text { for } (A_3A_2-A_1)(\xi^*)>0 \implies (A-P+D)\left( D(A - P) - (AP-BC)  \right) +  \left(D(AP - BC) +BCE \right)< 0 \rvert_{\xi = \xi^*} \\
& \text { for } f(\xi^*) = (A_1A_2A_3- A_0A_3^2-A_1^2)(\xi^*) = 0 \text{ we have,}\\
&\implies- (A-P+D) \left( D(A - P) - (AP-BC)  \right) \left(D(AP - BC) +BCE \right)\\
& \hspace{1.0cm}+BCDE(A-P+D)^2 - \left(D(AP - BC) +BCE \right)^2 \rvert_{\xi = \xi^*}= 0\\
&\implies \{ D^3 + D^2(A-P)  - D(AP-BC) -BCE\} * \{(A-P)(AP-BC)-BCE \}\rvert_{\xi = \xi^*} = 0 \\
& \text { for } \dv{(f(\xi))}{\xi} \rvert_{\xi = \xi^*} \neq 0 \implies \dv{(A_1A_2A_3- A_0A_3^2-A_1^2)}{\xi}\rvert_{\xi = \xi^*} \neq 0\\
& \text{Now, using the above expression we solve, } 
\dv{ (\phi * \psi)}{\xi}\rvert_{\xi = \xi^*} =  \phi * \dv{\psi}{\xi} + \psi * \dv{\phi}{\xi}\rvert_{\xi = \xi^*} \neq 0\\
& \text{where, } \phi(\xi) = D^3 + D^2(A-P)  - D(AP-BC) -BCE , \hspace{0.3cm} \psi(\xi) =  (A-P)(AP-BC)-BCE 
\end{aligned}
\end{equation*}
Thus, conditions mentioned in \textit{(i),(ii)}, and \textit{(iii)} are satisfied, so \eqref{eq:patch_model2} undergoes Hopf Bifurcation and hence the theorem is proved. 
\end{proof}

\begin{rem}
     A commonly seen dynamic in multi-component ODE systems, that is, typically systems of three or more ODEs, is chaos. This is defined as aperiodic behavior exhibiting sensitive dependence to initial condition \cite{D18}. The possibility of chaotic dynamics in our proposed system \eqref{eq:patch_model2} can be seen via simulations shown in Appendix \eqref{chaos}.
\end{rem}

\subsection{Comparison of AF patch model to classical bio-control models}
 \

In this section, we compare the classical models with our additional food two-patch model to show the benefits of this model in pest control. So, we study the possibilities of coexistence state in the models. We have studied the $(x^*,{y}^*)$ state in the Rosenzweig-MacArthur model and Holling type II additional food model and have studied the $(0,{y_1^*},{x_2^*},{y_2^*})$  state to compare the total pest population between all these models.

We also numerically investigate the comparison of pest levels concerning the interior state for all the above-mentioned models, i.e., 
comparing the $(x_1^\star, y_1^\star, x_2^\star, y_2^\star)$ of \eqref{eq:patch_model2} with $(x_{r}^*,y_{r}^*)$ of \eqref{rmmodel} and $(x_{h}^*,y_{h}^*)$ of \eqref{org_model}.
\subsubsection{Rosenzweig-MacArthur Predator-Prey model vs. Patch driven Additional food model}
\mbox{}
\ 

We first study the classical Rosenzweig-MacArthur model \cite{Hal08} which, if non-dimensionalized, is given by:
\begin{equation} 
\begin{aligned}
\dot{x_r} &=x_r\bigg(1 - \dfrac{x_r}{k}\bigg) - \dfrac{x_r y_r}{1 + x_r} \\
\dot{y_r} &= \dfrac{\epsilon_r x_r y_r}{1 + x_r } - \delta_r y_r\\
\end{aligned}
\label{rmmodel}
\end{equation}

The dynamics of the Rosenzweig-MacArthur (RM) model show that we cannot have pest extinction. So, the least pest population we can have is when interior equilibrium $(x_r^*,y_r^*)$ where $x_r^*=\dfrac{\delta_r}{\epsilon_r-\delta_r}$ and the interior is stable when $k<\dfrac{\delta_r+\epsilon_r}{\epsilon_r-\delta_r}$.

\begin{lemma} \label{lem:rm_more_than_patch}
When $\dfrac{\delta_1}{\epsilon_1-\delta_1\alpha} < \xi <\dfrac{\delta_1+k_2}{\epsilon_1-\alpha(\delta_1+k_2)}$ then system \eqref{eq:patch_model2} can achieve lower pest population than system \eqref{rmmodel}. 
\end{lemma}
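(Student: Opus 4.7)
The plan is to compare the unique positive pest density $x_r^* = \delta_r/(\epsilon_r-\delta_r)$ in the RM model \eqref{rmmodel} with the total pest density $x_1^\star + x_2^\star$ at the coexistence equilibrium $E_4$ of the patch model \eqref{eq:patch_model2}, and show that the given window on $\xi$ makes the latter strictly smaller. Since the lemma says ``can achieve'', it suffices to exhibit one admissible parameter configuration; I would organize the argument by pushing $\xi$ close to the upper end of the window to shrink $x_1^\star$, and by using the dispersal coupling in $\Omega_2$ to force $x_2^\star < x_r^*$.

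First I would estimate $x_1^\star$. By Lemma \ref{lem:E4_existence}, $0 < x_1^\star < \dfrac{\delta_1+k_2}{\epsilon_1-(\delta_1+k_2)} - \alpha\xi$, and a direct computation shows that this upper bound vanishes precisely when $\xi \to \dfrac{\delta_1+k_2}{\epsilon_1-\alpha(\delta_1+k_2)}^{-}$, i.e.\ at the upper endpoint of the window. Thus for $\xi$ chosen in the window and sufficiently close to this endpoint, $x_1^\star$ can be made arbitrarily small. The lower bound $\xi > \dfrac{\delta_1}{\epsilon_1-\alpha\delta_1}$ is precisely the classical AF threshold, and geometrically it guarantees that the isolated $\Omega_1$ nullcline has been pushed into the second quadrant; dispersal with loss rate $k_2$ is what pulls $x_1^\star$ back to a small positive value, so the window is exactly the regime where the AF is ``strong'' but dispersal retains a low positive pest level in the strip.

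Next I would bound $x_2^\star$ via the $\dot y_2 = 0$ relation. Rearranging gives $\dfrac{\epsilon_2 x_2^\star}{1+x_2^\star} = \delta_2 + k_4\!\left(1 - \dfrac{y_1^\star}{y_2^\star}\right)$, so the key step is to verify $y_1^\star > y_2^\star$ at $E_4$ in the window, which is biologically the statement that $\Omega_1$ (with AF) is the predator ``source''. If so, then $\dfrac{\epsilon_2 x_2^\star}{1+x_2^\star} < \delta_2$ immediately yields $x_2^\star < \dfrac{\delta_2}{\epsilon_2-\delta_2}$; after identifying the RM parameters with $(\epsilon_2,\delta_2)$ for the crop field, this is exactly $x_2^\star < x_r^*$. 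Combined with the previous paragraph, this gives $x_1^\star + x_2^\star < x_r^*$ for $\xi$ near the upper end of the window, proving the lemma.

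The main obstacle is the rigorous verification of $y_1^\star > y_2^\star$, since $E_4$ is defined implicitly. I would establish this by substituting the $\dot y_1 = 0$ relation $y_2^\star = y_1^\star\!\left[(\delta_1+k_2) - \tfrac{\epsilon_1(x_1^\star+\xi)}{1+x_1^\star+\alpha\xi}\right]\!/k_2$ and checking that the bracketed quantity is strictly less than $k_2$ in the window --- this is equivalent to $\dfrac{\epsilon_1(x_1^\star+\xi)}{1+x_1^\star+\alpha\xi} > \delta_1$, which follows from the lower bound $\xi > \dfrac{\delta_1}{\epsilon_1-\alpha\delta_1}$ by monotonicity of the numerical response in $\xi$ (at $x_1^\star = 0$ the inequality is exactly the lower-bound condition, and it only strengthens for $x_1^\star > 0$). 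A secondary technical point is that $E_4$ must actually exist and be reached dynamically; for this I would appeal to Lemma \ref{lem:E4_stability} for local asymptotic stability and take parameters so the stability criterion holds, complementing the argument by the numerical evidence in Figure \eqref{fig:int_rm_vs_patch_fig}.
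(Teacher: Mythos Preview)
Your approach is genuinely different from the paper's. The paper compares the RM interior equilibrium not with the coexistence equilibrium $E_4$ but with the boundary equilibrium $E_2 = (0,y_1^*,x_2^*,y_2^*)$, where the pest is already extinct in $\Omega_1$ and the total pest density is simply $x_2^*$. Using the explicit formula $x_2^* = \tilde{Q}/(\epsilon_2 - \tilde{Q})$ with $\tilde{Q} = \delta_2 + k_4 + \dfrac{k_2 k_4}{\epsilon_1\xi/(1+\alpha\xi)-\delta_1-k_2}$, the paper reduces the comparison $x_2^* < x_r^* = \delta_2/(\epsilon_2-\delta_2)$ to $\tilde{Q} < \delta_2$, and this inequality is shown to be equivalent to the stated window on $\xi$ for \emph{every} $\xi$ in that interval, not just near the upper endpoint. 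Your inequality $y_1^\star > y_2^\star$ is, in the limit $x_1^\star \to 0$, exactly the condition $\tilde{Q} < \delta_2$, so the two arguments are consistent; but by working at $E_2$ the paper sidesteps the need to control $x_1^\star$ altogether.

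Two points of caution on your route. First, your ``direct computation'' that the upper bound $\dfrac{\delta_1+k_2}{\epsilon_1-(\delta_1+k_2)} - \alpha\xi$ from Lemma~\ref{lem:E4_existence} vanishes at $\xi = \dfrac{\delta_1+k_2}{\epsilon_1-\alpha(\delta_1+k_2)}$ is false unless $\alpha = 1$; the bound as written in the lemma statement contains a slip, and the correct bound from the proof is $\dfrac{(\delta_1+k_2)(1+\alpha\xi)-\epsilon_1\xi}{\epsilon_1-(\delta_1+k_2)}$, which \emph{does} vanish at that endpoint --- so your conclusion survives but the calculation as stated does not. Second, your argument only delivers $x_1^\star + x_2^\star < x_r^*$ for $\xi$ sufficiently close to the upper endpoint, whereas the paper's $E_2$-based comparison gives strict inequality across the full window. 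If the lemma is read as ``for all $\xi$ in the window'' rather than existentially, your argument is weaker; the paper's choice of $E_2$ is what buys the uniform statement.
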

\begin{proof}
\label{lem proof:rm_more_than_patch}
We will compare the $(x_r^*,y_r^*)$ equilibrium of system \eqref{rmmodel} and the $E_2 = (0,y_1^*,x_2^*,y_2^*)$ equilibrium of system \eqref{eq:patch_model2} where total pest population in the system is $x_2$. 
\par We know the equilibrium points can be stated as, 
$x^*_r=\dfrac{\delta_2}{\epsilon_2-\delta_2}$ and $x_2^*=\dfrac{\tilde{Q}}{\epsilon_2-\tilde{Q}}$, where $\tilde Q$ is,  
$ \tilde{Q} = \delta_2 + k_4 + \dfrac{k_2 \ k_4}{  \dfrac{\epsilon_1 \xi}{1+\alpha \xi} \  - \delta_1 - k_2}.$ So, if $  \dfrac{\epsilon_1 \xi}{1+\alpha \xi} \  - \delta_1 - k_2   >0$ then $x_2^*>x_r^*$. Thus, for pest population to be lower in the system \eqref{eq:patch_model2}, we need $\tilde{Q}<\delta_2$. Solving, we get $\xi<\dfrac{\delta_1+k_2}{\epsilon_1-\alpha(\delta_1+k_2)}$.

\par Now, if we solve $x_2^*<x_r^*$ we get, $\delta_2> \tilde{Q}$  i.e., $k_2<\delta_1+k_2-\dfrac{\epsilon_1\xi}{1+\alpha\xi}$. 

  
\par Thus, we have a lower total pest population in the system \eqref{eq:patch_model2} than in the system \eqref{rmmodel} when,
\begin{equation*}
\dfrac{\delta_1}{\epsilon_1-\delta_1\alpha} < \xi <\dfrac{\delta_1+k_2}{\epsilon_1-\alpha(\delta_1+k_2)}
 \label{eq: rm_more_patch}
\end{equation*}
\end{proof}

\begin{figure}
\centering
\includegraphics[width = 10cm, height=7cm]{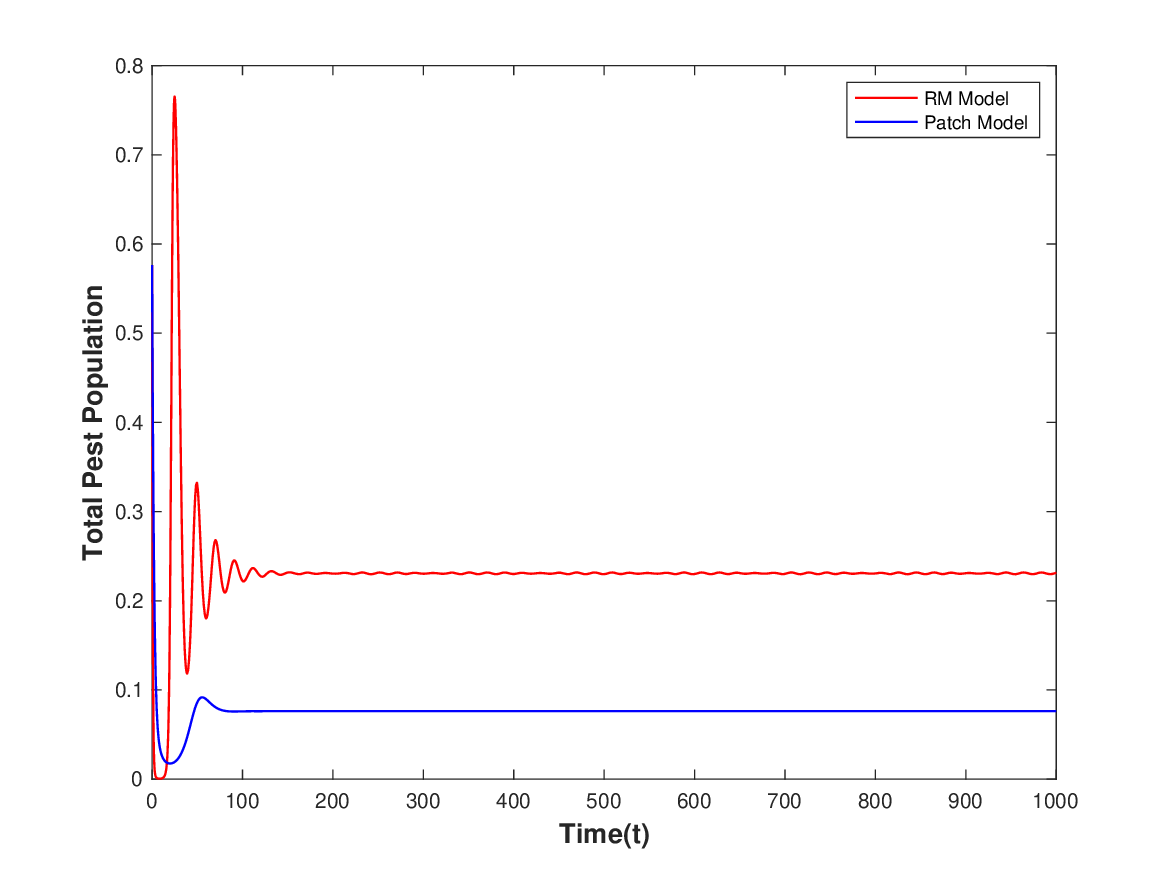}
\caption{The time series figure showing the total pest population comparison of the system \eqref{rmmodel} and \eqref{eq:patch_model2}. The initial condition of the population for the patch model is [0.5, 1.56644, 0.0762332, 1.06803], and that for the RM model is used as total pest and predator population respectively, i.e. [0.5762, 2.6345].}
\label{fig:rm_vs_patch_fig}
\end{figure}

\begin{lemma} \label{lem:rm_less_than_patch}
When   $0 < \xi <   \dfrac{\epsilon_1 \ (Q-\delta_1)}{(\epsilon_1-Q\alpha)(\epsilon_1-\delta_1)}$ then, the system \eqref{eq:patch_model2} can achieve higher pest population than the system \eqref{rmmodel} where,   $Q=\dfrac{\delta_1\delta_2+\delta_1 k_4+k_2\delta_2}{\delta_2+k_4}$. 
\end{lemma}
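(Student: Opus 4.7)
The plan is to mirror the proof of Lemma \ref{lem:rm_more_than_patch}, but with two substitutions. First, I would compare the RM interior equilibrium against the patch-model state $E_1 = (x_1^*, y_1^*, 0, y_2^*)$ rather than $E_2$, so that the total pest biomass carried by the patch system equals $x_1^*$ (pest surviving only in the prairie strip). Second, I would take the RM reference with its predator parameters matched to the prairie-strip predator, i.e.\ $\delta_r = \delta_1$ and $\epsilon_r = \epsilon_1$, giving $x_r^* = \delta_1/(\epsilon_1 - \delta_1)$. The goal is then to identify exactly the range of $\xi$ for which $x_1^* > x_r^*$.

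First I would extract a closed form for $x_1^*$. Setting $\dot{y_2} = 0$ with $x_2 = 0$ gives $y_2^* = k_4 y_1^* / (\delta_2 + k_4)$, and substituting this into $\dot{y_1} = 0$ collapses the predator balance in $\Omega_1$ to $\epsilon_1 (x_1^* + \xi)/(1 + x_1^* + \alpha \xi) = Q$, which is precisely the relation underpinning Lemma \ref{lem:E1_existence}. Solving yields $x_1^* = [Q(1+\alpha\xi) - \epsilon_1 \xi]/(\epsilon_1 - Q)$.

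Next I would set up the strict inequality $x_1^* > x_r^*$ and clear denominators. Because the feasibility of $E_1$ forces $\epsilon_1 > Q$ and because $Q - \delta_1 = k_2 \delta_2/(\delta_2 + k_4) > 0$ gives $Q > \delta_1$ (so in particular $\epsilon_1 > \delta_1$), both $\epsilon_1 - Q$ and $\epsilon_1 - \delta_1$ are positive and the direction of the inequality is preserved under cross-multiplication. Expanding, the cross terms $Q\epsilon_1$ and $-Q\delta_1$ telescope against $\delta_1\epsilon_1$ and $-\delta_1 Q$, reducing the inequality to $\epsilon_1(Q - \delta_1) > (\epsilon_1 - Q\alpha)(\epsilon_1 - \delta_1)\,\xi$. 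Since $\epsilon_1 - Q\alpha > 0$ is already hypothesized in Lemma \ref{lem:E1_existence}, dividing through by the positive quantity $(\epsilon_1 - Q\alpha)(\epsilon_1 - \delta_1)$ produces exactly the stated upper bound on $\xi$, while the lower bound $\xi > 0$ is part of the modeling setup.

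The only delicate part is sign bookkeeping during the rearrangement, but the three positivity facts needed, namely $\epsilon_1 - Q > 0$, $\epsilon_1 - \delta_1 > 0$, and $\epsilon_1 - Q\alpha > 0$, are all direct consequences of the feasibility conditions established in Lemma \ref{lem:E1_existence} together with the elementary identity $Q - \delta_1 = k_2\delta_2/(\delta_2 + k_4)$. Once these are in place, the computation is short and entirely parallel in structure to the one carried out for Lemma \ref{lem:rm_more_than_patch}, so no genuinely new analytical obstacle arises; the content is essentially a sign-flipped companion to the previous lemma.
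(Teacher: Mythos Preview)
Your proposal is correct and follows essentially the same approach as the paper: compare the RM interior equilibrium (with $\epsilon_r=\epsilon_1$, $\delta_r=\delta_1$) against the patch state $E_1=(x_1^*,y_1^*,0,y_2^*)$, use the closed forms $x_r^*=\delta_1/(\epsilon_1-\delta_1)$ and $x_1^*=[Q(1+\alpha\xi)-\epsilon_1\xi]/(\epsilon_1-Q)$, cross-multiply using $\epsilon_1>Q>\delta_1$, and reduce to $\epsilon_1(Q-\delta_1)>\xi(\epsilon_1-\alpha Q)(\epsilon_1-\delta_1)$. The only minor remark is that the paper obtains $\epsilon_1-\alpha Q>0$ from $\epsilon_1>Q$ together with $0<\alpha<1$, rather than citing it as a separate hypothesis of Lemma \ref{lem:E1_existence}, but this is a cosmetic difference.
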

\begin{proof}
\label{lem proof :rm_less_than_patch}
We prove the result using comparison. We will compare the $(x_r^*,y_r^*)$ equilibrium of system \eqref{rmmodel} and the $E_1 = (x_1^*,y_1^*,0,y_2^*)$ equilibrium of system \eqref{eq:patch_model2} where total pest population in the system is $x_1$.

\par We know the equilibrium points can be stated as $x_r^*=\dfrac{\delta_1}{\epsilon_1-\delta_1}$ and $x_1^*=\dfrac{Q(1+\alpha \xi)-\epsilon_1 \xi}{\epsilon_1 - Q}$  where, 
\begin{equation*}
    Q=\dfrac{\delta_1\delta_2+\delta_1 k_4+k_2\delta_2}{\delta_2+k_4}= \delta_1 + \dfrac{k_2\delta_2}{\delta_2+k_4}  \implies Q > \delta_1 
\end{equation*}

\par So, if we need $x_r^*<x_1^*$ then,
\begin{equation*}
    \dfrac{\delta_1}{\epsilon_1-\delta_1}<\dfrac{Q(1+\alpha \xi)-\epsilon_1 \xi}{\epsilon_1-Q}
\end{equation*}

\par From the classical definition, we have $\epsilon_1>\delta_1$ and from the existence of equilibrium point $E_1$ (see Lemma \ref{lem:E1_existence}) we have $\epsilon_1>Q$.  Simplifying the inequality, 

\begin{equation*}
\setlength{\jot}{10pt}
\begin{aligned}
& \implies \delta_1( \epsilon_1-Q)< (Q(1+\alpha \xi)-\epsilon_1 \xi)( \epsilon_1-\delta_1)\\
    & \implies \xi(\alpha Q-\epsilon_1)(\epsilon_1-\delta_1)>(\delta_1-Q)\epsilon_1\\
    & \implies \xi(\epsilon_1-\alpha Q)(\epsilon_1-\delta_1)<(Q-\delta_1)\epsilon_1
    \end{aligned}
\end{equation*}  

$\text{Since}\   \epsilon_1>Q \implies \epsilon_1>\alpha Q \ \text{as} \ (0<\alpha<1) \ \text{and,} \ Q>\delta_1. $ In conclusion, we have a higher total pest population in system \eqref{eq:patch_model2} than \eqref{rmmodel} when,

            
            
        

\begin{equation*}
0< \xi < \dfrac{\epsilon_1 (Q-\delta_1)}{ (\epsilon_1-Q\alpha)(\epsilon_1-\delta_1)}
 \label{eq: rm_less_patch}
\end{equation*}
\end{proof}

\begin{figure}
\centering
\includegraphics[width = 10cm,height=7cm]{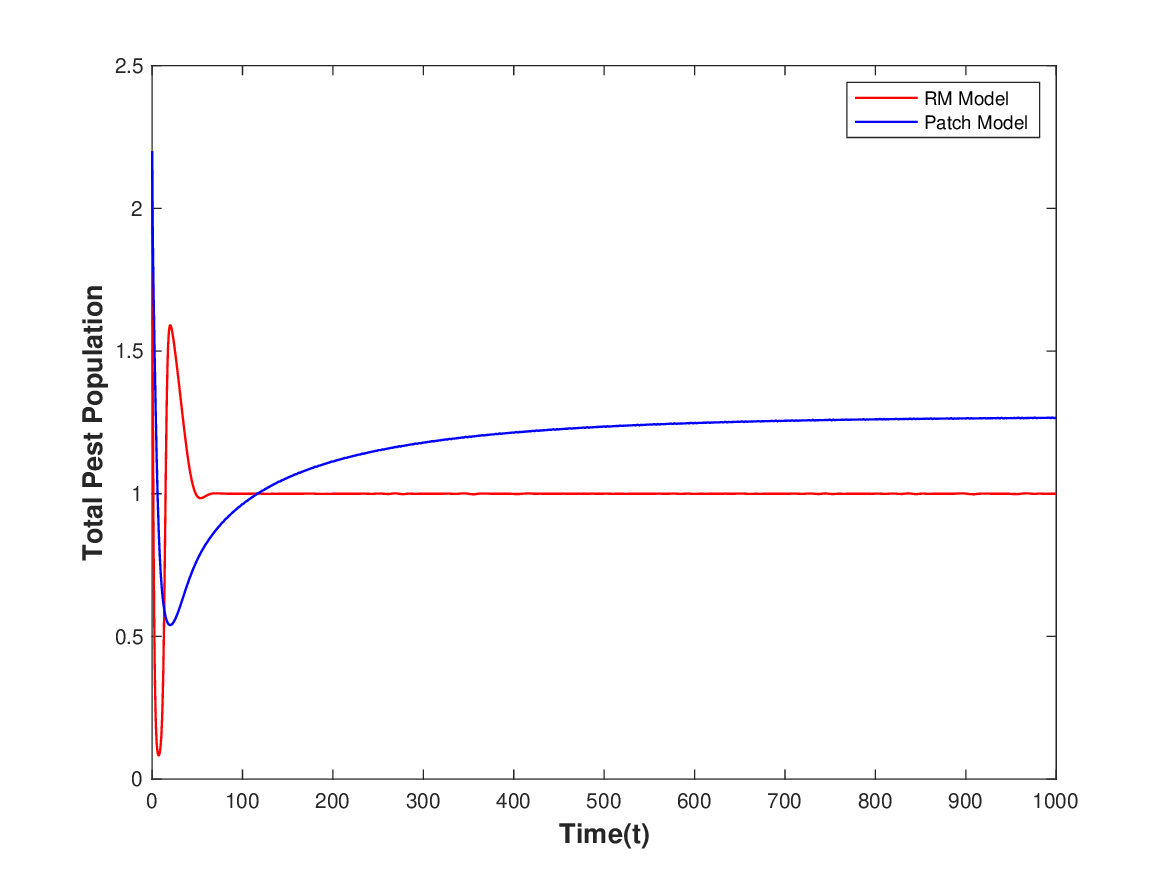}
\caption{The time series figure showing the total pest population comparison of the system \eqref{rmmodel} and \eqref{eq:patch_model2}. The initial condition of the population for the patch model is [2,1,0.2,1], and that for the RM model is used as total pest and predator population, respectively, i.e. [2.2, 2].}
\label{fig:rm_less_patch}
\end{figure}

\begin{figure}
\centering
\includegraphics[width = 10cm]{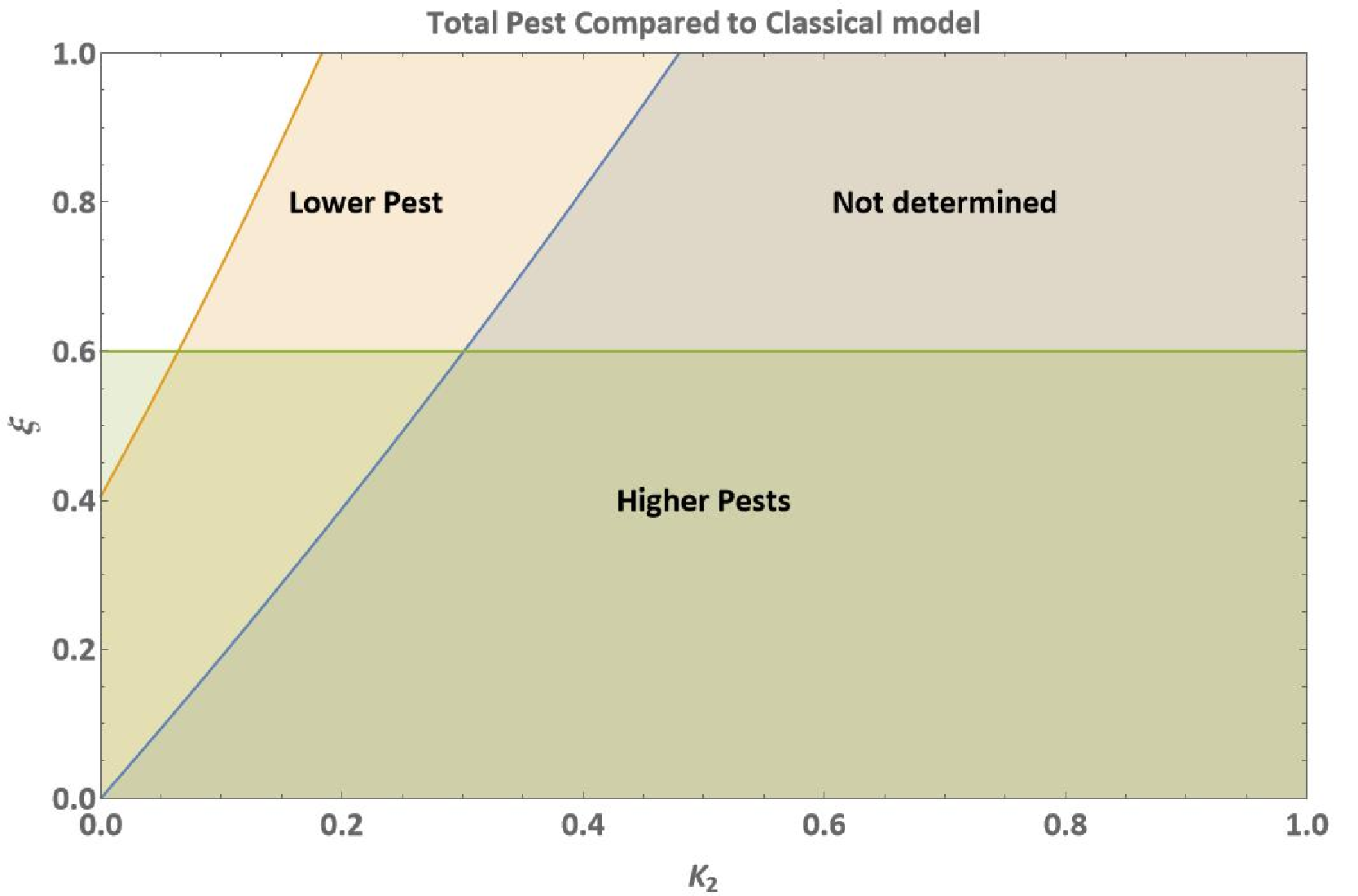}
\caption{Comparison of total pest population: RM model vs. Patch model. For fixed parameters, $(\xi,k_2)$ are varied to see regions where the total pest population in the RM model differs from the Patch model. ``Lower Pest" means the Patch model has a lower pest population than the RM model and vice versa. }
\label{fig:region_of_comparison}
\end{figure}

\par We choose parameter sets for numerical simulations to adhere to the stability criteria of the equilibrium points in both systems.
In Figure (\ref{fig:rm_vs_patch_fig}) the parameter set is $k=1, \alpha=0.2,  \xi=0.5, \epsilon_1=0.4, \epsilon_2=0.8, \delta_1= \delta_2=0.15, k_2=0.1, k_4=0.2$. Figure (\ref{fig:rm_vs_patch_fig}) validates Lemma  \ref{lem:rm_more_than_patch}; we see that the overall pest population in a two-patch setting with additional food is lower than the RM model.

\par The time series shown in Figure (\ref{fig:rm_less_patch}) describes the comparison between the RM model and the new patch model under parameter set $k=2, \alpha=0.9,\xi=0.9, \epsilon_1= 0.4, \epsilon_2=0.8, \delta_1= 0.2, \delta_2=0.01, k_2=0.9,  k_4=0.1.$ We validate Lemma \ref{lem:rm_less_than_patch} numerically, as it can be seen that the total pest population in the classical model is lower than the patch model with some choice of $(\xi,k_2)$. The model has the capability to have both a higher and lower combined pest population in both patches than the RM model.
The region corresponding to $\xi$ and $k_2$ can be broken down into different regions where for any fixed parameter set, there can be a choice of $(\xi,k_2)$ for which we can have both higher or lower total pest population in the system under conditions given in Lemma \ref{lem:rm_more_than_patch} and \ref{lem:rm_less_than_patch}, as it can be seen in Figure (\ref{fig:region_of_comparison}).

\subsubsection{Classical two-species additional food model vs. Patch driven Additional food model}
\

In this section, we will compare the classical pest-predator additional food model and the two-patch additional food model to compare the total pest population density. Secondly, we study classical Holling type II additional food model \cite{SP07}, which is given by
\begin{equation}\label{org_model}
    \begin{aligned}
\dot{x_h} &= x_h\bigg(1 - \dfrac{x_h}{k}\bigg) - \bigg(\dfrac{x_h y_h}{1 + x_h + \alpha  \xi}\bigg)\\ 
\dot{y_h} &= \epsilon_h \bigg(\dfrac{x_h + \xi}{1 +  x_h + \alpha \xi} \bigg)y_h - \delta_h y_h
    \end{aligned}
\end{equation}

Interior equilibrium for $\eqref{org_model}$ are:

\begin{equation*}
    x_h^* = \dfrac{\delta_h + \xi(\alpha\delta_h-\epsilon_h)}{(\epsilon_h-\delta_h)},\hspace{0.2cm}
    y_h^* = (1+x_h+\alpha\xi)\left(1-\dfrac{x_h}{k}\right)
\end{equation*} 

We are considering the case of boundary equilibrium $E_2 = (0,y_{1}^*,x_2^*, y_{2}^*)$ in \eqref{eq:patch_model2}, so total pest population in the system is $x_2$. 


\begin{lemma}
   When $\dfrac{ k_4 (\epsilon_1 -\alpha\delta_1) + (\delta_1 + k_2)(\alpha \delta_2 - \epsilon_2)}{(\epsilon_1  - \alpha(\delta_1 + k_2)) \ (\alpha \delta_2 - \epsilon_2)}< \xi < \dfrac{\delta_2}{\epsilon_2- \alpha\delta_2}$, then  system \eqref{eq:patch_model2} can achieve lower pest population than system \eqref{org_model}. 
\label{lem:af_more_than_patch}
\end{lemma}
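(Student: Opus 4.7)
The plan is to compare the total pest density at the equilibrium $E_2 = (0, y_1^*, x_2^*, y_2^*)$ of the patch model \eqref{eq:patch_model2}, where the total pest density equals just $x_2^*$ (since the pest is extinct in $\Omega_1$), with the pest density $x_h^*$ at the interior equilibrium of the classical Holling type II AF model \eqref{org_model}. Following the parameter matching used in Lemmas \ref{lem:rm_more_than_patch} and \ref{lem:rm_less_than_patch}, I identify $\epsilon_h = \epsilon_2$ and $\delta_h = \delta_2$ (the predator-pest parameters in the crop field). First I would record the explicit equilibrium formulas: $x_h^* = \frac{\delta_2(1+\alpha\xi) - \epsilon_2\xi}{\epsilon_2 - \delta_2}$ and, from Lemma \ref{lem:E2_existence}, $x_2^* = \frac{\tilde Q}{\epsilon_2 - \tilde Q}$ with
\begin{equation*}
\tilde Q = \delta_2 + k_4 + \dfrac{k_2 k_4}{\frac{\epsilon_1\xi}{1+\alpha\xi} - \delta_1 - k_2}.
\end{equation*}

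Next I would observe that the upper bound $\xi < \frac{\delta_2}{\epsilon_2 - \alpha\delta_2}$ is precisely the positivity/feasibility condition $x_h^* > 0$ for the interior of the classical AF model (above this threshold, the predator nullcline of \eqref{org_model} crosses into the second quadrant and the pest is driven to extinction only in infinite time, so there is no interior equilibrium to compare against). For the lower bound, I would exploit the fact that the map $Q \mapsto Q/(\epsilon_2 - Q)$ is strictly increasing on $(0,\epsilon_2)$, so the desired inequality $x_2^* < x_h^*$ is equivalent to $\tilde Q < Q_h$, where $Q_h \in (0, \epsilon_2)$ is the unique value satisfying $Q_h/(\epsilon_2 - Q_h) = x_h^*$. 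A short computation gives $Q_h = \epsilon_2 \cdot \frac{\delta_2(1+\alpha\xi) - \epsilon_2 \xi}{\epsilon_2(1-\xi) + \alpha\delta_2\xi}$. Substituting the explicit form of $\tilde Q$, multiplying through by $(1+\alpha\xi)$ to clear the nested fraction, and collecting powers of $\xi$, the inequality $\tilde Q < Q_h$ reduces to a linear inequality in $\xi$, and solving for $\xi$ yields the stated lower bound $\xi > \frac{k_4(\epsilon_1 - \alpha\delta_1) + (\delta_1+k_2)(\alpha\delta_2 - \epsilon_2)}{(\epsilon_1 - \alpha(\delta_1+k_2))(\alpha\delta_2 - \epsilon_2)}$.

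The main obstacle will be careful sign bookkeeping during the algebraic reduction. Three signs must be tracked simultaneously: (i) the denominator $\frac{\epsilon_1\xi}{1+\alpha\xi} - \delta_1 - k_2$ appearing inside $\tilde Q$, whose sign determines whether $\tilde Q$ lies above or below $\delta_2 + k_4$ and hence governs which regime of \eqref{lem:E2_existence} we are in; (ii) the factor $\alpha\delta_2 - \epsilon_2$, which is negative under the standard assumption $\epsilon_2 > \alpha\delta_2$ and appears as a coefficient of $\xi$ in $x_h^*$, so cross-multiplying against it flips the inequality; and (iii) the factor $\epsilon_1 - \alpha(\delta_1 + k_2)$ in the final denominator, which must be positive for the stated bound to be biologically meaningful (and indeed matches the feasibility condition $\xi < \frac{\delta_1+k_2}{\epsilon_1 - \alpha(\delta_1+k_2)}$ from Lemma \ref{lem:rm_more_than_patch}). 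Once these signs are consistently tracked, the reduction is mechanical and the stated bound drops out directly.
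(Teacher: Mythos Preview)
Your setup is exactly the paper's: compare the pest level $x_2^*$ at the boundary equilibrium $E_2=(0,y_1^*,x_2^*,y_2^*)$ of \eqref{eq:patch_model2} with the interior pest level $x_h^*$ of \eqref{org_model} under the identification $\epsilon_h=\epsilon_2$, $\delta_h=\delta_2$; and your reading of the upper bound $\xi<\dfrac{\delta_2}{\epsilon_2-\alpha\delta_2}$ as the existence condition $x_h^*>0$ is also the paper's.

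Where your plan diverges is in the derivation of the lower bound. You propose to work with the \emph{exact} inequality $\tilde Q<Q_h$, where $Q_h$ is chosen so that $Q_h/(\epsilon_2-Q_h)=x_h^*$, and you claim this reduces to a linear inequality in $\xi$ that yields the stated threshold. That step is not correct: both $\tilde Q$ and your $Q_h$ are genuinely nonlinear in $\xi$ (each is a ratio of degree-one polynomials in $\xi$), and cross-multiplying does not collapse to a linear condition. The stated lower bound in the lemma is \emph{not} the exact threshold for $x_2^*<x_h^*$.

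What the paper actually does is pass to a convenient \emph{sufficient} condition and then simplify further. First, instead of $\tilde Q<Q_h$ it imposes the stronger requirement $\tilde Q<\delta_2+\xi(\alpha\delta_2-\epsilon_2)$; this is sufficient because, writing $a=\delta_2+\xi(\alpha\delta_2-\epsilon_2)<\delta_2$, one checks $Q_h=\dfrac{a\epsilon_2}{\epsilon_2-\delta_2+a}>a$, so $\tilde Q<a\Rightarrow\tilde Q<Q_h\Rightarrow x_2^*<x_h^*$. Substituting $\tilde Q$ and clearing the (negative) denominator $\epsilon_1\xi-(1+\alpha\xi)(\delta_1+k_2)$ gives
\[
\xi\bigl(\epsilon_1\xi-(1+\alpha\xi)(\delta_1+k_2)\bigr)(\alpha\delta_2-\epsilon_2)\;<\;k_4\xi(\epsilon_1-\alpha\delta_1)-k_4\delta_1,
\]
and the paper then makes a second simplification by dropping the term $-k_4\delta_1$ on the right (a further weakening), after which one can divide by $\xi>0$ and solve linearly to obtain the stated lower bound, using $\epsilon_1-\alpha(\delta_1+k_2)>0$ and $\alpha\delta_2-\epsilon_2<0$ to handle the signs. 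So the lemma's lower bound is a sufficient condition produced by these two relaxations, not the sharp one; your exact-comparison route would give a different (and messier) range, not the formula in the lemma.
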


\begin{proof}
\label{lem proof:af_more_than_patch}
    We will compare the $(x_h^*,y_h^*)$ equilibrium of system \eqref{org_model} and the $(0,y_1^*,x_2^*,y_2^*)$ equilibrium of system \eqref{eq:patch_model2}.\\


We know equilibrium points can be written as,  $ x_h^* = \dfrac{\delta_2 + \xi(\alpha\delta_2-\epsilon_2)}{(\epsilon_2-\delta_2 )},$ and $x_2^*=\dfrac{\tilde{Q}}{\epsilon_2-\tilde{Q}}$, where  $ \tilde{Q} =    \delta_2 + k_4 + \dfrac{k_2 \ k_4}{  \dfrac{\epsilon_1 \xi}{1+\alpha \xi} \  - \delta_1 - k_2    }$.

Thus, in order for the pest population to be lower in the system \eqref{eq:patch_model2}, we need 
\begin{equation*}
 \dfrac{\delta_2 + \xi(\alpha\delta_2-\epsilon_2)}{(\epsilon_2-\delta_2 )} > \dfrac{\tilde{Q}}{\epsilon_2-\tilde{Q}}
\end{equation*}
We now need,
 \begin{equation}
 \tilde{Q} <  \delta_2 + \xi(\alpha\delta_2-\epsilon_2) \ \text{ and as,} \  (\alpha\delta_2-\epsilon_2) <0 \implies \tilde{Q} <     \delta_2 
 \label{eq:safcomp1}
 \end{equation} 

Putting the value $\tilde{Q}$ in the above equation \eqref{eq:safcomp1} we have, 
\begin{equation*}
\setlength{\jot}{10pt}
\begin{aligned}
&  \delta_2 + k_4 + \dfrac{k_2 \ k_4}{  \dfrac{\epsilon_1 \xi}{1+\alpha \xi} \  - \delta_1 - k_2} < \delta_2 + \xi(\alpha\delta_2-\epsilon_2)\\
& k_4 + \dfrac{k_2 \ k_4\ (1+\alpha \xi)}{\epsilon_1 \xi - (1+\alpha \xi)(\delta_1 + k_2)} < \xi(\alpha \delta_2-\epsilon_2)\\
 \end{aligned}
\end{equation*}

since $\epsilon_1 \xi - (1+\alpha \xi)(\delta_1 + k_2) < 0, \ \text{from Lemma} \ \ref{lem:E2_existence}\  \text{we have,}$ 

\begin{equation*}
\setlength{\jot}{10pt}
\begin{aligned}
& \implies k_4\epsilon_1 \xi-k_4 \delta_1 \ (1+\alpha \xi) > \xi (\alpha \delta_2-\epsilon_2) \ \{\epsilon_1 \xi - (1+\alpha \xi)(\delta_1 + k_2)\} \\
& \implies \xi  \{\epsilon_1 \xi - (1+\alpha \xi)(\delta_1 + k_2)\}(\alpha \delta_2-\epsilon_2) < k_4 \xi(\epsilon_1 -\alpha\delta_1)-k_4 \delta_1 <  k_4 \xi(\epsilon_1 -\alpha\delta_1) \ \ \ \because k_4 \delta_1 > 0\\
&  \implies \xi \{\epsilon_1  - \alpha(\delta_1 + k_2)\}(\alpha \delta_2 - \epsilon_2)- (\delta_1 + k_2)(\alpha \delta_2 - \epsilon_2) <  k_4 (\epsilon_1 -\alpha\delta_1) \ \ \ \because \xi>0 \\
& \implies \xi \{\epsilon_1  - \alpha(\delta_1 + k_2)\}(\alpha \delta_2 - \epsilon_2)<  k_4 (\epsilon_1 -\alpha\delta_1) + (\delta_1 + k_2)(\alpha \delta_2 - \epsilon_2)
  \end{aligned}
\end{equation*}

 If   $\epsilon_1  - \alpha(\delta_1 + k_2) >0$ then, 
 
\begin{equation*}
   \begin{aligned}
& \xi > \dfrac{ k_4 (\epsilon_1 -\alpha\delta_1) + (\delta_1 + k_2)(\alpha \delta_2 - \epsilon_2)}{(\epsilon_1  - \alpha(\delta_1 + k_2)) \ (\alpha \delta_2 - \epsilon_2)} \ \ \ \because (\alpha \delta_2 - \epsilon_2) < 0 \\
\end{aligned}
\end{equation*}
\vspace{0.2cm}

Also we know that, $ x_h^* = \dfrac{\delta_2 + \xi(\alpha\delta_2-\epsilon_2)}{(\epsilon_2-\delta_2 )}$ exists if, $\xi < \dfrac{\delta_2}{\epsilon_2- \alpha\delta_2}.$ So, we can have a low pest population in \eqref{eq:patch_model2} than in \eqref{org_model} when, 

\begin{equation*}
\dfrac{ k_4 (\epsilon_1 -\alpha\delta_1) + (\delta_1 + k_2)(\alpha \delta_2 - \epsilon_2)}{(\epsilon_1  - \alpha(\delta_1 + k_2)) \ (\alpha \delta_2 - \epsilon_2)}< \xi < \dfrac{\delta_2}{\epsilon_2- \alpha\delta_2} 
\label{eq: sf_more_patch}
 \end{equation*}
\end{proof}

\begin{figure}
\centering
\includegraphics[width = 10cm,height=7cm]{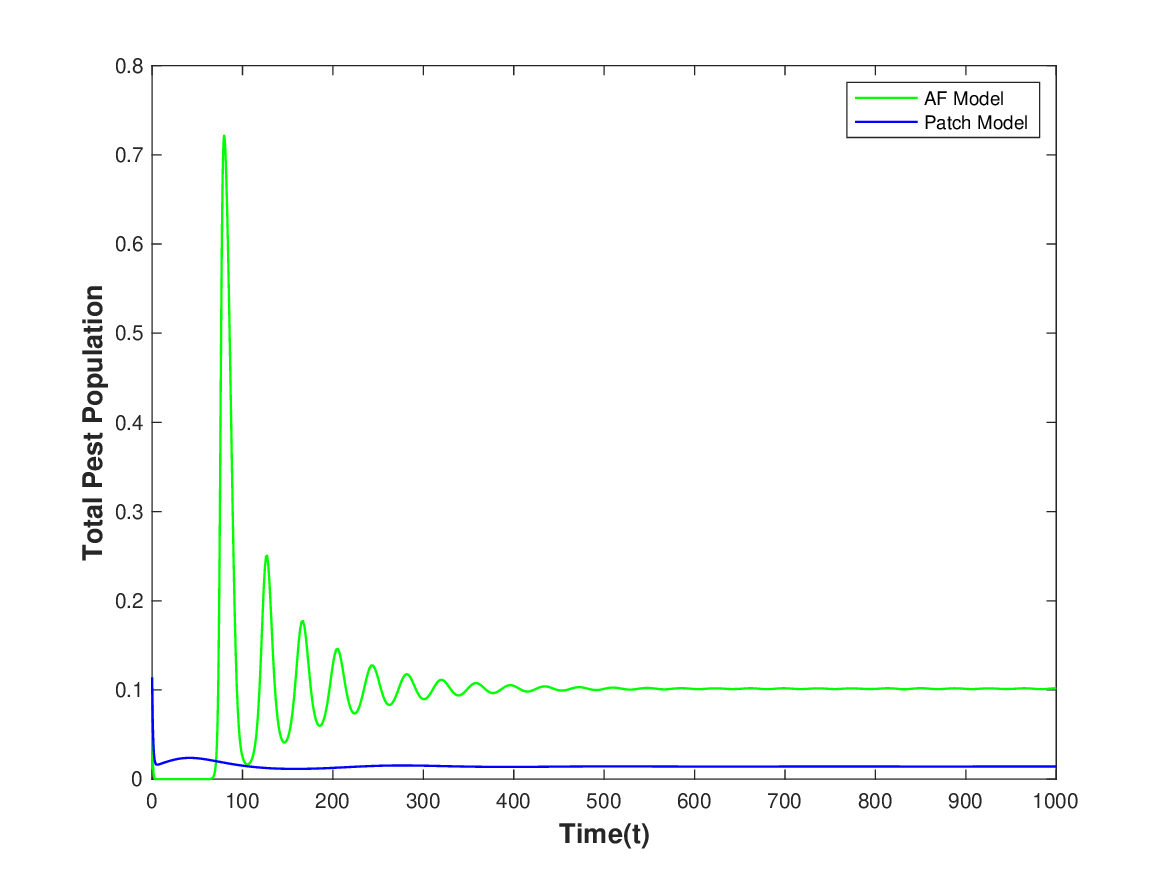}
\caption{Comparison of total pest population: Classical two species additional food model $(x_h^*,y_h^*)$  vs. Patch model $(0,y_1^*,x_2^*,y_2^*)$.  The initial condition of the population for the patch model is $[0.1,2.1,0.014,0.99]$, and that for the classical additional food model is $[0.114, 3.09]$. }
\label{fig:boundary_saf_vs_patch_fig}
\end{figure}

\begin{lemma}
\label{lem:af_less_than_patch}
When $\dfrac{\epsilon_1 Q}{\delta_1 (\alpha-1)(\epsilon_1 - \alpha Q)}  < \xi < \dfrac{\delta_1}{\epsilon_1- \alpha\delta_1}$, then system \eqref{eq:patch_model2} can achieve higher pest population than system \eqref{org_model}. 
\end{lemma}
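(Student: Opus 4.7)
\medskip
\noindent\textbf{Proof Plan for Lemma \ref{lem:af_less_than_patch}.} The plan is to mirror the algebraic comparison strategy used in Lemma \ref{lem:af_more_than_patch}, but now pitting the interior equilibrium $(x_h^*,y_h^*)$ of the classical one-patch additional food model \eqref{org_model} against the \emph{other} boundary equilibrium of the patch model, namely $E_1=(x_1^*,y_1^*,0,y_2^*)$. Since $\Omega_1$ is the patch that actually receives the AF, when matching parameters across models the natural substitution is $\delta_h\to\delta_1$, $\epsilon_h\to\epsilon_1$. At $E_1$ the pest survives only in the prairie strip, so the total pest density in the patch system is exactly $x_1^*$, and the goal reduces to exhibiting parameter regimes where $x_1^*>x_h^*$.

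First, I would write down the two expressions,
\[
x_h^* \;=\; \frac{\delta_1+\xi(\alpha\delta_1-\epsilon_1)}{\epsilon_1-\delta_1},
\qquad
x_1^* \;=\; \frac{Q(1+\alpha\xi)-\epsilon_1\xi}{\epsilon_1-Q},
\]
with $Q=\delta_1+\dfrac{k_2\delta_2}{\delta_2+k_4}$ as in Lemma \ref{lem:rm_less_than_patch}. The existence of $x_h^*>0$ forces the upper bound $\xi<\dfrac{\delta_1}{\epsilon_1-\alpha\delta_1}$, which will be one endpoint of the stated window. From Lemma \ref{lem:E1_existence} we also have $\epsilon_1>Q$, and from the definition $Q>\delta_1$, so both denominators above are positive and cross-multiplication of $x_1^*>x_h^*$ introduces no sign flip. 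The plan is then to expand the resulting polynomial inequality in $\xi$, systematically cancel the matching $\epsilon_1^2\xi$, $Q\delta_1$, and $\alpha\xi\delta_1 Q$ terms that appear on both sides, and factor the residue so that $(Q-\delta_1)$ and $(\epsilon_1-\alpha Q)$ come out as common factors. This yields the lower threshold on $\xi$; combining it with the upper threshold from existence of $x_h^*$ gives the advertised window.

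The main obstacle will be sign bookkeeping rather than algebraic difficulty: the term $(\alpha-1)$ appearing in the denominator of the stated lower bound can be of either sign depending on the AF quality $1/\alpha$, so the argument splits into cases according to whether $\alpha<1$ or $\alpha>1$, and one must verify in each case that the inequality in $\xi$ points in the direction claimed after dividing through by the factor $(Q-\delta_1)(\epsilon_1-\alpha Q)$ (positive under the hypotheses of Lemma \ref{lem:E1_existence} since $\alpha Q<\epsilon_1$). Once the sign of the leading coefficient in $\xi$ is correctly identified, the lower bound $\xi>\dfrac{\epsilon_1 Q}{\delta_1(\alpha-1)(\epsilon_1-\alpha Q)}$ emerges by isolating $\xi$, and the compatibility of the two bounds in the window gives a nonempty parameter regime that witnesses the claim. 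A short numerical check of the type performed in Figure \eqref{fig:rm_less_patch} and Figure \eqref{fig:boundary_saf_vs_patch_fig} would then be appropriate to accompany the analytic statement.
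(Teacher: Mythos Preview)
Your plan is essentially the paper's own approach: compare the equilibrium $E_1=(x_1^*,y_1^*,0,y_2^*)$ of \eqref{eq:patch_model2} against the interior $(x_h^*,y_h^*)$ of \eqref{org_model} with the identification $\epsilon_h\to\epsilon_1,\ \delta_h\to\delta_1$, reduce to the scalar inequality $x_1^*>x_h^*$, cross-multiply using $\epsilon_1>\delta_1$ and $\epsilon_1>Q$, and combine the resulting lower threshold on $\xi$ with the existence bound $\xi<\delta_1/(\epsilon_1-\alpha\delta_1)$ for $x_h^*$.

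Two small differences are worth flagging. First, the paper does not attempt an exact factorization of the cross-multiplied inequality; instead it discards the positive constant $\delta_1\epsilon_1$ from the left-hand side and then estimates $(\epsilon_1-Q)(\epsilon_1-\alpha\delta_1)$ by $(\epsilon_1-\alpha Q)(\epsilon_1-\alpha\delta_1)$ to force the factor $(\epsilon_1-\alpha Q)$ to appear. So the stated lower bound is obtained as a (crude) sufficient condition rather than via the clean cancellation you describe; in fact a direct expansion shows the exact coefficient of $\xi$ is $\epsilon_1(1-\alpha)(Q-\delta_1)$, with no $(\epsilon_1-\alpha Q)$ factor, so your anticipated factorization is slightly off but the algebra is even simpler than you expect. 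Second, the paper does not case-split on $\alpha$: it works throughout under the standing assumption $0<\alpha<1$, which makes $(\alpha-1)<0$ and hence the displayed lower bound negative (so trivially satisfied). Your instinct to track signs is sound, but for matching the paper you may simply invoke $0<\alpha<1$.
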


\begin{proof}
    \label{lem proof:af_less_than_patch}
We will compare the $(x_h^*,y_h^*)$ equilibrium of system \eqref{org_model} and the $(x_1^*,y_1^*,0,y_2^*)$ equilibrium of system \eqref{eq:patch_model2}.\\  \par We know the equilibrium points can be stated as $x_h^* = \dfrac{\delta_1 + \xi(\alpha\delta_1-\epsilon_1)}{(\epsilon_1-\delta_1 )}$   and $x_1^*=\dfrac{Q(1+\alpha \xi)-\epsilon_1 \xi}{\epsilon_1 - Q}$  where, 
\begin{equation*}
    Q=\dfrac{\delta_1\delta_2+\delta_1 k_4+k_2\delta_2}{\delta_2+k_4}= \delta_1 + \dfrac{k_2\delta_2}{\delta_2+k_4}  
\end{equation*}
Thus, in order for the pest population to be lower in the system \eqref{org_model}, we need 
\begin{center}
    $\dfrac{\delta_1 + \xi(\alpha\delta_1-\epsilon_1)}{(\epsilon_1-\delta_1 )} < \dfrac{Q(1+\alpha \xi)-\epsilon_1 \xi}{\epsilon_1 - Q} $
\end{center}

\par From the classical definition, we have $\epsilon_1>\delta_1$ and from the existence of $E_1$ (see Lemma \ref{lem:E1_existence}) we have $\epsilon_1>Q$.  Simplifying the inequality, 

\begin{equation*}
\setlength{\jot}{10pt}
\begin{aligned}
& \implies \delta_1 \epsilon_1 + \xi(\alpha\delta_1-\epsilon_1)(\epsilon_1 - Q)< \epsilon_1 Q + \xi (\epsilon_1-\delta_1 )(\alpha Q -\epsilon_1) \\
&\because \  \delta_1\epsilon_1>0 \implies \  \xi(\alpha\delta_1-\epsilon_1)(\epsilon_1 - Q)< \epsilon_1 Q + \xi (\epsilon_1-\delta_1 )(\alpha Q -\epsilon_1)\\
& \implies \xi \ \{(\epsilon_1-\delta_1 )( \epsilon_1-\alpha Q )-(\epsilon_1 - \alpha\delta_1)(\epsilon_1 - Q) \}< \epsilon_1 Q \\
& \text{we have,} \ \epsilon_1-Q\alpha < \epsilon_1-\alpha Q \implies -(\epsilon_1-Q)(\epsilon_1 - \alpha\delta_1) > - (\epsilon_1-\alpha Q)(\epsilon_1 - \alpha\delta_1) \\
& \implies  \xi \ (\epsilon_1 - \alpha Q) \ \{(\epsilon_1-\delta_1 )-(\epsilon_1 - \alpha\delta_1) \}< \epsilon_1 Q  \implies \xi \delta_1(\epsilon_1 - \alpha Q) (\alpha  - 1 ) < \epsilon_1 Q \ \ \because 0<\alpha<1 \ \text{then, } \\
& \implies \xi> \dfrac{\epsilon_1 Q}{\delta_1 (\alpha-1)(\epsilon_1 - \alpha Q) }
\end{aligned}
\end{equation*}

\

Also we know, $ x_h^* = \dfrac{\delta_1 + \xi(\alpha\delta_1-\epsilon_1)}{(\epsilon_1-\delta_1 )}$ exists if, $\xi < \dfrac{\delta_1}{\epsilon_1- \alpha\delta_1}.$ Thus, in conclusion, we have a higher total pest population in system \eqref{eq:patch_model2} than \eqref{org_model} when, 

\begin{equation*}
  \dfrac{\epsilon_1 Q}{\delta_1 (\alpha-1)(\epsilon_1 - \alpha Q)}  < \xi < \dfrac{\delta_1}{\epsilon_1- \alpha\delta_1}
  \label{eq: sf_less_patch}
\end{equation*}
\end{proof}
\begin{figure}
\includegraphics[width = 10cm,height=7cm]{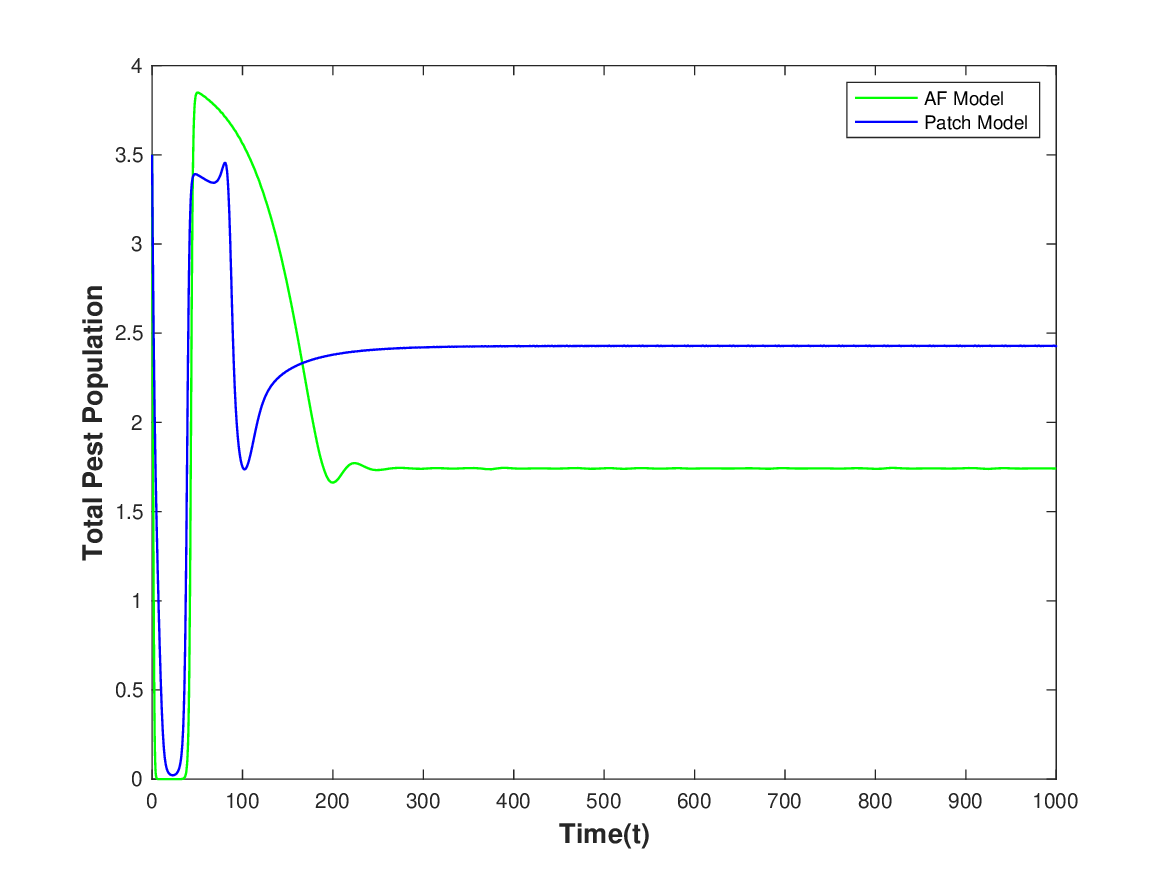}
 \caption{Comparison of total pest population: Classical two-species additional food model $(x_h^*,y_h^*)$  vs. Patch model $(x_1^*,y_1^*,0,y_2^*)$. The initial condition for the patch model is $[3,2,0.5,1.5]$, and that for the classical additional food model is $[3.5, 3.5]$. }
\label{fig:x2_0_saf_vs_patch}
\end{figure}

We choose parameter sets for numerical simulations to adhere to the stability criteria of the equilibrium points in both systems. In Figure \eqref{fig:boundary_saf_vs_patch_fig} the parameter set is $k=1, \alpha=0.2, \xi=0.39, \epsilon_1 = 0.4,\epsilon_2 =0.6, \delta_1=0.09,\delta_2=0.25,k_2=0.1,k_4=0.2$. Figure \eqref{fig:boundary_saf_vs_patch_fig} validates Lemma  \ref{lem:af_more_than_patch}; we see that the overall pest population,  in a two-patch setting with additional food which is $(x_2^*)$, in this case, is lower than the classical AF model.
The time series shown in Figure (\ref{fig:x2_0_saf_vs_patch}) describes the comparison between the classical AF model and the new AF patch model under parameter set $k=4, \alpha=0.2, \xi=0.37, \epsilon_1 =0.2, \epsilon_2 =0.4, \delta_1 = 0.15, \delta_2=0.01,k_2=0.5,k_4=0.5$. Figure \eqref{fig:x2_0_saf_vs_patch} validates Lemma \ref{lem:af_less_than_patch}.
We also conducted a numerical investigation to compare the interior equilibrium between systems \eqref{eq:patch_model2}, \eqref{rmmodel}, and \eqref{org_model}. See figures \eqref{fig:int_rm_vs_patch_fig} and \eqref{fig:int_saf_vs_patch}.


\section{Discussion and conclusion}

Motivated by recent innovative developments in landscape management, particularly in the North-Central US, such as the STRIPS program and the recent research movement towards LF-NPC, we have presented a bio-control model for a two-patch system with AF. Herein, a predator is introduced into a ``patch" such as a prairie strip or STRIP, which has AF. The AF boosts the predator efficacy, and it disperses or is drifted into a second ``patch," which is an adjacent or neighboring crop field, to target a pest. The predator will also return to the STRIP intermittently, due to the AF and refuge effects that it provides \cite{snyder2019give}.
To the best of our knowledge, such a study of the dynamics of a two-patch system with AF in one patch has not been analyzed earlier. The effect of drift and dispersal of the predator population between the patches has been studied extensively, and the biological implications of the results have also been analyzed. We have shown that linear drift and linear (local) dispersal of the predator population between the patches can have interesting dynamics on the stability of the equilibrium states.

It is seen that drift between patches prevents blow-up via Theorem \ref{thm:t1u1}; see Figure \eqref{blowup_drift_comp}. Similarly, it can be seen that \emph{linear} predator dispersal between patches can help in blow-up prevention; see Figure (\ref{fig:blowup_no_patch}) and Figure (\ref{fig:blowup_prevention_patch_1}),
which is prevalent in many AF models with pest dependent functional response \cite{PWB23}. One needs to differentiate between the cases of symmetric vs asymmetric dispersal, to understand the blow-up prevention. In the case of asymmetric dispersal, dispersal out of the AF patch needs to happen at a greater rate than the rate at which predators enter the crop field ($k_{2} > k_{4}$) - this is intuitive with the fact that predator blow-up can only occur in the AF patch, and so positive net dispersal out of the AF patch, could possibly act as a damping term to prevent blow-up. In the symmetric dispersal case, ($k_{2} = k_{4}$), and net movement is 0, so in is dealing with a conservative case. In this setting preventing blow-up depends on the predator death rate $\delta_{2}$, in the crop field. We require $\delta_1 + \delta_2 > \frac{\xi}{1+\alpha \xi} > \delta_{1}$, see Theorem \ref{thm:t1sd}. Here, integrated pest management programs need to keep this in mind, with this estimate on the death rate providing qualitative features on the life history of use of possible introduced predators, and the relation of the important life history death rate parameter, and how it scales with the AF the predator needs, to provide the most efficient biocontrol service.
All in all, unlike earlier studies where a higher-order term like intra-specific predator competition, predator dependent function response, or group defense in the pest is required to prevent blow-up, an AF driven patch system can also lead to such prevention, solely with a linear drift/dispersal term.   

The stability of the equilibrium states has also been studied extensively, as they are of great importance for bio-control. We show that pest extinction under drift is possible in the crop field via Lemma \ref{lem:E_stability_uni}. Furthermore this state is shown to be globally stable. Proving this is challenging, and a standard Lyapunov function argument \cite{hsu2005survey} does not work due to the way the AF term is modeled - that is, the structure of the predator's numerical response, due to the AF. We, however, are able to achieve global stability of the pest extinction state in the crop field. This follows via a series of lemmas and theorems; in particular, see Lemma \ref{lem:p1s}, \ref{lem:p2s}, Theorem \ref{thm:gs1} and Lemma \ref{lem:pest_extinction_crop_field}. A special case of this result enables us to prove the global stability of the coexistence equilibrium in the classical AF model via Corollary \ref{cor:g1} - this was left open in \cite{SP07}. Note that our Lyapunov function, developed here, will possibly work to prove the global stability of the classical AF model in the case of several other functional responses that have been recently considered. This is the subject of current and future investigations. 

Considering drift, pest eradication from the prairie strip can only be achieved under strict equality for the quantity of additional food via Lemma \ref{lem:E3_existence_unkidirectional}. Coexistence within the prairie strip and crop field is feasible in the patch model with drift, as demonstrated by Lemma \ref{lem:loc_uni_coexistence}. In order to design efficient bio-control strategies, we investigate the global stability of the pest extinction state in the crop field in relation to how large a prairie strip has to be to facilitate extinction. Thus, we study the relation of $k_{p}$ to other parameters. What is clearly seen in Figure \eqref{region_plots} is that there are 3 distinct regions in the parameter space that separate whether the pest extinction state in the crop field is global-attracting or not. If one studies Figure \eqref{fig:kp vs k2_none}, then global attraction needs $g(q_{1}) < k_{p} < f(q_{1})$, where $f(q_{1}), g(q_{1})$ are monotone functions of the drift out of the  AF patch. So we must ensure a certain amount of predators drift out of the AF patch and into the crop field to cause pest extinction there - and these should be in appropriate ranges. If enough of them are not drifted into the crop field, they may not be successful in eradicating the pest therein. Similarly, if one looks at Figure \eqref{fig:kp vs xi_none}, region III in particular, one sees a narrow band again where $g_{1}(q_{1}) < k_{p} < f_{1}(q_{1})$. Here $f_{1}, g_{1}$ are monotonically decreasing functions of the AF.
This tells us that as AF increases, the size of the prairie strip or AF patch will decrease, so as to yield global stability of the pest extinction state in the crop field. 
Also, this tells us that a minimum level of AF is required for pest eradication in the crop field, but ``too much" AF can act to the contrary - thus, managers maintaining or designing an integrated pest management program need to be mindful of this.
Working out the optimal ratio of the AF patch, and the AF crop within that, and its effect on pest eradication, remains a subject
of current investigation.

Similar results are observed in the dispersal case. It is observed that pest extinction is possible in either patch under certain parametric restrictions. In particular, pest extinction is possible in the crop field, where no predators are present initially, see Figure (\ref{fig:x2_extinct_y2_0}). The predators are present/introduced only in the prairie strip and move into the second patch, that is, the crop field, via dispersal. As mentioned earlier, a classic example of the kinetic term $f(u)$ in population dynamics is $f(u)=b u(1-\frac{u}{K})$, where $b$ is a birth rate and $K$ the carrying capacity of species $u$. Here $b, K$ are pure constants. However, a more realistic assumption is that of changing environmental pressures - driven by climate change. Thus, future work could lead to modeling $K=K(t)$, as possibly time dependent. 
Also, the higher dimension of the model makes it unpragmatic to have a closed-form parametric value-based expression for the equilibrium population, so numerical evidence is given to validate our quantitative studies. Total pest eradication from both patches, considering both dispersal and drift, is possible but requires strict parametric equality via Lemma \ref{lem:E2_existence_unkidirectional} and Lemma \ref{lem:E3_existence}. This is highly unlikely from a biological point of view.

Furthermore, several other kinds of dynamics have been uncovered for the proposed model. It is sensitive to initial conditions and can give rise to stable limit cycles. The stable limit cycles occur due to a Hopf bifurcation, via which a stable interior equilibrium loses stability, resulting in the occurrence of a stable limit cycle. What is also seen is a further period doubling process, which, under certain parametric windows, gives rise to chaotic dynamics. Chaos is studied numerically and validated by testing the Lyapunov coefficient. It is seen that for a fixed parameter set, some specific amount of additional food can give rise to chaotic behavior in the proposed model.
The benefits of the model as an effective bio-control tool are studied by comparing it with the classical Rosenzweig-MacArthur prey-predator model and the classical AF model with Holling Type II functional response. 
In each case, we compare the pest population density to conclude that the total pest population is lower in the two-patch AF system than in a classical predator-pest (essentially one-patch) system. In particular, the pest population in the AF patch driven system can be less than that in the classical bio-control system for several parameter sets where interior equilibrium is stable for a choice of dispersal rate of predators.

In conclusion, an AF patch-driven prey-pest model can result in an effective bio-control strategy. The drift/dispersal of the predator population between patches can result in pest extinction in \emph{certain} patches - in particular, in our setting, it can do so in the crop field, which is the main patch of interest as far as pest density goes. This essentially results in successful bio-control. Also, the patch model can keep the total prey population lower than in models without patch structure, thus having the benefit of checking the pest population using predator dispersal. 
We note there are several open questions at this juncture. Both prey and predator population dispersal could be considered. Furthermore, non-linear or non-local drift, say akin to drift by strong winds, could be considered. Also, one can consider the spatially explicit patch model, where various shapes of the prairie strip could be modeled. This is part of the larger question of what shape/density of resource distribution is optimal. There is recent progress in these directions in spatial ecology, in the single species case \cite{MR21, MR22}, but the question of this ``optimality" in systems (such as in our setting) where the optimality would be measured in keeping pest density to minimum levels, remains open. Furthermore, one could consider explicit pest dynamics in the crop field if one were attempting bio-control of certain specific target pests, such as the soybean aphid. In such a setting, depending on the type of host plant, resistant varieties, or susceptible varieties, the dynamics of the pest could change \cite{BOP22}. Such considerations would be important in a bio-control context of the soybean aphid. Dynamics could also be affected by weather events such as drought or flood. This would affect both host plant quality and the herbivore dynamics \cite{phdthesisAB}. Current and future investigations also include the benefit of parasitoid bio-control over predator based bio-control \cite{phdthesisAB}.

All in all, many of these current and future approaches could result in interesting predator-pest dynamics. Field experiments are encouraged, which can show the power of patch structure as a bio-control strategy in agricultural practice today. These and related questions continue to be the subject of our current investigations. 

\section{Appendix}
\subsection{Proof of Lemma \ref{lem:E1_existence_unkidirectional}}
\begin{proof}
From the nullcline  $ \Dot{y_2}=0$ we have,

 \begin{equation} 
     y_2^* = \frac{q_2 y_1^*}{\delta_2 } 
     \label{x2_0_eq_1_uni}
 \end{equation}
 Using the nullcline $ \Dot{x_1}=0$,
  
\begin{equation*}
    x_1^*\ \left \{ \left (1-\frac{x_1^*}{k_p} \right)-\frac{ y_1^*}{1+x_1^*+\alpha \xi} \right \} = 0
\end{equation*}
either $ x_1^* = 0$ or,
\begin{equation}
  y_1^* =  \left(1-\frac{x_1^*}{k_p}\right) (1+x_1^*+\alpha \xi) 
    \label{x2_0_eq_2_uni}
\end{equation}

Now, using the nullcline  $ \Dot{y_1}=0$, 

\begin{equation*}
   y_1^*\ \left \{ \epsilon_1 \left( \frac{x_1+\xi}{1+x_1+\alpha \xi} \right ) - \delta_1  - q_1  \right \} = 0
\end{equation*}
$\therefore$ either $y_1^* = 0$ or,

\begin{equation}
   \epsilon_1 \left ( \frac{x_1^* +\xi}{1+x_1^* +\alpha \xi} \right) - \delta_1  - q_1  = 0
    \label{x2_0_eq_3_uni}
\end{equation}

simplifying \eqref{x2_0_eq_3_uni} gives  the expression of $x_1^*$ in terms of parameters, 
 
 \begin{equation}
     x_1^* = \frac{(\delta_1+q_1)(1+\alpha\xi)-\epsilon_1\xi  }{\epsilon_1 - (\delta_1+q_1)} 
      \label{x2_0_eq_4_uni}
 \end{equation}
 
Thus, the equilibrium point $\hat{E_1}= (x_1^*,y_1^*,0,y_2^*)$ exists if  $\xi < \dfrac{\delta_1+q_1}{\epsilon_1 - \alpha (\delta_1+q_1)}$ and {$ \epsilon_1 > (\delta_1+q_1) $}.
\label{crop_extinctio_uni_existence}
\end{proof}

\subsection{Proof of Lemma \ref{lem:E_stability_uni} }
\begin{proof}
\label{lem proof:E_stability_uni}
Using equations \eqref{x2_0_eq_1_uni} and \eqref{x2_0_eq_2_uni}, the general Jacobian matrix  \eqref{general_jacobian_k1_k3_0_uni} at $(x_1^*,y_1^*,0,y_2^*)$ becomes, 
\vspace{0.25cm}

\begin{equation*} 
\hat{J_1} = \begin{bmatrix}
\dfrac{ x_1^* y_1^*} {(1+x_1^* +\alpha \xi)^2}  - \dfrac{ x_1^*}{k_p} & \dfrac{- x_1^*}{1+x_1^* +\alpha \xi} & 0 & 0 \vspace{0.25cm}
  \\ 
\dfrac{\epsilon_1  (1 + (\alpha - 1) \xi) \ y_1^*}{(1+x_1^* +\alpha \xi)^2} & 0 & 0 & 0 \vspace{0.25cm}
\\
0 & 0 & 1  - y_2^* & 0 \vspace{0.25cm}
  \\
0 & q_2 & \epsilon_2 y_2^* &  - \delta_2 \\
\end{bmatrix}
\end{equation*} 
\vspace{0.25cm}

The characteristic polynomial comes out as:
\begin{equation*}
\left(1- y_2^* - \lambda \right) \left[\left( \dfrac{ x_1^* y_1^*} {(1+x_1^* +\alpha \xi)^2} - \dfrac{ x_1^*}{k_p}  - \lambda\right) \left(\lambda^2 +  \lambda \delta_2 \right) - \dfrac{ x_1^*\left(\delta_2 + \lambda\right)}{1+x_1^* +\alpha \xi} \left(\dfrac{\epsilon_1  (1 + (\alpha - 1) \xi) \ y_1^*}{(1+x_1^* +\alpha \xi)^2}\right) \right] 
\end{equation*}
\begin{equation*}
\text{Let, }  D = 1-y_2^*, \ A = \dfrac{ x_1^* y_1^*} {(1+x_1^* +\alpha \xi)^2}  - \dfrac{ x_1^*}{k_p} , \ B = \dfrac{ - x_1^*}{1+x_1^* +\alpha \xi}, \ C = \dfrac{\epsilon_1  (1 + (\alpha - 1) \xi) \ y_1^*}{(1+x_1^* +\alpha \xi)^2},\text{  where } B  < 0 
\end{equation*}
Now the characteristic equation becomes,
\begin{equation*}
\left(D- \lambda \right)   \{ \lambda^3 + \lambda^2(\delta_2 - A)+ \lambda(-A\delta_2-BC) - BC\delta_2 \}= 0
\end{equation*}
To satisfy the Routh–Hurwitz stability criteria for all negative roots, we should have the following conditions:
\begin{equation}
\label{R-H for x2_0_unidirectional}
1-y_2^*>0, \delta_2>A,\ A\delta_2+BC<0,\ BC\delta_2<0 \ \& \  (\delta_2-A)(A\delta_2+BC)<BC\delta_2
\end{equation}

Thus, the lemma is proved. 
\end{proof}

\subsection{Proof of Lemma \ref{lem:E3_existence_unkidirectional}}
\begin{proof}
    \label{proof_E3_existence_drift}
    From the nullcline $ \dot y_1 = 0$ either we have $y_1^* = 0$ or 
    \begin{equation}
    \label{af_eq_1}
        \left( \frac{\xi}{1+\alpha \xi}\right)= \dfrac{\delta_1 + q_1}{\epsilon_1}
    \end{equation}
  From $\dot{x_2} = 0$,  
  either $x_2^* = 0$ or 
   \begin{equation}
     y_2^* =  \left (1- \frac{x_2^*}{k_c} \right) \left(1+x_2^*\right)
     \label{af_eq_3}
   \end{equation}
and with $\Dot{y_2} = 0 $ we have,
   \begin{equation}
      y_2^* \left \{\epsilon_2 \left ( \frac{x_2^*}{1+x_2^*} \right ) - \delta_2 \right \} + q_2 y_1^*  = 0 
      \label{af_eqn_4}
      \end{equation}
    The value $ y_2^*$ is,     
\begin{equation}
     y_2^* = \dfrac{q_2 y_1^* }{\delta_2 - \epsilon_2 \left ( \dfrac{x_2^*}{1+x_2^*} \right )  }
     \label{af_eq_5}
\end{equation}
For $y_2^*$ to exist, we need the condition on 
$x_2^*$ as, 
$x_2^* < \dfrac{\delta_2}{\epsilon_2-\delta_2}$. To find an expression for $x_2^*$, one can equate \eqref{int_uni_eq_3} and \eqref{int_uni_eq_4} and solve the resulting quadratic equation in $x_2^*$.

Thus, the equilibrium point $\hat{E_2}$ exists if,  $\xi = \dfrac{\delta_1 + q_1 }{\epsilon_1- \alpha(\delta_1 + q_1 )}$, $ \epsilon_1 > (\delta_1+q_1) $ and, $y_1^* > \dfrac{\delta_2}{q_2}$. 
\end{proof}

 \subsection{Proof of Lemma \ref{stability_pest_ext_af_drift}}
\begin{proof}
\label{stability_pest_ext_af_drift_proof}
 Using equations \eqref{af_eq_1}, \eqref{af_eq_3} and \eqref{af_eq_5}, the general Jacobian matrix  \eqref{general_jacobian_k1_k3_0_uni} at $(0,y_1^*,x_2^*,y_2^*)$ becomes, 
\vspace{0.25cm}
\begin{equation*} 
\hat{J_3} = \begin{bmatrix}
1- \dfrac{y_1^*} {(1 +\alpha \xi)}   & 0  & 0 & 0 \vspace{0.25cm}
  \\ 
\dfrac{\epsilon_1  (1 + (\alpha - 1) \xi) \ y_1^*}{(1+x_1^* +\alpha \xi)^2} & 0 & 0 & 0 \vspace{0.25cm}
\\
0 & 0 & \dfrac{x_2^*y_2^*}{(1+x_2^*)^2} - \dfrac{x_2^*}{k_c} & \dfrac{- x_2^*}{1+x_2^*} \vspace{0.25cm}
  \\
0 & q_2 &  \dfrac{\epsilon_2 y_2^*}{(1+x_2^*)^2} &  - \dfrac{q_2y_1^*}{y_2^*}  \\
\end{bmatrix}
\end{equation*} 

Let, $A = \dfrac{x_2^*y_2^*}{(1+x_2^*)^2} - \dfrac{x_2^*}{k_c} ,\ B = \dfrac{- x_2^*}{1+x_2^*} \ C= \dfrac{\epsilon_2 y_2^*}{(1+x_2^*)^2}, \ D = - \dfrac{q_2y_1^*}{y_2^*}$
then the characteristic equation becomes,
\begin{equation}
\lambda  \left((A- \lambda) (D-\lambda) - BC\right) \left(  1- \dfrac{y_1^*} {1 +\alpha \xi} - \lambda \right) = 0 
\end{equation}

Since one of the eigenvalues is zero, this proves the lemma.
\end{proof}

\subsection{Proof of Lemma \ref{lem:E2_existence_unkidirectional}}

\begin{proof}
\label{proof_E2_existence_unkidirectional}
From the nullcline $ \Dot{y_2}=0$ we have,
\begin{equation*}
    - \delta_2 y_2^* + q_2 y_1^*   = 0 
\end{equation*}
\begin{equation}
y_2^* = \dfrac{q_2 y_1^* }{\delta_2}
\label{y2_pest_extinction}
\end{equation}
From the nullcline $\Dot{y_1}=0$ we have, 
\begin{equation*}
y_1^*\ \left \{ \epsilon_1 \left( \frac{\xi}{1+\alpha \xi} \right ) - \delta_1  - q_1  \right \} = 0
\end{equation*}
\begin{equation*}
    \epsilon_1 \left( \frac{\xi}{1+\alpha \xi} \right ) = \delta_1 + q_1 
\end{equation*}
\begin{equation}
    \xi = \dfrac{\delta_1 + q_1 }{\epsilon_1- \alpha(\delta_1 + q_1 )}
\end{equation}

Thus, the equilibrium point $\hat{E_3} = (0,y_1^*,0,y_2^*)$ exists if  $\xi = \dfrac{\delta_1 + q_1 }{\epsilon_1- \alpha(\delta_1 + q_1 )}$ and {$ \epsilon_1 > (\delta_1+q_1) $}.
\end{proof}

\subsection{Proof of Lemma \ref{stability_pest_ext_drift}}
\begin{proof}
\label{proof:stability_pest_ext_drift}
Using \eqref{y2_pest_extinction}, the general Jacobian matrix  \eqref{general_jacobian_k1_k3_0_uni} at $(0,y_1^*,0,y_2^*)$ becomes, 

\begin{equation*} 
\hat{J_2} = \begin{bmatrix}
1- \dfrac{y_1^*} {(1 +\alpha \xi)}   & 0  & 0 & 0 \vspace{0.25cm}
  \\ 
\dfrac{\epsilon_1  (1 + (\alpha - 1) \xi) \ y_1^*}{(1+x_1^* +\alpha \xi)^2} & 0 & 0 & 0 \vspace{0.25cm}
\\
0 & 0 & 1  - y_2^* & 0 \vspace{0.25cm}
  \\
0 & q_2 & \epsilon_2 y_2^* &  - \delta_2 \\
\end{bmatrix}
\end{equation*} 
Now the characteristic equation becomes,
\begin{equation*}
 \lambda \left( 1- y_2^* - \lambda \right) (\delta_2+\lambda) \left(1- \dfrac{y_1^*} {1 +\alpha \xi} - \lambda \right)  = 0
 \end{equation*}
 \begin{equation*}
\lambda_1 = 0, \hspace{0.2cm}  \lambda_2 =  1  - y_2^* ,  \hspace{0.2cm}  \lambda_3 = 1  -  \dfrac{y_1^* } {1 +\alpha \xi}, \hspace{0.2cm} \lambda_4  = - \delta_2  
\end{equation*}
Since one of the eigenvalues is zero, this proves the lemma.
\end{proof}

\subsection{Proof of Lemma \ref{lem:co_existence}}
\begin{proof}
\label{proof_co_existence_uni}
From the nullcline $ \dot x_1 = 0$ either we have $x_1^\star = 0$ or 

\begin{equation}
  \hspace{0.2cm }  y_1^\star = \left (1-\frac{x_1^\star}{k_p}\right) \left(1+x_1^\star+\alpha \xi \right) 
  \label{int_uni_eq_1}
\end{equation}
from $\dot y_1 = 0$
\begin{equation*}
   y_1^\star\ \left \{ \epsilon_1 \left( \frac{x_1^\star+\xi}{1+x_1^\star+\alpha \xi}\right) - \delta_1  - q_1 \right \}  = 0
\end{equation*}
which gives either $ y_1^\star = 0$ or,
\begin{equation}
\label{int_uni_eq_2}
    \left( \frac{x_1^\star+\xi}{1+x_1^\star+\alpha \xi}\right)= \dfrac{\delta_1 + q_1}{\epsilon_1}
\end{equation}
simplifying \eqref{int_uni_eq_2} gives the values of $x_1^\star$ in terms of parameters, 

\begin{equation}
    x_1^\star = \frac{(\delta_1+q_1)(1+\alpha\xi)-\epsilon_1\xi  }{\epsilon_1 - (\delta_1+q_1)} 
\end{equation}
from $\dot x_2 = 0$
either $x_2^\star = 0$ or 
   \begin{equation}
     y_2^\star =  \left (1- \frac{x_2^\star}{k_c} \right) \left(1+x_2^\star\right)
     \label{int_uni_eq_3}
   \end{equation}
and with $\Dot{y_2} = 0 $ we have,
   \begin{equation}
      y_2^\star \left \{\epsilon_2 \left ( \frac{x_2^\star}{1+x_2^\star} \right ) - \delta_2 \right \} + q_2 y_1^\star  = 0 
      \end{equation}
We have the value $ y_2^\star$ as,     
\begin{equation}
     y_2^\star = \dfrac{q_2 y_1^\star }{\delta_2 - \epsilon_2 \left ( \dfrac{x_2^\star}{1+x_2^\star} \right )  }
     \label{int_uni_eq_4}
\end{equation}
For $y_2^\star$ to exist, we need the condition on 
$x_2^\star$ as, 
$x_2^\star < \dfrac{\delta_2}{\epsilon_2-\delta_2}$. To find an expression for $x_2^\star$, one can equate \eqref{int_uni_eq_3} and \eqref{int_uni_eq_4} and solve the resulting quadratic equation in $x_2^\star$. 

The quadratic equation in $x_2^\star$ is,

\begin{equation*}
{x_2^\star}^2 + x_2^\star\left( -k_c - \frac{\delta_2}{\epsilon_2-\delta_2}\right) +\dfrac{ k_c(\delta_2  - q_2 y_1^\star )}{\epsilon_2-\delta_2} = 0
\end{equation*}


For a unique positive $x_2^\star$ we should have, 
\begin{equation*}
\left(k_c + \frac{\delta_2}{\epsilon_2-\delta_2}\right)^2 - 4 \left(\dfrac{ k_c(\delta_2  - q_2 y_1^\star )}{\epsilon_2-\delta_2}\right) \geq 0, \    y_1^\star  > \dfrac{\delta_2}{q_2}
\end{equation*}
Thus, the equilibrium point $\hat{E_4}= (x_1^\star,y_1^\star,x_2^\star,y_2^\star)$ exists if  $\xi < \dfrac{\delta_1+q_1}{\epsilon_1 - \alpha (\delta_1+q_1)}$, {$ \epsilon_1 > (\delta_1+q_1) $} and, $y_1^\star  > \dfrac{\delta_2}{q_2}$.  
\end{proof}

\subsection{Proof of Lemma \ref{lem:loc_uni_coexistence}}
 \begin{proof}
 \label{loc_uni_coexistence_proof}
 Using equations \eqref{int_uni_eq_1}, \eqref{int_uni_eq_2} and \eqref{int_uni_eq_4}, the general Jacobian matrix  \eqref{general_jacobian_k1_k3_0_uni} at $(x_1^\star,y_1^\star,x_2^\star,y_2^\star)$ becomes, 
\vspace{0.25cm}
\begin{equation*} 
\hat{J_4} = \begin{bmatrix}
\dfrac{ x_1^\star y_1^\star} {(1+x_1^\star +\alpha \xi)^2}  - \dfrac{ x_1^\star}{k_p} & \dfrac{- x_1^\star}{1+x_1^\star +\alpha \xi} & 0 & 0 \vspace{0.25cm}
  \\ 
\dfrac{\epsilon_1  (1 + (\alpha - 1) \xi) \ y_1^\star}{(1+x_1^\star +\alpha \xi)^2} & 0 & 0 & 0 \vspace{0.25cm}
\\
0 & 0 & \dfrac{x_2^\star y_2^\star}{(1+x_2^*\star^2} - \dfrac{x_2^\star}{k_c} & \dfrac{- x_2^\star}{1+x_2^\star}\vspace{0.25cm}
  \\
0 & q_2 & \dfrac{\epsilon_2 y_2^\star}{(1+x_2^\star)^2} &  - \dfrac{q_2y_1^\star}{y_2^\star} \\
\end{bmatrix}
\end{equation*} 
\vspace{0.25cm}

\begin{equation*}
\text{Let, } A = \dfrac{ x_1^\star y_1^\star} {(1+x_1^\star +\alpha \xi)^2}  - \dfrac{ x_1^\star}{k_p} , \ B = \dfrac{- x_1^\star}{1+x_1^\star +\alpha \xi}, \ C = \dfrac{\epsilon_1  (1 + (\alpha - 1) \xi) \ y_1^\star}{(1+x_1^\star +\alpha \xi)^2},\ D = \dfrac{x_2^\star y_2^\star}{(1+x_2^\star)^2} - \dfrac{x_2^\star}{k_c},
\end{equation*}

\begin{equation*}
\ E = \dfrac{- x_2^\star}{1+x_2^\star}, \ F = \dfrac{\epsilon_2 y_2^\star}{(1+x_2^\star)^2}, \ G =   - \dfrac{q_2y_1^\star}{y_2^\star} \text{  where } B,E<0, F >0 
\end{equation*}
Now the characteristic equation becomes,
\begin{equation*}
\setlength{\jot}{9pt}
\begin{aligned}
& (A-\lambda) (-\lambda)\left[ (D-\lambda)(G-\lambda)-EF \right] -BC \left[(D-\lambda)(G-\lambda)-EF \right] = 0 \\
& \implies \left[\lambda^2 - \lambda A - BC\right]    \left[(D-\lambda)(G-\lambda)-EF \right] = 0\\
& \implies \lambda^4 - \lambda^3 (D + G + A) + \lambda^2 (DG - EF + A(D+G) - BC)\\
& \hspace{0.6cm} - \lambda\{A(DG-EF) -BC(D+G)\}- BC(DG- EF) = 0 \\
&\implies  \lambda^4 + A_3 \lambda^3 + A_2 \lambda^2 + A_1 \lambda +A_0 = 0 \\
& \text{Where,} \ A_3 = - (D + G + A) , A_2 =  (DG - EF + A(D+G) - BC),\\
& A_1 =  - \{A(DG-EF)-BC(D+G)\}, A_0 = - BC(DG- EF)\\
&\text{To satisfy the Routh–Hurwitz stability criteria for all negative roots, we should have the following conditions:}\\
& A_3 > 0,  A_2 > 0 , A_1 > 0, A_0 > 0, \text{ and }  A_3A_2A_1 > A_1^2 + A_3^2A_0.\nonumber
\end{aligned}
\end{equation*}
\begin{equation} 
\setlength{\jot}{10pt}
\begin{aligned}
&\text{for} \ A_0 > 0,  - BC(DG- EF)>0 \implies BC(DG- EF)<0\\
&\text{for} \ A_1 > 0, - \{A(DG-EF)-BC(D+G)\} >0, \implies A(DG-EF)-BC(D+G) < 0, \\
&\text{for} \ A_2 > 0, (DG - EF + A(D+G) - BC) >0 \\
&\text{for} \ A_3 > 0, - (D + G + A) > 0 \implies  (D + G + A)<0, \\
&\text{for }\   A_3A_2A_1 > A_1^2 + A_3^2A_0 \text{ we have,}\\
& \implies (D + G + A) (DG - EF + A(D+G) - BC)\{A(DG-EF)-BC(D+G)\}\\
& \hspace{1cm}>\{A(DG-EF)-BC(D+G)\}^2 - BC(DG- EF)(D + G + A)^2\\
\end{aligned}
\label{R-H for coexistence_uni}
\end{equation}
Thus, the lemma is proved.
 \end{proof}

\subsection{Proof of lemma \ref{coexistence_existence_super}}

\begin{proof}

From the nullcline  $ \Dot{\tilde{x_1}}=0$ we have,

\begin{equation}
     \tilde{y^*_1} =  \left(1-\frac{\tilde{x^*_1}}{k_p}\right) (1+\tilde{x^*_1}+\alpha \xi) 
    \label{x1_y1_eq_1_uni}
\end{equation}

Now, using the nullcline $ \Dot{\tilde{y_1}}=0$,
\begin{equation*}
   \tilde{y^*_1} \left\{ \epsilon_1 \left ( \frac{\tilde{x^*_1}}{1+\tilde{x^*_1}+\alpha \xi} \right)\ - \tilde\delta \right\} = 0
\end{equation*}

$\therefore \ \text{either} \ \tilde{y^*_1}  = 0$ or, 

\begin{equation}
    \epsilon_1 \left ( \frac{\tilde{x^*_1}}{1+\tilde{x^*_1}+\alpha \xi} \right)\ - \tilde\delta = 0
    \label{exp_for_tildex1*}
\end{equation}

Simplifying \eqref{exp_for_tildex1*} gives $\tilde{x^*_1} $ in terms of parameters,

\begin{equation}
 \tilde{x^*_1}= \dfrac{\tilde{\delta} (1+\alpha\xi)}{\epsilon_1- \tilde{\delta}}
 \label{xy_sup_tildex_value}
\end{equation}

Thus, the equilibrium point $ \tilde{E} = (\tilde {x^*_1}, \tilde {y^*_1})$ exists if, $\epsilon_1 - \tilde{\delta} > 0$
    \label{proof_coexistence_existence_super}
\end{proof}

\subsection{Proof of lemma \ref{coexistence_existence_sub}}

\begin{proof}

From the nullcline  $ \Dot{\bar{x_1}}=0$ we have,

\begin{equation}
     \bar{y^*_1} =  \left(1-\frac{\bar{x^*_1}}{k}\right) (1+\bar{x^*_1}+\alpha \xi) 
    \label{bar_x1_y1_eq_1_uni}
\end{equation}

Now, using the nullcline $ \Dot{\bar{y_1}}=0$,
\begin{equation*}
   \bar{y^*_1}  \left\{ \epsilon_1 \left ( \frac{\bar{x^*_1}}{1+\bar{x^*_1}+\alpha \xi} \right)\ - \bar\delta \right\} = 0
\end{equation*}

$\therefore \ \text{either} \ \bar{y^*_1}  = 0$ or, 

\begin{equation}
    \epsilon_1 \left ( \frac{\bar{x^*_1}}{1+\bar{x^*_1}+\alpha \xi} \right)\ - \bar\delta = 0
    \label{exp_for_barx1*}
\end{equation}

Simplifying \eqref{exp_for_barx1*} gives $\bar{x^*_1} $ in terms of parameters,

\begin{equation*}
 \bar{x^*_1}= \dfrac{\bar{\delta} (1+\alpha\xi)}{\epsilon_1- \bar{\delta}}
\end{equation*}

Thus, the equilibrium point  $\bar{E}= (\bar {x^*_1}, \bar {y^*_1}) \  \text{exists if}  \ \epsilon_1-\bar{\delta} > 0 $ 
\label{proof_coexistence_existence_sub}
\end{proof}

\subsection{Proof of Theorem \ref{thm:t1}}

\begin{proof}
\label{thm proof:t1}
Note the prey populations $x_{1},x_{2}$ are always bounded in comparison to the logistic equation. In the no dispersal case $k_{2}=k_{4}=0$, if $\xi > \xi_{critical} =  \dfrac{\delta_1}{\epsilon_1 -\alpha \delta_1}$, then blow-up in infinite time follows from \cite{PWB23}. Now, in the case of dispersal, we add up the predator equations to obtain,

\begin{eqnarray}
   && \Dot{y_1} + \Dot{y_2} \nonumber \\
   &=& \epsilon_1 \left ( \frac{x_1+\xi}{1+x_1+\alpha \xi} \right)\ y_1 - \delta_1 y_1 + (k_4 + \tilde{k_{2}}\xi) \left (y_2-y_1 \right) \nonumber \\
   &+& \epsilon_2 \left ( \frac{x_2}{1+x_2} \right ) \ y_2 - \delta_2 y_2 + k_4 \left (y_1-y_2  \right) \nonumber \\
\end{eqnarray}

In the case that the predator populations blow-up, their growth rate is exponential at best, and the prey populations will go extinct, this follows via the form of the predator-prey equations. Thus assuming blow-up, we can take $x_{1},x_{2} \rightarrow 0$ in the above, to obtain

\begin{equation}
    \Dot{y_1} + \Dot{y_2} \leq \left(\frac{\epsilon_{1} \xi}{1+ \alpha \xi} -\delta_{1} - \tilde{k_{2}} \xi\right)y_{1} + \left( \tilde{k_{2}} \xi - \delta_{2}\right)y_{2}
\end{equation}

Let us proceed by contradiction. Assume $y_{1},y_{2}$ blow-up in infinite time. Then $(y_{1})^{'},(y_{2})^{'}$ also blow up.






This yields,

\begin{equation}
\Dot{y_1} + \Dot{y_2} + \left(\tilde{k_{2}} \xi -\left(\frac{\epsilon_{1} \xi}{1+ \alpha \xi} -\delta_{1} \right)\right)y_{1} < \left( \tilde{k_{2}} \xi - \delta_{2}\right)y_{2}
\end{equation}

Now, from the parametric assumptions we have, $\delta_{1}+\delta_{2} > \frac{\epsilon_{1} \xi}{1+ \alpha \xi} > \delta_{1}$, Thus $\left(\tilde{k_{2}} \xi -\left(\frac{\epsilon_{1} \xi}{1+ \alpha \xi} -\delta_{1} \right)\right) > \left( \tilde{k_{2}} \xi - \delta_{2}\right).$
Thus, taking limits entails a contradiction.
 Note, If $y_{1}$ blows-up, so must $y_{2}$, and vice versa. This is easily seen from the form of the equations. 
 
 \end{proof}
 
\subsection{Proof of Theorem \ref{thm:t1sd}}

\begin{proof}
\label{thm proof:t1sd}
 Now, in the case of symmetric dispersal, we add up the predator equations to obtain,

\begin{eqnarray}
   && \Dot{y_1} + \Dot{y_2} \nonumber \\
   &=& \epsilon_1 \left ( \frac{x_1+\xi}{1+x_1+\alpha \xi} \right)\ y_1 - \delta_1 y_1 + k_2 \left (y_2-y_1 \right) \nonumber \\
   &+& \epsilon_2 \left ( \frac{x_2}{1+x_2} \right ) \ y_2 - \delta_2 y_2 + k_2 \left (y_1-y_2  \right) \nonumber \\
   &=& \epsilon_1 \left ( \frac{x_1+\xi}{1+x_1+\alpha \xi} \right)\ y_1 + \epsilon_2 \left ( \frac{x_2}{1+x_2} \right ) \ y_2 - \delta_1 y_1 - \delta_2 y_2 \nonumber \\
\end{eqnarray}

Note, blow-up in $y_{1}$, that is $\lim_{t \rightarrow T^{*} < \infty} y_{1}(t) \rightarrow \infty <=> x_{1} \rightarrow 0$.
This follows via standard theory, applied to \eqref{eq:patch_model2pp} - in that if $x_{1}$ went to a positive equilibrium, or cycled, $y_{1}$ could not blow-up, via the prey equation for $x_{1}$. Thus we only need to consider the case when $x_{1} \rightarrow 0$ in the estimate above. So, letting $x_{1} \rightarrow 0$, in the above we obtain,

\begin{equation}
\Dot{y_1} + \Dot{y_2} \leq \epsilon_1 \left ( \frac{\xi}{1+\alpha \xi} \right)\ y_1 + \epsilon_2 \left ( \frac{x_2}{1+x_2} \right ) \ y_2
- \delta_1 y_1 - \delta_2 y_2
\end{equation}

setting $U = y_1 +y_2 $, we obtain,

\begin{equation}
\Dot{U} + \left(\delta_1 + \delta_2 - \frac{\xi}{1+\alpha \xi}\right) U 
\leq G(t)
\end{equation}
Where $G = \epsilon_2 \left ( \dfrac{x_2}{1+x_2} \right ) \ y_2 $, since $x_{2}, y_{2}$ are bounded by standard predator-prey theory, so is $G$. Thus the bound on $U$ follows via Gronwall's lemma,

\begin{equation}
U(t) \leq e^{-\left( \delta_1 + \delta_2 - \frac{\xi}{1+\alpha \xi}\right)t}U(0) + e^{-\left( \delta_1 + \delta_2 - \frac{\xi}{1+\alpha \xi}\right)t}\int^{t}_{0} e^{\left( \delta_1 + \delta_2 - \frac{\xi}{1+\alpha \xi}\right)s}G(s)ds, \ \forall t > 0.
\end{equation}

The above then yields the bound on $y_{1}$ using positivity. This proves the theorem.
\end{proof}

\subsection{Proof of Lemma \ref{lem:E1_existence}}
\begin{proof}
From the nullcline  $ \Dot{y_2}=0$ we have,

 \begin{equation} 
     y_2^* = \frac{k_4}{\delta_2 + k_4} \ y_1^*
     \label{x2_0_eq_1}
 \end{equation}
 Using the nullcline $ \Dot{x_1}=0$,
  
\begin{equation*}
    x_1^*\ \left \{ \left (1-\frac{x_1^*}{k} \right)-\frac{ y_1^*}{1+x_1^*+\alpha \xi} \right \} = 0
\end{equation*}
either $ x_1^* = 0$ or,
\begin{equation}
  y_1^* =  \left(1-\frac{x_1^*}{k}\right) (1+x_1^*+\alpha \xi) 
  \label{x2_0_eq_2}
\end{equation}

Now, using the nullcline  $ \Dot{y_1}=0$, and substituting the value of $y_2^*$ from \eqref{x2_0_eq_1} we have,

\begin{equation*}
   y_1^*\ \left \{ \epsilon_1 \left( \frac{x_1^*+\xi}{1+x_1^*+\alpha \xi} \right ) - \delta_1  - k_2 +  \frac{ k_2 k_4}{\delta_2 + k_4}  \right \} = 0
\end{equation*}
$\therefore$ either $y_1^* = 0$ or,

\begin{equation}
   \epsilon_1 \left ( \frac{x_1^* +\xi}{1+x_1^* +\alpha \xi} \right) - \delta_1  - k_2 +  \frac{ k_2 k_4}{\delta_2 + k_4} = 0
    \label{x2_0_eq_3}
\end{equation}
simplifying \eqref{x2_0_eq_3} gives  the expression of $x_1^*$ in terms of parameters, 
 
 \begin{equation}
     x_1^* = \frac{Q \ \left (1+\alpha \xi \right ) - \epsilon_1 \xi }{\epsilon_1 - Q} 
      \label{x2_0_eq_4}
 \end{equation}
where $Q$ is defined by,

 \begin{equation*}
    Q = \frac{\delta_1 \delta_2  + \delta_1 k_4  + k_2 \delta_2  }{\delta_2 + k_4 } 
 \end{equation*}
Thus, the equilibrium point $E_1 = (x_1^*,y_1^*,0,y_2^*)$ exists if  $\xi < \dfrac{Q}{\epsilon_1 - \alpha Q}$ and {$ \epsilon_1 > Q $}.
     \label{proof_E1_existence}
\end{proof}

\subsection{Proof of Lemma \ref{lem:E1_stability} }
\begin{proof}
\label{lem proof:E1_stability}
Using equations \eqref{x2_0_eq_1} and \eqref{x2_0_eq_2}, the general Jacobian matrix  \eqref{general_jacobian_k1_k3_0} at $(x_1^*,y_1^*,0,y_2^*)$ becomes,
\begin{equation*} 
J_1 = \begin{bmatrix}
\dfrac{ x_1^* y_1^*} {(1+x_1^* +\alpha \xi)^2}  - \dfrac{ x_1^*}{k} & \dfrac{- x_1^*}{1+x_1^* +\alpha \xi} & 0 & 0 \vspace{0.25cm}
  \\ 
\dfrac{\epsilon_1  (1 + (\alpha - 1) \xi) \ y_1^*}{(1+x_1^* +\alpha \xi)^2} &  -\dfrac{ k_2 k_4}{\delta_2 + k_4} & 0 & k_2 \vspace{0.25cm}
\\
0 & 0 & 1  - y_2^* & 0 \vspace{0.25cm}
  \\
0 & k_4 & \epsilon_2 y_2^* &  - \delta_2 - k_4
\end{bmatrix}
\end{equation*} 
\begin{equation*}
\setlength{\jot}{10pt}
\begin{aligned}
&\text{The characteristic polynomial comes out as:}\\
&=\left(1- y_2^* - \lambda \right) \left[\left( \dfrac{ x_1^* y_1^*} {(1+x_1^* +\alpha \xi)^2}  - \dfrac{ x_1^*}{k} - \lambda\right) \left(\lambda^2 +  \lambda \left (\dfrac{k_2 y_2^*}{y_1^*} + \dfrac{k_4 y_1^*}{y_2^*} \right )\right) \right] \\ 
&+ \left(1- y_2^* - \lambda \right) \left[\left (\dfrac{ x_1^*}{1+x_1^* +\alpha \xi}       \right)\left( \dfrac{-k_4 y_1^*}{y_2^*} - \lambda \right) \left(  \dfrac{\epsilon_1  (1 + (\alpha - 1) \xi) \ y_1^*}{(1+x_1^* +\alpha \xi)^2} \right)  \right]\\ 
& \text{Let, }     D= 1-y_2^*, \ A = \dfrac{ x_1^* y_1^*} {(1+x_1^* +\alpha \xi)^2}  - \dfrac{ x_1^*}{k} , \ B = \dfrac{ x_1^*}{1+x_1^* +\alpha \xi}, \ C = \dfrac{\epsilon_1  (1 + (\alpha - 1) \xi) \ y_1^*}{(1+x_1^* +\alpha \xi)^2},\\
&\ P = \dfrac{k_2 y_2^*}{y_1^*} + \dfrac{k_4 y_1^*}{y_2^*},\ E = \dfrac{k_4 y_1^*}{y_2^*} \text{  where } B, P, E > 0 \\
&\text{Now the characteristic equation becomes,} \\
&\implies\left(D- \lambda \right)   \{ \left(A- \lambda \right)   \left( \lambda^2 + P \lambda \right) + BC   \left(-E - \lambda \right) \}= 0\\
&\implies \lambda^4 - \lambda^3\left( A - P + D \right) +  \lambda^2 \left( D(A - P) - (AP-BC)  \right) + \lambda \left( D(AP - BC) + BCE  \right) - BCDE = 0 \nonumber \\
&\implies \lambda^4 +A_3 \lambda^3 + A_2 \lambda^2 + A_1 \lambda + A_0  = 0  \\
&\text{where, }A_3 = - \left( A - P + D \right), \ A_2 =  D(A - P) - (AP-BC),\  A_1 =  D(AP - BC) +BCE ,  \ A_0 = -BCDE\\
&\text{To satisfy the Routh–Hurwitz stability criteria for all negative roots, we should have the following conditions:}\\
& A_3 > 0,  A_2 > 0 , A_1 > 0, A_0 > 0, \text{ and }  A_3A_2A_1 > A_1^2 + A_3^2A_0.\nonumber
\end{aligned}
\end{equation*}
\begin{equation} 
\setlength{\jot}{10pt}
\begin{aligned}
&\text{for} A_0 > 0 , -BCDE > 0 \implies BCDE<0 
\text{ as }  B,E > 0 , C>0 \text{ (due to feasibility condition)} \implies D<0\\
&\text{for }  A_1 > 0, D(AP - BC) +BCE >0\\
&\text{for }  A_2 >0, D(A - P) - (AP-BC) >0 \implies D(A-P) > AP-BC\\
&\text{for }  A_3 > 0,  - \left( A - P + D \right) >0 \implies  \left( A - P + D \right) <0\\
&\text{for }  A_3A_2A_1 > A_1^2 + A_3^2A_0 \text{ we have,}\\
&\implies- \left( A - P + D \right) \left( D(A - P) - (AP-BC) \right) \left( D(AP - BC) +BCE \right) \\
 & \hspace{1cm} >\left( D(AP - BC) +BCE  \right)^2 - \left( A - P + D \right)^2 BCDE\\
&\implies(A-P)(AP-BC)(D^3 + D^2(A-P)-D(AP-BC)-BCE)  \\
 & \hspace{1cm} < BCE (D^3 + D^2(A-P)-D(AP-BC)- BCE)\\
&\because \ (D^3 + D^2(A-P)-D(AP-BC)-BCE) < 0\implies (A-P)(AP-BC) > BCE
\end{aligned}
\label{R-H for x2_0}
\end{equation}

Thus, the lemma is proved. 
\end{proof}
\subsection{Proof of Lemma \ref{lem:E2_existence}}
\begin{proof}
Using the equation $\Dot{y_1} = 0 $ then,
\begin{equation*}
 y_1^* \left \{\epsilon_1 \left ( \frac{\xi}{1+\alpha \xi} \right)\  - \delta_1  - k_2 \right \}  + k_2 \ y_2^* = 0 ,
\end{equation*}

\begin{equation}
 y_2^* = \frac{-y_1^*}{k_2} \left \{\epsilon_1 \left ( \frac{\xi}{1+\alpha \xi} \right)\  - \delta_1 - k_2 \right \} 
 \label{x1_0_eq_1}
\end{equation}
    from $\Dot{x_2} = 0 $,
   \begin{equation*}
        x_2^* \left (1- \frac{x_2^*}{k} \right)-\frac{x_2^*  y_2^*}{1+x_2^*} = 0      
   \end{equation*}
   either $x_2^* = 0$ or 
   \begin{equation}
     y_2^*  =  \left (1- \frac{x_2^*}{k} \right) \left(1+x_2^*\right) 
     \label{x1_0_eq_2}
   \end{equation}
   and from $\Dot{y_2} = 0 $ then,
   \begin{equation}
    y_1^* = \frac{- y_2^*}{k_4}\left \{\epsilon_2 \left ( \frac{x_2^*}{1+x_2^*} \right ) - \delta_2 - k_4    \right \}
    \label{x1_0_eq_3}
\end{equation}
substituting the value of $y_1^*$ from \eqref{x1_0_eq_1},
\begin{equation*}
    y_2^* \left \{ k_2 - \frac{1}{k_4} \left(  \frac{ \epsilon_2 x_2^*}{1+x_2^*}   - \delta_2 - k_4    \right) \left(   \frac{\epsilon_1 \xi}{1+\alpha \xi} \  - \delta_1 - k_2   \right)\right \} = 0 
\end{equation*}
either $y_2^* = 0 $ or \begin{equation*}
    \left\{ k_2 - \dfrac{1}{k_4} \left(  \dfrac{ \epsilon_2 x_2^*}{1+x_2^*}   - \delta_2 - k_4    \right) \left(   \dfrac{\epsilon_1 \xi}{1+\alpha \xi} \  - \delta_1 - k_2   \right)\right \}= 0 
\end{equation*}
which gives the value of $x_2^*$ in terms of parameters only,
\begin{equation}
    x_2^* = \dfrac{\tilde{Q}}{\epsilon_2 - \tilde{Q}} 
    \label{x1_0_eq_4}
\end{equation}
where $ \tilde{Q}$ is define by,
\begin{equation*}
 \tilde{Q} =    \delta_2 + k_4 + \dfrac{k_2 \ k_4}{\left(   \dfrac{\epsilon_1 \xi}{1+\alpha \xi} \  - \delta_1 - k_2   \right) }
\end{equation*}

Thus, the equilibrium point $E_2 = (0,y_1^*,x_2^*,y_2^*)$ exists if $\epsilon_2 - \tilde{Q} > 0 \ \& \ \tilde{Q} > 0 $.
\label{proof_E2_existence}
\end{proof}
\subsection{Proof of Lemma \ref{lem:E2_stability}}
 \begin{proof}
  \label{lem proof:E2_stability}   
 
Using equations \eqref{x1_0_eq_1}, \eqref{x1_0_eq_2} and \eqref{x1_0_eq_3},  the general Jacobian matrix  \eqref{general_jacobian_k1_k3_0} at  $(0,y_1^*,x_2^*,y_2^*)$ becomes, 
 
\begin{equation*} 
J_2 = \begin{bmatrix}
1 -   \dfrac{y_1^*}   {\left(1+\alpha \xi\right)} & 0 & 0 & 0 \vspace{0.25cm}
  \\ 
\dfrac{\epsilon_1  \left(1 + \left(\alpha - 1\right) \xi\right) \ y_1^*}{(1+\alpha \xi)^2} &   \dfrac{- k_2 y_2^*}{y_1^*}& 0 & k_2 \vspace{0.25cm}\\
0 & 0 & \dfrac{x_2^* y_2^*}{\left(1+x_2^*\right)^2}    - \dfrac{x_2^*}{k} & 
  \dfrac{- x_2^*}{1+x_2^*} \vspace{0.25cm}\\
  0 & k_4 & \dfrac{\epsilon_2 y_2^*}{(1+x_2^*)^2} & \dfrac{-k_4 y_1^*}{y_2^*}
\end{bmatrix}
 \end{equation*}
 \begin{equation*}
\setlength{\jot}{10pt}
\begin{aligned}
&\text{The characteristic equation is given by:}\\
& \left( 1 - \dfrac{y_1^*}{1+\alpha \xi} - \lambda \right)\left[\left( \dfrac{x_2^* y_2^*}{\left(1+x_2^*\right)^2}    - \dfrac{x_2^*}{k} - \lambda\right) \left(\lambda^2 +  \lambda \left (\dfrac{k_2 y_2^*}{y_1^*} + \dfrac{k_4 y_1^*}{y_2^*} \right )\right) + \left(\dfrac{-k_2 y_2^*}{y_1^*} - \lambda\right) \left(\dfrac{\epsilon_2 x_2^* y_2^*}{(1+x_2^*)^3}   \right) \right] = 0\\
& \text{Let, }D =  \left( 1 - \dfrac{y_1^*}{1+\alpha \xi}\right),\hspace{0.2cm} A =\dfrac{x_2^* y_2^*}{\left(1+x_2^*\right)^2}    - \dfrac{x_2^*}{k}, \hspace{0.25cm}
C = \dfrac{\epsilon_2 x_2^* y_2^*}{(1+x_2^*)^3} , \hspace{0.25cm} P = \dfrac{k_2 y_2^*}{y_1^*} + \dfrac{k_4 y_1^*}{y_2^*},  \hspace{0.25cm} E = \dfrac{k_2 y_2^*}{y_1^*}\\
& \text{  where } C, P, E > 0 \\
&\text{Now the characteristic equation becomes,} \ (D- \lambda) \left[\left( A - \lambda\right) \left( \lambda^2 +  P \lambda \right) + \left(- E - \lambda\right) C \right] = 0 \\
& \implies \lambda^{4}    - \lambda^{3}\left(A-P+D \right)  + \lambda^{2}\left( D(A - P) - (AP-C)  \right) + \lambda \left( D(AP - C) + CE  \right) - CDE = 0 \\
&\implies \lambda^4 +A_3 \lambda^3 + A_2 \lambda^2 + A_1 \lambda + A_0  = 0 \\
& \text{where,} \ A_3 = - \left( A - P + D \right), A_2 = \left(D(A - P) - (AP-C)\right) , A_1 = \left( D(AP - C) +CE  \right), A_0 = -CDE\\
&\text{To satisfy the Routh–Hurwitz stability criteria for all negative roots, we should have the following conditions:}\\
& A_3 > 0,  A_2 > 0 , A_1 > 0, A_0 > 0, \text{ and }  A_3A_2A_1 > A_1^2 + A_3^2A_0.\nonumber
\end{aligned}
\end{equation*}
\begin{equation} \label{RH for x1_0}
\setlength{\jot}{10pt}
     \begin{aligned}
&\text{  for } A_0 > 0,  -CDE > 0 \implies CDE<0 \text{ as } E,C > 0 \implies D<0  \\
&\text{  for } A_1 > 0, D(AP - C) + CE >0 \\ &\text{  for }  A_2 >0 , D(A - P) - (AP-C) >0 \implies D(A-P) > AP-C\\
&\text{ for } A_3 > 0,
 - \left( A - P + D \right) >0 \implies  \left( A - P + D \right) <0 \\
&\text{ for } A_3A_2A_1 > A_1^2 + A_3^2A_0 \text{ we have},\\
&- \left( A - P + D \right) \left( D(A - P) - (AP-C) \right) \left( D(AP - C) +CE \right) > \left( D(AP - C) +CE  \right)^2 - \left( A - P + D \right)^2 CDE\\
&\implies (A-P)(AP-C)(D^3 + D^2(A-P)-D(AP-C)-CE) < CE (D^3 + D^2(A-P)-D(AP-C)-CE)\\
&\because  (D^3 + D^2(A-P)-D(AP-C)-CE) < 0\implies(A-P)(AP-C) > CE
    \end{aligned}
 \end{equation}
 Thus, the lemma is proved. 
\end{proof}
\subsection{Proof of Lemma \ref{lem:E3_existence}}
\begin{proof}
From the nullcline $ \Dot{y_2}=0$ we have,

 \begin{equation}
     y_2^* = \dfrac{k_4}{\delta_2 + k_4} \ y_1^*
     \label{x1_x2_0_eq1}
 \end{equation}
and from $\Dot{y_1} = 0 $,
\begin{equation*}
 y_1^* \left\{\epsilon_1 \left ( \frac{\xi}{1+\alpha \xi} \right)\  - \delta_1  - k_2 \right \} + k_2 \ y_2^* = 0 ,
\end{equation*}
Using the value of $y_2^*$ from \eqref{x1_x2_0_eq1},

\begin{equation*}
 y_1^* \left\{\epsilon_1 \left ( \frac{\xi}{1+\alpha \xi} \right)\  - \delta_1  - k_2 + \dfrac{k_2 k_4}{\delta_2 + k_4} \right\}   = 0 ,
\end{equation*}
either $y_1^* = 0 $ or,
\begin{equation}
\left\{\epsilon_1 \left ( \frac{\xi}{1+\alpha \xi} \right)\  - \delta_1  - k_2 + \dfrac{k_2 k_4}{\delta_2 + k_4} \right\}   = 0
\label{x1_x2_0_eq2}
\end{equation}
Simplifying \eqref{x1_x2_0_eq2} for $\xi$ gives,

\begin{equation*}
    \xi = \dfrac{{Q}}{\epsilon_1 - \alpha {Q}} \text{ given,} \ \epsilon_1 - \alpha {Q} > 0
\end{equation*}
where $ {Q}$ is defined by:

\begin{equation*}
    Q = \frac{\delta_1 \delta_2  + \delta_1 k_4  + k_2 \delta_2  }{\delta_2 + k_4 } 
\end{equation*}
Therefore, the equilibrium point $ E_3 =(0,y_1^*,0,y_2^*)$ exists if $\xi = \dfrac{{Q}}{\epsilon_1 - \alpha {Q}}$ and without additional food  $ E_3 $ doesn't exists. 
\label{proof_E3_existence}
\end{proof}
\subsection{Proof of Lemma \ref{lem:E3_stability}}
\begin{proof}
Using equation \eqref{x1_x2_0_eq2}, the general Jacobian matrix  \eqref{general_jacobian_k1_k3_0} at $(0,y_1^*,0,y_2^*)$  we have,
\begin{equation*} 
J_3 = \begin{bmatrix}
1  -  \dfrac{y_1^* } {(1 +\alpha \xi)} & 0 & 0 & 0 \vspace{0.25cm}
  \\ 
\dfrac{\epsilon_1  (1 + (\alpha - 1) \xi) \ y_1^*}{(1 +\alpha \xi)^2} &   \dfrac{- k_2 k_4}{\delta_2 + k_4} & 0 & k_2 \vspace{0.25cm}
\\
0 & 0 & 1  - y_2^* & 0 \vspace{0.25cm}
  \\
  0 & k_4 & \epsilon_2 y_2^* &  - \delta_2 - k_4
   \end{bmatrix}
 \end{equation*}
The characteristic equation is given by:
\begin{equation*}
\setlength{\jot}{10pt}
\begin{aligned}
& \lambda  \left(1  -  \dfrac{y_1^* } {1 +\alpha \xi} - \lambda  \right) \left(1  - y_2^* - \lambda \right) 
 \left( \lambda  + \left(\dfrac{ k_2 k_4}{\delta_2 + k_4} + \delta_2 + k_4 \right)\right) = 0\\
 & \lambda_1 = 0, \hspace{0.2cm}  \lambda_2 =  1  - y_2^* ,  \hspace{0.2cm}  \lambda_3 = 1  -  \dfrac{y_1^* } {1 +\alpha \xi}, \hspace{0.2cm} \lambda_4  = - \left(\dfrac{ k_2 k_4}{\delta_2 + k_4} + \delta_2 + k_4\right)\\
  \end{aligned}
\end{equation*}
Since one of the eigenvalues is zero, this proves the lemma. 
\label{lem proof:E3_stability}
\end{proof}

\subsection{Proof of Lemma \ref{lem:E4_existence}}
\begin{proof}
From the nullcline $ \dot x_1 = 0$ either we have $x_1^\star = 0$ or 

\begin{equation}
  \hspace{0.2cm }  y_1^\star = \left (1-\frac{x_1^\star}{k}\right) \left(1+x_1^\star+\alpha \xi \right) 
  \label{int_eq_1}
\end{equation}
from $\dot y_1 = 0$
\begin{equation}
   y_1^\star\ \left \{ \epsilon_1 \left( \frac{x_1^\star+\xi}{1+x_1^\star+\alpha \xi}\right) - \delta_1  - k_2 \right \} + k_2 y_2^\star = 0
   \label{int_eq_2}
\end{equation}
from $\dot x_2 = 0$
either $x_2^\star = 0$ or 
   \begin{equation}
     y_2^\star =  \left (1- \frac{x_2^\star}{k} \right) \left(1+x_2^\star\right)
     \label{int_eq_3}
   \end{equation}
and with $\Dot{y_2} = 0 $ we have,
   \begin{equation}
      y_2^\star \left\{\epsilon_2 \left ( \frac{x_2^\star}{1+x_2^\star} \right ) - \delta_2 - k_4    \right\} + k_4 y_1^\star  = 0
      \label{int_eq_4}
\end{equation}
From $\eqref{int_eq_1}$ and $\eqref{int_eq_2}$ we have,
\begin{equation}
    y_2^\star = \frac{-\left(1-\dfrac{x_1^\star}{k}\right)}{k_2} \left \{ \epsilon_1 ( x_1^\star+\xi)- (\delta_1  + k_2 ) (1+x_1^\star+\alpha \xi)\right \} 
    \label{int_eq_5}
\end{equation}
For $y_2^\star$ to exist, we have the condition on $x_1^\star$, 
\begin{equation*}
     \epsilon_1 ( x_1^\star+\xi) < (\delta_1  + k_2 ) (1+x_1^\star+\alpha \xi)
\end{equation*}
if $\epsilon_1 > \delta_1 + k_2$ then,
\begin{equation*}
    x_1^\star < \frac{\delta_1 + k_2}{\epsilon_1 - (\delta_1 + k_2)} - \alpha \xi
\end{equation*}
From $\eqref{int_eq_3}$ and $\eqref{int_eq_4}$ we have,
\begin{equation}
     y_1^\star = \frac{-\left(1-\dfrac{x_2^\star}{k}\right)}{k_4} \left \{ \epsilon_2 x_2^\star - (\delta_2  + k_4 ) (1+x_2^\star)\right \}  
     \label{int_eq_6}
\end{equation}
For $y_1^\star$ to exist, we have the condition on $x_2^\star$,

\begin{equation*}
     \epsilon_2 x_2^\star < (\delta_2  + k_4 ) (1+x_2^\star)
     \label{int_eq_7}
\end{equation*}
if $\epsilon_2 > \delta_2 + k_4$ then,
\begin{equation*}
    x_2^\star < \frac{\delta_2 + k_4}{\epsilon_2 - (\delta_2 + k_4)} 
    \label{int_eq_8}
\end{equation*}
Using $\eqref{int_eq_1}$ and $\eqref{int_eq_6}$ after eliminating $y_1^\star$ we have, 

\begin{equation}
   k_4  \left (1-\frac{x_1^\star}{k}  \right) \left(1+x_1^\star+\alpha \xi \right)  = -(1-\dfrac{x_2^\star}{k}) \left \{ \epsilon_2 x_2^\star - (\delta_2  + k_4 ) (1+x_2^\star)\right \}  
   \label{x1_x2_implicit_eq_1}
\end{equation}
Using $\eqref{int_eq_3}$ and $\eqref{int_eq_5}$ after eliminating $y_2^\star$ we have,

\begin{equation}
    k_2 \left (1- \frac{x_2^\star}{k} \right) \left(1+x_2^\star\right) = -\left(1-\dfrac{x_1^\star}{k}\right) \left \{ \epsilon_1 ( x_1^\star+\xi)- (\delta_1  + k_2 ) (1+x_1^\star+\alpha \xi)\right \} 
    \label{x1_x2_implicit_eq_2}
\end{equation}
To find expression for $x_2^\star$, we equate $\eqref{x1_x2_implicit_eq_1}$ and $\eqref{x1_x2_implicit_eq_2}$ then, 

\begin{equation}
   k_2 k_4 \left(1+x_2^\star\right) \left(1+x_1^\star+\alpha \xi \right) =  \left \{ \epsilon_2 x_2^\star - (\delta_2  + k_4 ) (1+x_2^\star)\right \} \left \{ \epsilon_1 ( x_1^\star+\xi)- (\delta_1  + k_2 ) (1+x_1^\star+\alpha \xi)\right \} 
   \label{equating_x1_x2_conditions}
\end{equation}

\begin{equation*}
\implies
 \dfrac{k_2 k_4  \left(1+x_1^\star+\alpha \xi \right)} {\left \{ \epsilon_1 ( x_1^\star+\xi)- (\delta_1  + k_2 ) (1+x_1^\star+\alpha \xi)\right \} } = \dfrac{\left \{ \epsilon_2 x_2^\star - (\delta_2  + k_4 ) (1+x_2^\star)\right \} }{ \left(1+x_2^\star\right)}
\end{equation*}
\

\begin{equation*}
\implies
 \dfrac{(\epsilon_2 - (\delta_2 +k_4)) \left \{ \epsilon_1 ( x_1^\star+\xi)- (\delta_1  + k_2 ) (1+x_1^\star+\alpha \xi)\right \} - k_2 k_4  \left(1+x_1^\star+\alpha \xi \right) }{ \epsilon_2 \left \{ \epsilon_1 ( x_1^\star+\xi)- (\delta_1  + k_2 ) (1+x_1^\star+\alpha \xi)\right \}}   = \dfrac{1}{1+x_2^\star}
\end{equation*}
\

\begin{equation*}
    x_2^\star = f(x_1^\star)=\dfrac{\epsilon_1(\delta_2 +k_4)(x_1^\star + \xi) - (1+x_1^\star+\alpha \xi)(\delta_1 \delta_2 + k_2 \delta_2 + k_4 \delta_1)}{(\epsilon_2 - (\delta_2 +k_4)) \left \{ \epsilon_1 ( x_1^\star+\xi)- (\delta_1  + k_2 ) (1+x_1^\star+\alpha \xi)\right \} - k_2 k_4  \left(1+x_1^\star+\alpha \xi \right)}
\end{equation*}

\

The expression for $x_1^\star$ can be found by solving the four nullclines, \eqref{int_eq_1}, \eqref{int_eq_2}, \eqref{int_eq_3} and, \eqref{int_eq_4}.
\label{proof_E4_existence}
\end{proof}

\subsection{Proof of Lemma \ref{lem:E4_stability}}
\begin{proof}
\label{lem proof:E4_stability}
 The general Jacobian matrix \eqref{general_jacobian_k1_k3_0} at $(x_1^\star,y_1^\star,x_2^\star, y_2^\star)$ becomes,
 \begin{equation*} 
J_4 = \begin{bmatrix}
J_{11} & J_{12} & 0 & 0 
  \\ 
J_{21}& J_{22} & 0 & J_{24}
\\
0 & 0 & J_{33}& 
 J_{34}
  \\
  0 & J_{42}  & J_{43}  & J_{44}
 \end{bmatrix}
 \end{equation*}
\begin{equation*}
\setlength{\jot}{10pt}
\begin{aligned}
 &  \text{where, } J_{11} = 1 - \dfrac{2 x_1^\star}{k} + \dfrac{  \left(1-\dfrac{x_2^\star}{k}\right) \left(1+ \alpha \xi \right)\left \{ \epsilon_2 x_2^\star - (\delta_2  + k_4 ) (1+x_2^\star)\right \} } { k_4 \left(1+x_1^\star+\alpha \xi\right)^2},   \ J_{12} =  \dfrac{- x_1^\star}{1+x_1^\star +\alpha \xi}\\
 & J_{21} = \dfrac{ - \epsilon_1 \left(1-\dfrac{x_2^\star}{k}\right)  \left(1 + \left(\alpha - 1\right) \xi\right) \ \left \{ \epsilon_2 x_2^\star - (\delta_2  + k_4 ) (1+x_2^\star)\right \} }{k_4 (1+x_1^\star+\alpha \xi)^2}, \ J_{22}= \epsilon_1  \dfrac{\left(x_1^\star + \xi\right)}{1+x_1^\star +\alpha \xi} - \delta_1 - k_2\\
 &J_{24} = k_2, \ J_{33} = 1 - \dfrac{2 x_2^\star}{k} + \dfrac{ \left(1-\dfrac{x_1^\star}{k}\right) 
  \left \{ \epsilon_1 ( x_1^\star +\xi)- (\delta_1  + k_2 ) (1+x_1^\star +\alpha \xi)\right \}  }{k_2 (1+x_2^\star)^2}, \ J_{34} = \dfrac{- x_2^\star}{1+x_2^\star},\\
  &J_{42} = k_4, \ J_{43} =  \dfrac{- \epsilon_2 \left(1-\dfrac{x_1^\star}{k}\right) 
  \left \{ \epsilon_1 ( x_1^\star +\xi)- (\delta_1  + k_2 ) (1+x_1^\star +\alpha \xi)\right \}}{k_2 (1+x_2^\star)^2}  ,\  J_{44} = \dfrac{\epsilon_2 x_2^\star}{1+x_2^\star} - \delta_2 - k_4\\
\end{aligned}
\end{equation*}

\vspace{0.3cm}

The characteristic equation is given by,
$  \lambda^4 + A \lambda^3 + B\lambda^2+ C\lambda +D = 0$ where,

\begin{equation*}
\setlength{\jot}{10pt}
\begin{aligned}
& A= -(J_{11}+J_{22}+J_{33}+J_{44}) ,\hspace{0.05cm} B= J_{11} J_{22} + J_{33}J_{44} + J_{12} J_{21} - (J_{34} J_{43} + J_{24} J_{42}) + (J_{11}+J_{22})(J_{33}+J_{44})  \\
& C = (J_{11} + J_{33})J_{24} J_{42}  -(J_{33}+J_{44})(J_{11} J_{22} +J_{12} J_{21} )- (J_{11}+J_{22})( J_{33} J_{44}-J_{34} J_{43})\\
& D = (J_{11}J_{22} + J_{12} J_{21})( J_{33} J_{44}-J_{34} J_{43}) - J_{11} J_{24} J_{33} J_{42} 
\end{aligned}
\end{equation*}
From the Routh–Hurwitz stability criteria, we should have the following conditions:
\begin{equation}
  A >0, B>0, C>0, D>0, \And  ABC> C^2 + A^2 D
\label{coexis_stability_conditions}
\end{equation}
Thus, the lemma is proved. 
\end{proof}


\subsection{Chaos Simulations} \label{chaos}
The change in system dynamics is studied with a change in parameter $\xi$. We used the parameter set $k=70, \alpha=0.2, \epsilon_1=\epsilon_2=0.4, \delta_1=0.2, \delta_2=0.16,  k_2=0.3, k_4=0.2 \ \text{with I.C.} \ = [10,10,10,10].$  Using this parameter set and taking $\xi=0.7$, we see that $x_2$ goes extinct while all other populations show stable limit cycle oscillations as seen in Figure (\ref{fig:no_chaos_for_high_zi}). When $\xi$ is decreased to $\xi=0.41212$, we see chaotic dynamics in $x_2,y_2,y_1$ populations with aperiodic cycles as seen in time series in Figure (\ref{fig:chaos_time_series}). 
\vspace{0.20cm}

    
The chaotic dynamics are further investigated by studying the maximum Lyapunov exponent with respect to the parameter $\xi$ as seen in Figure (\ref{fig:lyapunov_ex_with_parameter}), and it can be analyzed that we get the maximum Lyapunov exponent as positive for different range of $\xi$ values. Thus, for the described parameter set, we can have chaos in windows of $\xi$ parameter values, one of which is $\xi=0.41212$. Hence, we conclude the presence of chaotic dynamics in the system \eqref{eq:patch_model2}.

\begin{figure}
\begin{subfigure}{.48\textwidth}
   \centering
\includegraphics[width = 9cm]{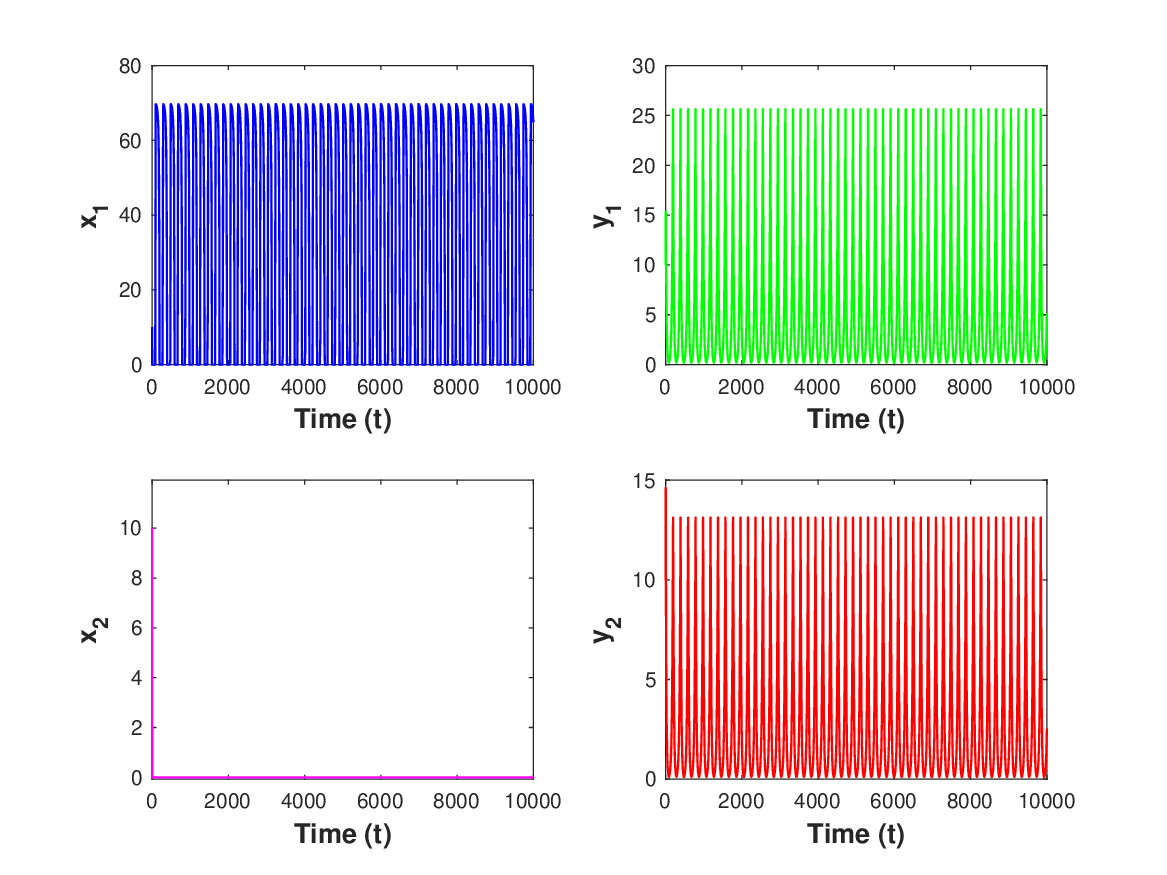}
\subcaption{ $\xi=0.7 $}
\label{fig:no_chaos_for_high_zi}
\end{subfigure}
\begin{subfigure}{.48\textwidth}
\centering
\includegraphics[width = 9cm]{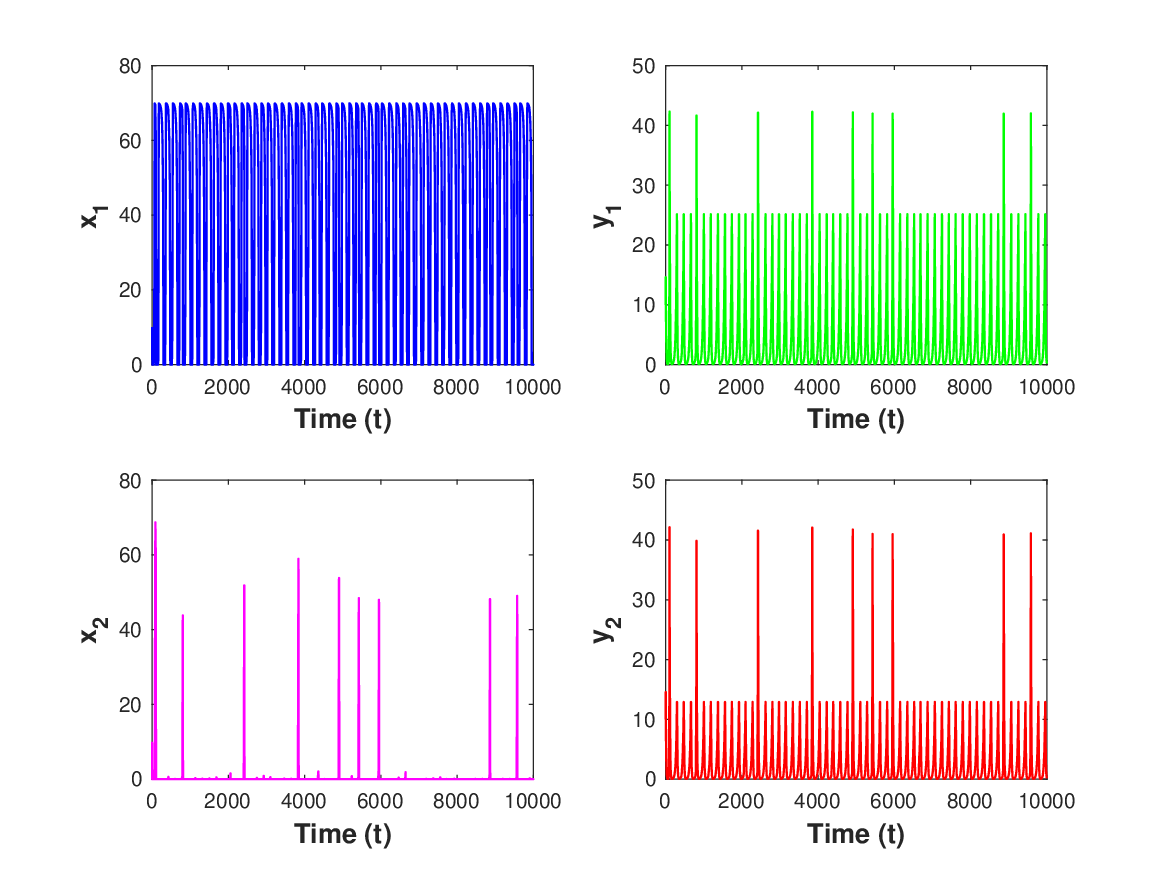}
\subcaption{$\xi=0.41212$}
\label{fig:chaos_time_series}
\end{subfigure}
\caption{ In Figure \eqref{fig:no_chaos_for_high_zi}, we observe the extinction of the pest population in $\Omega_2$, while all other populations display stable limit cycles. Conversely, in Figure \eqref{fig:chaos_time_series}, chaotic dynamics can be seen in the $x_2$ population. 
}
\end{figure}


\begin{figure}[H]
{\includegraphics[width = 9cm,height = 6.65cm]{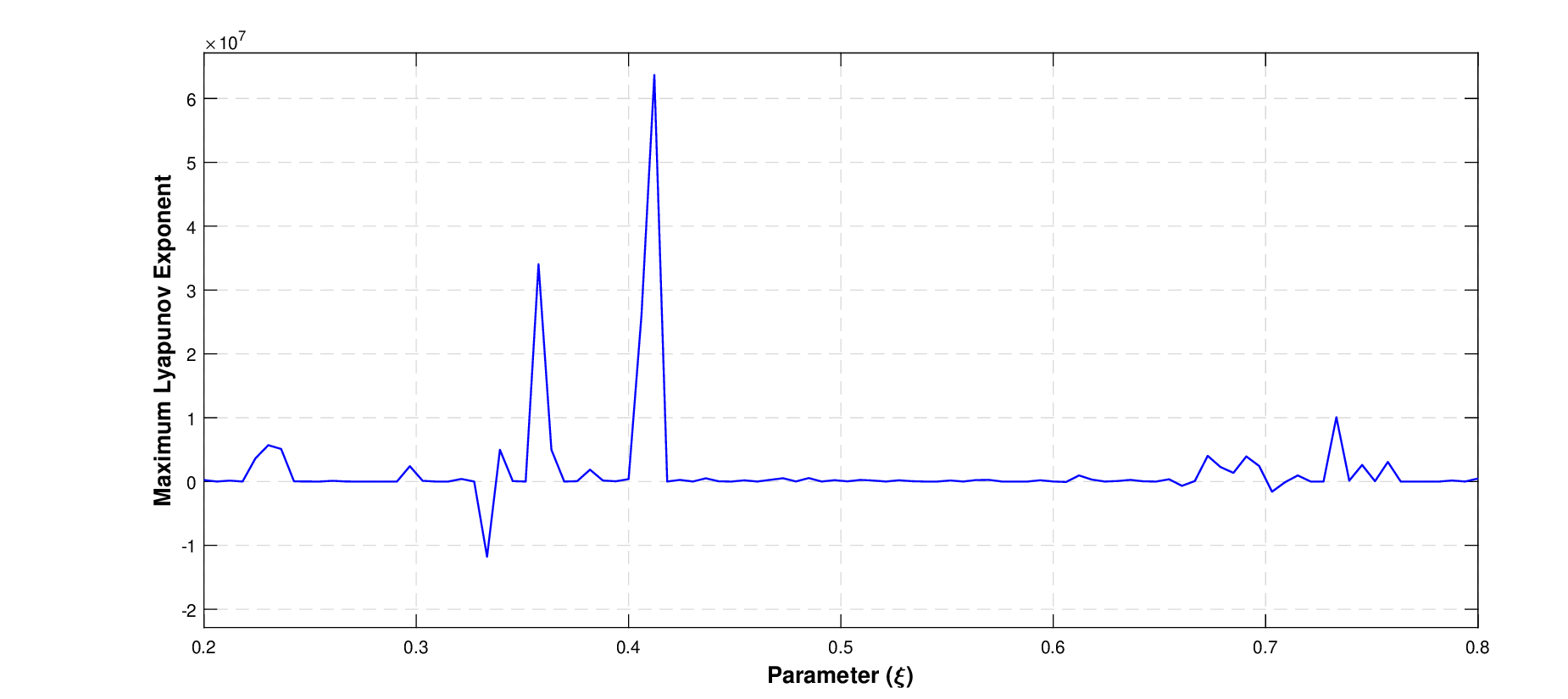}}
\caption{
The graph shows the maximum Lyapunov exponent with respect to the parameter $\xi$. The graph shows chaotic dynamics at some $\xi$ values where the maximum Lyapunov is positive.
}
\label{fig:lyapunov_ex_with_parameter}
\end{figure}







\begin{figure}
\centering
\includegraphics[width = 10cm,height=7cm]{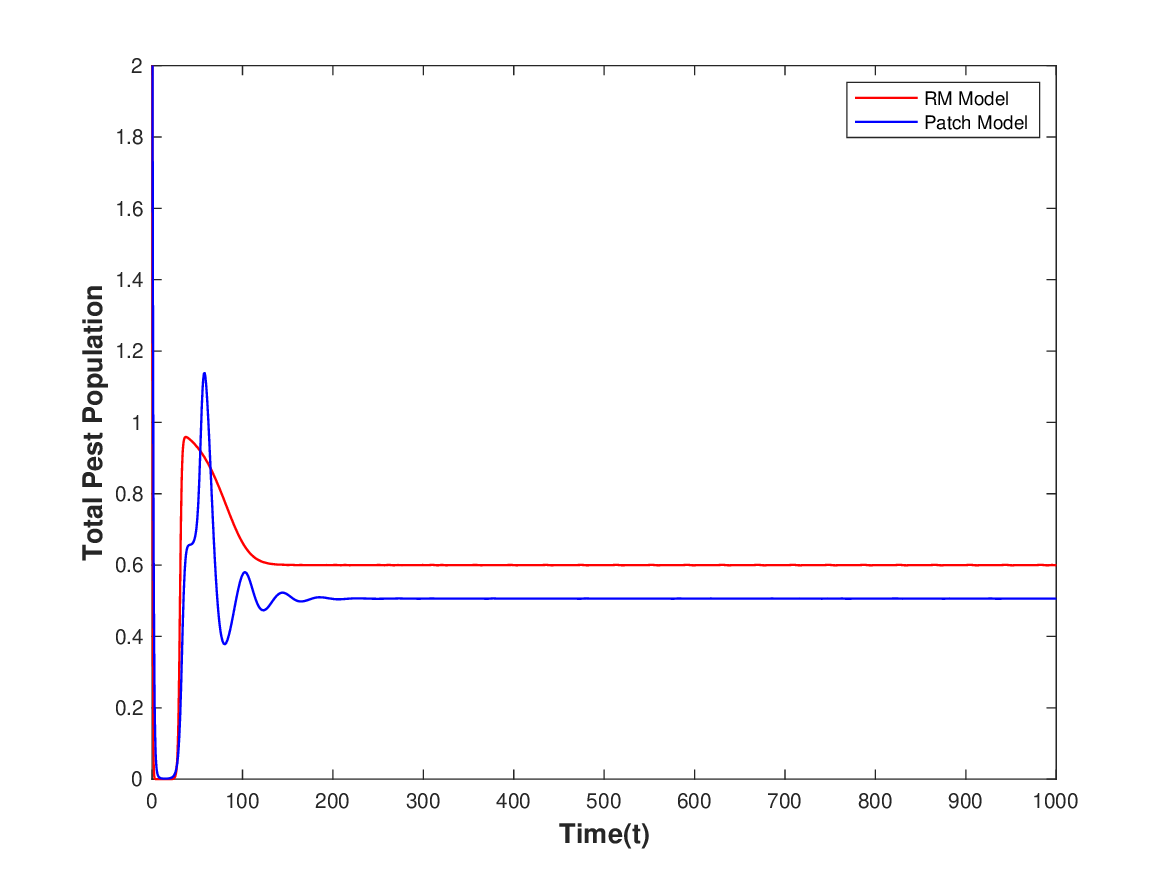}
\caption{Comparison of total pest population: RM model vs. Patch model for interior equilibrium. Here, the parameters are  $k=1, \alpha=0.2, \xi=0.37, \epsilon_1 = \epsilon_2 =0.4, \delta_1=\delta_2=0.15,k_2=0.1,k_4=0.2$. The initial condition of the population for the patch model is $[2,2,2,2]$, and that for the RM model is $[4,4]$. }
\label{fig:int_rm_vs_patch_fig}
\end{figure}

\begin{figure}
\includegraphics[width = 10cm,height=7cm]{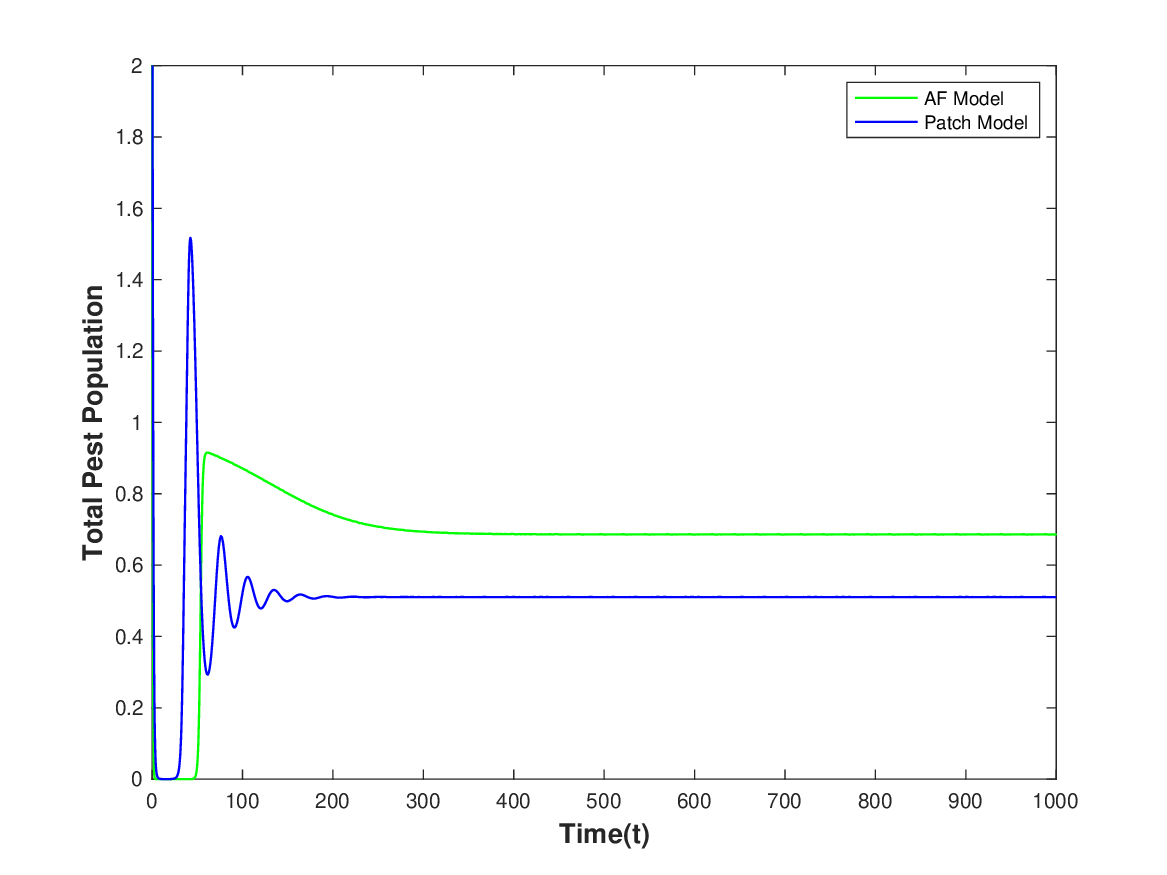}
 \caption{Comparison of total pest population: Classical two-species additional food model vs. Patch model for interior equilibrium. Here, the parameters are $k=1, \alpha=0.2, \xi=0.37, \epsilon_1 =0.25, \epsilon_2 =0.6, \delta_1 = 0.15, \delta_2=0.07,k_2=0.4,k_4=0.2$ with I.C. for patch model $= [2,2,2,2]$ and for classical additional food model is $[4, 4]$. }
\label{fig:int_saf_vs_patch}
\end{figure}

\begin{rem}
\label{int_bio_control}
Figure (\ref{fig:int_rm_vs_patch_fig}), time series shows the comparison of interior equilibrium for \eqref{eq:patch_model2} and \eqref{rmmodel}. We have seen numerically, under some parametric restrictions, \eqref{eq:patch_model2} can exhibit a lower pest population than \eqref{rmmodel}. In this comparison, the patch model's total pest population is $x_1^\star+x_2^\star$, whereas the pest population in the Rosenzweig-MacArthur model is $x_r^*$.
\par  Figure (\ref{fig:int_saf_vs_patch}) demonstrates that, within certain parameter constraints, \eqref{eq:patch_model2} can have a lower pest population than  \eqref{org_model}. In this comparison, the patch model's total pest population is $x_1^\star+x_2^\star$, whereas the pest population in the classical two-species additional food model is $x_h^*$.

\end{rem}

\section{Further prior results on AF}
\label{prior results}
\subsection{Type III response}

The type III response has been considered in a number of AF models, \cite{SPV18, SPD17}. However an adaptation of the results in \cite{PWB20}, enables the following corollary,

\begin{cor}
\label{cor:c1ug}
Consider a predator-pest system described via \eqref{Eqn:1g}, where 
$f(x,\xi,\alpha) = \frac{x^{2}}{1+\alpha \xi^{2} + x^{2}},
  g(x,\xi,\alpha) = \frac{\beta (x^{2}+\xi^{2})}{1+\alpha \xi^{2} + x^{2}}$
and assume the parametric restrictions, $\beta - \delta \alpha >0$. If the quantity of additional food $\xi$ is chosen s.t., 
$\xi \in \left( \sqrt{\frac{\delta}{\beta - \delta \alpha}}, \infty \right)$, then solutions initiating from any positive initial condition, blow up in infinite time. That is,

\begin{equation*}
\lim_{t \rightarrow \infty}{(x(t),y(t))} \rightarrow (0,\infty) 
\end{equation*}
\end{cor}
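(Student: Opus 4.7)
The plan is to adapt the infinite-time blow-up argument of \cite{PWB20} for the Type II AF model, exploiting the structural parallel between the Type II and Type III responses: under the formal substitution $x \mapsto x^2$, $\xi \mapsto \xi^2$ in the numerators and their analogues in the denominators, the algebra of the nullcline analysis carries over verbatim. I would therefore organize the proof in three stages mirroring the original argument.

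First I would set up the phase plane. Solving $g(x,\xi,\alpha) = \delta$ yields $x^2 = \dfrac{\delta - \xi^2(\beta-\delta\alpha)}{\beta - \delta}$, so the parametric hypothesis $\xi > \sqrt{\delta/(\beta-\delta\alpha)}$ forces the right-hand side to be negative, and therefore no positive interior equilibrium exists; geometrically, the vertical predator nullcline has been pushed into the second quadrant. Non-negativity of the first quadrant is immediate from the form of the equations, and $x(t)$ is uniformly bounded by $\gamma$ via comparison with the logistic equation. The only invariant subset on the boundary of the first quadrant is the predator axis $\{x=0\}$, on which the predator equation reduces to $\dot{y} = \bigl(\tfrac{\beta \xi^2}{1+\alpha \xi^2} - \delta\bigr) y$; the stated hypothesis is precisely equivalent to the coefficient $\tfrac{\beta \xi^2}{1+\alpha \xi^2} - \delta > 0$, so $y$ grows exponentially on the axis.

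Second, I would argue that every interior trajectory has its $\omega$-limit set contained in the predator axis. Here I would import the LaSalle/Dulac machinery used in \cite{PWB20}: construct a Lyapunov-type functional of the form $V(x,y) = \int_{x}^{\gamma} \tfrac{s}{f(s,\xi,\alpha)}\,ds + \tfrac{1}{\beta}\bigl(y - y^{\star} \ln y\bigr)$, suitably modified because no finite positive $y^{\star}$ exists (one takes a limiting version along a sequence of regularised problems, as in the Type II case), and verify $\dot{V} \le 0$ with equality only on $\{x=0\}$. Combined with the absence of positive equilibria and a standard Bendixson--Dulac exclusion of limit cycles in the relevant region, this yields $x(t) \to 0$ as $t \to \infty$. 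The exponential growth of $y$ along the axis then propagates to the interior, forcing $y(t) \to \infty$.

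The main obstacle I anticipate is ruling out finite-time extinction of $x$, that is, showing that the convergence $x(t) \to 0$ occurs strictly in infinite time. Integrating $\dot{x}/x = (1 - x/\gamma) - \tfrac{x\, y}{1+\alpha \xi^2 + x^2}$, the degenerate vanishing $f \sim x^2$ near $x=0$ in the Type III case makes the predator-pressure integrand $\int_0^t \tfrac{x(s)\, y(s)}{1+\alpha \xi^2 + x(s)^2}\,ds$ decay fast enough with $x$ that even an exponentially growing $y$ cannot drive $x$ to zero in finite time; this is the precise mechanism of infinite-time blow-up identified in \cite{PWB20}, and the Type III case is in fact qualitatively milder because $f$ vanishes to higher order at $x=0$. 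Once these bootstrap bounds are in place, the conclusion $(x(t),y(t)) \to (0,\infty)$ follows exactly as in the Type II adaptation, and the corollary is proved.
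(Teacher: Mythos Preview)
The paper does not actually supply a proof of this corollary: it is stated in the appendix with only the one-line justification that ``an adaptation of the results in \cite{PWB20}'' yields the result. Your proposal is exactly such an adaptation---you identify the structural correspondence $x\mapsto x^2$, $\xi\mapsto\xi^2$ between the Type II and Type III responses and then rerun the nullcline/Lyapunov/infinite-time-extinction argument of \cite{PWB20}---so your approach coincides with what the paper intends, and in fact you have written out considerably more detail than the paper offers.

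One small remark: in your first stage you solve $g(x,\xi,\alpha)=\delta$ and obtain $x^2=\dfrac{\delta-\xi^2(\beta-\delta\alpha)}{\beta-\delta}$, then conclude no positive root exists because the numerator is negative. Strictly speaking you should also track the sign of $\beta-\delta$ (or argue directly that $g(x,\xi,\alpha)>\delta$ for all $x\ge 0$ under the hypothesis, which is cleaner and avoids the case split). This is a cosmetic gap, not a substantive one.
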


\subsection{Predator dependent responses and the inhibitory effect}

In order to prevent unbounded predator growth, several alternative mechanisms have been proposed.

\begin{itemize}
\item[(i)] The inhibitory effect produced by prey defense, such as via the type IV functional response \cite{VA22}.
\item[(ii)] The inhibitory effects produced by predator interference, such as via the Beddington-DeAngelis functional response \cite{S02, W23}.
\item[(iii)] The inhibitory effect produced by a purely ratio dependent response \cite{S45, S46}.
\item[(iv)] The damping effect produced by intraspecific predator competition \cite{PWB23}.
\end{itemize}

All of the above prevent the unbounded growth of the predator. In Prasad et al. \cite{S02}, the following model was proposed: \eqref{model1}. 

\begin{equation}\label{model1}
 \frac{dx}{dt} = x\left(1-\frac{x}{k}\right) - \frac{xy}{1+\alpha \xi+x+\epsilon y}, \
       \frac{dy}{dt} =\frac{\beta (x+\xi)y}{1+\alpha \xi+x+\epsilon y}-\delta y
 \end{equation}

Herein, the Beddington–DeAngelis functional response \cite{S33} incorporates mutual interference amongst predators via the term $\epsilon y$ where the parameter $\epsilon $ is a measure of this mutual interference effect.
The analysis in \cite{S02} focused on two separate cases, (i) high interference $\epsilon > 1$ and (ii) low interference $0<\epsilon <1$. In (i), there is always one unique interior equilibrium - if a feasible pest free equilibrium exists, it is a saddle. Thus, from the point of view of pest eradication, high predator interference is not desirable. In that no matter what quality and/or quantity of additional food is chosen, the pest eradication state cannot be (even locally) attracting. In case (ii), it was shown therein that,

\begin{itemize}

\item [(i)] The predator density is always bounded. It cannot grow past a uniform (in parameter) bound, no matter how the parameters are changed, so as to engineer control of the pest.

\item [(ii)] There can exist one unique interior equilibrium, as long as $\xi<\frac{\delta}{\beta-\delta\alpha-\beta\epsilon}$. 

\item [(iii)] A pest free equilibrium can coexist with the interior equilibrium - in that case, the pest free equilibrium is always a saddle.

\item [(iv)] The pest free state is globally stable if $\xi>\frac{\delta}{\beta-\delta\alpha-\beta\epsilon}$. This change of stability occurs through a transcritical bifurcation.

\item[(v)] A predator-free equilibrium can also exist.
\end{itemize}

The analysis in \cite{S02} is based on the geometric assumption that the slope of the predator nullcline (say $m_{1}$) in \eqref{model1} is greater than the slope of the tangent line to the prey nullcline (say $m$) at 
$(0,\frac{1+\alpha \xi}{1-\epsilon})$,  that is where it cuts the predator (y) axis. This is the reason that the authors in \cite{S02} derive global stability of the pest free state when
the condition in (iv) above holds. This condition follows by choosing the $y$-intercept of the predator nullcline to be larger than where the prey nullcline intersects the $y$-axis. This, in conjunction with $m_{1} > m$, yields global stability of the pest free state.
 Thus, the fundamental premise of \cite{S02}, is to assume AF introductions $\xi$, such that $\xi > h(\epsilon)$. Note, recently, we consider $\xi < h(\epsilon)$ and provide a much tighter window on the quantities of AF needed for pest eradication. In particular, we are able to derive pest extinction for $h(\epsilon) > f(\epsilon) > \xi > g(\epsilon)$, \cite{W23}.

\subsection{Constant Predator harvesting}

The basic predator-pest model, with type II response and constant predator harvesting, is as follows \cite{sen2015global}

\begin{equation}
\label{Eqn:1ph}
\frac{dx}{dt} = x\left(1-\frac{x}{\gamma}\right) - \frac{xy}{1+\alpha \xi + x}, \    \frac{dy}{dt} =\frac{\beta xy}{1+\alpha \xi + x} + \frac{\beta \xi y}{1+\alpha \xi + x} - \delta y - \rho.
\end{equation}

Here, the underlying assumption is that the predator is harvested at a constant amount $\rho$. In the case that $\rho = 0$, the model is reduced to the classic model of \cite{SP07}. Dynamically, this model only possesses the Hopf bifurcation. In the case, $\rho > 0$, the model is very rich dynamically, and several higher codimensional bifurcations have been proved.

\begin{itemize}

\item [(i)] Up to 2 interior equilibrium can exist. An axial pest free state can also exist - which is always unstable as a saddle or source. Furthermore finite time extinction of the predator is achievable for certain initial data.

\item [(ii)] Dynamically, the model is very rich. apart from the standard co-dimensional one Hopf and saddle-node bifurcation, it is shown that a global homoclinic bifurcation can occur, as well as a codimension two Bogdanov Takens bifurcation is shown.

\item [(iii)] Despite the rich dynamics, the question of large data solutions is left open. In this case ``large" is data that lies above the stable manifold of the interior saddle equilibrium. It is claimed that the predator can continue to exist on the AF.

\end{itemize}

\printbibliography

\end{document}